\newtheorem{lemma}{Lemma}
\newtheorem{claim}{Claim}
\newtheorem{observation}{Observation}
\newtheorem{theorem}{Theorem}
\newtheorem{corollary}{Corollary}
\newtheorem{proposition}{Proposition}
\title{\bf Polynomial-Time Constant-Approximation for Fair Sum-of-Radii Clustering}
\author{
Sina Bagheri Nezhad\thanks{Portland State University, USA. Email: \texttt{sina5@pdx.edu}.}\ \ \ \ \ 
Sayan Bandyapadhyay\thanks{Portland State University, USA. Email: \texttt{sayanb@pdx.edu}.}\ \ \ \ \ 
Tianzhi Chen\thanks{Portland State University, USA. Email: \texttt{chench@pdx.edu}.}
} 
\date{} 
\theoremstyle{definition}
\begin{document}
\maketitle

%TODO mandatory: add short abstract of the document
\begin{abstract}
In a seminal work, Chierichetti et al.~\cite{chierichetti2017fair} introduced the $(t,k)$-fair clustering problem: Given a set of red points and a set of blue points in a metric space, a clustering is called fair if the number of red points in each cluster is at most $t$ times and at least $1/t$ times the number of blue points in that cluster. The goal is to compute a fair clustering with at most $k$ clusters that optimizes certain objective function. Considering this problem, they designed a polynomial-time $O(1)$- and $O(t)$-approximation for the $k$-center and the $k$-median objective, respectively. 
%Since then, obtaining an $O(1)$-approximation for $k$-median remains an open question. 
Recently, Carta et al.~\cite{carta2024fpt} studied this problem with the sum-of-radii objective and obtained a $(6+\epsilon)$-approximation with running time $O((k\log_{1+\epsilon}(k/\epsilon))^kn^{O(1)})$, i.e., fixed-parameter tractable in $k$. Here $n$ is the input size. 
In this work, we design the first polynomial-time $O(1)$-approximation for $(t,k)$-fair clustering with the sum-of-radii objective, improving the result of Carta et al. Our result places sum-of-radii in the same group of objectives as $k$-center, that admit polynomial-time $O(1)$-approximations. 
% Moreover, this is the first polynomial-time $O(1)$-approximation for a constrained version of sum-of-radii. 
This result also implies a polynomial-time $O(1)$-approximation for the Euclidean version of the problem, for which an $f(k)\cdot n^{O(1)}$-time $(1+\epsilon)$-approximation was known due to Drexler et al.~\cite{drexler2023approximating}. Here  $f$ is an exponential function of $k$. We are also able to extend our result to any arbitrary $\ell\ge 2$ number of colors when $t=1$. This matches known results for the $k$-center and $k$-median objectives in this case. The significant disparity of sum-of-radii compared to $k$-center and $k$-median presents several complex challenges, all of which we successfully overcome in our work. Our main contribution is a novel cluster-merging-based analysis technique for sum-of-radii that helps us achieve the constant-approximation bounds. 
% We believe the general analysis framework would be helpful to obtain polynomial-time constant-approximations for other objectives as well. 
\end{abstract} 

% \thispagestyle{empty}
% \newpage
% \pagenumbering{arabic}

\section{Introduction}
Given a set of points $P$ in a metric space $(\Omega,d)$ and an integer $k > 0$, the task of clustering is to find a partition $X_1,\ldots,X_k$ of $P$ into $k$ groups or clusters such that each group has similar points. The similarity of the clusters is typically modeled using an objective function which is to be minimized. In this work, we focus on the \textit{sum-of-radii} objective, which is defined as the sum of the radii of $k$ balls that contain the points of the respective $k$ clusters. The sum-of-radii objective, while also center-based, has a different flavor from objectives such as $k$-center, $k$-median, and $k$-means, as it directly sums the radii of the clusters rather than measuring distances from each point to its assigned center. In these objectives, $k$ representative points (or cluster centers) are chosen, and the corresponding clusters are formed by assigning the points of $P$ to their nearest centers. Such a partition is popularly known as the \textit{Voronoi} partition. It is not hard to see that an optimal sum-of-radii clustering is not necessarily a Voronoi partition. The study of sum-of-radii was motivated by the idea that it could potentially reduce the so-called \textit{dissection effect} that is observed in $k$-center type objectives. In $k$-center, the goal is to minimize the maximum distance between points and their cluster centers. Equivalently, we would like to compute $k$ balls of the minimum possible same radius, that contain all the points. Consequently, the points in a ground truth cluster might be assigned to different clusters to minimize the $k$-center objective. Such an effect can be reduced by using the sum of radii objective instead, as here one can use clusters of varying radii. 

Sum-of-radii clustering is known to be NP-hard even in planar metrics and metrics of constant doubling dimension \cite{DBLP:journals/algorithmica/GibsonKKPV10}. Consequently, it has received substantial attention from the approximation algorithms community. Charikar and Panigrahy~\cite{CharikarP04} designed a Primal-Dual and Lagrangian relaxation-based 3.504-approximation algorithm that runs in polynomial time (poly-time). Recently, using similar techniques, Friggstad and Jamshidian~\cite{friggstad_et_al:LIPIcs.ESA.2022.56} improved the approximation factor to 3.389. The best-known approximation factor for sum-of-radii in polynomial time is $3+\epsilon$ for any $\epsilon > 0$, due to Buchem et al.~\cite{buchem20243+}. In stark contrast to other well-studied center-based objectives such as $k$-center and $k$-median, the sum-of-radii objective admits QPTASes \cite{DBLP:journals/algorithmica/GibsonKKPV10,banerjee2024novel}, which are based on randomized metric partitioning schemes. Additionally, the problem can be solved exactly in polynomial time in the Euclidean metric of constant dimension \cite{DBLP:journals/siamcomp/GibsonKKPV12}, a unique trait that has not been observed in any other popular clustering problem. The algorithm is based on a separator theorem that guarantees the existence of a balanced separator that intersects at most a constant number of optimal balls. 
%Then the idea is to guess such a set of balls using dynamic programming (DP). 
The problem also admits polynomial time exact algorithms in other restricted settings, such as when singleton clusters are not allowed \cite{DBLP:journals/algorithmica/BehsazS15} and the metric is unweighted \cite{DBLP:journals/dm/HeggernesL06}. We note that poly-time $O(1)$-approximations are known for all three center-based objectives \cite{gonzalez1985clustering, AryaGKMMP-SIAMJ04,KanungoMNPSW04}.       

In recent years, sum-of-radii clustering has also been studied with additional constraints. One such popular constraint is the capacity constraint, which puts restriction on the number of points that each cluster can contain. In a series of articles \cite{DBLP:conf/esa/0002V20,DBLP:conf/compgeom/BandyapadhyayL023a,DBLP:conf/innovations/Jaiswal0Y24,filtser2024fpt}, $O(1)$-approximation algorithms have been designed for capacitated sum-of-radii with running time fixed-parameter tractable (FPT) in $k$ (i.e., $f(k)\cdot n^{O(1)}$ for a function $f$ of $k$), culminating in an approximation factor of 3. Inamdar and Varadarajan~\cite{DBLP:conf/esa/0002V20} studied sum-of-radii with a matroid constraint where the set of centers of the balls must be an independent set of a matroid. They obtain an FPT 9-approximation for this problem. The approximation factor has recently been improved to 3 by Chen et al.~\cite{chen2024parameterized}. Obtaining a polynomial-time $O(1)$-approximation for any of these constrained versions is an interesting open question. However, poly-time $O(1)$-approximations are known for sum-of-radii with lower bounds and with outliers \cite{ahmadian_et_al:LIPIcs.ICALP.2016.69,buchem20243+}. 

Sum-of-radii has also been studied with fairness constraints, which is the main focus of our work. Clustering with fairness constraints or fair clustering stems from the idea that protected groups (defined based on a sensitive feature, e.g.,  gender) must be well-represented in each cluster. In recent years, fair clustering has received significant attention from researchers across several areas of computer science. In a seminal work, Chierichetti et al.~\cite{chierichetti2017fair} introduced the \textit{$(t,k)$-fair clustering} problem. In this problem, we are given a set $P_1$ of red points, a set $P_{2}$ of blue points, that together contain $n$ points, and an integer balance parameter $t \ge 1$. A clustering is called \emph{$(t,k)$-fair} if, for any cluster $X$, the number of red points in $X$ is at least $1/t$ times and at most $t$ times the number of blue points in $X$. 
%The goal is to compute a $(t,k)$-fair clustering that minimizes a given cost function.
We say that each cluster in a $(t,k)$-fair clustering is \textit{$t$-balanced}.
%The $(t, k)$-fair sum-of-radii problem is NP-hard.

Chierichetti et al. studied $(t,k)$-fair clustering with $k$-center and $k$-median objectives, and obtained poly-time 4- and $O(t)$-approximation, respectively. Since then obtaining a poly-time $O(1)$-approximation for $(t,k)$-fair median or means remained an intriguing open question. The main challenge in this case is that the optimal clusterings are no longer Voronoi partitions, as they also need to be $(t,k)$-fair. Chierichetti et al. devised a scheme called \textit{fairlet decomposition} to partition the input points into units called fairlets such that each fairlet contains either one red and at most $t$ blue points or one blue and at most $t$ red points. The most important observation is that any $t$-balanced cluster can be decomposed (or partitioned) into a set of fairlets. Consequently, such fairlet decomposition can be computed using a min-cost network flow-based algorithm. Unfortunately, the cost of such a flow can be as large as $t$ times the optimal $(t,k)$-fair median cost, leading to the $O(t)$-approximation.  

Subsequently, $(t,k)$-fair median/means has been studied in a plethora of works. The only setting where it is known to obtain a poly-time $O(1)$-approximation is when $t=1$ \cite{bohm2020fair}, that is for $(1,k)$-fair median/means. Schmidt~{et al.}~\cite{schmidt2019fair} obtained an $n^{O(k/\epsilon)}$ time $(1+\epsilon)$-approximation for the Euclidean version of $(t,k)$-fair median/means. For the same version, Backurs et al.~\cite{backurs2019scalable} gave a near-linear
time $O(\Tilde{d}\cdot\log n)$-approximation, where $\Tilde{d}$ is the dimension. 

% In the light of the above discussion, we state the following open question. 

% \begin{tcolorbox}
% 	\begin{description}
% 	\setlength{\itemsep}{-2pt}
% 	\item[Question $1$:] Does $(t,k)$-fair median/means admit 
% 	%an \FPT algorithm 
% 	%Is it possible to obtain 
% 	a poly-time constant-approximation algorithm?  
% 	\end{description}
% \end{tcolorbox}

The $(t,k)$-fair median/means problem has also been studied with an arbitrary $\ell$ number of groups. The algorithm of B{\"o}hm et al.~\cite{bohm2020fair} for $t=1$ also yields a poly-time $O(1)$-approximation in this case. Note that for $t=1$, a cluster contains the same number of points from all groups. Bandyapadhyay et al.~\cite{bandyapadhyay2024polynomial} obtained a poly-time approximation for $(t,k)$-fair median with a factor that depends on $t$, $\ell$, and $k$. 
Bercea et al.~\cite{bercea2019cost} and Bera et al.~\cite{bera2019fair} independently defined a generalization of $(t,k)$-fair clustering. There we are given balance parameters $\alpha_i,\beta_i \in [0,1]$ for each group $1\le i\le \ell$. A clustering is called \emph{fair representational} if the fraction of points from group $i$ in every cluster is at least $\alpha_i$ and at most $\beta_i$ for all $1\le i\le \ell$. They show that it is possible to obtain poly-time bi-criteria type $O(1)$-approximations where we are allowed to violate the fairness constraints by an additive small constant factor. Subsequently, Dai et al.~\cite{DaiMV22} designed a DP-based poly-time $O(\log k)$-approximation for this problem. For $\ell$ groups, their running time is $n^{O(l)}$. 

Carta et al.~\cite{carta2024fpt} studied fair versions of sum-of-radii. In particular, they study a more general class of \textit{mergeable} constraints. A clustering constraint is called mergeable if the union of two clusters satisfying the constraint also satisfies the constraint. They show that the fairness constraints defined in $(t,k)$-fair clustering and fair representational clustering are mergeable. In their work, they obtained a $(6+\epsilon)$-approximation for sum-of-radii with mergeable constraints. In particular, for the above two fairness constraints, their run time is $O((k\log_{1+\epsilon}(k/\epsilon))^kn^{O(1)})$, so FPT in $k$. The algorithm iteratively guesses the next cluster based on a \textit{$k$-center completion problem} leading to the FPT run time. Their approximation factor improves to $3+\epsilon$ when $t=1$. Drexler et al.~\cite{drexler2023approximating} obtained an FPT $(1+\epsilon)$-approximation for Euclidean sum-of-radii with mergeable constraints. For this version, Banerjee et al.~\cite{banerjee2024novel} obtained a probabilistic exact algorithm with runtime $n^{O(d^2\log d)}$ and a probabilistic $(1+\epsilon)$-approximation with runtime FPT in $k+d$. Chen et al.~\cite{chen2024parameterized} considered a fair sum-of-radii problem, where the metric space $X$ is divided into demographic groups $X_1,\ldots,X_m$ and we are also given integers $k_1,\ldots,k_m$. A sum-of-radii clustering is called \textit{fair} if for the set $C$ of centers of the corresponding balls, $|C\cap X_i|=k_i$ for all $1\le i\le m$. This is indeed a restricted version of the matroid sum-of-radii as defined previously, and thus it also yields an FPT 3-approximation for this problem. For the Euclidean version of this problem, Banerjee et al.~\cite{banerjee2025improved} designed FPT $(1+\epsilon)$-approximation with $k+d$ being the parameter. A summary of the results on fair clustering under various objectives is provided in Table \ref{tab:summary}. 

As mentioned before, for fair representational models, only bi-criteria type $O(1)$-approximations are known for $k$-center/median/means, even with two groups. As we focus on our theoretical quest of designing $O(1)$-approximations fully satisfying the fairness constraints, we study $(t,k)$-fair sum-of-radii. In light of the above discussion, we state the following two questions. 

\begin{tcolorbox}
	\begin{description}
	\setlength{\itemsep}{-2pt}
	\item[Question $1$:] Does $(t,k)$-fair sum-of-radii (with two groups) admit 
	%an \FPT algorithm 
	%Is it possible to obtain 
	a poly-time constant-approximation algorithm?  
	\end{description}
\end{tcolorbox}

\begin{tcolorbox}
	\begin{description}
	\setlength{\itemsep}{-2pt}
	\item[Question $2$:] Does $(1,k)$-fair sum-of-radii with an arbitrary $\ell\ge 2$ number of groups admit 
	%an \FPT algorithm 
	%Is it possible to obtain 
	a poly-time constant-approximation algorithm?  
	\end{description}
\end{tcolorbox}

\begin{table}[t]
\centering
\caption{Summary of approximation results for fair clustering under various objectives. ``Poly'' denotes polynomial time; ``FPT'' denotes fixed-parameter tractable in $k$; $\ell$ is the number of groups.}
\label{tab:summary}
\begin{tabular}{|l|c|c|c|c|}
\hline
\textbf{Objective} & \textbf{Fairness Type} & \textbf{Approximation} & \textbf{Time} & \textbf{Reference} \\
\hline
$k$-Center & $(t,k)$ (2 groups) & 4 & Poly & \cite{chierichetti2017fair} \\
$k$-Median & $(t,k)$ (2 groups) & $O(t)$ & Poly & \cite{chierichetti2017fair} \\
$k$-Median & $(1,k)$ ($\ell$ groups) & $O(1)$ & Poly & \cite{bohm2020fair} \\
$k$-Median & $(t,k)$ ($\ell$ groups) & $f(t,\ell,k)$ & Poly & \cite{bandyapadhyay2024polynomial} \\
$k$-Median / Center & Representational (bi-criteria) & $O(1)$ & Poly & \cite{bera2019fair,bercea2019cost} \\
\hline
Sum-of-Radii & Unconstrained & $3+\varepsilon$ & Poly & \cite{buchem20243+} \\
Sum-of-Radii & Capacitated & 3 & FPT & \cite{filtser2024fpt} \\
Sum-of-Radii & Matroid constraint & 3 & FPT & \cite{chen2024parameterized} \\
Sum-of-Radii & $(t,k)$ (2 groups) & $6+\varepsilon$ & FPT & \cite{carta2024fpt} \\
Sum-of-Radii & $(1,k)$ ($\ell$ groups) & $3+\varepsilon$ & FPT & \cite{carta2024fpt} \\
Sum-of-Radii & $(t,k)$ (2 groups) & $\mathbf{144+\varepsilon}$ & \textbf{Poly} & \textbf{This work} \\
Sum-of-Radii & $(1,k)$ ($\ell$ groups) & $\mathbf{180+\varepsilon}$ & \textbf{Poly} & \textbf{This work} \\
\hline
\end{tabular}
\end{table}

\subsection{Our Contributions and Techniques} 
In our work, we prove two theorems resolving Questions 1 and 2 in the affirmative. First, we prove the following theorem. 

\begin{theorem}\label{thm:t-balanced}
    There is a polynomial-time $(144+\epsilon)$-approximation algorithm for $(t,k)$-fair sum-of-radii  (with two groups). 
\end{theorem}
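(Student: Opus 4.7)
The plan is to reduce $(t,k)$-fair sum-of-radii to the unconstrained sum-of-radii problem via an over-clustered starting solution followed by a fairness-enforcing merging step, with a new merging-based analysis charging the merging cost directly to the optimal fair clustering. I would begin by invoking a known polynomial-time $O(1)$-approximation for ordinary sum-of-radii (for instance the $(3+\epsilon)$-approximation of Buchem et al.) to obtain $k$ balls $\widetilde B_1,\ldots,\widetilde B_k$ with total radius at most $O(1)\cdot OPT^*$, where $OPT^*$ denotes the optimal $(t,k)$-fair cost — this holds because an optimal fair clustering is itself feasible for the unconstrained problem. Since these balls are generally not $t$-balanced, the algorithm next reorganizes them into at most $k$ super-clusters that are each $t$-balanced by solving a combinatorial subproblem pairing color-imbalanced balls (red-heavy with blue-heavy) while keeping each enclosing super-ball small. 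The subproblem is of polynomial size and can be tackled by min-cost flow or a dynamic program over the ball incidence structure, possibly after guessing a coarse discretization of plausible merging radii.

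The heart of the argument is to certify that a feasible merging of cost $O(1)\cdot OPT^*$ exists, by charging to an optimal fair clustering with balls $B_1^*,\ldots,B_k^*$ of radii $r_1^*,\ldots,r_k^*$. For each index $i$, I would aggregate every approximate ball $\widetilde B_j$ whose center lies inside, or is geometrically close to, $B_i^*$; the points in this aggregate are already $t$-balanced when regrouped according to the optimal assignment, and by the triangle inequality they are contained in a ball of radius $O(r_i^*)$. Summing over $i$ yields a $t$-balanced clustering of total radius $O(1)\cdot OPT^*$, so any minimum-cost merging the algorithm returns is at least this cheap. The final constant $144$ is consistent with the composition of three or four small-constant losses along the pipeline — the unconstrained sum-of-radii approximation factor, the triangle-inequality blowup when balls are merged, a doubling for color-symmetry of the red-heavy/blue-heavy cases, and slack in the merging subproblem — each contributing a factor between $2$ and $4$.

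The main obstacle is that the algorithm must produce a low-cost fair merging \emph{without} access to the optimal balls. Two approximate balls that are forced to merge for color balance can be far apart, so any naive local or greedy merging rule risks an unbounded radius blowup, and it is not immediately clear how to encode the idea of ``charge to an unknown optimal ball'' as a polynomial-time procedure. Overcoming this likely requires guessing a small number of geometric parameters (for example a coarse discretization of the optimal radii, or threshold distances that define the merging graph) and then solving a flow or matching formulation that simultaneously enforces $t$-balance on every super-cluster and keeps its enclosing radius within a constant factor of the intended target. Making all these ingredients interact so that the algorithm remains polynomial and the approximation factor stays a clean constant is exactly where the novel cluster-merging-based analysis has to do its work.
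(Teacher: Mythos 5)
Your proposal takes a genuinely different route from the paper, and I believe it has a gap that is not easily patched. The paper proceeds \emph{fairlet-first}: it computes a min-cost degree-constrained subgraph $H$ of the red--blue bipartite graph, so that the stars of $H$ are $t$-balanced atoms; any union of these atoms is automatically $t$-balanced, and the entire technical burden shifts to proving that the stars admit a cheap sum-of-radii clustering. That proof works by (in the analysis) merging \emph{optimal} clusters across edges of $H$ into superclusters, contracting them to a directed multigraph, and bounding the weighted diameter of each component via a minimum-switch path exchange argument that crucially exploits the min-cost optimality of $H$. Your plan instead proceeds \emph{cluster-first}: solve unconstrained sum-of-radii, then merge the approximate balls into balanced superclusters. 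The directions are essentially reversed, which matters because balance is free under the paper's decomposition but is the whole problem under yours.

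The gap is in the existence argument for a cheap fair merging. You propose to ``aggregate every approximate ball $\widetilde B_j$ whose center lies inside $B_i^*$'' and claim ``the points in this aggregate are already $t$-balanced when regrouped according to the optimal assignment.'' But the aggregate is $\bigcup_{j:\, c_j\in B_i^*}\widetilde B_j$, a union of \emph{approximate} balls — it generally contains fragments of several optimal clusters and is not $t$-balanced. For a concrete instance, take $t=1$, $k=2$, with $B_1^*=\{r@0,\,r@1,\,b@2,\,b@3\}$ and $B_2^*=\{r@10,\,b@11\}$; an unconstrained $3$-approximation may legitimately return $\widetilde B_1=\{r@0,r@1,b@2\}$ (center $r@1\in B_1^*$) and $\widetilde B_2=\{b@3,r@10,b@11\}$ (center $r@10\in B_2^*$), and your rule then yields the two superclusters $\{r@0,r@1,b@2\}$ and $\{b@3,r@10,b@11\}$, neither of which is $1$-balanced. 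If you instead try to interpret ``regrouped according to the optimal assignment'' as re-sorting points into their optimal clusters, you have discarded the approximate balls and no longer have a charging argument at all. So the core certificate — that some $t$-balanced coarsening of the approximate balls has cost $O(1)\cdot OPT$ — is not established by the coordinate-in-$B_i^*$ charging, and it is far from clear that it is true without something like the paper's DCS-optimality-driven exchange argument. (Your accounting of the super-ball radii, $O(\sum_i r_i^* + \sum_j \tilde r_j)$, would be fine \emph{if} the partition of balls respected balance; the failure is upstream of that.) This is exactly why the paper builds fairness in before clustering rather than trying to retrofit it afterward.
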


Our result complements the FPT approximation result of Carta et al.~\cite{carta2024fpt} by achieving the first $O(1)$-approximation for the problem in polynomial time. The result also implies the first poly-time $O(1)$-approximation for Euclidean $(t,k)$-fair sum-of-radii. We note that our result should also be compared with that of $(t,k)$-fair $k$-median for which only $O(t)$-approximation is known in polynomial time. In particular, our result places sum-of-radii in the same group of objectives as $k$-center that admits polynomial-time $O(1)$-approximations. Moreover, our result shows that $(t,k)$-fair sum-of-radii is in contrast to most of the constrained versions of sum-of-radii, including capacitated clustering, for which only FPT $O(1)$-approximations are known. 
% This is the first polynomial-time $O(1)$-approximation for sum-of-radii with a non-trivial constraint.

Next, we give an overview of our approach. Our approximation algorithm is motivated by the algorithms for $(t,k)$-fair center and $(t,k)$-fair median \cite{chierichetti2017fair}. These algorithms have two major steps. In the first step, a \textit{fairlet decomposition} of the points in $X = P_1\cup P_2$ is computed, i.e., a partition $\mathcal{Y}=\{Y_1,\ldots,Y_m\}$ such that for each fairlet $Y_i$, it either has 1 red point and at most $t$ blue points or 1 blue point and at most $t$ red points. Let $\beta: P_1\cup P_2 \rightarrow [m]$ be the function that maps each point $x$ to the index of the fairlet that contains $x$. From each $Y_i$, an arbitrary point $y_i$ is designated as its \textit{representative}. In the second step, a clustering of these $m$ representatives is computed with the respective cost function. Also, for each $Y_i$, all of its points are assigned to the cluster that contains $y_i$. The new clustering is obviously $(t,k)$-fair, as each cluster is a merger of fairlets. For the analysis of the cost of the computed clustering, they define a {fairlet decomposition} cost, which is used to bound the assignment cost of the points in the second step. For $k$-center, this cost is $\max_{x\in X} d(x,y_{\beta(x)})$, and for $k$-median, it is $\sum_{x\in X} d(x,y_{\beta(x)})$. Indeed, both of these costs when optimal are comparable to the optimal $(t,k)$-fair clustering cost. For $k$-center, it is within a constant factor, and for $k$-median it is within an $O(t)$ factor. Then, it is sufficient to compute a fairlet decomposition in the first step whose cost is within a small constant-factor of the optimal fairlet decomposition cost. 

Coming back to $(t,k)$-fair sum-of-radii, it is not clear how to define a suitable fairlet decomposition cost that can be compared to the optimal $(t,k)$-fair sum-of-radii cost. In particular, such a cost needs to be defined independent of the number of clusters $k$. However, for sum-of-radii, the objective is the sum of radii of $k$ clusters. For example, a natural candidate, the cost for $k$-median, i.e., $\sum_{x\in X} d(x,y_{\beta(x)})$, is likely to be much larger than the optimal sum-of-radii cost. In the absence of such a suitable fairlet decomposition cost, it is difficult to argue the increase in the assignment cost, when actual points of $Y_i$ are assigned instead of just the representative $y_i$. 

\medskip
\noindent
\textbf{Our approach.} Our algorithm is surprisingly simple to state.
% ; much simpler than the previous FPT approximation \cite{carta2024fpt}. 
We first compute a complete bi-partite graph $G$ with $P_1$ and $P_2$ being the two parts. The weight of each edge is set to be the distance between the two corresponding endpoints. Subsequently, a degree-constrained, spanning subgraph of this graph is computed where each vertex has a degree in range $[1,t]$, and the sum of the weights of the edges is minimized. Such an optimal subgraph can be computed in polynomial time using the algorithm of Gabow \cite{gabow1983efficient}. Moreover, one can show that such a subgraph is a collection of stars each having at most $t$ edges. Thus, our algorithm up to this point is in a similar spirit to that of $k$-median. As we argued before, the total weight of such a subgraph can be very large compared to the optimal sum-of-radii cost. Our main contribution is to prove that there is a sum-of-radii clustering of the stars (or representatives of them) computed in this way whose cost is at most a constant times the optimal $(t,k)$-fair sum-of-radii cost (\textbf{Lemma \ref{lem:approx-cluster-of-star-points-32-factor}}). Then, one can compute an approximate sum-of-radii clustering of these stars and return the corresponding clustering of the points in $P_1\cup P_2$. The obtained clustering is $(t,k)$-fair, as the clusters are disjoint union of the vertices of stars, each having at most $t$ edges. In the following, we outline the proof of the existence of a clustering of the computed stars whose cost is nicely bounded. This proof is based on a novel analysis technique that merges a set of optimal clusters to obtain \textit{superclusters}. We believe this technique would be of independent interest.  

\smallskip
\noindent
\textbf{Proof of Lemma \ref{lem:approx-cluster-of-star-points-32-factor}.} Let $H$ be the degree-constrained subgraph computed with the minimum weight possible. Also, let $\mathcal{C}^*=\{C_1^*, C_2^*, \ldots, C_k^*\}$ be a fixed optimal $(t,k)$-fair sum-of-radii clustering. 
We repetitively merge pairs of these clusters if there are edges in $H$ across them. Let $\hat{\mathcal{C}}=\{\hat{C}_1, \hat{C}_2, \ldots, \hat{C}_\kappa\}$ be the resulting clustering. By our construction, each star of $H$ is fully contained in one of these merged clusters or \textit{superclusters}. Thus, it is sufficient to show that the cost of $\hat{\mathcal{C}}$ is at most $O(1)$ times the cost of $\mathcal{C}^*$ (\textbf{Lemma \ref{lem:optimal-cluster-of-stars-8-factor}}).

\smallskip
\noindent
\textbf{Proof of {Lemma} \ref{lem:optimal-cluster-of-stars-8-factor}.} Note that it is sufficient to show that the radius of each supercluster $\hat{C}_i$ is at most $O(1)$ times the sum of the radii of the \textit{associated} optimal clusters whose merger is $\hat{C}_i$. To show this, we construct a new graph $G^*$ by contracting the associated clusters into vertices. Then, these contracted cluster vertices along with the edges of $H$ form a connected component. Note that it is sufficient to bound the (weighted) diameter of this component graph $G^*$, as the interpoint distances within an optimal cluster that was contracted are nicely bounded by the diameter of the cluster. To bound the diameter of $G^*$, we introduce a notion of \textit{minimum-switch} paths between pairs of cluster vertices. Intuitively, $G^*$ has two directed edges corresponding to each (bi-chromatic) edge in $H$ across pairs of clusters -- a $0$-edge, which represents a connection from the red point to the blue point, and a $1$-edge, which represents a connection from the blue point to the red point (see Figure \ref{fig:demo-min-switch}). Then, a \textit{minimum-switch} path between two fixed cluster vertices is a directed path in $G^*$ that has the minimum number of switches between 0- and $1$-edges (i.e., the minimum number of 0-to-1 and 1-to-0 switches) (see Figure \ref{fig:demo-min-switch}). See Section \ref{sec:switch} for formal definitions (Page 11). These paths play a central role in our analysis. We prove that it is possible to bound the (weighted) length of any such path in terms of the radii of the associated optimal clusters. Then the diameter of $G^*$ can also be bounded likewise, as any two cluster vertices are connected by a minimum-switch path. The important distinction is that the length of any arbitrary path might not be bounded in such a nice way. Consider any such minimum-switch path $\pi^*$. In the following, we describe the idea to bound its length (\textbf{Lemma \ref{lem:cost-of-pi}}). 

\begin{figure}
    \centering
    \includegraphics[width=0.7\linewidth]{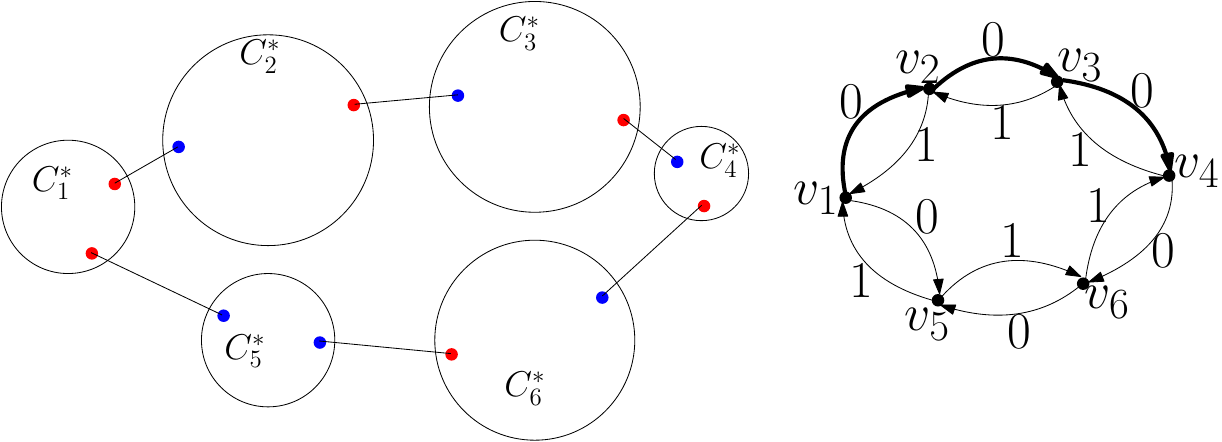}
    \caption{(Left) Optimal clusters and (bi-chromatic) edges of $H$ across them. (Right) The graph $G^*$, where the minimum-switch path from $v_1$ to $v_4$ is $v_1$-$v_2$-$v_3$-$v_4$, and has no switches.}
    \label{fig:demo-min-switch}
\end{figure}

\smallskip
\noindent
\textbf{Proof of {Lemma} \ref{lem:cost-of-pi}.} The overall idea is to show the existence of a set of edges $E_2'$ in the complete bipartite graph $G$, such that the deletion of $\pi^*$ from $H$ and the addition of $E_2'$ form a valid degree-constrained subgraph of $G$ on the set of vertices $P_1\cup P_2$. Additionally, we need the total weight of the edges of $E_2'$ to be small. Then by using the edge set difference between $H$ and the new degree-constrained subgraph, we can show that the weight of $\pi^*$ is also small, as $H$ is a minimum-weight degree-constrained subgraph of $G$. \textit{This is where we finally use the optimality of $H$.} However, it might not be possible to remove only the edges of $\pi^*$ from $H$ to show the existence of such a set $E_2'$. We show that there is a subset $E_1'$ that contains the edges of $\pi^*$ and can be removed to obtain such a valid degree-constrained subgraph. The construction of such $E_1'$ and $E_2'$ is fairly involved and is one of the main contributions of our work. 

\begin{figure}
    \centering
    \includegraphics[width=0.7\linewidth]{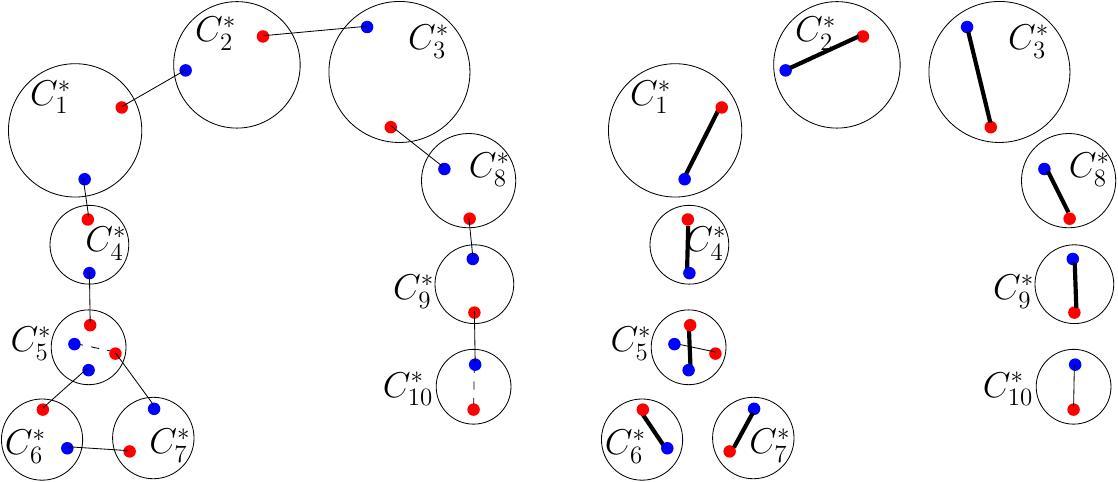}
    \caption{(Left) $\pi^*$ (1-2-3) is the path corresponding to the clusters 1, 2, 3 and has no switches. Clusters 1, 4, 5, 6, 7 form a hanging cycle and 3, 8, 9, 10 form a path with $0$-edges. The edges of $E_1'$ are shown using solid lines which are removed. (Right) The new degree-constrained subgraph. The edges of $E_2'$ are shown in bold which were added.}
    \label{fig:demo-edge-replace}
\end{figure}

\smallskip
\noindent
\textbf{Construction of $E_1'$ and $E_2'$.} 
%To understand the construction of $E_1'$ and $E_2'$, we need the definition of a minimum-switch-path.  

\smallskip
\noindent
    \textbf{-- First, assume that $\pi^*$ does not have a switch and all the edges on $\pi^*$ are $0$-edges.} Initially, let $E_1'$ be the edges of $H$ corresponding to $\pi^*$. If we remove the edges in $E_1'$ from $H$, it is not guaranteed that their endpoints have at least 1 degree. So, we need to add a set of edges $E_2'$ to make it a valid degree-constrained subgraph. Now, consider the chain of clusters corresponding to $\pi^*$ connected by $0$-edges or red-to-blue edges. Then, the first cluster contains one red point corresponding to the first edge of $\pi^*$ and the last cluster contains one blue point corresponding to the last edge of $\pi^*$ ($C_1^*$ and $C_3^*$ in Figure \ref{fig:demo-edge-replace}). Moreover, any intermediate cluster in the chain corresponding to $\pi^*$ contains one blue and one red point ($C_2^*$ in Figure \ref{fig:demo-edge-replace}). We can add an edge to $E_2'$ to connect these two points and the length of this edge is nicely bounded by the diameter of a unique optimal cluster. We are left to fix the degrees of the red point in the first cluster and the blue point in the last cluster. Now, if both of their degrees in $H$ were at least 2, then the removal of the edges in $E_1'$ does not violate their degrees. So, we already have the desired $E_1'$ and the small cost set $E_2'$. Otherwise, if the degree of at least one of them is 1, after removal of the edges of $E_1'$ it becomes disconnected. So, we need to add edges to $E_2'$ to connect it with other points. In this case, we prove that the existence of a subgraph structure in $G^*$ called \textit{hanging cycle} can be exploited to fix the degree (see Figure \ref{fig:demo-edge-replace}). See Section \ref{sec:construction-E1'-E2'} for a formal definition (Page 15). Otherwise, if there is no hanging cycle in $G^*$, we prove that there is a directed path consisting of only $0$-edges or $1$-edges that can be used to fix the degree (see Figure \ref{fig:demo-edge-replace}). Intuitively, the idea is to do a graph search in $G^*$ (and respectively in $H$), until we find a blue (resp. red) point in a cluster that has a degree of at least 2 or a red (resp. blue) point that has degree at most $t-1$. The first case is again good, and in the second case, we can safely connect a disconnected blue point with the red point without violating their degrees. To prove the small cost of the edges of $E_2'$, we argue that each edge in $E_2'$ has both endpoints in the same optimal cluster (see Figure \ref{fig:demo-edge-replace}) and each optimal cluster contains endpoints of at most two edges of $E_2'$. This proves that the length of $\pi^*$ is at most 4 times the radii of the associated optimal clusters.

\smallskip
\noindent
\textbf{-- Now, it can very well be the case that $\pi^*$ has 0-1 edge switches.} Our proof up to this point is reasonably clean and simple. However, this is the most complicated case. Consider a vertex on $\pi^*$ where an edge switch happens. Then, the cluster corresponding to this vertex is intermediate, but now it has two red (or blue) points that become disconnected by the removal of the edges in $E_1'$. So, it is not possible to locally fix their degrees in contrast to the previous case. In this case, we employ two graph searches to connect these two points. Specifically, we prove the existence of two hanging cycles or paths that can be used to fix the degrees. The details are involved, but we managed to represent all the scenarios by four lemmas. The existential proofs of them (Lemmas \ref{lem:b-anchor-path-exists-u_l}, \ref{lem:b-anchor-path-exists-u_1}, \ref{lem:$1$-path-1-hanging-cycle-exist}, and \ref{lem:2-paths-no-hanging-cycle-exists}) are fairly involved and the main contributions of our work. We also make technical contributions by giving general proofs in Lemmas \ref{lem:$1$-path-1-hanging-cycle-exist} and \ref{lem:2-paths-no-hanging-cycle-exists}, which combine multiple subcases and handle them within general frameworks. Now, we have a collection of paths and hanging cycles corresponding to the first, last, and switching vertices of $\pi^*$. The edges of $H$ corresponding to these structures are added to $E_1'$ and respective edges are added to $E_2'$ to guarantee the degree constraints. However, we need to show that each vertex of $G^*$ appears in at most a constant number of these structures. Otherwise, it is not guaranteed that each optimal cluster contains endpoints of a constant number of edges of $E_2'$. \textit{Here we use the minimum-switch property of $\pi^*$.} This property ensures limited intersection between the structures. In particular, we show that two structures corresponding to two different switches are vertex-disjoint. The last property helps us prove that each optimal cluster may contain endpoints of at most three edges of $E_2'$. Consequently, the length of $\pi^*$ is at most 6 times the radii of the associated optimal clusters.

Next, we prove the following theorem concerning Question 2.

\begin{theorem}\label{thm:ell-groups}
    There is a polynomial-time $(180+\epsilon)$-approximation algorithm for $(1,k)$-fair sum-of-radii with $\ell\ge 2$ groups of points.  
\end{theorem}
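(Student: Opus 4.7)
The plan is to lift the algorithmic framework of Theorem 1 from two groups to $\ell$ groups, exploiting that $t = 1$ forces every fair cluster to contain the same number of points from every group, so the minimal fair unit is an $\ell$-tuple with exactly one point from each group. First I would fix $P_1$ as a pivot group and, for each $j \in \{2, \dots, \ell\}$, compute a minimum-weight perfect matching $M_j$ between $P_1$ and $P_j$ in the complete bipartite metric graph; each such matching can be found in polynomial time, and feasibility of the instance guarantees $|P_1| = \cdots = |P_\ell|$. Let $H = \bigcup_{j=2}^{\ell} M_j$; it decomposes into $|P_1|$ vertex-disjoint stars, each centred at a distinct pivot $p \in P_1$ with exactly one leaf in every other group. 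Each star is a fairlet of size $\ell$, and I would designate its pivot as its representative.

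Next I would run any polynomial-time $O(1)$-approximation for unconstrained sum-of-radii (such as the $(3+\epsilon)$-approximation of Buchem et al.) on the set of representatives, and extend the resulting partition to all of $P_1 \cup \cdots \cup P_\ell$ by placing each fairlet in the cluster containing its representative. Because each fairlet contributes exactly one point to each group, the output is $(1,k)$-fair; the remaining task is to bound its sum-of-radii cost by $O(1)\cdot R^*$, where $R^*$ denotes the optimum.

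For this cost bound I would mimic the merging-based analysis from Theorem 1. Fix an optimal $(1,k)$-fair clustering $\mathcal{C}^*$ and repeatedly merge pairs of optimal clusters that are joined by an edge of $H$ to form superclusters. Every fairlet lies entirely inside one supercluster, so a clustering of representatives whose cost is at most the total supercluster radius exists, and it suffices to bound the latter by $O(1)\cdot R^*$. The key step is to generalize the minimum-switch path analysis of Lemma \ref{lem:cost-of-pi}: edges of the cluster-contraction graph $G^*$ now carry both an orientation ($0$ for pivot-to-leaf, $1$ for leaf-to-pivot) and a matching index $j \in \{2,\dots,\ell\}$, and a minimum-switch path is defined to minimize $0$/$1$ switches within each individual index. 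The edge-replacement sets $E_1'$ and $E_2'$ would then be constructed index by index, using that each $M_j$ is individually a minimum-weight perfect matching on its bipartite slice, so swapping a sub-path of $M_j$ for short within-cluster edges still yields a valid competitor to $M_j$.

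The main obstacle will be controlling the interaction between matching indices inside a single supercluster: two hanging-cycle or directed-path structures, arising from different switches or different indices, may share vertices, and careful bookkeeping is needed to ensure that every optimal cluster receives the endpoints of only $O(1)$ edges of $E_2'$. The per-index minimum-switch property guarantees near-disjointness of these structures at the cost of a moderately larger constant per index, and composing this blow-up with the constants from Theorem 1 and the approximation factor of the underlying sum-of-radii subroutine yields the stated $180$-approximation.
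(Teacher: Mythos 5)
Your algorithmic recipe matches the paper's exactly: compute a minimum-weight perfect matching $M_j$ between a pivot group $P_1$ and each $P_j$, cluster the resulting fairlet representatives with the Buchem et al. subroutine, and analyze via the same merging-into-superclusters framework from Theorem~\ref{thm:t-balanced}. The divergence, and the gap, is in the path analysis. You propose a ``minimum-switch'' path that simultaneously minimizes $0$/$1$ switches within every matching index $j$, and you acknowledge that the resulting hanging-cycle and directed-path structures from different indices could overlap inside a supercluster, deferring the resolution to ``careful bookkeeping.'' That deferral is exactly where a proof is needed, and it is not obvious it closes: importing the switch case of Lemma~\ref{lem:cost-of-pi} independently into each of $\ell-1$ slices and then arguing bounded overlap could blow up the constant with $\ell$, which would contradict the claimed $180$.

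The paper sidesteps this in two steps that your proposal does not identify. First, it defines the path $\pi_1^*$ as a \emph{minimum-color-switch} path — minimizing only switches between matching indices — and then proves (Lemma~\ref{lem:0-switch}) that inside each monochromatic segment one can always route with \emph{no} $0$/$1$ switch at all. This is where $t=1$ is crucially used: $1$-balance of each optimal cluster forces equal $0$-in-degree and $0$-out-degree in every single-color slice $G_{z}^*$, i.e., Eulerian balance, so an alternating-path argument yields a switch-free connector. Consequently only the no-switch clause of Lemma~\ref{lem:cost-of-pi} (the $4$-factor bound, no hanging cycles or auxiliary switch paths) is ever invoked per color segment. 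Second, Lemma~\ref{lem:disjoint-comp} shows that, by minimality of the number of color switches, components of non-adjacent color segments are vertex-disjoint, so each optimal cluster is hit by at most two color segments, giving a clean $8$-factor per supercluster and then $10$-factor with the within-cluster slack, composing with the $3 \cdot 2$ from the representative construction and the subroutine to the final $180$. Without these two ingredients — the Eulerian $0$-switch lemma and the color-segment disjointness — your plan has the right shape but does not yet yield a constant, let alone $180$.
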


Again our result directly improves the FPT approximation result of Carta et al.~\cite{carta2024fpt} and extends to more than 2 groups. The result matches the known constant-approximation bound for $k$-center/median/means in this case. The proof of the above theorem is similar to the proof of Theorem \ref{thm:t-balanced}, and so employs the same supercluster-based analysis framework. However, here we need to handle $\ell$ colors. The main challenge again boils down to bounding the diameter of a certain multi-partite graph $G_1^*$ with $\cup_{i=1}^\ell P_i$ being the set of vertices. Intuitively, by the analysis for two groups, the diameter of the graphs induced by only $P_1\cup P_i$ is nicely bounded. Specifically, we prove that for any two vertices in such a graph, there is a path without any 0-1 edge switch. However, we still need to bound the diameter of $G_1^*$. Consequently, we introduce an additional notion of \textit{minimum-color-switch} paths. We prove that the lengths of these paths can also be bounded nicely, exploiting their special properties.    

\subsection{Related Work} 
Because of its widespread popularity, sum-of-radii clustering has also been studied in a dynamic setting, where points can be added and removed \cite{DBLP:journals/algorithmica/HenzingerLM20}. Analogous to sum-of-radii, related objectives such as the sum of diameters \cite{DBLP:journals/njc/DoddiMRTW00,friggstad_et_al:LIPIcs.ESA.2022.56} and the sum of $\alpha$-th powers of radii (where $\alpha > 1$) \cite{DBLP:conf/isaac/BandyapadhyayV16} have also been investigated in literature. 

Following the seminal work of Chierichetti et al.~\cite{chierichetti2017fair}, several other notions of fairness have been considered in the context of clustering problems. The following is a sample of these works grouped by the fairness notions: individual fairness~\cite{jung2019center,negahbani2021better,vakilian2022improved,brubach2021fairness,aamand2023constant}, proportional fairness~\cite{chen2019proportionally,micha2020proportionally}, fair center representation~\cite{chen2016matroid,krishnaswamy2018constant,kleindessner2019fair,chiplunkar2020solve,hotegni2023approximation}, colorful~\cite{bandyapadhyay2019constant,jia2020fair,anegg2020technique}, and min-max fairness~\cite{abbasi2020fair,GhadiriSV21,makarychev2021approximation,chlamtavc2022approximating,ghadiri2022constant,gupta2022lp}.  

\paragraph{Organization.} First, we define some notation in Section \ref{sec:prelims}. The algorithm for $(t,k)$-fair sum-of-radii appears in Section \ref{sec:t-balanced}. Section \ref{sec:balanced} contains the algorithm for $(1,k)$-fair sum-of-radii. Lastly, we conclude with some open questions in Section \ref{sec:conclusion}. 

\section{Preliminaries}
\label{sec:prelims}

%Consider a metric space $(\Omega,d)$. A ball $B$ centered at a point $c\in \Omega$ and having radius $r$ is the subset of points, $\{p\in \Omega\mid d(p,c)\le r\}$. 
In \textit{sum-of-radii clustering}, we are given a set $P$ of $n$ points in a metric space with distance $d$ and an integer $k > 0$. We would like to find: (i) a subset $C$ of $P$ containing $k$ points and a non-negative integer $r_q$ (called radius) for each $q\in C$, and (ii) a function $\phi$ assigning each point $p\in P$ to a center $q\in C$ such that $d(p,q)\le r_q$. The subset $X_q=\phi^{-1}(q)$ for each $q\in C$ is called the \textit{cluster} corresponding to $q$ having \textit{radius} $r_q$. The goal is to find a clustering $\{X_q\mid q\in C\}$ that minimizes the sum of the radii $\sum_{q\in C} r_q$. 

\medskip
In \textit{$(t,k)$-fair sum-of-radii clustering}, we are given two disjoint groups $P_1$ (red) and $P_{2}$ (blue) having $n$ points in total in a metric space $(\Omega = P_1\cup P_2,d)$ and an integer balance parameter $t \ge 1$. A clustering is called \emph{$(t,k)$-fair} if, for each cluster $X$, the number of points from $P_1$ in $X$ is at least $1/t$ times the number of points from $P_2$ in $X$ and at most $t$ times the number of points from $P_2$ in $X$. The goal is to compute a $(t,k)$-fair clustering that minimizes the sum of the radii of the clusters.   
We say that each cluster in a $(t,k)$-fair clustering is \textit{$t$-balanced}.

\medskip
In \textit{Balanced sum-of-radii clustering}, we are given $\ell\ge 2$ disjoint groups $P_1,P_{2},\ldots,P_\ell$ having $n$ points in total in a metric space $(\Omega = \cup_{i=1}^\ell P_i,d)$ such that $|P_1|=|P_2|=\ldots=|P_\ell|$. A clustering is called \emph{balanced} if, for each cluster $X$, it holds that $|X\cap P_1|=|X\cap P_2|=\ldots=|X\cap P_\ell|$. The goal is to compute a balanced clustering that minimizes the sum of the radii of the clusters. We say that each cluster in a balanced clustering is \textit{$1$-balanced}. 

\medskip  
Consider any metric space $(\Omega_1,d_1)$ and a subset $S_1\subseteq \Omega_1$. For any cluster $Q$ and a point $p$, $d_1(p,Q)=\max_{q\in Q} d_1(p,q)$. The center of $Q$ in $S_1$ is the point, $\arg \min_{p\in S_1} d_1(p,Q)$. The radius of $Q$ w.r.t. $S_1$ and $d_1$, denoted by $r_{(S_1,d_1)}(Q)$, is the distance between $Q$ and its center in $S_1$, i.e., $r_{(S_1,d_1)}(Q)=\min_{p\in S_1} d_1(p,Q)$. We refer to the sum of the radii, w.r.t. $S_1$ and $d_1$, of the clusters in any clustering $\mathcal{C}$ as the cost of $\mathcal{C}$ w.r.t. $S_1$ and $d_1$ and denote it by cost$_{(S_1,d_1)}(\mathcal{C})$. 

We note that the term ``$(t, k)$-fairness'' refers specifically to the two-color case, where each cluster must maintain a red-to-blue ratio within $[1/t, t]$. This notion does not naturally extend to more than two colors. In contrast, in the multi-color setting with $\ell \ge 2$ groups, we adopt the term ``balanced clustering'' (or ``$(1, k)$-fair clustering with $\ell$ groups'') to describe the setting where each cluster must contain an equal number of points from each group. While we use similar notation for consistency, these two notions are structurally different and should be interpreted accordingly.

\usetikzlibrary{calc}
\section{The Algorithm for $(t,k)$-Fair Sum-of-Radii Clustering}
\label{sec:t-balanced}

In this section, we prove Theorem \ref{thm:t-balanced}. To set up the stage, we define the following problem. 

\medskip
\noindent
\begin{tcolorbox}
{\bf Min-cost Degree Constrained Subgraph (Min-cost DCS).} A \textit{Degree Constrained Subgraph} (DCS) $H=(V,E')$ of a graph $G=(V,E)$ is a 
subgraph such that the degree of each vertex $v$ in $H$ is in the range $[l(v),u(v)]$ for given integers $l(v)$ and $u(v)$. Suppose we are also given a weight function $w:E\rightarrow \mathbb{R}^+\cup \{0\}$. A min-cost DCS $H=(V,E')$ of $G$ is a DCS that minimizes the sum of the weights of the edges in $E'$ over all DCS.  
\end{tcolorbox}

The next proposition follows from the work of Gabow \cite{gabow1983efficient}. 

\begin{proposition}[\cite{gabow1983efficient}]
    \emph{Min-cost DCS} can be solved in $O(|V|^4)$ time. 
    % A Min-cost DCS of a given graph $G=(V,E)$ can be computed in $O(|V|^4)$ time. 
\end{proposition}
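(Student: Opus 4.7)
The plan is essentially to verify that the formulation of Min-cost DCS given here matches the problem Gabow considers in \cite{gabow1983efficient} and then invoke his running time bound. First I would check that the definition (spanning subgraph with the degree of each vertex $v$ lying in $[l(v),u(v)]$, minimizing total edge weight under a nonnegative weight function) is literally the degree-constrained subgraph problem that Gabow's algorithm solves, possibly after a trivial preprocessing step such as subtracting a constant from edge weights or discarding edges that cannot appear in any feasible solution.

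The substantive content, if one wanted to establish the bound from scratch rather than cite it, would be a reduction to min-cost perfect matching on a general weighted graph. The standard gadget replaces each vertex $v$ by an auxiliary structure of size $O(u(v) - l(v) + 1)$ with zero-weight internal edges, arranged so that a perfect matching on the blown-up graph picks exactly an allowed number of original edges incident to $v$. After this expansion the graph has $O(|V|)$ vertices (we may assume $u(v)\le |V|-1$ without loss of generality), and applying the best-known min-cost perfect matching routine on general graphs, which runs in cubic time in the vertex count, would yield the stated $O(|V|^4)$ bound.

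The main obstacle in a self-contained proof lies inside the matching routine itself, namely the primal-dual blossom shrinking and expansion together with careful dual variable management, so that the $O(|V|)$ augmenting phases each cost $O(|V|^2)$ amortized. Gabow's contribution in \cite{gabow1983efficient} is precisely an efficient implementation of this machinery adapted directly to the DCS setting, bypassing the explicit gadget reduction. Since we are free to cite that result, the proof of the proposition reduces to observing that our instance is an instance of his problem, and the bound follows.
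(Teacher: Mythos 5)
Your bottom line (cite Gabow, verify the definitions match) is fine, but the self-contained sketch you offer is incorrect in a way that matters, and you also miss the one substantive technicality the paper has to address.

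First, the size of the blown-up graph. The standard Tutte/Gallai gadget for degree-constrained subgraphs replaces a vertex $v$ of degree $d(v)$ by a structure of size $\Theta(d(v))$ (or $\Theta(u(v))$, depending on the variant), not a constant, and typically also introduces two auxiliary vertices per edge. Summed over the whole graph this gives $\Theta(|E|)$ vertices, which in the complete-bipartite instance the paper uses is $\Theta(|V|^2)$, not $O(|V|)$. A cubic-time min-cost perfect matching routine on that expanded graph therefore costs $O(|V|^6)$, not $O(|V|^4)$ --- in fact the paper explicitly mentions that the alternative reference~\cite{rajabi2023computing}, which does proceed roughly this way, achieves only $O(|V|^6)$. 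The whole point of citing Gabow's direct implementation is to avoid paying for the gadget blow-up; your sketch reintroduces it and asserts a size bound that does not hold.

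Second, the paper's own justification is more specific than ``Gabow handles this problem.'' It invokes Theorem 5.2 of Gabow, whose stated bound is $O\bigl((\sum_{i\in V} u_i)\cdot\min\{|E|\log|V|,\,|V|^2\}\bigr)$, and then bounds $\sum_i u_i=O(|V|^2)$ (each $u_i$ may be taken at most the degree) to obtain $O(|V|^4)$. It also flags a real mismatch you do not mention: Gabow solves the \emph{maximization} version with real weights, while the algorithm here needs the minimization version; the paper handles this by negating the edge weights, which is legitimate precisely because Gabow allows arbitrary real (hence negative) weights. If you are going to claim the instance matches Gabow's problem ``literally,'' you need to say how the min-versus-max discrepancy is resolved.
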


The proposition essentially follows from Theorem 5.2 \cite{gabow1983efficient}. There the stated time complexity is $O((\sum_{i\in V} u_i)$ $\min \{|E| \log |V|, |V|^2\})$, which is $O(|V|^4)$, as each upper-bound $u_i$ can be assumed to be at most the degree of the $i$-th vertex. One technicality is that they study the maximization version (with real weights) and we the minimization, but the minimization version can be solved by the standard method of negating the edge weights in min-cost DCS. Also see \cite{rajabi2023computing} that contains a similar discussion and an $O(|V|^6)$ time algorithm for min-cost DCS, which they call \textit{minimum-cost many-to-many matching with demands and capacities}. 

\begin{observation}
    A min-cost DCS with $l(v) = 1$ for all $v\in V$ does not contain a path of length three, and thus it is a disjoint union of star graphs. 
\end{observation}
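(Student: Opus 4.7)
The plan is to use a simple exchange argument to rule out paths of length three in a min-cost DCS, and then combine this with the bipartite structure of the complete bipartite graph $G$ on $P_1\cup P_2$ (together with the lower bound $l(v)=1$) to deduce the structural conclusion. I expect the exchange step to be short; the only nontrivial bookkeeping is to confirm that deleting the middle edge of a length-three path keeps both lower- and upper-bound degree constraints intact.

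For the first step, suppose toward contradiction that a min-cost DCS $H$ contains a path $u_1 u_2 u_3 u_4$ made of edges $e_1, e_2, e_3$. I would consider $H' = H \setminus \{e_2\}$. Before deletion, $u_2$ is incident to both $e_1$ and $e_2$, so $\deg_H(u_2) \ge 2$, and symmetrically $\deg_H(u_3) \ge 2$; hence in $H'$ both still have degree $\ge 1 = l(\cdot)$, every other vertex has an unchanged degree, and no upper bound is violated since degrees can only drop. Because distances between distinct points in a metric are strictly positive, $w(e_2) > 0$, so $H'$ is a strictly cheaper valid DCS, contradicting the minimality of $H$. (If one wishes to allow zero-weight edges between distinct points, the same swap preserves minimality while reducing $|E(H)|$, so among min-cost DCS there is always one with no path of length three, which is all that is needed downstream.)

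For the second step, I would argue that $H$, being a subgraph of the bipartite graph $G$, is itself bipartite and therefore has no odd cycle, in particular no triangle. Any cycle of length four or more contains a subgraph isomorphic to a path of three edges, which is impossible by the previous step; together with the absence of triangles this forces $H$ to be acyclic, i.e., a forest. A tree that contains no path of length three has diameter at most two and is therefore a star $K_{1,m}$. Finally, the requirement $l(v)=1$ rules out isolated vertices, so every connected component of $H$ is a star with at least one edge, and $H$ is a disjoint union of stars as claimed. The only mildly delicate point is keeping the notion of ``length'' consistent (three edges, four vertices), after which every step is a one-line verification.
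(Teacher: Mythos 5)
Your exchange argument for ruling out a length-three path is exactly the paper's proof: both $b$ and $c$ (in your notation $u_2$ and $u_3$) have degree at least $2$, so deleting the middle edge keeps the lower bound $l(\cdot)=1$ satisfied and strictly decreases the cost. You are a bit more careful than the paper on two points, both worth noting. First, you observe that strict positivity of $w(e_2)$ is needed for a genuine contradiction with minimality; in the paper this is automatic because $P_1$ and $P_2$ are disjoint subsets of a metric space so $d(p,q)>0$ for every edge $\{p,q\}\in E$, but flagging the issue (and the lex tie-break fallback) is reasonable. Second, and more substantively, you actually prove the inference ``no path of length three $\Rightarrow$ disjoint union of stars,'' which the paper leaves implicit. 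That inference is not true for arbitrary graphs (a triangle has no $P_4$ but is not a star); it does hold here because $H$ is a subgraph of the bipartite graph $G$ and hence triangle-free, and every longer cycle contains a $P_4$. Your route through bipartiteness, plus ``tree with no $P_4$ has diameter $\le 2$, hence is a star,'' plus $l(v)=1$ ruling out isolated vertices, is exactly the right way to close that gap. So: same core approach, with the second half of the observation made rigorous rather than asserted.
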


\begin{proof}
    Since every vertex has lower‑bound \(1\), if there were a path \(a\!-\!b\!-\!c\!-\!d\) in our DCS then both \(b\) and \(c\) would have degree at least 2. Removing the middle edge \(b\!-\!c\) leaves all degrees \(\ge1\), contradicting minimality of the solution.
\end{proof}

Our algorithm is as follows. 
\medskip
\noindent
\begin{tcolorbox}
\noindent
{\bf The Algorithm.}\\
\textbf{1.} Construct a graph $G=(V,E)$ where $V=P_1\cup P_2$ and $E=\{\{p,q\}\mid p\in P_1, q\in P_2\}$. Define the weight function $w$ such that for each edge $e=\{p,q\}$, $w(e)=d(p,q)$. Compute a min-cost DCS $H=(V,E')$ of $G$ with $l(v)=1$ and $u(v)=t$ for all $v\in V$.  

\medskip
\noindent
\textbf{2.} 
%Consider the metric space $(\Omega',d')$ induced by the shortest path metric in the following graph $G'$ with vertex set $\Omega'$ defined in the following way. Initially $\Omega'=\Omega$. For each star $S=(V_1,E_1)$ in $H$, add a vertex $S$ to $\Omega'$. For $p\in \Omega$, add the edge $\{p,S\}$ to $G'$ and set $d'(p,S)=\max_{q\in S} d(p,q)$. 
Construct an edge-weighted graph $G'$ in the following way: For each $p\in \Omega$, add a vertex to $G'$; For each star $S$ in $H$, add a vertex corresponding to $S$ to $G'$, which we also call by $S$; For each $p,q \in \Omega$, add the edge $\{p,q\}$ to $G'$ with weight $d(p,q)$; For all $p\in \Omega$ and $S$ in $H$, add the edge $\{p,S\}$ to $G'$ with weight $\max_{q\in S} d(p,q)$. Let $d'$ be the shortest path metric in $G'$. Construct the metric space $(\Omega',d')$ where $\Omega'$ is the subset of vertices in $G'$ corresponding to the stars in $H$. 

% For any two stars $S,S'$ in $H$, $d'(S,S')$ is the weight of a shortest path between $S$ and $S'$ in $G'$. (We prove later that $d'$ is a metric.) 

% and set $d'(p,q)=d(p,q)$. For each star $S$ in $H$, add a vertex corresponding to $S$ to $\Omega'$, which we also call by $S$. For all $p\in \Omega$, add the edge $\{p,S\}$ to $G'$ and set $d'(p,S)=\max_{q\in S} d(p,q)$. 

\smallskip
\noindent
\textbf{3.} Compute a sum of radii clustering $X=\{X_1,\ldots,X_k\}$ of the points in $\Omega'$ using the Algorithm of Buchem et al.~\cite{buchem20243+} (with $\Omega'$ also being the candidate set of centers). 

% \smallskip
% \noindent
% \textbf{3.} Compute a sum of radii clustering $X=\{X_1,\ldots,X_k\}$ of the points in $\Omega'\setminus \Omega$ using the Algorithm in \cite{DBLP:journals/jcss/CharikarP04} with $\Omega$ being the candidate set of centers. 

\smallskip
\noindent
\textbf{4.} Compute a clustering $X'$ of the points in $P_1\cup P_2$ using $X$ in the following way. For each cluster $X_i\in X$, add the cluster  $\cup_{p\in S\mid S\in X_i} \{p\}$ to $X'$. Return $X'$.  
\end{tcolorbox}

Next, we analyze the algorithm. First, we have the following observation. 

\begin{observation}
    $X'$ is a $(t,k)$-fair clustering of $P_1\cup P_2$.  
\end{observation}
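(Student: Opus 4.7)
The plan is to unpack the structure of each star in $H$ and then do a simple counting argument on the union of stars in each cluster of $X'$.

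First I would invoke the previous observation to note that $H$ decomposes as a disjoint union of stars, each having at least one edge (because $l(v)=1$ forces every vertex to participate in some edge, so no isolated vertices exist). Since $G$ is bipartite with sides $P_1$ and $P_2$, every edge of $H$ is bichromatic, and so inside any star, the center lies in one color class while all leaves lie in the opposite color class. Combined with $u(v)=t$, this means each star $S$ falls into one of two types:
\begin{itemize}
\item[(a)] $S$ has $1$ red point (center) and $s\in[1,t]$ blue points (leaves), or
\item[(b)] $S$ has $1$ blue point (center) and $s\in[1,t]$ red points (leaves).
\end{itemize}
In either case the vertex sets of the stars form a partition of $V=P_1\cup P_2$, so the map $X_i\mapsto \bigcup_{S\in X_i} V(S)$ in Step~4 yields a partition $X'$ of $P_1\cup P_2$ into at most $k$ clusters. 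It remains to verify that each cluster of $X'$ is $t$-balanced.

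Fix a cluster $X_i\in X$ and let $C=\bigcup_{S\in X_i} V(S)$ be the corresponding cluster in $X'$. Let $a$ be the number of type-(a) stars in $X_i$, contributing $a$ red points and a total of $b_a$ blue points with $a\le b_a\le ta$, and let $b$ be the number of type-(b) stars in $X_i$, contributing $b$ blue points and a total of $r_b$ red points with $b\le r_b\le tb$. Then $|C\cap P_1|=a+r_b$ and $|C\cap P_2|=b+b_a$. Using $a\le b_a$ and $r_b\le tb$, we obtain
\[
|C\cap P_1|=a+r_b\;\le\;b_a+tb\;\le\;t(b_a+b)\;=\;t\,|C\cap P_2|,
\]
and symmetrically, using $b\le r_b$ and $b_a\le ta$,
\[
|C\cap P_2|=b+b_a\;\le\;r_b+ta\;\le\;t(r_b+a)\;=\;t\,|C\cap P_1|.
\]
Hence every cluster of $X'$ is $t$-balanced, establishing that $X'$ is a $(t,k)$-fair clustering.

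There is no real obstacle here: the whole argument is bookkeeping that falls out of the star structure imposed by the DCS constraints. The only thing one has to be careful about is making sure isolated vertices are excluded (guaranteed by $l(v)=1$) and that every star contains at least one point of each color (also guaranteed, since the star has at least one edge and $G$ is bipartite), so that the ``$1$'' factors in the bounds above are genuinely usable.
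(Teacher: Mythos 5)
Your proof is correct and follows essentially the same approach as the paper: both rest on the observation that each star of $H$ is $t$-balanced (one center of one color, between $1$ and $t$ leaves of the other) and that $t$-balance is preserved under disjoint union. The paper states the latter fact implicitly (as a ratio/mediant argument), whereas you spell it out with an explicit counting inequality, but the underlying idea is identical.
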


\begin{proof}
Consider any cluster $C$ in $X'$. Note that $C$ is a union of the points of a set of disjoint stars of $H$. Due to the degree bounds in $H$, the ratio of points from $P_1$ and $P_2$ in any star, and consequently in $C$, fall within the range $[1/t, t]$. This satisfies the $(t,k)$-fairness constraint for $X'$.
    
\end{proof}

Next, we analyze the approximation factor. 
Let $\mathcal{C}^*=\{C_1^*, C_2^*, \ldots, C_k^*\}$ be a fixed optimal $(t,k)$-fair  clustering. We will prove the following lemma. 
%Denote the radius of any cluster $Q$ by $r(Q)$, i.e., $r(Q)=\min_{p\in \Omega} d(p,Q)$. A point $\arg \min_{p\in \Omega} d(p,Q)$ is called the center of $Q$ and is denoted by $c(Q)$. We refer to the sum of radii of the clusters in any clustering $\mathcal{C}$ as the cost of $\mathcal{C}$ and denote it by cost$(\mathcal{C})$. We have the following lemma whose proof is given later. 

\begin{lemma}\label{lem:approx-cluster-of-star-points-32-factor}
    Consider the clustering $X$ of $\Omega'$ constructed in Step 3 of the algorithm. Then cost$_{(\Omega',d')}(X)\le \mathbf{(48+\epsilon)}\cdot \sum_{i=1}^k r_{(\Omega,d)}(C_i^*)$. 
\end{lemma}

\begin{corollary}
     Consider the clustering $X'$ of $P_1\cup P_2$ constructed in Step 4 of the algorithm. Then cost$_{(\Omega,d)}(X')\le \mathbf{(144+\epsilon)}\cdot \sum_{i=1}^k r_{(\Omega,d)}(C_i^*)$. Thus, our algorithm is a $\mathbf{(144+\epsilon)}$-approximation algorithm. 
\end{corollary}

\begin{proof}
    %We claim that cost$(X')\le $cost$(X)$. Then the corollary follows by Lemma \ref{lem:approx-cluster-of-star-points-32-factor}. Consider any cluster $X_i$ of $X$ and the cluster $X_i'$ in $X'$ constructed from it. For any star $S\in X_i$, $d'(c(X_i),S)=\max_{q\in S} d(c(X_i),q)$, and we add all the points of $S$ to $X_i'$. Thus, $r(X_i)\ge d'(c(X_i),S)=\max_{q\in S} d(c(X_i),q)$, and for any $q\in S$, $d(c(X_i),q)\le r(X_i)$. It follows that the ball in $(\Omega,d)$ centered at $c(X_i)\in \Omega$ and having radius $r(X_i)$ contains all the points in $X_i'$. Hence, $r(X_i')\le r(X_i)$ and the claim follows. 
    We claim that cost$_{(\Omega,d)}(X')\le 3\cdot $cost$_{(\Omega',d')}(X)$. Then the corollary follows by Lemma \ref{lem:approx-cluster-of-star-points-32-factor}. Consider any cluster $X_i$ of $X$ and the cluster $X_i'$ in $X'$ constructed from it. Let $S$ in $\Omega'$ be the center of $X_i$. Now, for any $S'\in X_i$, $d'(S,S')$ is the weight of a shortest path in $G'$ between $S$ and $S'$. Let $p'\in \Omega$ be the successor of $S$ on such a shortest path. So, $d'(S,S')\ge d'(S,p')=\max_{y\in S} d(y,p')\ge d(c,p')$, where $c$ is the central vertex of the star $S$, which is in $\Omega$. The equality follows by the definition of $d'$ and the fact that $d$ is a metric. Then, $d'(c,S')\le d'(c,p')+d'(p',S)+d'(S,S')\le 3\cdot d'(S,S')$. The first inequality is due to triangle inequality. Now, similarly, $d'(c,S')=\max_{q'\in S'} d(c,q')$. It follows that the ball in $(\Omega,d)$ centered at $c\in \Omega$ and having radius $3\cdot r_{(\Omega',d')}(X_i)$ contains all the points in $X_i'$. Hence, $r_{(\Omega,d)}(X_i')\le 3\cdot r_{(\Omega',d')}(X_i)$ and the claim follows. 
\end{proof}

\subsection{Proof of Lemma \ref{lem:approx-cluster-of-star-points-32-factor}}
In the following, we are going to prove Lemma \ref{lem:approx-cluster-of-star-points-32-factor}. Consider the min-cost DCS $H=(V,E')$ computed in Step 1. Also, consider the optimal clusters in $\mathcal{C}^*$. We construct a new clustering $\hat{\mathcal{C}}=\{\hat{C}_1, \hat{C}_2, \ldots, \hat{C}_\kappa\}$ by merging clusters in $\mathcal{C}^*$ in the following way, where $1\le \kappa\le k$. Initially, we set $\hat{\mathcal{C}}$ to $\mathcal{C}^*$. For each edge $\{p,q\}$ of $E'$ such that $p\in \hat{C}_i, q\in \hat{C}_j$ and $i\ne j$, replace $\hat{C}_i, \hat{C}_j$ in $\hat{\mathcal{C}}$ by their union and denote it by $\hat{C}_i$ as well. 

When the above merging procedure ends, by renaming the indexes, let $\hat{\mathcal{C}}=\{\hat{C}_1, \hat{C}_2, \ldots, \hat{C}_\kappa\}$ be the new clustering. Then, we have the following observations. 

\begin{observation}\label{obs:S-is-in-some-C-i}
    Consider any star $S$ in $H$. Then, for some $1\le i\le \kappa$, all the points of $S$ are contained in a $\hat{C}_i$.  
\end{observation}

\begin{observation}\label{obs:star-points-in-cluster}
    Consider any star $S$ in $H$ and the cluster $\hat{C}_i (\supseteq S)$ with center $c\in \Omega$. Then, for any $p \in S$, $d(p,c)\le r_{(\Omega,d)}(\hat{C}_i)$. 
\end{observation}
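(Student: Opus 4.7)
The plan is to give a direct two-line argument that simply unpacks the definitions, using the preceding observation as the sole nontrivial input. Since this observation is essentially a sanity check that lets later parts of the analysis move between the ``star-level'' picture and the ``point-level'' picture inside a supercluster, the proof should not require any new machinery.

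First I would invoke Observation \ref{obs:S-is-in-some-C-i} to conclude that every point of the star $S$ lies in $\hat{C}_i$. In particular, for the fixed $p \in S$ we have $p \in \hat{C}_i$. Next I would unpack the definition of radius from Section \ref{sec:prelims}: by definition, $r_{(\Omega,d)}(\hat{C}_i) = \min_{x \in \Omega} \max_{q \in \hat{C}_i} d(x,q)$, and the center $c \in \Omega$ is the minimizer, so $r_{(\Omega,d)}(\hat{C}_i) = \max_{q \in \hat{C}_i} d(c,q)$. Since $p \in \hat{C}_i$, the quantity $d(c,p)$ is one of the terms in this maximum, and therefore $d(p,c) \le r_{(\Omega,d)}(\hat{C}_i)$, as claimed.

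There is really no main obstacle here; the statement is essentially a restatement of the definition of ``radius of a cluster with respect to $(\Omega,d)$,'' composed with the containment $S \subseteq \hat{C}_i$ guaranteed by the cluster-merging procedure. I would keep the proof to two or three sentences so as not to distract from the more substantive lemmas (Lemmas \ref{lem:optimal-cluster-of-stars-8-factor} and \ref{lem:cost-of-pi}) that follow.
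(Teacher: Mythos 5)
Your proof is correct and takes essentially the same approach as the paper: both simply observe that $S \subseteq \hat{C}_i$ and then unwind the definition of $r_{(\Omega,d)}(\hat{C}_i)$ as $\max_{q\in\hat{C}_i} d(c,q)$, of which $d(c,p)$ is one term. Your version is a touch more explicit in writing out the $\min\max$ formula, but there is no substantive difference.
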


\begin{proof}
Because the points of $S$ are in $\hat{C}_i$, the farthest any point in $S$ can be from $c$ is not more than the farthest any point in $\hat{C}_i$ is from $c$. So, for any point $p$ in $S$, the distance between $p$ and $c$ is less than or equal to the maximum distance between any point in $\hat{C}_i$ and $c$, which we denote as $r_{(\Omega,d)}(\hat{C}_i)$.   
\end{proof}
    
\begin{observation}\label{obs:d'-p-to-c-is-at-most-radius}
    Consider the point $p$ in $\Omega'$ corresponding to a star $S$ in $H$ and the cluster $\hat{C}_i \supseteq S$ with center $c\in \Omega$. Then, $d'(p,c)\le r_{(\Omega,d)}(\hat{C}_i)$. 
\end{observation}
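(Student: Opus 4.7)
The plan is to exhibit a single edge in the graph $G'$ from the star-vertex $p$ to the point $c$ whose weight is already at most $r_{(\Omega,d)}(\hat{C}_i)$; since $d'$ is the shortest-path metric on $G'$, this immediately upper-bounds $d'(p,c)$.

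More concretely, by the construction in Step 2 of the algorithm, for every point in $\Omega$ and every star of $H$ there is an explicit edge in $G'$ between them. In particular, there is an edge $\{p,c\}$ in $G'$ of weight $\max_{q\in S} d(c,q)$, since $p$ is the vertex corresponding to the star $S$ and $c\in\Omega$. The shortest-path distance is at most the weight of this direct edge, so
\[
d'(p,c)\;\le\;\max_{q\in S} d(c,q).
\]

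Now I would apply Observation \ref{obs:star-points-in-cluster}: every point $q\in S$ lies in the cluster $\hat{C}_i$ (by Observation \ref{obs:S-is-in-some-C-i}) whose center is $c$, so $d(q,c)\le r_{(\Omega,d)}(\hat{C}_i)$ for every $q\in S$. Taking the maximum over $q\in S$ yields $\max_{q\in S} d(c,q) \le r_{(\Omega,d)}(\hat{C}_i)$, and chaining with the previous inequality finishes the proof.

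There is essentially no obstacle here: the statement is a direct consequence of the definition of the auxiliary graph $G'$ (which puts a ``star-to-point'' edge of exactly the right weight) together with the already-established Observation \ref{obs:star-points-in-cluster}. The only thing to be careful about is not to overshoot by routing through intermediate vertices — but since we only need an upper bound on $d'(p,c)$, the single-edge witness suffices.
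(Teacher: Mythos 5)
Your proof is correct and follows essentially the same route as the paper: use the direct edge $\{p,c\}$ (or $\{S,c\}$) in $G'$ of weight $\max_{q\in S}d(q,c)$, then bound that maximum via Observation \ref{obs:star-points-in-cluster}. The only cosmetic difference is that the paper asserts the equality $d'(p,c)=\max_{q\in S}d(q,c)$ using metricity, whereas you observe (correctly, and in fact more economically) that the single-edge upper bound $d'(p,c)\le\max_{q\in S}d(q,c)$ is all that is needed.
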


\begin{proof}
    As $d$ is a metric, $d'(p,c) = \max_{q \in S} d(q,c)$. By Observation \ref{obs:star-points-in-cluster}, $d(q,c)\le r_{(\Omega,d)}(\hat{C}_i)$. It follows that $d'(p,c) \le r_{(\Omega,d)}(\hat{C}_i)$. 
%    Since $c$ is the center of $\hat{C}_i$, it minimizes the maximum distance to any point within the cluster, meaning $r_{(\Omega,d)}(\hat{C}_i) = \max_{q \in \hat{C}_i} d(q,c)$. As the points in $S$ are in $\hat{C}_i$, the maximum distance from $c$ to any point in $S$ cannot exceed the maximum distance to any point in $\hat{C}_i$. Consequently, $d'(p,c) = \max_{q \in S} d(q,c) \leq \max_{q \in \hat{C}_i} d(q,c) = r_{(\Omega,d)}(\hat{C}_i)$. 
    %This demonstrates that the distance between the point $p$ corresponding to the star $S$ and the cluster center $c$ is less than or equal to the radius of the cluster $\hat{C}_i$.
\end{proof}

Consider the clustering $\mathcal{C}'=\{C_1',\ldots,C'_\kappa\}$ of $\Omega'$ defined in the following way. For each star $S$ in $H$, identify the cluster $\hat{C}_i$ in $\hat{\mathcal{C}}$ that contains all the points in $S$. By Observation \ref{obs:S-is-in-some-C-i}, such an index $i$ exists. Assign the point $p$ in $\Omega'$ corresponding to $S$ to $C_i'$.  

\begin{lemma}\label{lem:costofC'-2-color}
    cost$_{(\Omega',d')}(\mathcal{C}')\le 2\cdot $ cost$_{(\Omega,d)}(\hat{\mathcal{C}})$. 
\end{lemma}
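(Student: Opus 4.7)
The plan is a short triangle-inequality argument that converts the bound on $d'(p,c)$ from Observation \ref{obs:d'-p-to-c-is-at-most-radius} (where $c \in \Omega$ is the center of $\hat{C}_i$) into a bound on a radius achievable with a center lying inside $\Omega'$, which is what cost$_{(\Omega',d')}(\mathcal{C}')$ requires.

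First I would fix, for each $i \in [\kappa]$, a center $c_i \in \Omega$ witnessing $r_{(\Omega,d)}(\hat{C}_i)$. Each $\hat{C}_i$ is nonempty and, by the merging procedure together with the lower-bound $l(v)=1$ in $H$, every vertex of $\Omega$ belongs to some star of $H$; consequently $\hat{C}_i$ contains at least one star, so the corresponding cluster $C_i' \subseteq \Omega'$ is nonempty. Pick any $p_i^* \in C_i'$.

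Next I would apply Observation \ref{obs:d'-p-to-c-is-at-most-radius} twice: for every $q \in C_i'$, the corresponding star lies inside $\hat{C}_i$, so $d'(q,c_i) \le r_{(\Omega,d)}(\hat{C}_i)$, and similarly $d'(p_i^*,c_i) \le r_{(\Omega,d)}(\hat{C}_i)$. Because $d'$ is a shortest-path metric on $G'$, the triangle inequality gives
\[
d'(p_i^*, q) \;\le\; d'(p_i^*, c_i) + d'(c_i, q) \;\le\; 2\, r_{(\Omega,d)}(\hat{C}_i).
\]
Taking the max over $q \in C_i'$ and noting that $p_i^* \in \Omega'$, this yields $r_{(\Omega',d')}(C_i') \le \max_{q \in C_i'} d'(p_i^*,q) \le 2\, r_{(\Omega,d)}(\hat{C}_i)$.

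Finally, I would sum over $i=1,\ldots,\kappa$ to obtain cost$_{(\Omega',d')}(\mathcal{C}') \le 2\cdot$cost$_{(\Omega,d)}(\hat{\mathcal{C}})$. There is no real obstacle here; the only subtlety is that the radius in $(\Omega',d')$ is measured with respect to centers restricted to $\Omega'$, which is precisely why a triangle inequality detour through $c_i \in \Omega$ costs the factor of two and cannot be eliminated.
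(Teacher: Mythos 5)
Your proposal is correct and takes essentially the same approach as the paper: bound $d'$ between any two stars in $C_i'$ via the triangle inequality through the center $c_i\in\Omega$ using Observation~\ref{obs:d'-p-to-c-is-at-most-radius}, then pick an arbitrary star of $C_i'$ as the center in $\Omega'$ and sum over~$i$.
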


\begin{proof}
    %It is sufficient to prove that $r(C_i')\le r(\hat{C}_i)$. Let $c$ be the center of $\hat{C}_i$. Consider any star $S$ such that its corresponding point $p$ in $\Omega'\setminus\Omega$ is in $C_i'$. Then, by Observation \ref{obs:d'-p-to-c-is-at-most-radius}, $d'(p,c)\le r(\hat{C}_i)$. As $c$ is in $\Omega$, it follows that, the radius of $C_i'$ is at most $r(\hat{C}_i)$. 
    First, we claim that $r_{(\Omega,d')}(C_i')\le r_{(\Omega,d)}(\hat{C}_i)$ for all $1\le i\le \kappa$. Let $c$ in $\Omega$ be the center of $\hat{C}_i$. Consider any star $S$ such that its corresponding point $p$ in $\Omega'$ is in $C_i'$. Then, by Observation \ref{obs:d'-p-to-c-is-at-most-radius}, $d'(p,c)\le r_{(\Omega,d)}(\hat{C}_i)$. As $c$ is in $\Omega$, it follows that, $r_{(\Omega,d')}(C_i')$ is at most $r_{(\Omega,d)}(\hat{C}_i)$. 

    Now, consider any two $S,S' \in C_i'$. By the above claim, $d'(S,S')\le 2\cdot r_{(\Omega,d)}(\hat{C}_i)$. Thus, for each such cluster $C_i'$, we can set a point $S\in C_i'$ as the center. As $S\in \Omega'$, $r_{(\Omega',d')}\le 2\cdot r_{(\Omega,d)}(\hat{C}_i)$. Summing over all clusters $C_i'$, we obtain the lemma.   
\end{proof}

We will prove the following lemma. 

\begin{lemma}\label{lem:optimal-cluster-of-stars-8-factor}
    cost$_{(\Omega,d)}(\hat{\mathcal{C}})\le 8\cdot \sum_{i=1}^k r_{(\Omega,d)}(C_i^*)$. 
\end{lemma}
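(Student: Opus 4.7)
The plan is to prove the bound one supercluster at a time. Fix $\hat{C}_i$ and let $\mathcal{A}_i \subseteq \{1,\ldots,k\}$ be the set of indices of the optimal clusters whose union is $\hat{C}_i$. By the merging rule defining $\hat{\mathcal{C}}$, the ``component graph'' $G^*_i$ whose vertex set is $\{C^*_j : j \in \mathcal{A}_i\}$ (one vertex per merged cluster) and whose edges are the cross-cluster edges of $H$ is connected. I would prove, for each $i$,
\[
r_{(\Omega,d)}(\hat{C}_i) \;\le\; 8 \sum_{j \in \mathcal{A}_i} r_{(\Omega,d)}(C_j^*),
\]
and then sum over $i$. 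Since $\{\mathcal{A}_i\}_i$ partitions $\{1,\ldots,k\}$, this immediately gives cost$_{(\Omega,d)}(\hat{\mathcal{C}}) \le 8 \sum_{j=1}^k r_{(\Omega,d)}(C_j^*)$.

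To establish the per-supercluster inequality, I would fix an arbitrary index $j_0 \in \mathcal{A}_i$ and let $c_{j_0} \in \Omega$ be the center of $C^*_{j_0}$. Using $c_{j_0}$ as a candidate center for $\hat{C}_i$, it suffices to upper-bound $d(c_{j_0}, p)$ for every $p \in \hat{C}_i$. Let $C^*_{j_p}$ contain $p$ and take a minimum-switch directed path $\pi^*$ in $G^*_i$ from $C^*_{j_0}$ to $C^*_{j_p}$ (guaranteed to exist because $G^*_i$ is connected). This path corresponds to a sequence of clusters $C^*_{j_0}, C^*_{j_1}, \ldots, C^*_{j_m} = C^*_{j_p}$ and $H$-edges $e_1,\ldots,e_m$ where each $e_l$ has endpoints in $C^*_{j_{l-1}}$ and $C^*_{j_l}$.

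Next I would construct an explicit walk in $\Omega$ realising $\pi^*$: start at $c_{j_0}$, move inside $C^*_{j_0}$ to the endpoint of $e_1$ (cost $\le r_{(\Omega,d)}(C^*_{j_0})$), traverse $e_1$ (cost $=w(e_1)$), for each intermediate cluster $C^*_{j_l}$ move from the endpoint of $e_l$ to the endpoint of $e_{l+1}$ (cost $\le 2 r_{(\Omega,d)}(C^*_{j_l})$), traverse each remaining $e_{l+1}$, and finally inside $C^*_{j_p}$ move from the endpoint of $e_m$ to $p$ (cost $\le 2 r_{(\Omega,d)}(C^*_{j_p})$). The triangle inequality then yields
\[
d(c_{j_0},p) \;\le\; r_{(\Omega,d)}(C^*_{j_0}) + 2 r_{(\Omega,d)}(C^*_{j_p}) + \sum_{l=1}^{m-1} 2 r_{(\Omega,d)}(C^*_{j_l}) + \sum_{l=1}^{m} w(e_l).
\]
The cluster-diameter terms sum to at most $2 \sum_{j \in \mathcal{A}_i} r_{(\Omega,d)}(C^*_j)$. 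By Lemma~\ref{lem:cost-of-pi}, the $H$-edge weight $\sum_{l} w(e_l)$ of the minimum-switch path is at most $6 \sum_{j \in \mathcal{A}_i} r_{(\Omega,d)}(C^*_j)$, because the ``associated'' clusters of $\pi^*$ form a subset of $\mathcal{A}_i$. Adding these two contributions gives $d(c_{j_0}, p) \le 8 \sum_{j \in \mathcal{A}_i} r_{(\Omega,d)}(C^*_j)$ uniformly in $p$, which is the desired per-supercluster bound.

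The main obstacle is Lemma~\ref{lem:cost-of-pi} itself, whose proof — as previewed in the introduction — requires the delicate $E_1', E_2'$ replacement arguments and the minimum-switch property; once that lemma is in hand, the present proof is just a telescoping triangle-inequality computation along a single path. Minor bookkeeping issues are easily handled: the case $j_0 = j_p$ is trivial (the sum is empty and the bound is $\le 2 r_{(\Omega,d)}(C^*_{j_0})$); the endpoints of each cross-cluster edge of $H$ are genuine points of $\Omega$ so the walk is well-defined; and distinct superclusters contribute disjoint summands on the right-hand side because the $\mathcal{A}_i$ partition $\{1,\ldots,k\}$. Note that even the weaker no-switch factor of $4$ claimed for Lemma~\ref{lem:cost-of-pi} only tightens the bound, but the uniform factor $6$ handles every $\pi^*$ simultaneously and gives exactly the constant $8$ in the lemma statement.
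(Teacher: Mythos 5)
Your proposal is correct and mirrors the paper's proof: both reduce to a per-supercluster bound, invoke a minimum-switch path in the contracted graph $G^*$ together with Lemma \ref{lem:cost-of-pi} ($\le 6\sum r(C^*_i)$ for the $H$-edge weights), and add at most $2r(C^*_i)$ per cluster on the path for the intra-cluster detours, yielding the factor $8$. The only cosmetic difference is the choice of candidate center (you pick the center $c_{j_0}$ of one constituent optimal cluster; the paper picks an arbitrary point $p\in\hat{C}_j$ and bounds the distance to the farthest $q$), which does not change the argument.
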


Then, Lemma \ref{lem:approx-cluster-of-star-points-32-factor} follows by Lemma \ref{lem:optimal-cluster-of-stars-8-factor} and \ref{lem:costofC'-2-color} noting that the Algorithm of Buchem et al.~\cite{buchem20243+} yields a $(3+\epsilon)$-factor approximation to the optimal clustering (along with an appropriate scaling of $\epsilon$). In the rest of this section, we prove Lemma \ref{lem:optimal-cluster-of-stars-8-factor}. 

\subsection{Proof of Lemma \ref{lem:optimal-cluster-of-stars-8-factor}}\label{sec:switch}
For simplicity of notation, we drop $(\Omega,d)$ from $r_{(\Omega,d)}(.)$, as henceforth centers are always assumed to be in $\Omega$ and the metric to be $d$. 
Let us consider any fixed $\hat{C}_i$, and suppose it is constructed by merging the clusters $C_{i_1}^*,C_{i_2}^*,\ldots,C_{i_\tau}^*$. 
It is sufficient to prove that $r(\hat{C}_i)\le 8\cdot \sum_{j=1}^\tau r(C_{i_j}^*)$. For simplicity of notation, we rename $\hat{C}_i$ by $\hat{C}$, and $C_{i_1}^*,C_{i_2}^*,\ldots,C_{i_\tau}^*$ by $C_{1}^*,C_{2}^*,\ldots,C_{\tau}^*$. 

Let $H_1=(V_1,E_1)$ be the induced subgraph of $H$ such that the vertices of $V_1$ are in $\hat{C}$. We refer to a point of $P_1$ as a red point and a point of $P_2$ as a blue point. Note that the edges of $H$ are across red and blue points. In the following, we construct an edge-weighted, directed multi-graph $G^*=(V^*,E^*)$ in the following manner. $G^*$ has a vertex $v_j$ corresponding to each cluster $C_j^*$, where $1\le j\le \tau$. There is an edge $e=(v_i,v_j)$ from $v_i$ to $v_j$ for each $p\in P_1\cap C_i^*$ and $q\in P_2\cap C_j^*$ such that $\{p,q\}$ is in $E_1$. We refer to such an edge as a $0$-edge, i.e., its parity is 0. The weight $\omega_e$ of the edge $e$ is $d(p,q)$. Similarly, there is a $1$-edge (or parity 1 edge) $e=(v_i,v_j)$ from $v_i$ to $v_j$ for each $p\in P_2\cap C_i^*$ and $q\in P_1\cap C_j^*$ such that $\{p,q\}$ is in $E_1$. The weight $\omega_e$ of the edge $e$ is $d(p,q)$. For each edge $e_i\in E^*$, we denote the corresponding edge in $E_1$ by $\{r_i,b_i\}$, where $r_i$ is the red point and $b_i$ is the blue point. For simplicity of exposition, we are going to make heavy use of this correspondence.   

\begin{observation}\label{obs:$0$-$1$-edges-existence}
    Suppose there is a $0$-edge (resp. $1$-edge) $(v_i,v_j)$ in $E^*$. Then there is also a $1$-edge (resp. $0$-edge) $(v_j,v_i)$ in $E^*$. 
\end{observation}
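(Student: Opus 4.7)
The plan is to unpack the definitions of $0$-edges and $1$-edges and exploit the symmetry built into them. Each undirected edge $\{p,q\} \in E_1$ (where $p \in P_1$ and $q \in P_2$, since $E_1$ is bichromatic) should induce exactly two directed edges in $E^*$: one read as "red endpoint in $C_i^*$, blue endpoint in $C_j^*$" (a $0$-edge) and one read as "blue endpoint in $C_j^*$, red endpoint in $C_i^*$" (a $1$-edge). So the two directions just correspond to reading the same underlying edge of $E_1$ from opposite perspectives.

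Concretely, I would argue as follows. Suppose $(v_i,v_j)$ is a $0$-edge of $E^*$. By the definition of $0$-edges, there exist $p \in P_1 \cap C_i^*$ and $q \in P_2 \cap C_j^*$ with $\{p,q\} \in E_1$. Now reinterpret the very same undirected edge $\{p,q\} = \{q,p\}$ of $E_1$ using the defining condition for $1$-edges: we need a point of $P_2$ in some cluster and a point of $P_1$ in another. Taking $q \in P_2 \cap C_j^*$ and $p \in P_1 \cap C_i^*$ fulfills this with the roles of source and target swapped, so a $1$-edge $(v_j,v_i)$ is added to $E^*$. The symmetric direction, starting from a $1$-edge, is completely analogous: a $1$-edge $(v_i,v_j)$ arising from some $p \in P_2 \cap C_i^*$ and $q \in P_1 \cap C_j^*$ with $\{p,q\} \in E_1$ yields, when read in the reverse direction, exactly the data needed to produce a $0$-edge $(v_j,v_i)$.

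There is no real obstacle; the observation follows directly from reading off the definitions, and it is worth noting in passing that the induced pair of directed edges inherits the same weight $d(p,q) = d(q,p)$ from the underlying undirected edge, which is consistent since $d$ is a metric.
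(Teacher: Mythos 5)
Your proof is correct and follows essentially the same approach as the paper: unpack the definition of the given $0$-edge to produce $p \in P_1 \cap C_i^*$ and $q \in P_2 \cap C_j^*$ with $\{p,q\} \in E_1$, then observe that these same data satisfy the defining condition for a $1$-edge $(v_j,v_i)$, with the reverse direction handled symmetrically.
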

\begin{proof}
    Suppose there exists a $0$-edge $(v_i, v_j)$ in $E^*$. Thus, there exist $p\in P_1\cap C_i^*$ and $q\in P_2\cap C_j^*$ such that $\{p,q\}$ is in $E_1$. Hence, by definition, there is an edge $(v_j,v_i)$ in $E^*$, which is a $1$-edge. Similarly, one can prove the statement if $(v_i, v_j)$ is a $1$-edge. 
    %representing a red point $p$ in cluster $C^*_i$ connected to a blue point $q$ in cluster $C^*_j$. Due to pairwise fairness, $C^*_j$ must also contain at least one red point $p'$, which must be connected to a blue point $q'$ in a cluster other than $C^*_j$. This edge $\{p', q'\}$ corresponds to a $1$-edge $(v_j, v_i')$ in $E^*$. The same logic applies to prove the existence of a $0$-edge for every $1$-edge, demonstrating the edge parity symmetry in $E^*$.
\end{proof}

A directed path (or simply a path) $\pi =\{u_1,\ldots,u_l\}$ from $u_1$ to $u_l$ in $G^*$ is a sequence of distinct vertices such that $(u_i,u_{i+1})$ is in $G^*$ for all $1\le i\le l-1$. We say that $\pi$ contains the edges $(u_i,u_{i+1})$. If $\pi$ contains all $0$-edges (resp. $1$-edges), it is called a $0$-path (resp. $1$-path). Two consecutive edges $e_1=(u_i,u_{i+1}),e_2=(u_{i+1},u_{i+2})$ on $\pi$ are said to form a \textit{switch} if they have different parity. We say that the switch happens at $u_{i+1}$ and it is the corresponding switching vertex. The switch is called a $b$-switch if the parity of $e_1$ is $b$ for $b\in \{0,1\}$. 
%\sina{I cannot find any previous definition of $b$. I know it mean that we call it $0$-switch if the parity of $e_1$ is $0$-edge but I think it is not clear.} 
A directed cycle is formed from $\pi$ by adding the edge $(u_l,u_1)$ (if any) with it. The \textit{reverse} path of $\pi$ is the path $\{u_{l},\ldots,u_1\}$ that contains the edges $(u_{i+1},u_i)$ for all $1\le i\le l-1$. Such edges exist according to Observation \ref{obs:$0$-$1$-edges-existence}. A $0$-path (resp. $1$-path) in a subgraph of $G^*$ starting at $v_i$ and ending at $v_j$ is called \textit{maximal} if $v_j$ does not have any outgoing $0$-edges (resp. $1$-edges) in the subgraph. 

\begin{observation}
    For any two vertices $v_i,v_j\in V^*$, there is a directed path from $v_i$ to $v_j$ in $G^*$. 
\end{observation}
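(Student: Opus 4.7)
The plan is to first establish connectivity of the underlying undirected graph of $G^*$, and then use Observation \ref{obs:$0$-$1$-edges-existence} to upgrade this to directed connectivity.

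\textbf{Undirected connectivity.} I would unfold the merging procedure that produced $\hat{C}$. Recall that $\hat{C}=\hat{C}_i$ was obtained by iteratively merging two clusters whenever an edge of $E'$ crosses between them, starting from $\mathcal{C}^*$. Consequently, for any two optimal clusters $C_p^*, C_q^*$ that wind up inside $\hat{C}$, a simple induction on the number of merge steps shows that there exists a sequence $C_p^*=C_{j_0}^*, C_{j_1}^*, \ldots, C_{j_s}^* = C_q^*$ in which every consecutive pair $C_{j_t}^*, C_{j_{t+1}}^*$ is joined by at least one edge of $E_1$ (since $H_1$ is the restriction of $H$ to $V_1$, every such $E'$‑edge lies in $E_1$). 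Each such edge of $E_1$ gives rise, by the definition of $G^*$, to a directed edge (either a $0$‑edge or a $1$‑edge) between the corresponding vertices $v_{j_t}, v_{j_{t+1}}$ of $V^*$. Hence the undirected graph underlying $G^*$ contains a path between $v_p$ and $v_q$.

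\textbf{Lifting to a directed path.} Let $v_p = v_{j_0}, v_{j_1}, \ldots, v_{j_s} = v_q$ be the vertex sequence found above, with each consecutive pair connected by some directed edge of $G^*$. By Observation \ref{obs:$0$-$1$-edges-existence}, for every directed edge appearing in $E^*$ the oppositely oriented edge of the opposite parity is also in $E^*$. Therefore, regardless of which direction the existing directed edge between $v_{j_t}$ and $v_{j_{t+1}}$ takes, the edge oriented from $v_{j_t}$ to $v_{j_{t+1}}$ is guaranteed to be in $E^*$. Concatenating these directed edges yields a directed walk from $v_p$ to $v_q$, and extracting a simple directed path from it gives the claimed path.

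There is essentially no obstacle here beyond being careful about two bookkeeping points: (i) an $E'$‑edge that caused a merge into $\hat{C}$ indeed lies in $E_1$ and therefore contributes edges to $G^*$, and (ii) Observation \ref{obs:$0$-$1$-edges-existence} applies symmetrically so that either orientation is available. Both are immediate from the definitions of $H_1$ and $G^*$. The argument is short and requires no new ideas beyond the construction already described and the earlier observation.
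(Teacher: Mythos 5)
Your proof is correct and follows essentially the same route as the paper's: use the iterative merge procedure to produce a chain of optimal clusters joined by $E_1$-edges, then observe that each bi-chromatic edge of $E_1$ yields directed edges in $G^*$ in both orientations. The paper asserts the resulting sequence of clusters directly as distinct (which is automatic since the merges form a tree), whereas you obtain a walk and extract a simple path—a cosmetic difference with no mathematical content.
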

\begin{proof}
    Note that $\hat{C}$ is obtained by merging $C_{1}^*,C_{2}^*,\ldots,C_{\tau}^*$ in an iterative fashion. Moreover, we merged two clusters, only if there were two endpoints of an edge of $E_1$ in those two clusters. Thus, there exists a sequence of distinct clusters, $C_{k^1}^*=C_i^*,C_{k^2}^*,\ldots,C_{k^\psi}^*=C_j^*$ such that for any $1\le \iota\le \psi-1$, $C_{k^\iota}^*$ and $C_{k^{\iota+1}}^*$ respectively contains one endpoint of an edge of $E_1$. It follows that the directed path $v_{k^1}=v_i,v_{k^2},\ldots,v_{k^\psi}=v_j$ exists in $G^*$.  
    %\sina The existence of a directed path between any two vertices in $G^*$ can be proven by induction on the number of clusters. The base case with one cluster is trivial. Assuming the statement holds for $G^*$ with $k$ clusters, consider $G^*$ with $k+1$ clusters and any two vertices $v_i$ and $v_j$. If they are directly connected, the path is immediate. Otherwise, removing $v_i$ and its incident edges results in a $G^*$ with $k$ clusters, where a path exists between any two vertices by the induction hypothesis. Since $G^*$ is connected, there must be an edge from $v_i$ to some vertex $v_z$. Combining the path from $v_i$ to $v_z$ and the path from $v_z$ to $v_j$ (guaranteed by the induction hypothesis) yields a directed path from $v_i$ to $v_j$ in $G^*$. 
\end{proof}

Consider any two vertices $v_\alpha$ and $v_\beta$ of $G^*$. Let $\pi^*=\{v_\alpha=u_1,\ldots,u_l=v_\beta\}$ be a directed path from $v_\alpha$ to $v_\beta$ having the minimum number of switches, i.e., a \textit{minimum-switch path} from $v_\alpha$ to $v_\beta$. 
%Moreover, among all such paths having the minimum number of switches, we pick one as $\pi^*$ such that for all switching vertex $u_i$ with a $b$-switch, there is no $b$-edge $(u_i,u_j)$ in $G^*$ for $j > i+1$. Note that there exists such a path, as if there is such an edge $(u_i,u_j)$, we can use that to directly go from $u_i$ to $u_j$ without increasing the number of switches. 
We prove the following lemma. 

\begin{lemma}\label{lem:cost-of-pi}
    $\sum_{e\in \pi^*} \omega_{e}\le 6\cdot \sum_{i=1}^\tau r(C_i^*)$. Moreover, if $\pi^*$ does not have a switch, $\sum_{e\in \pi^*} \omega_{e}\le 4\cdot \sum_{i=1}^\tau r(C_i^*)$. 
\end{lemma}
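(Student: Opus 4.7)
The strategy is a swap argument that exploits the minimality of $H$. I will construct a set $E_1' \subseteq E_1$ of edges to delete from $H$ (which includes the edges of $\pi^*$) and a set $E_2'$ of edges in $E \setminus E_1$ to add, such that $H' = (V, (E' \setminus E_1') \cup E_2')$ is still a valid DCS of $G$, i.e., every vertex in $V=P_1\cup P_2$ has degree in $[1,t]$ in $H'$. By optimality of $H$, this yields $\sum_{e\in E_1'} w(e) \le \sum_{e\in E_2'} w(e)$, and since $E_1' \supseteq \pi^*$ (viewing the edges of $\pi^*$ as edges of $E_1$ via the correspondence $e \leftrightarrow \{r_e,b_e\}$), bounding $w(E_2')$ in terms of $\sum_j r(C_j^*)$ will bound $\sum_{e\in\pi^*}\omega_e$. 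The key geometric observation I will use repeatedly is that whenever the two endpoints of an edge of $E_2'$ lie inside the same optimal cluster $C_j^*$, that edge contributes at most $2\, r(C_j^*)$ to $w(E_2')$ by the triangle inequality.

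First, I will handle the case where $\pi^*$ has no switch; by symmetry, assume all edges on $\pi^*$ are $0$-edges. For each intermediate vertex $u_i$ ($2\le i\le l-1$), the incoming $0$-edge contributes a blue point $b_{i-1}\in C^*_{\beta(i)}$ and the outgoing $0$-edge a red point $r_i\in C^*_{\beta(i)}$; removing the two edges of $\pi^*$ incident to $u_i$ leaves their degrees deficient, so I add the single edge $\{r_i,b_{i-1}\}$ to $E_2'$, whose weight is at most $2\, r(C^*_{\beta(i)})$. This locally restores degrees at $u_i$'s representatives without affecting any other vertex. The endpoints $u_1$ and $u_l$ are harder: the red point $r_1$ at $u_1$ and the blue point $b_{l-1}$ at $u_l$ each lose one edge of $H$ and may drop below degree $1$. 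I will argue, following the hanging-cycle/monochromatic-path dichotomy sketched in the introduction, that for each endpoint one can find either a hanging cycle in $G^*$ through $u_1$ (resp. $u_l$) whose edges can be added to $E_1'$ while a short cluster-internal chord enters $E_2'$, or a maximal directed monochromatic path emanating from $u_1$ (resp. $u_l$) whose terminal vertex has slack (a blue vertex of $H$-degree $\ge 2$, or a red vertex of $H$-degree $\le t-1$), allowing the missing edge to be installed. In either case, the extra edges added to $E_2'$ have both endpoints inside a single optimal cluster, and I will show that no cluster $C_j^*$ receives endpoints of more than two $E_2'$ edges, giving $w(E_2') \le 4 \sum_j r(C^*_j)$ and the $4$-factor bound.

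Second, I tackle the general case, where $\pi^*$ may have $0$-to-$1$ and $1$-to-$0$ switches. At a switching vertex $u_i$, removing the two edges of $\pi^*$ at $u_i$ disconnects two points of the same color inside $C^*_{\beta(i)}$, so a cluster-internal chord no longer suffices to restore degrees. At each switch I therefore invoke two auxiliary structures, one for each disconnected endpoint, each being either a hanging cycle or a maximal monochromatic path terminating at a vertex with degree slack; the existence results the paper prepares (Lemmas \ref{lem:b-anchor-path-exists-u_l}, \ref{lem:b-anchor-path-exists-u_1}, \ref{lem:$1$-path-1-hanging-cycle-exist}, \ref{lem:2-paths-no-hanging-cycle-exists}) supply these. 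I also use the same device at the endpoints $u_1$ and $u_l$. All $H$-edges of these structures are added to $E_1'$, and for each structure I add to $E_2'$ the single cluster-internal chord needed to restore degrees at the disconnected point; again every edge of $E_2'$ lies inside a single optimal cluster.

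The main obstacle, which is where the minimum-switch hypothesis on $\pi^*$ becomes essential, is to bound how many $E_2'$ edges can have an endpoint in a single optimal cluster $C_j^*$. Since the structures attached to distinct switches (and to the two endpoints of $\pi^*$) could a priori share a vertex of $G^*$, without care a single cluster might absorb arbitrarily many chords. I will argue, using the minimality of the number of switches on $\pi^*$, that two auxiliary structures corresponding to different switches must be vertex-disjoint in $G^*$: if they shared a vertex, one could reroute $\pi^*$ through the shared part and strictly reduce the switch count, contradicting the choice of $\pi^*$. Combined with the intermediate-vertex contribution (one chord per non-switching intermediate vertex) and the endpoint contributions, this yields at most three $E_2'$ edges with endpoints in any single $C_j^*$. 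Therefore $w(E_2') \le 2\cdot 3\cdot \sum_{j=1}^\tau r(C_j^*) = 6\sum_j r(C_j^*)$, and the swap inequality $w(E_1') \ge w(E_2')$ together with $E_1' \supseteq \pi^*$ gives $\sum_{e\in\pi^*}\omega_e \le 6\sum_j r(C_j^*)$, completing the lemma.
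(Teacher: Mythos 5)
Your plan reproduces the paper's argument essentially verbatim: the same exchange strategy using the optimality of the min-cost DCS $H$ (a set $E_1'\supseteq\pi^*$ removed, a set $E_2'$ of cluster-internal chords added), the same hanging-cycle/monochromatic-path dichotomy for restoring degrees at $u_1$, $u_l$, and each switching vertex, the same four existence lemmas supplying those structures, the same minimum-switch rerouting argument to force vertex-disjointness across distinct switches, and the same per-cluster count of at most three $E_2'$ chords (two in the no-switch case), each of weight at most the cluster's diameter $2\,r(C_j^*)$, yielding the factors $6$ and $4$. The only difference is level of detail: you defer the heavy case analysis to the cited lemmas and do not work out the exact chords installed in the various sub-cases, but the skeleton is the paper's.
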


Before proving this lemma, we show how to prove Lemma \ref{lem:optimal-cluster-of-stars-8-factor}. Consider any point $p$ in $\hat{C}_j$. Let $p\in C_{g}^*$. Now, consider any point $q$ in $\hat{C}_j$ that is the farthest point from $p$.  Let $q\in C_{h}^*$. By Lemma \ref{lem:cost-of-pi} it follows that, there is a path, say $\pi'$, from $v_g$ to $v_h$ whose sum of the edge weights is at most $6\cdot \sum_{i=1}^\tau r(C_i^*)$. Then, 

\begin{align*}
    r(\hat{C}_j)& \le d(p,q)\le \sum_{e\in \pi'} \omega_{e}+\sum_{\text{vertex }v_i\in \pi'} 2\cdot r(C_i^*)\\&\le 8\cdot \sum_{i=1}^\tau r(C_i^*).   
\end{align*}

Summing over all clusters $\hat{C}_j$ in $\hat{\mathcal{C}}$, we obtain Lemma \ref{lem:optimal-cluster-of-stars-8-factor}. 

\subsection{Proof of Lemma \ref{lem:cost-of-pi}}

The overall idea is to show the existence of a subset of edges $E_2'\subset E$, such that the set of edges $(E'\setminus \pi^*)\cup E_2'$ forms a valid degree-constrained subgraph of $G$ on the set of vertices $P_1\cup P_2$. Additionally, we need that the total weight of the edges of $E_2'$ is small. Then we can show that the weight of $\pi^*$ is also small, as $H=(V,E')$ is a min-cost DCS of $G$. However, it might not be possible to remove only the edges of $\pi^*$ from $E'$ to show the existence of such a set $E_2'$. We show that there is a subset $E_1'\subseteq E'$ that contains the edges of $\pi^*$ and can be removed to obtain such a valid degree-constrained subgraph. In the following, we prove that obtaining two such sets $E_1'$ and $E_2'$ is sufficient to prove Lemma \ref{lem:cost-of-pi}. For a set of edges $S\subseteq E$, let $w(S)=\sum_{e\in S} w(e)$.   

\begin{lemma}\label{lem:cost-of-E_1'}
    Suppose there are $E_1'\subseteq E', E_2'\subset E$, such that the set of edges $(E'\setminus E_1')\cup E_2'$ forms a valid degree-constrained subgraph of $G$ and $w(E_2')\le 6\cdot \sum_{i=1}^\tau r(C_i^*)$. Then, $w(E_1')\le 6\cdot \sum_{i=1}^\tau r(C_i^*)$. 
\end{lemma}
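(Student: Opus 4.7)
The plan is a short exchange argument that cashes in the optimality of $H=(V,E')$ as a min-cost DCS of $G$. By hypothesis, $F := (E'\setminus E_1') \cup E_2'$ is a valid degree-constrained subgraph of $G$ (every vertex has degree in $[1,t]$ in $F$). Since $H$ has minimum weight among all such subgraphs, we immediately obtain $w(E') \le w(F)$, and the entire proof consists of unpacking this inequality.

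First, I would upper-bound $w(F)$ by $w(E'\setminus E_1') + w(E_2')$: when taking a union of two edge sets we count each edge exactly once, so any overlap between $E_2'$ and $E'\setminus E_1'$ can only make $w(F)$ smaller than the sum $w(E'\setminus E_1') + w(E_2')$. Second, because $E_1' \subseteq E'$, we have $w(E'\setminus E_1') = w(E') - w(E_1')$. Combining these two facts with the optimality inequality gives
\[ w(E') \;\le\; w(F) \;\le\; w(E'\setminus E_1') + w(E_2') \;=\; w(E') - w(E_1') + w(E_2'). \]
Cancelling $w(E')$ from both sides leaves $w(E_1') \le w(E_2')$, and the assumed bound $w(E_2') \le 6\cdot \sum_{i=1}^\tau r(C_i^*)$ then yields the stated conclusion.

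I do not anticipate any real obstacle in this lemma: it is the routine bookkeeping step that repackages the optimality of $H$ into a form convenient for subsequent use. The genuinely difficult work toward Lemma \ref{lem:cost-of-pi} lies in actually exhibiting the sets $E_1'$ and $E_2'$ — i.e., in showing how to locally repair the DCS after removing edges of $\pi^*$ via hanging cycles and monochromatic paths while keeping the added weight small — and not in this weight-comparison step, which should occupy only a few lines.
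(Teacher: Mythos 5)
Your proposal is correct and matches the paper's own proof essentially verbatim: both compare $w(E')$ with $w((E'\setminus E_1')\cup E_2')$ using minimality of $H$, then cancel $w(E'\setminus E_1')$ to get $w(E_1')\le w(E_2')$. The only cosmetic difference is that you make explicit the subadditivity $w((E'\setminus E_1')\cup E_2')\le w(E'\setminus E_1')+w(E_2')$, which the paper leaves implicit.
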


\begin{proof}
    Note that $H$ is a min-cost DCS of $G$. Consider the graph $H'$ induced by the set of edges $(E'\setminus E_1')\cup E_2'$. By our assumption, $H'$ is a valid DCS of $G$. It follows that, 
    \begin{align*}
        & w(E')\le w((E'\setminus E_1')\cup E_2') \\
        \text{or, } &  w(E_1')\le w(E_2')\le 6\cdot \sum_{i=1}^\tau r(C_i^*).
    \end{align*}
    The last inequality follows from our assumption. 
\end{proof}

Assuming that the conditions of the above lemma are true, we finish the proof of Lemma \ref{lem:cost-of-pi}. 

\begin{align*}
    \sum_{e\in \pi^*} \omega_{e} &= \sum_{(v_i,v_j)\in \pi^*} \omega_{e}\\&= \sum_{\{p,q\}\text{ corresponding to } (v_i,v_j)\in \pi^*\mid p\in C_i^*,q \in C_j^*} w(\{p,q\})\le w(E_1') \le 6\cdot \sum_{i=1}^\tau r(C_i^*).
\end{align*}

If $\pi^*$ does not have a switch, then we will show that $w(E_2')\le 4\cdot \sum_{i=1}^\tau r(C_i^*)$. Hence, the moreover part in Lemma \ref{lem:cost-of-pi} also follows. It is left to show the existence of such $E_1'$ and $E_2'$, which we consider next. 

\subsection{Construction of $E_1'$ and $E_2'$}\label{sec:construction-E1'-E2'}

%We denote a depth first search procedure in $G^*$ starting from a vertex $v$ and using only $b$-edges by DFS$(v,b)$, where $b\in \{0,1\}$. A directed path from $v$ to $u$ consists of only $b$-edges is called a \textit{$b$-anchor} path for $v$ if $u$ does not have any outgoing $b$-edge. 
Two subraphs $G_1$ and $G_2$ of $G^*$ are called \textit{$0$-$1$-edge-disjoint} if for any edge $e_{\eta^1}$ of $G_1$ and $e_{\eta^2}$ of $G_2$ the corresponding edges in $E'$ are distinct. Thus, if $G_1$ contains a $0$-edge $(v_i,v_j)$, and $G_1$ and $G_2$ are $0$-$1$-edge-disjoint, then $G_2$ cannot contain the $0$-edge $(v_i,v_j)$ and the $1$-edge $(v_j,v_i)$. Similarly, if $G_1$ contains a $1$-edge $(v_i,v_j)$, and $G_1$ and $G_2$ are $0$-$1$-edge-disjoint, then $G_2$ cannot contain the $1$-edge $(v_i,v_j)$ and the $0$-edge $(v_j,v_i)$. Let $j^1 < j^2 <\ldots < j^\lambda$ be the indexes of the vertices on $\pi^*=\{u_1,\ldots,u_l\}$ where the switches occur. Note that $j^1 > 1, j^\lambda < l$. Denote the switch that occurs at $u_{j^h}$ by $b^h$ for all $1\le h\le \lambda$ (i.e., $b^h$ is the parity of $(u_{j^h-1},u_{j^h})$). Let $b^0$ be the parity of $(u_1,u_2)$ and $b^{\lambda+1}$ be the parity of $(u_{l-1},u_l)$. 

\begin{observation}\label{obs:range-of-b^h-path}
    For any $1\le h\le \lambda$, the only vertices of $\pi^*$ a $b^h$-path starting at $u_{j^h}$ can contain are $u_i,u_{i+1},\ldots,u_{i'}$ where $i=2$ if $h=1$ and $i=j^{h-1}+1$ otherwise and $i'=l-1$ if $h=\lambda$ and $i'=j^{h+1}-1$ otherwise. Also, the only vertices of $\pi^*$ a $b^{\lambda+1}$-path starting at $u_l$ can contain are $u_{z},\ldots,u_{l}$ where $z=j^\lambda+1$ if $\pi^*$ has at least one switch, and $z=1$ otherwise. Moreover, the only vertices of $\pi^*$ a $(1-b^{0})$-path starting at $u_1$ can contain are $u_{1},\ldots,u_{z'}$ where $z'=j^1-1$ if $\pi^*$ has at least one switch, and $z=l$ otherwise. 
\end{observation}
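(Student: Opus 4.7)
The plan is to prove each of the three parts by contradiction, exploiting the minimum-switch property of $\pi^*$. I sketch the first part in detail; the second and third follow by the same template with reversed splices. Fix $1 \le h \le \lambda$ and let $\sigma = w_1, w_2, \ldots, w_m$ be a $b^h$-path with $w_1 = u_{j^h}$. Suppose some $w_s = u_k$ with $k \notin [i, i']$. Then either $k > i'$ (so $k \ge j^{h+1}$ if $h < \lambda$, or $k = l$ if $h = \lambda$) or $k < i$ (so $k \le j^{h-1}$ if $h > 1$, or $k = 1$ if $h = 1$); I would handle $k > i'$ and reduce the other case to it by reversing $\sigma$, which is a $(1 - b^h)$-path by Observation~\ref{obs:$0$-$1$-edges-existence}. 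Pick $s$ minimal with the above property; then each intermediate $w_r$ ($r \in [2, s-1]$) either lies outside $\pi^*$ or equals some $u_q$ with $q \in [i, i']$.

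To obtain the contradiction I construct a spliced path from $u_1$ to $u_l$ with strictly fewer switches than $\pi^*$. Let $r^*$ be the largest $r \in [2, s-1]$ with $w_r = u_{q^*}$ and $q^* \in [j^{h-1}+1, j^h - 1]$; if no such $r$ exists set $r^* = 1$ and $q^* = j^h$. Define
\[
\pi' = \pi^*[1..q^*] \cdot \sigma[r^*..s] \cdot \pi^*[k..l],
\]
dropping the last factor when $k = l$. Vertex distinctness comes from the maximal choice of $r^*$: $\pi^*[k..l]$ and $\pi^*[1..q^*]$ are disjoint because $k \ge j^{h+1} > j^h \ge q^*$; the interior of $\sigma[r^*..s]$ misses $\pi^*[k+1..l]$ because the allowed overlap range $[j^{h-1}+1, j^{h+1}-1]$ is disjoint from $[k+1, l]$; and it misses $\pi^*[1..q^* - 1]$ because any overlap occurring strictly after $r^*$ lies at an index in $[j^h, j^{h+1} - 1]$, strictly above $q^*$.

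The switch count then uses the parity pattern set up before Observation~\ref{obs:range-of-b^h-path}: edges of $\pi^*$ in the segment $[j^{h'-1}, j^{h'}-1]$ have parity $b^{h'}$, consecutive segments flip, and $b^{h+2} = b^h$. So $\pi^*[1..q^*]$ contributes $h - 1$ switches and ends in a $b^h$-edge, the junction at $u_{q^*}$ introduces none (as $\sigma$'s next edge is also a $b^h$-edge), $\sigma[r^*..s]$ contributes none, and when $k < l$, writing $k \in [j^{h'-1}, j^{h'}-1]$ with $h' \ge h + 2$, the junction at $u_k$ contributes $\delta \in \{0, 1\}$ according to parity match while $\pi^*[k..l]$ contributes $\lambda - h' + 1$. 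A short parity check gives $h + \lambda - h' + \delta \le \lambda - 2$ for every admissible $h'$; when $k = l$ the total is just $h - 1 < \lambda$. Either way, $\pi'$ has strictly fewer than $\lambda$ switches, contradicting the minimality of $\pi^*$.

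The remaining two parts follow by analogous splices: for the $b^{\lambda+1}$-path from $u_l$, splice $\pi^*[1..k] \cdot (\sigma[1..s])^{-1}$, noting that the reverse is a $b^\lambda$-path; for the $(1 - b^0)$-path from $u_1$, splice $\sigma[1..s] \cdot \pi^*[k..l]$. In both of these, the forbidden index range for $k$ is disjoint from the range of indices where $\sigma$ can overlap $\pi^*$, so vertex distinctness is automatic and the switch-count bound carries over. When $\lambda = 0$ the last two parts become vacuous because the allowed range is then all of $[1, l]$. I expect the main obstacle to be the vertex-distinctness bookkeeping in the first part, since a naive splice could revisit a vertex of $\pi^*[1..j^h - 1]$; the remedy is precisely the maximal choice of $r^*$ above, after which the strict decrease in switch count drops out of the parity pattern.
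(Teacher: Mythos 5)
Your proposal takes essentially the paper's route: splice a prefix of $\pi^*$, a segment of the deviant path $\sigma$, and a suffix of $\pi^*$, and argue that the result has strictly fewer switches, contradicting minimality. The paper performs the splice directly at $u_{j^h}$ and the bad vertex $v_j$, and does not explicitly address the possibility that interior vertices of $\sigma$ coincide with vertices on the spliced prefix or suffix (indeed for the $k>i'$ subcase the interior of $\sigma$ is allowed to hit $u_{j^h+1},\dots,u_{j^{h+1}-1}$, which lie on $\pi^*[j^h\,..\,l]$); this can be patched with a generic loop-removal argument, since shortcutting a repeated vertex never increases the switch count, but the paper leaves that implicit. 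Your choice of $r^*$ as the last intersection of $\sigma[1..s]$ with $\pi^*[j^{h-1}+1\,..\,j^h-1]$ sidesteps the issue at the outset, which is cleaner, and your parity arithmetic is correct.

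One small imprecision: for $k<i$ you cannot literally ``reduce to $k>i'$ by reversing $\sigma$,'' because the reverse of $\sigma[1..s]$ is a $(1-b^h)$-path starting at the non-switch vertex $u_k$, which is not an object covered by the statement. What works is the direct symmetric splice: take the first exit $w_s=u_k$ with $k\le j^{h-1}$, let $r^*$ be the last index before $s$ at which $\sigma$ hits $\pi^*$ at an index $q^*\in[j^h,j^{h+1}-1]$ (default $r^*=1$, $q^*=j^h$), and form $\pi^*[1..k]\cdot\sigma^{-1}[s..r^*]\cdot\pi^*[q^*..l]$. The junction at $u_{q^*}$ adds no switch, since both the incoming $\sigma^{-1}$ edge and the outgoing $\pi^*$ edge have parity $1-b^h=b^{h+1}$; the junction at $u_k$ adds at most one; the prefix contributes at most $h-2$ and the suffix at most $\lambda-h$, for a total of at most $\lambda-1<\lambda$. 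With that adjustment the plan is complete.
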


\begin{proof}
    Consider the vertex $u_{j^h}$. If there is a $b^h$-path $\pi_1$ starting at $u_{j^h}$ that contains one vertex $v_j$ among $u_1,\ldots,u_{i-1}$, then one can construct a new path $\pi_1^*$ by joining the part of $\pi^*$ from $u_1$ to $v_j$, the reverse of $\pi_1$ (from $v_j$ to $u_{j^h}$), and the part of $\pi^*$ from $u_{j^h}$ to $u_l$. But, the number of switches in $\pi_1^*$ is strictly less than that of $\pi^*$, which is a contradiction (see Figure \ref{fig:obs9}). Now, if there is a $b^h$-path $\pi_1'$ starting at $u_{j^h}$ that contains one vertex $v_j$ among $u_{i'+1},\ldots,u_{l}$, then one can construct a new path by joining the part of $\pi^*$ from $u_1$ to $u_{j^h}$, $\pi_1'$, and the part of $\pi^*$ from $v_j$ to $u_l$. But, the number of switches in this new path is strictly less than that of $\pi^*$, which is a contradiction.  

    Next, consider the vertex $u_l$. If $\pi^*$ does not have a switch, then a $b^{\lambda+1}$-path starting at $u_l$ can contain any vertex on $\pi^*$. So, suppose $\pi^*$ has a switch. If there is a $b^{\lambda+1}$-path $\pi_2$ starting at $u_l$ that contains one vertex $v_j$ among $u_1,\ldots,u_{j^\lambda}$, then one can construct a new path $\pi_2^*$ by joining the part of $\pi^*$ from $u_1$ to $v_j$ and the reverse of $\pi_2$. But, the number of switches in $\pi_2^*$ is strictly less than that of $\pi^*$, which is a contradiction.  

    Lastly, consider the vertex $u_1$. If $\pi^*$ does not have a switch, then a $(1-b^{0})$-path starting at $u_1$ can contain any vertex on $\pi^*$. So, suppose $\pi^*$ has a switch. If there is a $(1-b^{0})$-path $\pi_3$ starting at $u_1$ that contains one vertex $v_j$ among $u_{j^1},\ldots,u_{l}$, then one can construct a new path $\pi_3^*$ by joining $\pi_3$ and the part of $\pi^*$ from $v_j$ to $u_l$. But, the number of switches in $\pi_3^*$ is strictly less than that of $\pi^*$, which is a contradiction.    
\end{proof}
\begin{figure}
    \centering
    \includegraphics[width=0.6\linewidth]{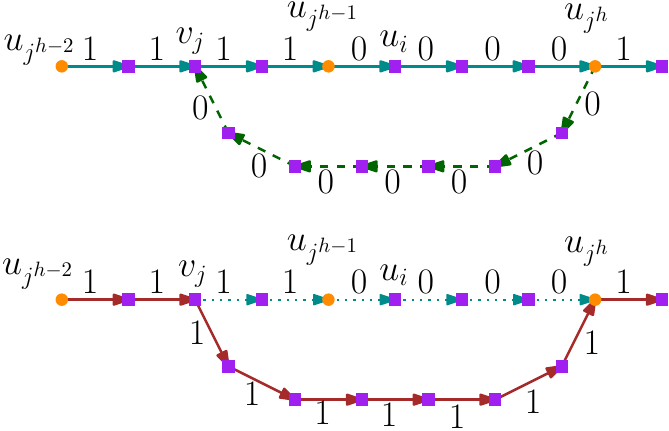}
    \caption{(Top) $\pi^*$ is shown using the bold edges and $\pi_1$ is shown using the dashed edges. (Bottom) $\pi_1^*$ is shown using the bold edges.}
    \label{fig:obs9}
\end{figure}

\textbf{First, we consider the simple case when the parity of $(u_{l-1},u_l)$ is 0 (resp. 1) and there is a $0$-path (resp. $1$-path) from $u_l$ to $u_1$ in $G^*$ that is $0$-$1$-edge-disjoint from $\pi^*$.} Let us denote the latter path by $\pi(l)$. Note that $\pi^*$ is a path having the minimum number of switches and the existence of $\pi(l)$ ensures that $\pi^*$ does not have a switch. Let $U_0\subseteq E^*$ be the subset of edges that lie on the paths in $\{\pi^*\}\cup \{\pi(l)\}$. Next, we define a subset $E_1'\subseteq E_1$ that has a one-to-one mapping with $U_0$. In particular, consider any edge $(v_i,v_j)$ in $U_0$. Note that if it is a $0$-edge, it was added due to an edge  $\{p,q\}$ in $E_1$ such that $p\in P_1\cap C_i^*$ and $q\in P_2\cap C_j^*$. We add the edge $\{p,q\}$ to $E_1'$. Otherwise, if $(v_i,v_j)$ is a $1$-edge, it was added due to an edge $\{p,q\}$ in $E_1$ such that $p\in P_2\cap C_i^*$ and $q\in P_1\cap C_j^*$. In this case, we add the edge $\{p,q\}$ to $E_1'$. 

Next, we show the construction of $E_2'$. Wlog, let us assume that $\pi^*$ is a $0$-path. The other case is symmetric. Note that then $\pi(l)$ is also a $0$-path as per our assumption. First, we describe the process of adding the replacement edges for the path $\pi^*$. Consider any intermediate vertex (if any) $v_{j'}$ on this path. Then, there are exactly two points in $C_{j'}^*$ corresponding to the edges on $\pi^*$, which are of opposite colors. We add an edge between these two points in $E_2'$ (see Figure \ref{fig:E_2'fornoswitch}). Removal of the edges of $E_1'$ corresponding to $\pi^*$ and the addition of this edge do not change the degree of the two points in $C_{j'}^*$. Similarly, we add edges to $E_2'$ corresponding to the intermediate vertices of $\pi(l)$. Next, consider the vertex $u_l=v_i$. There is an incoming $0$-edge on $\pi^*$ and an outgoing $0$-edge on $\pi(l)$ that are incident on $v_i$. Thus, there are exactly two points in $C_{i}^*$ corresponding to these two edges, which are of opposite colors. We add an edge between these two points in $E_2'$. Removal of the edges of $E_1'$ corresponding to those two edges, and the addition of this edge does not change the degree of the two points in $C_i^*$. Similarly,  consider the vertex $u_1=v_i'$. There is an outgoing $0$-edge on $\pi^*$ and an incoming $0$-edge on $\pi(l)$ that are incident on $v_i'$. Thus, there are exactly two points in $C_{i'}^*$ corresponding to these two edges, which are of opposite colors. We add an edge between these two points in $E_2'$. Again, the removal of the edges of $E_1'$ corresponding to those two edges, and the addition of this edge does not change the degree of the two points in $C_{i'}^*$. See Figure \ref{fig:E_2'fornoswitch} for an illustration.  

\begin{figure}[H]
    \centering
    \includegraphics[width=0.35\linewidth]{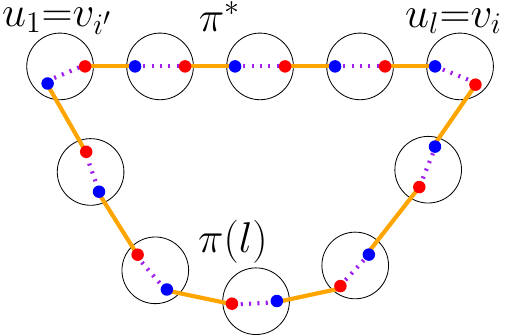}
    \caption{Figure illustrating the construction of $E_2'$ for $\{\pi^*\}\cup \{\pi(l)\}$. The bold (orange) edges are in $E_1'$ and the dashed (purple) edges are in $E_2'$.}
    \label{fig:E_2'fornoswitch}
\end{figure}

By our construction, the set of edges $(E'\setminus E_1')\cup E_2'$ form a valid degree-constrained subgraph of $G$ on the set of vertices $P_1\cup P_2$. The way we add the edges to $E_2'$, both endpoints of each edge lie in a cluster $C_j^*$ such that the vertex $v_j$ corresponding to the cluster lies on a path in $\{\pi^*\}\cup \{\pi(l)\}$. Now, $v_j$ can lie either on one such path or on two paths. Thus, we add at most two edges to $E_2'$ corresponding to $v_j$. The sum of the weights of these two edges is at most 2 times the diameter of $C_j^*$. Hence, by Lemma \ref{lem:cost-of-E_1'}, we obtain $$ \sum_{e\in \pi^*} \omega_{e}\le 4\cdot \sum_{i=1}^\tau r(C_i^*).$$ 

\textbf{Next, we consider the remaining case when the parity of $(u_{l-1},u_l)$ is 0 (resp. 1) and there is no $0$-path (resp. $1$-path) from $u_l$ to $u_1$ in $G^*$ that is $0$-$1$-edge-disjoint from $\pi^*$.} Thus, there is no $b^{\lambda+1}$-path from $u_l$ to $u_1$ in $G^*$ that is $0$-$1$-edge-disjoint from $\pi^*$. 

Consider a path $\pi$ and let $v_i$ denote its start vertex. Also, consider a cycle $O$ such that $\pi$ and $O$ have exactly one vertex $v_j$ in common. Note that $\pi$ might not have an edge, in which case $v_j=v_i$. Let $D$ be the graph formed by the union of $\pi$ and $O$, i.e., by gluing them together at $v_j$. We refer to such a graph $D$ as a \textit{hanging cycle} for $v_i$ with $v_j$ being the \textit{join} vertex. $D$ is called a $b$-hanging cycle if all the edges of $\pi$ and $O$ are $b$-edges. Let $p$ be the point in the cluster $C_j^*$ corresponding to the edge of the cycle $O$ incoming to $v_j$. Additionally, we say that $D$ is \textit{special} if the degree of $p$ in $H_1$ is at least 2, and $p$ is called the \textit{special} point of $D$ (see Figure \ref{fig:hangingcycle}).  

\begin{figure}[H]
    \centering
    \includegraphics[width=0.4\linewidth]{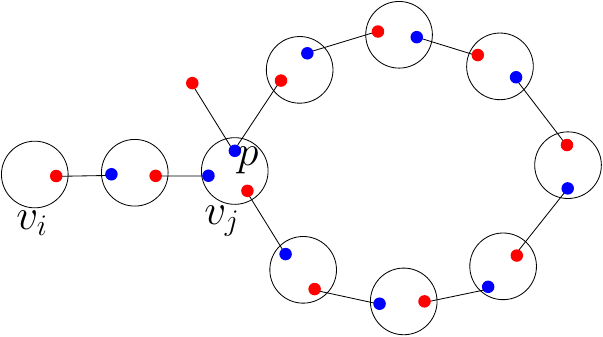}
    \caption{A special hanging cycle with the special point $p$.}
    \label{fig:hangingcycle}
\end{figure}

% \begin{figure}
%     \centering
%     \begin{tikzpicture}[main/.style = {draw, circle}, node distance=50]
%         \node[main] (1) {$v_1$};
%         \node[main] (2) [below of= 1] {$v_2$};
%         \node[main] (3) [below of= 2] {$v_3$};
%         \node[main] (4) [below left of= 3] {$v_4$};
%         \coordinate(a) at ($(3)!0.5!(4)$);
%         \node[main] (5) [below of= a] {$v_5$};
%         \node[main] (7) [below right of= 3] {$v_7$};
%         \coordinate(b) at ($(3)!0.5!(7)$);
%         \node[main] (6) [below of= b] {$v_6$};
    
%         \path [-] (1) edge node[midway, right] {0} (2);
%         \path [-] (2) edge node[midway, right] {0} (3);
%         \path [-] (3) edge node[midway, above left] {0} (4);
%         \path [-] (4) edge node[midway, below left] {0} (5);
%         \path [-] (5) edge node[midway, below] {0} (6);
%         \path [-] (6) edge node[midway, below right] {0} (7);
%         \path [-] (7) edge node[midway, above right] {0} (3);
%     \end{tikzpicture}
%     \caption{Example of a 0-hanging cycle starting at $v_1$.}
%     \label{fig:figure-1}
% \end{figure}

%Suppose there is a $1$-path (resp. $0$-path) from $u_1$ to $u_l$ in $G^*$ that is $0$-$1$-edge-disjoint from $\pi^*$. Then by Observation \ref{obs:$0$-$1$-edges-existence}, there is also a $0$-path (resp. $1$-path) from $u_l$ to $u_1$ in $G^*$ that is $0$-$1$-edge-disjoint from $\pi^*$. But, this is a contradiction. Hence, there is no such path from $u_1$ to $u_l$. 
In the current case, we need the following four lemmas whose proofs will be given later. \\

\begin{restatable}{lemma}{anchorPathUl}\label{lem:b-anchor-path-exists-u_l}
   % Consider any $v_i\in\{u_1,u_l\}$ such that there is a $0$-edge (resp. $1$-edge) $(v_i',v_i)$ on $\pi^*$. 
    Suppose the parity of $(u_{l-1},u_l)$ is 0 (resp. 1), and there is no $0$-path (resp. $1$-path) from $u_l$ to $u_1$ in $G^*$ that is $0$-$1$-edge-disjoint from $\pi^*$. Moreover, suppose there is no special 0-hanging cycle (resp. 1-hanging cycle) in $G^*$ for $u_l$ that is $0$-$1$-edge-disjoint from $\pi^*$. Then, there exists a $0$-path (resp. $1$-path) $\pi_1$ in $G^*$ from $u_l$ to a vertex $v_j$, such that $\pi_1$ is $0$-$1$-edge-disjoint from $\pi^*$ and one of the following is true: (i) the degree of $b_\eta$ (resp. $r_\eta$) in $H_1$ is at least 2, where $e_\eta$ is the last edge on $\pi_1$ if it has an edge or $(u_{l-1},u_l)$ otherwise; or (ii) $C_j^*$ has a red (resp. blue) point whose degree in $H_1$ is at most $t-1$. 
    %Moreover, for each $u_{j^h}$ on $\pi^*$, DFS$(u_{j^h},b^h)$ computes two $0$-$1$-edge-disjoint $b^h$-anchor paths.
\end{restatable}

\begin{restatable}{lemma}{anchorPathUone}\label{lem:b-anchor-path-exists-u_1}
   % Consider any $v_i\in\{u_1,u_l\}$ such that there is a $0$-edge (resp. $1$-edge) $(v_i',v_i)$ on $\pi^*$. 
    Suppose the parity of $(u_{1},u_2)$ is 1 (resp. 0), and there is no $1$-path (resp. $0$-path) from $u_l$ to $u_1$ in $G^*$ that is $0$-$1$-edge-disjoint from $\pi^*$. Moreover, suppose there is no special 0-hanging cycle (resp. 1-hanging cycle) in $G^*$ for $u_1$ that is $0$-$1$-edge-disjoint from $\pi^*$. Then, there exists a $0$-path (resp. $1$-path) $\pi_1$ from $u_1$ to a vertex $v_j$, such that $\pi_1$ is $0$-$1$-edge-disjoint from $\pi^*$ and one of the following is true: (i) the degree of $b_\eta$ (resp. $r_\eta$) in $H_1$ is at least 2, where $e_\eta$ is the last edge on $\pi_1$ if it has an edge or $(u_{2},u_1)$ otherwise, and (ii) $C_j^*$ has a red (resp. blue) point whose degree in $H_1$ is at most $t-1$. 
    %Moreover, for each $u_{j^h}$ on $\pi^*$, DFS$(u_{j^h},b^h)$ computes two $0$-$1$-edge-disjoint $b^h$-anchor paths.
\end{restatable}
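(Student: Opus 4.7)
The plan is to mirror the proof of Lemma \ref{lem:b-anchor-path-exists-u_l}, which handles the symmetric situation at the other endpoint $u_l$ of $\pi^*$. Without loss of generality I treat the case where $(u_1,u_2)$ has parity $1$, so I seek a $0$-path $\pi_1$ starting at $u_1$; the parity-$0$ case is obtained by swapping red and blue. I initialize $\pi_1$ to the trivial one-vertex path at $u_1$, with $e_\eta=(u_2,u_1)$ (a $0$-edge by Observation \ref{obs:$0$-$1$-edges-existence}) and $b_\eta$ being its blue endpoint, which lies in $C_{u_1}^*$.

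I then grow $\pi_1$ greedily by a graph search in $G^*$. At each step, let $v_j$ be the current endpoint. If $\deg_{H_1}(b_\eta)\ge 2$, or if some red point of $C_j^*$ has degree at most $t-1$ in $H_1$, I terminate and return $\pi_1$. Otherwise, every red point of $C_j^*$ has $H_1$-degree exactly $t$, so $v_j$ has plenty of outgoing $0$-edges; I try to append one whose underlying $E_1$-edge is not used by $\pi^*$ and whose target is not already on $\pi_1$. Since each successful extension adds a new vertex to $\pi_1$ and $V^*$ is finite, the process must halt.

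The core of the proof is to show that as long as neither (i) nor (ii) holds at $v_j$, a valid extension is always available, or else one of the hypotheses of the lemma is violated. I would handle the two failure modes separately: (a) if every outgoing $0$-edge of $v_j$ that avoids $\pi^*$ must revisit a vertex of $\pi_1$, then combining the tail of $\pi_1$ up to the revisit with the resulting cycle yields a $0$-hanging cycle for $u_1$, and I argue that this cycle is special using the failure of (i) together with a careful choice of which outgoing edge to close with, thereby contradicting the second hypothesis; (b) if every outgoing $0$-edge of $v_j$ that avoids revisits overlaps with $\pi^*$, I use Observation \ref{obs:range-of-b^h-path} to locate the portion of $\pi^*$ on which the overlap must lie, reverse the corresponding segment via Observation \ref{obs:$0$-$1$-edges-existence}, and splice it with the reverse of $\pi_1$ to produce a $1$-path from $u_l$ to $u_1$ that is $0$-$1$-edge-disjoint from $\pi^*$, contradicting the first hypothesis.

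The main obstacle will be case (a), specifically verifying that the $0$-hanging cycle constructed upon failure is special: this requires tracking, throughout the search, the identity of the blue point in each cluster at which the path enters, and showing that when the search is forced to close a cycle the blue endpoint of the closing edge must already carry a second incident edge of $H_1$. This bookkeeping mirrors the corresponding step in the proof of Lemma \ref{lem:b-anchor-path-exists-u_l} and constitutes the most delicate part of the argument; the remaining steps are largely mechanical once this structural claim is in place.
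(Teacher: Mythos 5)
Your high-level strategy---growing a $0$-path from $u_1$, halting when (i) or (ii) kicks in, and invoking the two hypotheses of the lemma when the extension step is blocked---is the right idea and does mirror the proof of Lemma~\ref{lem:b-anchor-path-exists-u_l}. However, the key step in your case~(a) is logically backwards and will not go through. You claim that \say{when the search is forced to close a cycle the blue endpoint of the closing edge must already carry a second incident edge of $H_1$,} i.e.\ that the resulting hanging cycle can be chosen special, so that you can directly contradict the no-special-hanging-cycle hypothesis. This is false: it is perfectly consistent with both hypotheses of the lemma that every candidate cycle-closing edge at $v_j$ has a degree-$1$ blue endpoint, in which case no special hanging cycle is created and your argument produces no contradiction. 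The paper's proof never constructs a special hanging cycle. Instead, it \emph{uses} the no-special-hanging-cycle hypothesis to infer that every incoming $0$-edge of $v_j$ that lies on a deleted cycle corresponds to a degree-$1$ blue point of $C_j^*$, and then feeds this into a degree/edge-incidence \emph{count}: assuming all red points of $C_j^*$ have $H_1$-degree $t$, the blue points of $C_j^*$ receive at least $t\cdot n_r + 1$ edges while simultaneously all having degree exactly $1$ (at most $t\cdot n_r$ blue points by $t$-balance), a contradiction that forces some red point to have degree $\le t-1$, which is conclusion~(ii). Your sketch contains no such counting argument and it is that counting, not the direct construction of a special cycle, that closes the case.

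Two further issues amplify the gap. First, you skip the cycle-removal preprocessing of $G^*$ (the construction of $G_1$): without it, your dynamically grown $\pi_1$ may share edges with the cycles at $v_j$, so the union of $\pi_1$ with a cycle need not even be a well-formed hanging cycle, and the disjointness claims that the counting relies on do not hold. Second, your case~(b) is correct only in the degenerate situation $v_j = u_l$ (in which case reversing $\pi_1$ indeed gives the forbidden $1$-path from $u_l$ to $u_1$). When $v_j$ is an intermediate vertex of $\pi^*$ with an outgoing $0$-edge on $\pi_1^*$, splicing $\pi_1$ with a reversed segment of $\pi^*$ does not yield a $1$-path from $u_l$ to $u_1$; the paper simply notes that such a $v_j$ also has a matching incoming $0$-edge on $\pi_1^*$, treats it as one more contribution to the incoming-edge count, and proceeds to the counting argument. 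In short, the blocked-extension analysis must be replaced by the structural argument (acyclic $G_1$, maximal path, at least $n_1+1$ incoming $0$-edges at $v_j$) followed by the $t$-balance counting; as written your proposal establishes neither.
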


    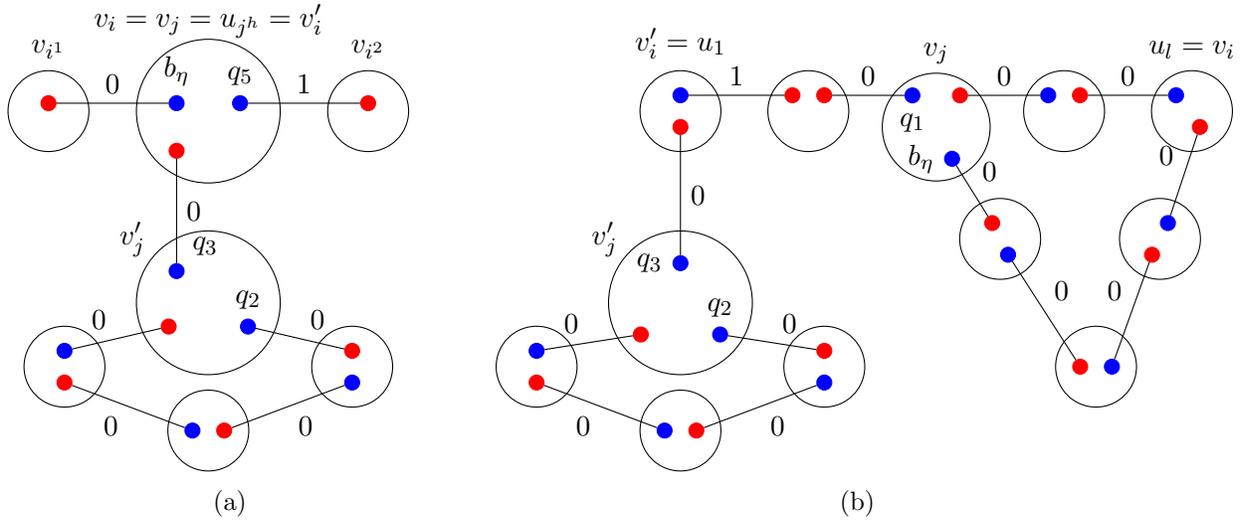
\begin{figure}[h]
        \begin{subfigure}{.375\textwidth}
        \begin{tikzpicture}[main/.style = {draw, circle, fill, inner sep=2pt}, node distance=50, scale=0.85]
            % v_{i^1}
            \coordinate (c1) at (0, 0);
            \draw (c1) circle[radius=18pt, label=$a$];
            \node[above, yshift=16pt] at (c1) {$v_{i^1}$};
            
            \node[main, red, xshift=0, yshift=3] at (c1) (c1-1) {};

            % v_i
            \coordinate (c2) at (2.5, 0);
            \draw (c2) circle[radius=32pt];
            \node[above, yshift=24pt] at (c2) {$v_i=v_j=u_{j^h}=v_i'$};
            
            \node[main, blue, xshift=-12, yshift=3, label=above:$b_\eta$] at (c2) (c2-1) {};
            \node[main, blue, xshift=12, yshift=3, label=above:$q_5$] at (c2) (c2-2) {};
            \node[main, red, xshift=-12, yshift=-15] at (c2) (c2-3) {};

            % v_{i^2}
            \coordinate (c3) at (5, 0);
            \draw (c3) circle[radius=18pt];
            \node[above, yshift=16pt] at (c3) {$v_{i^2}$};
            \node[main, red, xshift=0, yshift=3] at (c3) (c3-1) {};

            % v_j'
            \coordinate (c4) at (2.5, -3);
            \draw (c4) circle[radius=32pt];
            \node[left, xshift=-20pt, yshift=24pt] at (c4) {$v_j'$};
            \node[main, blue, xshift=-12, yshift=12, label=above right:$q_3$] at (c4) (c4-1) {};
            \node[main, red, xshift=-15, yshift=-9] at (c4) (c4-2) {};
            \node[main, blue, xshift=15, yshift=-9, label=above:$q_2$] at (c4) (c4-3) {};

            % hanging cycle 1
            \coordinate (c5) at (0.25, -4);
            \draw (c5) circle[radius=18pt];
            \node[left, xshift=-18pt, yshift=24pt] at (c5) {};
            \node[main, blue, xshift=0, yshift=6] at (c5) (c5-1) {};
            \node[main, red, xshift=0, yshift=-6] at (c5) (c5-2) {};
            
            % hanging cycle 2
            \coordinate (c6) at (2.5, -5);
            \draw (c6) circle[radius=18pt];
            \node[left, xshift=-18pt, yshift=24pt] at (c6) {};
            \node[main, blue, xshift=-6] at (c6) (c6-1) {};
            \node[main, red, xshift=6] at (c6) (c6-2) {};
            
            % hanging cycle 3
            \coordinate (c7) at (4.75, -4);
            \draw (c7) circle[radius=18pt];
            \node[left, xshift=-18pt, yshift=24pt] at (c7) {};
            \node[main, blue, xshift=0, yshift=-6] at (c7) (c7-1) {};
            \node[main, red, xshift=0, yshift=6] at (c7) (c7-2) {};
            
            \path [-] (c1-1) edge node[midway, above] {0} (c2-1);
            \path [-] (c2-2) edge node[midway, above] {1} (c3-1);
            
            \path [-] (c2-3) edge node[midway, right] {0} (c4-1);
            
            \path [-] (c4-2) edge node[midway, above left] {0} (c5-1);
            \path [-] (c5-2) edge node[midway, below left] {0} (c6-1);
            \path [-] (c6-2) edge node[midway, below right] {0} (c7-1);
            \path [-] (c7-2) edge node[midway, above right] {0} (c4-3);
        \end{tikzpicture}
        \caption{}
        \end{subfigure}
        \begin{subfigure}{.625\textwidth}
        \begin{tikzpicture}[main/.style = {draw, circle, fill, inner sep=2pt}, node distance=50, scale=0.85]
            % v_{i^1}
            \coordinate (c1) at (0, 0);
            \draw (c1) circle[radius=18pt, label=$a$];
            \node[above, yshift=16pt] at (c1) {$v_i' = u_1$};
            
            \node[main, blue, xshift=0, yshift=6] at (c1) (c1-1) {};
            \node[main, red, xshift=0, yshift=-6] at (c1) (c1-2) {};

            % v_i
            \coordinate (c2) at (2, 0);
            \draw (c2) circle[radius=18pt];
            
            \node[main, red, xshift=-6, yshift=6] at (c2) (c2-1) {};
            \node[main, red, xshift=6, yshift=6] at (c2) (c2-2) {};

            % v_{i^2}
            \coordinate (c3) at (4, -0.25);
            \draw (c3) circle[radius=24pt];
            \node[above, yshift=20pt] at (c3) {$v_j$};
            
            \node[main, blue, xshift=-9, yshift=12, label=below:$q_1$] at (c3) (c3-1) {};
            \node[main, red, xshift=9, yshift=12] at (c3) (c3-2) {};
            \node[main, blue, xshift=6, yshift=-12, label=left:$b_\eta$] at (c3) (c3-3) {};

            \coordinate (c8) at (6, 0);
            \draw (c8) circle[radius=18pt];
            \node[above, yshift=24pt] at (c8) {};
            \node[main, blue, xshift=-6, yshift=6] at (c8) (c8-1) {};
            \node[main, red, xshift=6, yshift=6] at (c8) (c8-2) {};

            \coordinate (c9) at (8, 0);
            \draw (c9) circle[radius=18pt];
            \node[above, yshift=16pt] at (c9) {$u_l = v_i$};
            \node[main, blue, xshift=-6, yshift=6] at (c9) (c9-1) {};
            \node[main, red, xshift=3, yshift=-6] at (c9) (c9-2) {};

            \coordinate (c10) at (7.5, -2);
            \draw (c10) circle[radius=18pt];
            \node[above, yshift=36pt] at (c10) {};
            \node[main, blue, xshift=3, yshift=6] at (c10) (c10-1) {};
            \node[main, red, xshift=-3, yshift=-6] at (c10) (c10-2) {};

            \coordinate (c11) at (6.5, -4);
            \draw (c11) circle[radius=18pt];
            \node[above, yshift=36pt] at (c11) {};
            \node[main, blue, xshift=6, yshift=0] at (c11) (c11-1) {};
            \node[main, red, xshift=-6, yshift=0] at (c11) (c11-2) {};

            \coordinate (c12) at (5, -2);
            \draw (c12) circle[radius=18pt];
            \node[above, yshift=36pt] at (c12) {};
            \node[main, blue, xshift=3, yshift=-6] at (c12) (c12-1) {};
            \node[main, red, xshift=-3, yshift=6] at (c12) (c12-2) {};

            % v_j'
            \coordinate (c4) at (0, -3);
            \draw (c4) circle[radius=32pt];
            \node[left, xshift=-20pt, yshift=24pt] at (c4) {$v_j'$};
            \node[main, blue, xshift=0, yshift=15, label=left:$q_3$] at (c4) (c4-1) {};
            \node[main, red, xshift=-15, yshift=-12] at (c4) (c4-2) {};
            \node[main, blue, xshift=15, yshift=-12, label=above:$q_2$] at (c4) (c4-3) {};

            % hanging cycle 1
            \coordinate (c5) at (-2.25, -4);
            \draw (c5) circle[radius=18pt];
            \node[left, xshift=-18pt, yshift=24pt] at (c5) {};
            \node[main, blue, xshift=0, yshift=6] at (c5) (c5-1) {};
            \node[main, red, xshift=0, yshift=-6] at (c5) (c5-2) {};
            
            % hanging cycle 2
            \coordinate (c6) at (0, -5);
            \draw (c6) circle[radius=18pt];
            \node[left, xshift=-18pt, yshift=18pt] at (c6) {};
            \node[main, blue, xshift=-6] at (c6) (c6-1) {};
            \node[main, red, xshift=6] at (c6) (c6-2) {};
            
            % hanging cycle 3
            \coordinate (c7) at (2.25, -4);
            \draw (c7) circle[radius=18pt];
            \node[left, xshift=-18pt, yshift=24pt] at (c7) {};
            \node[main, blue, xshift=0, yshift=-6] at (c7) (c7-1) {};
            \node[main, red, xshift=0, yshift=6] at (c7) (c7-2) {};
            
            \path [-] (c1-1) edge node[midway, above] {1} (c2-1);
            \path [-] (c2-2) edge node[midway, above] {0} (c3-1);
            
            \path [-] (c1-2) edge node[midway, right] {0} (c4-1);
            
            \path [-] (c4-2) edge node[midway, above left] {0} (c5-1);
            \path [-] (c5-2) edge node[midway, below left] {0} (c6-1);
            \path [-] (c6-2) edge node[midway, below right] {0} (c7-1);
            \path [-] (c7-2) edge node[midway, above right] {0} (c4-3);
            
            \path [-] (c3-2) edge node[midway, above] {0} (c8-1);
            \path [-] (c8-2) edge node[midway, above] {0} (c9-1);
            \path [-] (c9-2) edge node[midway, above left] {0} (c10-1);
            \path [-] (c10-2) edge node[midway, above left] {0} (c11-1);
            \path [-] (c11-2) edge node[midway, above right] {0} (c12-1);
            \path [-] (c12-2) edge node[midway, above right] {0} (c3-3);
        \end{tikzpicture}
        \caption{}
        \end{subfigure}
        \caption{Figure illustrating the two cases (a) $v_i=v_{i'}=u_{j^h}$ and (b) $v_i=u_l,v_{i'}=u_1$ of Lemma \ref{lem:$1$-path-1-hanging-cycle-exist}.}
        \label{fig:figure-1}
    \end{figure}

\begin{restatable}{lemma}{onePathOneHangingCycle}\label{lem:$1$-path-1-hanging-cycle-exist}
    %Suppose $\pi^*$ has a single switch. 
   % Suppose $\pi^*$ has at least one switch. Consider any $v_i\in V^*$ such that either $v_i=u_{j^h}$ for some $1\le h\le \lambda$ or $v_i=u_l$ and $0$-edge (resp. $1$-edge) $(v_{i^1},v_i)$ on $\pi^*$. 
   Suppose $\pi^*$ has at least one switch. Consider any $v_i,v_{i'}\in V^*$ such that either $v_i=v_{i'}=u_{j^h}$ for some $1\le h\le \lambda$ or $v_i=u_l,v_{i'}=u_1$, the edge $(v_{i^1},v_i)$ on $\pi^*$, which is a $0$-edge (resp. $1$-edge), and the edge $(v_{i'},v_{i^2})$ on $\pi^*$, which is a $1$-edge (resp. $0$-edge). Also, suppose there is a special 0-hanging cycle (resp. 1-hanging cycle) $O$ for $v_{i'}$ in $G^*$ that is $0$-$1$-edge-disjoint from $\pi^*$. 
   % , where $i'=1$ if $v_i=u_l$ and $i'=j^h$ if $v_i=u_{j^h}$. 
   Moreover, there is no special 0-hanging cycle (resp. 1-hanging cycle) in $G^*$ for $v_i$ that is $0$-$1$-edge-disjoint from $\pi^*$ and $O$, and either its special point is distinct from that of $O$ or the special point is the same as that of $O$ and has degree in $H_1$ at least 3.  

Then, there exists a $0$-path (resp. $1$-path) $\pi_1$ from $v_i$ to a vertex $v_j$ in $G^*$, such that $O$, $\pi^*$, and $\pi_1$ are $0$-$1$-edge-disjoint and one of the properties (i)-(xvii) below holds.  
    
    %First, suppose $O$ is a path. Let $v_{j'}$ be the last vertex on $O$. Also, let $q_2$ be the point in $C_{j'}^*$ corresponding to the incoming edge of $v_{j'}$ if $v_{j'}\ne v_i$. 
    %Now, suppose $O$ is a hanging cycle. 
    Denote by $e_\eta$ the last edge on $\pi_1$ if it has an edge or $(v_{i^1},v_i)$ otherwise. The edge in $E_1$ corresponding to $e_{\eta}$ is $\{r_{\eta},b_{\eta}\}$, where $r_{\eta}$ is the red point and $b_{\eta}$ is the blue point.  
    In case $v_{j}\ne u_1$ is on $\pi^*$, let $q_1$ be the point in $C_{j}^*$ corresponding to the incoming edge of $v_j$ on $\pi^*$. Also, if $v_i=u_l$, then $z=j^\lambda+1, z'=l$, and if $v_i=u_{j^h}$, then $z=2$ if $h=1$ and $z=j^{h-1}+1$ otherwise, and $z'=l-1$ if $h=\lambda$ and $z'=j^{h+1}-1$ otherwise.  Let $v_{j'}$ be the join vertex of $O$. Also, let $q_2$ be the point in $C_{j'}^*$ corresponding to the incoming edge to $v_{j'}$ that lies on the cycle of $O$, i.e., $q_2$ is the special point of $O$. Let $q_3$ be the point in $C_{j'}^*$ corresponding to the other incoming edge of $v_{j'}$ on $O$ if $v_{j'}\ne v_{i'}$. Moreover, let $q_5$ be the point in $C_{i'}^*$ corresponding to the edge $(v_{i'},v_{i^2})$. 
    %Lastly, if  $v_i=v_{i'}=u_{j^h}$, let $q_5$ be the point in $C_{i}^*$ corresponding to the edge  $(v_{i'},v_{i^2})$ on $\pi^*$. 
    %Suppose $(u_1,u_2)$ is a $1$-edge (resp. $0$-edge) and $(u_{l-1},u_l)$ is a $0$-edge (resp. $1$-edge). Moreover, suppose there is a special 0-hanging cycle (resp. 1-hanging cycle) $O$ in $G^*$ for $u_1$ that is $0$-$1$-edge-disjoint from $\pi^*$, and      
    %Then, one of the following is true:
    \begin{enumerate}[label=(\roman*)]
        \item  $b_{\eta}= q_2$ (resp. $r_{\eta}= q_2$) and the degree of $q_2$ in $H_1$ is at least 3
\item  $b_{\eta}= q_2$ (resp. $r_{\eta}= q_2$), the degree of $q_2$ in $H_1$ is 2, $v_{j}\ne v_i$, and $q_2 = q_3$
\item  $b_{\eta}= q_2$  (resp. $r_{\eta}= q_2$), the degree of $q_2$ in $H_1$ is 2, $v_{j}\ne v_i$, $q_2\ne q_3$, and the degree of $q_3$  in $H_1$ is at least 2
 \item $b_{\eta}= q_2$  (resp. $r_{\eta}= q_2$), the degree of $q_2$ in $H_1$ is 2, $v_{j}$ is one of $u_z,u_{z+1},\ldots,u_{z'}$, and $q_2 = q_1$
\item $b_{\eta}= q_2$  (resp. $r_{\eta}= q_2$), the degree of $q_2$ in $H_1$ is 2, $v_{j}$ is one of $u_z,u_{z+1},\ldots,u_{z'}$, $q_2\ne q_1$, and the degree of $q_1$ in $H_1$ is at least 2
\item $b_{\eta}= q_2$  (resp. $r_{\eta}= q_2$), the degree of $q_2$ in $H_1$ is 2, $v_{j}$ is one of $u_z,u_{z+1},\ldots,u_{z'}$, $q_2\ne q_1$, $q_2\ne q_3$,  the degree of $q_1$ and $q_3$ in $H_1$  are 1, and there is a red (resp. blue) point in $C_{j}^*$ whose degree in $H_1$ is at most $t-1$
\item $b_{\eta}= q_2$  (resp. $r_{\eta}= q_2$), the degree of $q_2$ in $H_1$ is 2, $v_{j}$ is not one of $u_z,u_{z+1},\ldots,u_{z'+1}$, $q_2\ne q_3$,  the degree of $q_3$ in $H_1$ is 1, and there is a red (resp. blue) point in $C_{j}^*$ whose degree in $H_1$ is at most $t-1$
\item $b_{\eta}= q_2$  (resp. $r_{\eta}= q_2$), the degree of $q_2$ in $H_1$ is 2, $v_{j}=v_i\ne u_l$, and the degree of $q_5$ in $H_1$ is at least 2
 \item $b_{\eta}= q_2$  (resp. $r_{\eta}= q_2$), the degree of $q_2$ in $H_1$ is 2, $v_{j}=v_i\ne u_l$,  the degree of $q_5$ in $H_1$ is 1, and there is a red (resp. blue) point in $C_{j}^*$ whose degree in $H_1$ is at most $t-1$
\item  $b_{\eta}\ne q_2$ (resp. $r_{\eta}\ne q_2$) and the degree of $b_{\eta}$ (resp. $r_{\eta}$) in $H_1$ is at least 2
\item The degree of $b_{\eta}$ (resp. $r_{\eta}$) in $H_1$ is 1, $v_{j}$ is one of $u_z,u_{z+1},\ldots,u_{z'}$, and the degree of $q_1$ in $H_1$  is at least 2
 \item The degree of $b_{\eta}$ (resp. $r_{\eta}$) in $H_1$ is 1, $v_{j}$ is one of $u_z,u_{z+1},\ldots,u_{z'}$, and the degree of $q_3$ in $H_1$  is at least 2 and it is in $C_j^*$
\item The degree of $b_{\eta}$ (resp. $r_{\eta}$) in $H_1$ is 1, $v_{j}$ is one of $u_z,u_{z+1},\ldots,u_{z'}$, the degree of $q_1$ in $H_1$ is 1,  the degree of $q_3$ in $H_1$ is 1 or it is not in $C_j^*$ or $q_3$ doesn't exist, and there is a red (resp. blue) point in $C_{j}^*$ whose degree in $H_1$ is at most $t-1$
 \item The degree of $b_{\eta}$ (resp. $r_{\eta}$) in $H_1$ is 1, $v_{j}$ is not one of $u_z,u_{z+1},\ldots,u_{z'+1}$, and the degree of $q_3$ in $H_1$ is at least 2 and it is in $C_j^*$
 \item The degree of $b_{\eta}$ (resp. $r_{\eta}$) in $H_1$ is 1, $v_{j}$ is not one of $u_z,u_{z+1},\ldots,u_{z'+1}$, the degree of $q_3$ in $H_1$ is 1 or it is not in $C_j^*$ or $q_3$ doesn't exist, and there is a red (resp. blue) point in $C_{j}^*$ whose degree in $H_1$ is at most $t-1$
\item The degree of $b_{\eta}$ (resp. $r_{\eta}$) in $H_1$ is 1, $v_{j}=v_i\ne u_l$, and the degree of $q_5$ in $H_1$ is at least 2
 \item The degree of $b_{\eta}$ (resp. $r_{\eta}$) in $H_1$ is 1, $v_{j}=v_i$,  the degree of $q_5$ in $H_1$ is 1 if $v_i\ne u_l$, and there is a red (resp. blue) point in $C_{j}^*$ whose degree in $H_1$ is at most $t-1$

    \end{enumerate}

    % (i) , (ii) , and (iii) 
    %Moreover, for each $u_{j^h}$ on $\pi^*$, DFS$(u_{j^h},b^h)$ computes two $0$-$1$-edge-disjoint $b^h$-anchor paths.
\end{restatable}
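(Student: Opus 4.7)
The plan is to construct $\pi_1$ by a greedy $0$-edge search in $G^*$ starting from $v_i$. At each step we maintain a $0$-path from $v_i$ to a current endpoint $v_j$ and try to append an outgoing $0$-edge from the blue point $b_\eta$ in $C_j^*$ (where $b_\eta$ is defined from the last edge of $\pi_1$, or from $(v_{i^1},v_i)$ if $\pi_1$ has no edge) that is $0$-$1$-edge-disjoint from $\pi^*$ and $O$. We halt as soon as the current pair $(v_j,b_\eta)$ realizes one of the properties (i)--(xvii) and output this $\pi_1$.

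I would organize the analysis into three branches according to the relation between $b_\eta$ and the special point $q_2$ of $O$. Branch (A) is the case $b_\eta=q_2$, which subdivides into (i)--(ix) by the degree of $q_2$ in $H_1$, the location of $v_j$ relative to $\pi^*$ and $v_i$, and whether the auxiliary points $q_1,q_3,q_5$ already certify a valid local replacement edge. Branch (B) is $b_\eta\neq q_2$ with $\deg_{H_1}(b_\eta)\ge 2$, which is immediately case (x). Branch (C) is $b_\eta\neq q_2$ with $\deg_{H_1}(b_\eta)=1$, producing (xi)--(xvii) by the analogous positional refinement. Together the seventeen cases describe exhaustively every configuration in which a clean local replacement edge inside $C_j^*$, or toward an already identified cluster on $\pi^*$ or $O$, becomes available.

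The main obstacle is showing that the greedy search must always terminate in one of (i)--(xvii), and this is where both the hypothesis and Observation~\ref{obs:range-of-b^h-path} enter. If the search could always be extended, then by finiteness of $G^*$ the path would eventually revisit a vertex, giving a $0$-cycle glued to the initial segment of $\pi_1$ and hence a $0$-hanging cycle for $v_i$ that is $0$-$1$-edge-disjoint from $\pi^*$ and $O$. The blue point on the last cycle edge entering the join vertex is the candidate special point; because any degree-$1$ blue point along the way would have caused the search to halt earlier under Branch (C), that blue point has degree $\ge 2$ in $H_1$, making the hanging cycle special. A further dichotomy on whether this special point equals $q_2$, combined with the degree-$\ge 3$ exception built into the hypothesis, then contradicts the assumed non-existence of a suitable special $0$-hanging cycle for $v_i$. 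Observation~\ref{obs:range-of-b^h-path} plays the supporting role of pinning any intersection of $\pi_1$ with $\pi^*$ to $\{u_z,\ldots,u_{z'}\}$ or to $v_i$ itself, which is exactly why the positional buckets in (iv)--(vii) and (xi)--(xvii) take their stated form.

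Finally, $0$-$1$-edge-disjointness among $\pi^*$, $O$, and $\pi_1$ is immediate from the admissibility rule used in the search. The remaining work is a case-by-case match between the condition under which the search halts and the explicit witness point asserted in each of (i)--(xvii). I would carry this out by tabulating each termination cause (the identity of $b_\eta$, its degree, the position of $v_j$ on or off $\pi^*$, and the degrees of $q_1,q_3,q_5$) against its case number, verifying in each row that one of $q_2,q_1,q_3,q_5$ or a point of $C_j^*$ with degree $\le t-1$ serves as the witness the lemma requires.
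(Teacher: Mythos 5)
Your proposal gets the right overall shape — a walk starting at $v_i$ through $0$-edges disjoint from $\pi^*$ and $O$, terminating in a maximal path whose endpoint cluster $C_j^*$ is then probed, with a case split organized around $b_\eta$ versus $q_2$, the position of $v_j$, and the degrees of $q_1,q_3,q_5$. It is also conceptually close to the paper in spirit. But there is a genuine gap, and it sits exactly at the place you call ``the main obstacle.''

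Your termination argument only covers the situation where the greedy search ``could always be extended,'' and you close that branch by observing a revisit would yield a $0$-hanging cycle for $v_i$ disjoint from $\pi^*$ and $O$. What you do not handle is the search reaching a dead end: a $v_j$ from which there is no admissible outgoing $0$-edge and yet none of (i)--(xvii) obviously holds. In particular you assert ``any degree-$1$ blue point along the way would have caused the search to halt earlier under Branch (C),'' but that is not true as stated: every Branch-(C) case in which $\deg_{H_1}(b_\eta)=1$ (namely (xiii), (xv), (xvii)) imposes the \emph{additional} requirement that $C_j^*$ contains a red point of degree at most $t-1$. There is no a priori reason that point exists, and without it the search can neither halt under those cases nor be extended — your argument simply stalls. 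Showing that such a low-degree red point is forced to exist precisely when the other witnesses $q_1,q_3,q_5$ fail is the technical crux of the lemma and is entirely absent from your proposal.

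The paper closes this gap with a counting argument your sketch does not touch. It first constructs an acyclic auxiliary graph $G_1$ by deleting $O$, $\pi^*$, the reverse path $\pi_1^*$, all $1$-edges, and then iteratively stripping cycles; a maximal $0$-path in $G_1$ from $v_i$ is taken as $\pi_1$. The key is a conservation fact (Claim~\ref{cl:incoming-cluster-j2-1}): every outgoing $0$-edge of the endpoint $v_j$ that was deleted is paired with an incoming $0$-edge (from $O$, from $\pi^*/\pi_1^*$, or from the same removed cycle), so $v_j$ has at least $n_1+1$ incoming $0$-edges where $n_1$ is the number of outgoing ones. This is then combined with the hanging-cycle non-existence hypothesis in Claim~\ref{cl:edges-incident-blue-j2-1} to show that, if every red point of $C_j^*$ had degree $t$, the blue points of $C_j^*$ would have to absorb $t\cdot n_r + 1$ edge endpoints while at most $t\cdot n_r$ are available, a contradiction driven by the fact that in $H_1$ a blue point adjacent to a degree-$t\ge 2$ red point must itself have degree $1$. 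It is this in/out-degree bookkeeping, together with the special-point exclusion and the careful subtraction of at most two ``excluded'' outgoing edges lying on $O$ or on $\pi^*$, that produces cases (vi), (vii), (ix), (xiii), (xv), (xvii). Your tabulation step presupposes that this existence has already been established; without the counting argument or some substitute for it, the proof is incomplete.

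A lesser but real issue: a $0$-edge $(v_j,v_k)$ leaves $v_j$ through a \emph{red} point of $C_j^*$, not through $b_\eta$ (which is blue and in $C_j^*$ because it is the entry point). So ``append an outgoing $0$-edge from the blue point $b_\eta$'' is a misdescription of the search step; you mean an outgoing $0$-edge of the vertex $v_j$, and the degree of $b_\eta$ only enters the \emph{termination} test, not the extension rule. Finally, note the paper's construction is not a greedy step-by-step walk but a one-shot cycle stripping followed by any maximal path; this matters because it guarantees that all remaining outgoing $0$-edges of $v_j$ were on $O$, $\pi^*$, $\pi_1^*$, or in a removed \emph{cycle} — which is exactly what Claim~\ref{cl:incoming-cluster-j2-1} needs and what a greedy prefix does not automatically provide.
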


\begin{restatable}{lemma}{TwoPathsNoHangingCycle}\label{lem:2-paths-no-hanging-cycle-exists}
    Suppose $\pi^*$ has at least one switch. Consider any $v_i,v_{i'}\in V^*$ such that either $v_i=v_{i'}=u_{j^h}$ for some $1\le h\le \lambda$ or $v_i=u_l,v_{i'}=u_1$, the edge $(v_{i^1},v_i)$ on $\pi^*$, which is a $0$-edge (resp. $1$-edge), and the edge $(v_{i'},v_{i^2})$ on $\pi^*$, which is a $1$-edge (resp. $0$-edge). Moreover, suppose there is no special 0-hanging cycle (resp. 1-hanging cycle) in $G^*$ for $v_i$ or $v_{i'}$ that is $0$-$1$-edge-disjoint from $\pi^*$. Then, there exist two $0$-$1$-edge-disjoint $0$-paths (resp. $1$-paths) in $G^*$, $\pi_1$ from $v_i$ to a vertex $v_{j^1}$, and $\pi_2$ from $v_{i'}$ to $v_{j^2}$, such that they are also $0$-$1$-edge-disjoint from $\pi^*$ and one of the following properties holds. 
    
    Denote by $e_{\eta^1}$ the last edge on $\pi_1$ if it has an edge or $(v_{i^1},v_i)$ otherwise. The edge in $E_1$ corresponding to $e_{\eta^1}$ is $\{r_{\eta^1},b_{\eta^1}\}$, where $r_{\eta^1}$ is the red point and $b_{\eta^1}$ is the blue point. Also, denote by $e_{\eta^2}$ the last edge on $\pi_2$ if it has an edge or $(v_{i^2},v_{i'})$ otherwise. The edge in $E_1$ corresponding to $e_{\eta^2}$ is $\{r_{\eta^2},b_{\eta^2}\}$, where $r_{\eta^2}$ is the red point and $b_{\eta^2}$ is the blue point. Moreover, if $v_{i'}=u_1$, then $y=1, y'=j^1-1$, and if $v_{i'}=u_{j^h}$, then $y=2$ if $h=1$ and $y=j^{h-1}+1$ otherwise, and $y'=l-1$ if $h=\lambda$ and $y'=j^{h+1}-1$ otherwise.  Also, if $v_i=u_l$, then $z=j^\lambda+1, z'=l$, and if $v_i=u_{j^h}$, then $z=2$ if $h=1$ and $z=j^{h-1}+1$ otherwise, and $z'=l-1$ if $h=\lambda$ and $z'=j^{h+1}-1$ otherwise. 
    %Then, additionally, one of the following is true for $\pi_1$ and $\pi_2$: 

    \begin{enumerate}[label=(\roman*)]
        \item The degree of both $b_{\eta^1}$ (resp. $r_{\eta^1}$) and $b_{\eta^2}$ (resp. $r_{\eta^2}$) in $H_1$ is at least 2  and $b_{\eta^1}\ne b_{\eta^2}$ (resp. $r_{\eta^1}\ne r_{\eta^2}$)
        
        \item The degree of both $b_{\eta^1}$ (resp. $r_{\eta^1}$) and $b_{\eta^2}$ (resp. $r_{\eta^2}$) in $H_1$ is at least 2, $b_{\eta^1}= b_{\eta^2}$ (resp. $r_{\eta^1}= r_{\eta^2}$), $v_{j^1}$ is one of $u_z,u_{z+1},\ldots,u_{z'}$, and the degree in $H_1$ of the blue (resp. red) point in $C_{j^1}^*$ corresponding to $\pi^*$ is at least 2

        \item The degree of both $b_{\eta^1}$ (resp. $r_{\eta^1}$) and $b_{\eta^2}$ (resp. $r_{\eta^2}$) in $H_1$ are at least 2, $b_{\eta^1}= b_{\eta^2}$ (resp. $r_{\eta^1}= r_{\eta^2}$), $v_{j^1}$ is one of $u_z,u_{z+1},\ldots,u_{z'}$, the degree in $H_1$ of the blue (resp. red) point in $C_{j^1}^*$ corresponding to $\pi^*$ is 1, and there is a red (resp. blue)  point in $C_{j^1}^*$ whose degree in $H_1$ is at most $t-1$

        \item The degree of both $b_{\eta^1}$ (resp. $r_{\eta^1}$) and $b_{\eta^2}$ (resp. $r_{\eta^2}$) in $H_1$ are at least 2, $b_{\eta^1}= b_{\eta^2}$ (resp. $r_{\eta^1}= r_{\eta^2}$), $v_{j^1}$ is not one of $u_z,u_{z+1},\ldots,u_{z'}$, and there is a red (resp. blue)  point in $C_{j^1}^*$ whose degree in $H_1$ is at most $t-1$

        \item The degree of $b_{\eta^1}$ (resp. $r_{\eta^1}$) in $H_1$ is at least 2, $v_{j^2}$ is not one of $u_y,u_{y+1},\ldots,u_{y'-1}$, the degree of $b_{\eta^2}$ (resp. $r_{\eta^2}$) in $H_1$ is 1, and there is a red (resp. blue)  point in $C_{j^2}^*$ whose degree in $H_1$ is at most $t-1$

        \item The degree of $b_{\eta^1}$ (resp. $r_{\eta^1}$) in $H_1$ is at least 2, $v_{j^2}$ is one of $u_y,u_{y+1},\ldots,u_{y'-1}$, the degree of $b_{\eta^2}$ (resp. $r_{\eta^2}$) in $H_1$ is 1, and the degree of the blue (resp. red) point of $C_{j^2}^*$ corresponding to $\pi^*$ is at least 2 in $H_1$   

        \item The degree of $b_{\eta^1}$ (resp. $r_{\eta^1}$) in $H_1$ is at least 2, $v_{j^2}$ is one of $u_y,u_{y+1},\ldots,u_{y'-1}$, the degree of $b_{\eta^2}$ (resp. $r_{\eta^2}$) in $H_1$ is 1, the degree of the blue (resp. red) point in $C_{j^2}^*$ corresponding to $\pi^*$ is 1 in $H_1$, and there is a red (resp. blue)  point in $C_{j^2}^*$ whose degree in $H_1$ is at most $t-1$

        \item The degree of $b_{\eta^1}$ (resp. $r_{\eta^1}$) in $H_1$ is 1 and the degree of $b_{\eta^2}$ (resp. $r_{\eta^2}$) in $H_1$ is at least 2, and there is a red (resp. blue)  point in $C_{j^1}^*$ whose degree in $H_1$ is at most $t-1$

        \item $j^1\ne j^2$, the degree of both $b_{\eta^1}$ (resp. $r_{\eta^1}$) and $b_{\eta^2}$ (resp. $r_{\eta^2}$) are 1 in $H_1$, $v_{j^2}$ is not one of $u_y,u_{y+1},\ldots,u_{y'-1}$, there is a red (resp. blue)  point in $C_{j^1}^*$ whose degree in $H_1$ is at most $t-1$, and there is a red (resp. blue)  point in $C_{j^2}^*$ whose degree in $H_1$ is at most $t-1$

        \item $j^1\ne j^2$, the degree of both $b_{\eta^1}$ (resp. $r_{\eta^1}$) and $b_{\eta^2}$ (resp. $r_{\eta^2}$) are 1 in $H_1$, $v_{j^2}$ is one of $u_y,u_{y+1},\ldots,u_{y'-1}$, the degree of the blue point in $C_{j^2}^*$ corresponding to $\pi^*$ is at least 2 in $H_1$, and there is a red (resp. blue)  point in $C_{j^1}^*$ whose degree in $H_1$ is at most $t-1$

        \item $j^1\ne j^2$, the degree of both $b_{\eta^1}$ (resp. $r_{\eta^1}$) and $b_{\eta^2}$ (resp. $r_{\eta^2}$) are 1 in $H_1$, $v_{j^2}$ is one of $u_y,u_{y+1},\ldots,u_{y'-1}$, the degree of the blue point in $C_{j^2}^*$ corresponding to $\pi^*$ is 1, there is a red (resp. blue)  point in $C_{j^1}^*$ whose degree in $H_1$ is at most $t-1$, and there is a red (resp. blue)  point in $C_{j^2}^*$ whose degree in $H_1$ is at most $t-1$ 
        
        % \item The degree of $b_{\eta^1}$ (resp. $r_{\eta^1}$) in $H_1$ is at least 2 and $C_{j^2}^*$ has a red (resp. blue) point whose degree in $H_1$ is at most $t-1$
        % \item The degree of $b_{\eta^2}$ (resp. $r_{\eta^2}$) in $H_1$ is at least 2 and $C_{j^1}^*$ has a red (resp. blue) point whose degree in $H_1$ is at most $t-1$ 
        % \item $j^1\ne j^2$, $C_{j^1}^*$ has a red (resp. blue) point whose degree in $H_1$ is at most $t-1$, and $C_{j^2}^*$ has a red (resp. blue) point whose degree in $H_1$ is at most $t-1$
        \item $j^1= j^2$, the degree of both $b_{\eta^1}$ (resp. $r_{\eta^1}$) and $b_{\eta^2}$ (resp. $r_{\eta^2}$) are 1 in $H_1$, $C_{j^1}^*=C_{j^2}^*$ has a red (resp. blue) point whose degree in $H_1$ is at most $t-2$  or two red (resp. blue) points whose degree in $H_1$ are at most $t-1$. 
    \end{enumerate}
\end{restatable}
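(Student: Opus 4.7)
The plan is to construct $\pi_1$ and $\pi_2$ by two sequential graph searches along $0$-edges in $G^*$ (the $1$-edge case is entirely symmetric), closely mirroring the approach used in the proof of Lemma \ref{lem:b-anchor-path-exists-u_l}. Working in the $0$-edge case, I would first greedily build a $0$-path $\pi_1$ starting at $v_i$, extending it by any outgoing $0$-edge that is $0$-$1$-edge-disjoint from $\pi^*$ and from the portion of $\pi_1$ already built, and halting at the first vertex $v_{j^1}$ where either the latest incoming blue point $b_{\eta^1}$ has $H_1$-degree at least $2$, or the cluster $C_{j^1}^*$ contains a red point of $H_1$-degree at most $t-1$. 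Having fixed $\pi_1$, I would then run the same greedy $0$-path construction starting at $v_{i'}$ to produce $\pi_2$, this time additionally forbidding any edge of $\pi_1$ in the $0$-$1$ sense and recording the termination vertex $v_{j^2}$.

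The core claim is that both searches successfully terminate under the stated hypotheses, and this is precisely where the absence of special $0$-hanging cycles for $v_i$ and $v_{i'}$ is used. If the search for $\pi_1$ failed to terminate then, at every visited cluster, every red point would have $H_1$-degree exactly $t$, providing enough outgoing $0$-edges to always extend. Since $G^*$ is finite the walk must eventually attempt to enter a previously visited vertex $v_{j'}$; a careful analysis of the closing edge shows that either the newly arrived blue endpoint equals the first-visit blue endpoint at $v_{j'}$, forcing its $H_1$-degree to at least $2$ and producing a special $0$-hanging cycle for $v_i$ that is $0$-$1$-edge-disjoint from $\pi^*$, or the newly arrived blue endpoint has $H_1$-degree at least $2$ in its own right, in which case we would have terminated at $v_{j'}$ rather than tried to extend further. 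Either way we contradict the hypothesis. The same argument applies to the search for $\pi_2$; the extra forbidden set $\pi_1$ does not break it because the non-termination assumption implies that every red on the newly visited clusters still has all $t$ of its $H_1$-edges available, so the additional exclusion of $\pi_1$-edges cannot by itself obstruct extension.

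Given the existence of $\pi_1$ and $\pi_2$, the twelve cases (i)–(xii) then form a systematic case split on three coordinates: the dichotomy $\deg_{H_1}(b_{\eta^1}) \ge 2$ versus $=1$, the analogous dichotomy for $b_{\eta^2}$, and the positional data comprising $b_{\eta^1}=b_{\eta^2}$ or not, $j^1=j^2$ or not, and whether $v_{j^1}, v_{j^2}$ lie in the legal segments of $\pi^*$ identified by Observation \ref{obs:range-of-b^h-path}. Cases (i)–(iv) cover both blues of degree at least $2$, cases (v)–(vii) the asymmetric setting, case (viii) its mirror, and cases (ix)–(xii) the doubly degenerate setting of both blues of degree $1$. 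The main obstacle will be case (xii), where both searches terminate at the same cluster $C_{j^1}^* = C_{j^2}^*$ with both terminal blue points of $H_1$-degree $1$: here the modified DCS must reroute two removed edges inside a single cluster, which demands either a single red of $H_1$-degree at most $t-2$, or two distinct reds of $H_1$-degree at most $t-1$. To establish this I would re-examine the termination conditions of the two searches jointly: since $\pi_1$ terminated at $v_{j^1}$ by locating a red of degree at most $t-1$, and $\pi_2$, forbidden from reusing the $H_1$-edge that $\pi_1$ attached to that red, nevertheless also terminated at the same cluster rather than extending further, the only consistent outcome is that either the same red point has degree at most $t-2$ (and can absorb two new incident edges) or a different red point of the cluster also satisfies the degree bound, which is exactly the alternative posed by case (xii).
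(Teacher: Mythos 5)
Your proposal diverges substantially from the paper's construction, and the divergence introduces genuine gaps.

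The paper does not use a greedy early-stopping search. Instead, it builds an auxiliary graph $G_1$ by deleting from $G^*$ all $1$-edges, all edges on $\pi^*$ and its reverse $\pi_1^*$, and then iteratively all cycles. The paths $\pi_1,\pi_2$ are taken as \emph{maximal} $0$-paths in the resulting DAG, so termination is automatic and, crucially, the terminal vertex has \emph{no} outgoing $0$-edge in $G_1$. This is what powers the key counting claims (the paper's Claims~\ref{cl:incoming-cluster-j1}--\ref{cl:edges-incident-blue-j2}): because every outgoing $0$-edge of $v_{j^1}$ was removed as part of $\pi^*$, $\pi_1^*$, or a cycle, the number of incoming $0$-edges is at least $n_1+1$ (and at least $n_1+2$ when $j^1=j^2$), and then a careful in-/out-degree balance against the cluster's $t$-balance forces a low-degree red point.

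Your termination argument has a concrete hole. You claim that if the search fails to terminate then ``at every visited cluster, every red point would have $H_1$-degree exactly $t$, providing enough outgoing $0$-edges to always extend.'' That implication is false: a red of degree $t$ whose $t$ partners all lie \emph{inside} the same cluster contributes no outgoing $0$-edge at all. Whether the cluster has an available outgoing $0$-edge is not determined by red degrees alone; it requires exactly the cross-boundary edge counting the paper performs, which your sketch omits. Relatedly, your case split on the closing of a cycle conflates two things that must be kept apart: the special point of a hanging cycle is the blue endpoint of the cycle's last edge into the join vertex, not simply ``the newly arrived blue equals the first-visit blue.'' And if they were equal, that blue would already have degree at least $2$ at the first visit, so under your stopping rule you would never have extended past $v_{j'}$ in the first place --- the two branches of your disjunction collapse and neither actually rules out the stuck state.

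Two further issues. First, the twelve conditions in the statement involve positional data (whether $v_{j^1}$ lies among $u_z,\ldots,u_{z'}$, whether $v_{j^2}$ lies among $u_y,\ldots,u_{y'-1}$, whether $j^1=j^2$, and so on), and these conditions are what license the edge replacements downstream in the construction of $E_2'$ for $\pi^*$ as well. A path produced by a first-good-vertex greedy search need not land in a configuration matching any of the twelve items; the paper's maximal-path endpoint is analyzed against \emph{all} coordinates simultaneously, and the counting claim is parameterized by where on $\pi^*$ the endpoint lies. Second, your treatment of case~(xii) is not a proof: you reason about ``$\pi_2$ being forbidden from reusing the $H_1$-edge that $\pi_1$ attached to that red,'' but $\pi_1$ is a search path, not a set of added edges, and no $H_1$-edge gets ``attached'' during the search. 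The actual argument in the paper is again a counting one: when $j^1=j^2$ and both terminal blues have degree $1$, the incoming-$0$-edge count for the shared vertex is at least $n_1+2$ (the strengthened Claim~\ref{cl:incoming-cluster-j1}), and balancing this against red degrees shows the aggregate red degree is at most $t\cdot n_r-2$, which is exactly the ``one red of degree at most $t-2$ or two reds of degree at most $t-1$'' dichotomy. Your sketch does not derive this.

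In short, the path-via-cycle-removal construction and the precise in-/out-degree accounting are the load-bearing parts of the paper's proof of this lemma; your proposal replaces them with a greedy search whose termination and case-matching are asserted but not established.
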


Next, we apply the above lemmas to show the construction of $E_1'$ and $E_2'$. %Henceforth, we assume that $\pi^*$ has at least one switch. 
%We consider the case when $\pi^*$ has at least one switch. 
\textbf{Recall that in this case there is no $b^{\lambda+1}$-path from $u_l$ to $u_1$ in $G^*$ that is $0$-$1$-edge-disjoint from $\pi^*$.} If $\pi^*$ has a switch, then there is no $0$-path or $1$-path from $u_l$ to $u_1$ in $G^*$ that is $0$-$1$-edge-disjoint from $\pi^*$. Otherwise, it must be that $b^0=b^{\lambda+1}$, and hence by our assumption, there is no $b^0$-path from $u_l$ to $u_1$ in $G^*$ that is $0$-$1$-edge-disjoint from $\pi^*$. We conclude that in this case, there is no $b^0$-path from $u_l$ to $u_1$ in $G^*$ that is $0$-$1$-edge-disjoint from $\pi^*$. 
%Suppose there is a $1$-path (resp. $0$-path) from $u_1$ to $u_l$ in $G^*$ that is $0$-$1$-edge-disjoint from $\pi^*$. Then by Observation \ref{obs:$0$-$1$-edges-existence}, there is also a $0$-path (resp. $1$-path) from $u_l$ to $u_1$ in $G^*$ that is $0$-$1$-edge-disjoint from $\pi^*$. But, this is a contradiction. Hence, there is no such path from $u_1$ to $u_l$. 
Then, by Lemma \ref{lem:b-anchor-path-exists-u_1} it follows that, either there is a $(1-b^0)$-hanging cycle for $u_1$ in $G^*$ $0$-$1$-edge-disjoint from $\pi^*$, or a $(1-b^0)$-path starting from $u_1$ in $G^*$ with special properties. This is true, as the parity of $(u_1,u_2)$ is $b^0$. We denote this structure by $\pi(0)$. Additionally, if $\pi(0)$ is a path, we call $b_\eta$ (resp. $r_\eta$)  an \textit{anchor} point if its degree in $H_1$ is at least 2. Similarly, by Lemma \ref{lem:b-anchor-path-exists-u_l} it follows that, either there is a $b^{\lambda+1}$-hanging cycle for $u_l$ in $G^*$ $0$-$1$-edge-disjoint from $\pi^*$, or a $b^{\lambda+1}$-path starting from $u_l$ in $G^*$ with special properties. This is true, as $(u_{l-1},u_l)$ is a $b^{\lambda+1}$-edge. We denote this structure by $\pi(\lambda+1)$. Additionally, if $\pi(\lambda+1)$ is a path, we call $b_\eta$ (resp. $r_\eta$)  an \textit{anchor} point if its degree in $H_1$ is at least 2. 

Now, if $\mathbf{1-b^0\ne b^{\lambda+1}}$, then $\pi(0)$ and $\pi(\lambda+1)$ must be vertex-disjoint. If they are not vertex-disjoint, there exists a $b^{\lambda+1}$-path from $u_l$ to $u_1$ in $G^*$ that is $0$-$1$-edge-disjoint from $\pi^*$: take the edges on $\pi(\lambda+1)$ from $u_l$ to a common vertex and the reverse of $\pi(0)$, from the common vertex to $u_1$. These reverse edges have parity opposite of $1-b^0$, i.e., the same as $b^{\lambda+1}$. But, by our assumption, such a $b^{\lambda+1}$-path does not exist. Hence, $\pi(0)$ and $\pi(\lambda+1)$ are vertex-disjoint. 

In the other case, $\mathbf{1-b^0= b^{\lambda+1}}$. We note that if $\pi^*$ has no switch, $b^0=b^{\lambda+1}$. Thus, if $1-b^0= b^{\lambda+1}$, then we can safely assume that $\pi^*$ has at least one switch. In this case, suppose there are a $(1-b^0)$-hanging cycle for $u_1$ and a $b^{\lambda+1}$-hanging cycle for $u_l$ in $G^*$, such that both are $0$-$1$-edge-disjoint, each of the hanging cycles is $0$-$1$-edge-disjoint from $\pi^*$, and either the special vertices of both are distinct or the special points are the same and the degree of that point in $H_1$ is at least 3. Then, we take the hanging cycle for $u_1$ as $\pi(0)$ and the one for $u_l$ as $\pi(\lambda+1)$. Otherwise, if there is a $(1-b^0)$-hanging cycle for $u_1$ in $G^*$ $0$-$1$-edge-disjoint from $\pi^*$ or a $b^{\lambda+1}$-hanging cycle for $u_l$ in $G^*$ $0$-$1$-edge-disjoint from $\pi^*$, we consider one of those. Assume that the former holds. The other case is symmetric. We take such a hanging cycle for $u_1$ as $\pi(0)$. Then, by Lemma \ref{lem:$1$-path-1-hanging-cycle-exist}, with $v_i=u_l$ and $v_{i'}=u_1$, there exists a $b^{\lambda+1}$-path from $u_l$ in $G^*$ with special properties. This is true, as the parity of $(u_{l-1},u_l)$ is  $b^{\lambda+1}$, the parity of $(u_1,u_2)$ is $b^0=(1-b^{\lambda+1})$, there is no $b^{\lambda+1}$-hanging cycle for $u_l$ in $G^*$ $0$-$1$-edge-disjoint from $\pi^*$ and $\pi(0)$ such that its special point is distinct from that of $\pi(0)$ or if they are the same point, the degree of that point in $H_1$ is at least 3. We take this $b^{\lambda+1}$-path as $\pi(\lambda+1)$. Otherwise, there is neither a $(1-b^0)$-hanging cycle for $u_1$ in $G^*$ $0$-$1$-edge-disjoint from $\pi^*$ nor a $b^{\lambda+1}$-hanging cycle for $u_l$ in $G^*$ $0$-$1$-edge-disjoint from $\pi^*$. Then, by Lemma \ref{lem:2-paths-no-hanging-cycle-exists}, with $v_i=u_l,v_{i'}=u_1$, there exist two $0$-$1$-edge-disjoint paths with parity $1-b^0= b^{\lambda+1}$, from $u_1$ and $u_l$, respectively, such that both are also $0$-$1$-edge-disjoint from $\pi^*$. This is true, as $(u_{l-1},u_l)$ is a $b^{\lambda+1}$-edge and $(u_1,u_2)$ is a $b^0$-edge on $\pi^*$, and $1-b^0= b^{\lambda+1}$. In this case, we take the path from $u_1$ as $\pi(0)$ and the path from $u_l$ as $\pi(\lambda+1)$.  

For all $1\le h\le \lambda$, if there are two $0$-$1$-edge-disjoint $b^h$-hanging cycles for $u_{j^h}$ in $G^*$ that are $0$-$1$-edge-disjoint from $\pi^*$ and have distinct special points or the same special point of degree at least 3 in $H_1$, denote them by $\pi_1(h)$ and $\pi_2(h)$. Otherwise, if there is one $b^h$-hanging cycle for $u_{j^h}$ in $G^*$ that is $0$-$1$-edge-disjoint from $\pi^*$, denote it by $\pi_1(h)$. Now, $(u_{j^h-1},u_{j^h})$ is a $b^h$-edge and $(u_{j^h},u_{j^h+1})$ is a $(1-b^h)$-edge. Then, by Lemma \ref{lem:$1$-path-1-hanging-cycle-exist}, there is a $b^h$-path starting from $u_{j^h}$ in $G^*$ with special properties. Denote this path by $\pi_2(h)$. Otherwise, there is no $b^h$-hanging cycle for $u_{j^h}$ in $G^*$ that is $0$-$1$-edge-disjoint from $\pi^*$. In this case, by Lemma \ref{lem:2-paths-no-hanging-cycle-exists}, there are two $b^h$-paths starting from $u_{j^h}$ in $G^*$ with special properties. Denote them by $\pi_1(h)$ and $\pi_2(h)$. Note that in all the cases, $\pi()$, $\pi_1()$ or $\pi_2()$ can either be a path or a hanging cycle. 
%We refer to this general graph as a structure. When $\pi_1()$ or $\pi_2()$ is a path, we refer to its last vertex as a \textit{terminal} vertex. A vertex in the structures of $(\cup_{i=1}^{\lambda} \pi_1(i)\cup \pi_2(i))\cup \{\pi(0),\pi(\lambda+1),\pi^*\}$ is called a \textit{non-terminal} vertex if it is not a terminal vertex. 

%Denote two $b^h$-anchor paths that DFS$(u_{j^h},b^h)$ computes by $\pi_1(h)$ and $\pi_2(h)$. Denote a $(1-b^0)$-anchor path that DFS$(u_{1},b^0)$ computes by $\pi(0)$. Also, denote a $b^{\lambda+1}$-anchor path that DFS$(u_{l},b^{\lambda+1})$ computes by $\pi({\lambda+1})$. 
 
\begin{lemma}\label{lem:disjoint-anchor-paths}    
    For any two indexes $1\le h_1\ne h_2\le \lambda$, $\pi_i(h_1)$ and $\pi_j(h_2)$ are vertex-disjoint for $i,j \in \{1,2\}$. For $1\le h_1\le \lambda$ and $i \in \{1,2\}$, $\pi(0)$ and $\pi_i(h_1)$ are vertex-disjoint, and  $\pi(\lambda+1)$ and $\pi_i(h_1)$ are vertex-disjoint. For any $1\le h\le \lambda$, $\pi_1(h)$ and $\pi_2(h)$ are $0$-$1$-edge-disjoint. Moreover, $\pi(0)$ and $\pi(\lambda+1)$ are $0$-$1$-edge-disjoint. 
\end{lemma}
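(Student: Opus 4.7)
The plan is to prove each disjointness claim by contradiction with the minimum-switch property of $\pi^*$, via a uniform splice-and-substitute construction. The $0$-$1$-edge-disjointness of $\pi_1(h)$ and $\pi_2(h)$ for each $h$ is immediate from the construction recorded in the paragraphs just preceding this lemma: whether the pair is chosen as two $0$-$1$-edge-disjoint hanging cycles or produced by Lemma \ref{lem:$1$-path-1-hanging-cycle-exist} or Lemma \ref{lem:2-paths-no-hanging-cycle-exists}, the outputs are $0$-$1$-edge-disjoint by construction. Similarly, the $0$-$1$-edge-disjointness of $\pi(0)$ and $\pi(\lambda+1)$ is already addressed in the case analysis above the lemma: when $1-b^0=b^{\lambda+1}$ it is enforced by construction, and when $b^0=b^{\lambda+1}$ the two structures are shown there to be vertex-disjoint, which is stronger.

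For the vertex-disjointness of $\pi_i(h_1)$ and $\pi_j(h_2)$ with $h_1<h_2$, I argue by contradiction. Let $v$ be a shared vertex, and define a walk $W$ from $u_{j^{h_1}}$ to $u_{j^{h_2}}$ by concatenating the prefix of $\pi_i(h_1)$ from $u_{j^{h_1}}$ to $v$ with the reverse of the prefix of $\pi_j(h_2)$ from $v$ back down to $u_{j^{h_2}}$. By Observation \ref{obs:$0$-$1$-edges-existence}, the first segment of $W$ is a $b^{h_1}$-walk and the second is a $(1-b^{h_2})$-walk, so $W$ contains at most one internal switch (at $v$, and only when $h_2-h_1$ is even). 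Now substitute $W$ for the $\pi^*$-segment from $u_{j^{h_1}}$ to $u_{j^{h_2}}$ inside $\pi^*$: this removes all $h_2-h_1+1$ switches of $\pi^*$ at $u_{j^{h_1}},u_{j^{h_1+1}},\ldots,u_{j^{h_2}}$, and both boundary junctions remain seamless, since the incoming $\pi^*$-edge at $u_{j^{h_1}}$ has parity $b^{h_1}$ (matching the start of $W$) and the outgoing $\pi^*$-edge at $u_{j^{h_2}}$ has parity $1-b^{h_2}$ (matching the end of $W$). Pruning the resulting walk to a simple $u_1$-to-$u_l$ path then yields a path with strictly fewer switches than $\pi^*$, contradicting its minimum-switch property.

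The vertex-disjointness of $\pi(0)$ with each $\pi_i(h)$ and of $\pi(\lambda+1)$ with each $\pi_i(h)$ follows by the same splice-and-substitute strategy, using $\pi(0)$ as a left-substitute for the $\pi^*$-prefix from $u_1$ to $u_{j^h}$ and $\pi(\lambda+1)$ as a right-substitute for the $\pi^*$-suffix from $u_{j^h}$ to $u_l$; a shared vertex lets us bypass the switch at $u_{j^h}$ (and possibly others), and the boundary-parity computations — using that $b^h=b^0$ when $h$ is odd and $b^h=1-b^0$ when $h$ is even, together with the analogous identities for $b^{\lambda+1}$ — show that the substituted path carries strictly fewer switches than $\pi^*$. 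The principal technical obstacle is the walk-to-path cleanup: collapsing a walk to a simple path can in general create new switches at short-circuit junctions, but in our setting the range restriction of Observation \ref{obs:range-of-b^h-path} on which $\pi^*$-vertices a $b^h$-path may visit, together with the uniform parity of each walk segment, ensures that every short-circuit junction sits either entirely within a uniform-parity walk segment or at a $\pi^*$-vertex whose $\pi^*$-parity matches the adjacent walk parity. Cycle removal therefore never introduces a new switch, and the contradiction with the minimality of switches in $\pi^*$ goes through in every case.
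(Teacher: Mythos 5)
Your proof is correct and follows essentially the same splice-and-contradict strategy as the paper's: a shared vertex between two of the auxiliary structures would allow replacing a $\pi^*$-segment by a two-piece walk with strictly fewer switches, contradicting the minimality of $\pi^*$, while the edge-disjointness claims are read off from the construction preceding the lemma. Your boundary-parity bookkeeping and the explicit walk-to-path cleanup via Observation~\ref{obs:range-of-b^h-path} make precise a step the paper leaves implicit behind the phrase ``the switch was not needed.''
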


\begin{proof}
    Consider any two paths $\pi_i(h_1)$ and $\pi_j(h_2)$ for $1\le h_1\ne h_2\le \lambda$ and $i,j \in \{1,2\}$. Wlog, assume that $h_1 < h_2$. Suppose $\pi_i(h_1)$ and $\pi_j(h_2)$ have a common vertex $v_j$. If $h_2=h_1+1$, then the parity $b^{h_1}\ne b^{h_2}$. But, this implies that there is a direct $b^{h_1}$-path from $u_{j^{h_1}}$ to $u_{j^{h_2}}$ via $v_j$, and the switch at $u_{j^{h_1}}$ from $b^{h_1}$ to $b^{h_2}$ was not needed. But, this contradicts the optimality of $\pi^*$. Otherwise, $h_2\ge h_1+2$, and there is a path from $u_{j^{h_1}}$ to $u_{j^{h_2}}$ via $v_j$ having at most 1 switch, and the switch at $u_{j^{h_1}}$ or $u_{j^{h_1+1}}$ was not needed. This again contradicts the optimality of $\pi^*$.

    Similarly, suppose $\pi(0)$ and $\pi_i(h_1)$ have a common vertex $v_j$. If $h_1=1$, then there is a direct $(1-b^{0})$-path from $u_{1}$ to $u_{j^{1}}$ via $v_j$, and the switch at $u_{j^{1}}$ from $b^0$ to $1-b^0$ was not needed. But, this contradicts the optimality of $\pi^*$. If $h_1\ge 2$, then there is a path from $u_{1}$ to $u_{j^{h_1}}$ via $v_j$ that has at most 1 switch, and the switch at $u_{j^{h_1}}$ or $u_{j^{h_1-1}}$ was not needed. Similarly, one can prove that $\pi(\lambda+1)$ and $\pi_i(h_1)$ are vertex-disjoint.

    Lastly, the $0$-$1$-edge-disjointness of $\pi_1(h)$ and $\pi_2(h)$, or $\pi(0)$ and $\pi(\lambda+1)$ follows from our construction in Lemma \ref{lem:b-anchor-path-exists-u_l}, \ref{lem:b-anchor-path-exists-u_1}, \ref{lem:$1$-path-1-hanging-cycle-exist}, and \ref{lem:2-paths-no-hanging-cycle-exists}. 
\end{proof}

\medskip
\noindent
{\bf Construction of $E_1'$.} Let $U\subseteq E^*$ be the subset of edges that lie on the structures in $\mathcal{S}=\cup_{i=1}^{\lambda} (\{\pi_1(i)\}\cup \{\pi_2(i)\})\cup \{\pi(0),\pi(\lambda+1),\pi^*\}$. Next, we define the subset $E_1'$ of $E'$ that has a one-to-one mapping with $U$. In particular, consider any edge $(v_i,v_j)$ in $U$. Note that if it is a $0$-edge, it was added due to an edge  $\{p,q\}$ in $E_1$ such that $p\in P_1\cap C_i^*$ and $q\in P_2\cap C_j^*$. We add the edge $\{p,q\}$ to $E_1'$. Otherwise, if $(v_i,v_j)$ is a $1$-edge, it was added due to an edge  $\{p,q\}$ in $E_1$ such that $p\in P_2\cap C_i^*$ and $q\in P_1\cap C_j^*$. In this case, we add the edge $\{p,q\}$ to $E_1'$. 

Next, we show the construction of $E_2'$, and in particular, prove the following lemma. 

\begin{lemma}\label{lem:new-graph-DCS}
 There is a subset of edges $E_2'\subset E$, such that the set of edges $(E'\setminus E_1')\cup E_2'$ form a valid degree-constrained subgraph of $G$ on the set of vertices $P_1\cup P_2$.    
\end{lemma}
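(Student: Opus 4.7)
The plan is to construct $E_2'$ structure by structure, compensating for the degree loss caused by removing each edge of $E_1'$. Every edge in $E_1' \subseteq E'$ corresponds to a unique edge of $E^*$ lying on some member of $\mathcal{S} = \cup_{i=1}^{\lambda}(\{\pi_1(i)\} \cup \{\pi_2(i)\}) \cup \{\pi(0), \pi(\lambda+1), \pi^*\}$. Each segment in $\mathcal{S}$ other than the switching portions of $\pi^*$ has uniform parity, which lets us classify the work by whether a vertex is internal or a junction of these structures.

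First I would handle the internal vertices. For any vertex $v_j \in V^*$ that is strictly internal to a path in $\mathcal{S}$, or is a non-join vertex of the cycle portion of a hanging cycle, exactly two edges of the structure are incident to $v_j$ and by uniform parity their endpoints in $C_j^*$ are of opposite colors. I add to $E_2'$ the bichromatic edge between these two points in $G$: the simultaneous removal of two $E_1'$ edges and insertion of one $E_2'$ edge leaves the degree of both points unchanged. I repeat this for every internal vertex across every structure in $\mathcal{S}$, including non-switching internal vertices of $\pi^*$. After this pass, the only vertices whose degree is not yet restored to its $H$-value are the junction vertices: $u_1$, $u_l$, the switching vertices $u_{j^h}$ for $1\le h\le \lambda$, and the join vertices of the hanging cycles.

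Next I would handle each junction vertex separately by invoking Lemmas \ref{lem:b-anchor-path-exists-u_l}, \ref{lem:b-anchor-path-exists-u_1}, \ref{lem:$1$-path-1-hanging-cycle-exist}, and \ref{lem:2-paths-no-hanging-cycle-exists}. At a switching vertex $u_{j^h}$, the two $\pi^*$-edges contribute same-color endpoints in $C_{j^h}^*$ that cannot be paired with each other, so the two first edges of $\pi_1(h)$ and $\pi_2(h)$ are needed to create two bichromatic pairs. The enumerated cases in those lemmas prescribe exactly which points to connect: either an anchor point of degree $\ge 2$ in $H_1$ absorbs the removal safely (its resulting degree is still $\ge 1$), or a point of degree $\le t-1$ in $H_1$ absorbs a new edge safely (its resulting degree is still $\le t$), or both together. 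Analogous case work handles $u_1$ and $u_l$ via $\pi(0)$ and $\pi(\lambda+1)$. By Lemma \ref{lem:disjoint-anchor-paths}, the different structures are vertex-disjoint away from the designated junctions, so the local repairs made for one junction do not interfere with those made for another.

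The main obstacle is the bookkeeping at the junctions: at a switching vertex four $E_1'$ edges meet, and the replacement $E_2'$ edges must simultaneously restore the tails of $\pi_1(h)$, $\pi_2(h)$ and the two $\pi^*$-endpoints, all while respecting the upper bound $t$ everywhere. The enumerated cases of Lemmas \ref{lem:$1$-path-1-hanging-cycle-exist} and \ref{lem:2-paths-no-hanging-cycle-exists} are engineered to make the routing possible in every possible local configuration, and the minimum-switch property feeding Lemma \ref{lem:disjoint-anchor-paths} ensures that no cluster $C_j^*$ is touched by too many replacement edges at once. Combining the internal-vertex pass with the junction-vertex pass, every vertex in $(E'\setminus E_1')\cup E_2'$ ends up with degree in $[1,t]$, which is exactly the DCS property required by the lemma.
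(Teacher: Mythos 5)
Your proposed construction follows the same route as the paper — take $\mathcal S=\cup_h(\{\pi_1(h)\}\cup\{\pi_2(h)\})\cup\{\pi(0),\pi(\lambda+1),\pi^*\}$, let $E_1'$ be the $E'$-edges carried by $\mathcal S$, add one bichromatic replacement edge inside each cluster touched twice by a single structure, and then use the anchor-point (degree $\ge 2$) / low-degree-point (degree $\le t-1$) dichotomy supplied by Lemmas~\ref{lem:b-anchor-path-exists-u_l}--\ref{lem:2-paths-no-hanging-cycle-exists} at the path endpoints — and you correctly identify that Lemma~\ref{lem:disjoint-anchor-paths} keeps structures attached to different junctions apart. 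However, there is a genuine ordering flaw in the way you schedule the two passes.

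You run the internal-vertex pass over all of $\mathcal S$ \emph{including} the non-switching interior of $\pi^*$, and only afterwards fix the auxiliary-path endpoints. This double-corrects exactly the vertices where the two interact. A maximal $0$-path $\pi_1(h)$ produced by Lemma~\ref{lem:$1$-path-1-hanging-cycle-exist} or \ref{lem:2-paths-no-hanging-cycle-exists} can terminate at a non-switching vertex $v_j$ lying on $\pi^*$ (this is exactly the subcases in which $v_j\in\{u_z,\dots,u_{z'}\}$). Your internal pass has already added an $E_2'$ edge between the red point $p_r$ corresponding to the outgoing $\pi^*$-edge of $v_j$ and the blue point $p_b$ corresponding to its incoming $\pi^*$-edge, restoring both their degrees. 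But the endpoint-fix for $\pi_1(h)$ in, e.g., case (xi) of Lemma~\ref{lem:$1$-path-1-hanging-cycle-exist} must connect the dangling $b_\eta$ to that same $p_r$; together with the internal-pass edge, $p_r$ ends up with degree one higher than in $H$, violating the upper bound $t$. The paper avoids this by processing the non-switching vertices of $\pi^*$ \emph{last} and explicitly skipping any $v_s$ that was already handled while fixing an auxiliary path or hanging cycle; it also has to \emph{remove and reroute} previously added $E_2'$ edges in several subcases (cases (iii), (viii), (xii), (xvi) of the hanging-cycle processing literally say ``we remove this edge from $E_2'$ and connect~$\dots$''). Your two-pass scheme would need the same conditional skipping or rewiring at every point where a path endpoint lands on $\pi^*$ or on another structure, but the proposal presents the passes as independent and does not acknowledge that the first pass's output must sometimes be undone. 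Without those rewiring rules the degree invariant does not survive at the interacting vertices, so the claim that ``every vertex ends up with degree in $[1,t]$'' is not yet justified.
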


In the following, we prove Lemma \ref{lem:new-graph-DCS}. We define the desired subset $E_2'$ of $E$ in the following way, which depends on the edges in $U$. Note that a structure in $\mathcal{S}$ can be a hanging cycle or a path. Consider any such structure $S$ that is a hanging cycle for $v_i$ with $v_j$ as the join vertex. The edges of $S$ correspond to 1 point of $C_{i}^*$, 3 points of $C_j^*$ if $v_j\ne v_i$, otherwise 2 points, and lastly, two points of the cluster corresponding to any other vertex on $S$. Two points in each such cluster of the last type are of different colors and we add an edge between them in $E_2'$. Removal of the edges of $E_1'$ corresponding to $S$ and the addition of these edges to $E_2'$ do not change the degree of these points. Now, we have two cases. First, suppose $v_j\ne v_i$. Among the three points of $C_j^*$, two have degree at least 2 in $H_1$ and they are of the same color. One of them (the first point) corresponds to the incoming edge of $v_j$ on the path portion of $S$ and the second corresponds to the incoming edge of $v_j$ on the cycle portion of $S$, i.e., the special point of $S$. The third point has the opposite color to these two points. We connect this third point with the first point in $E_2'$. Removal of the edges of $E_1'$ corresponding to $S$ and addition of this edge, do not change the degree of the two endpoints. The degree of the other point is reduced by 1 but still is at least 1. This is true, as it is the special point of $S$, and so its degree in $H_1$ was at least 2 before. Now, due to the deletion of the edges on $\pi^*$ incident to $v_i$, the degree of two points, say $p$ and $q$, of $C_{i}^*$ get reduced by 1 if $v_i$ is a switching vertex $u_{j^h}$. If $v_i=u_1$ or $v_i=u_l$, the degree of only one point, say $p$, of $C_{i}^*$ gets reduced by 1. In $E_2'$, we connect the point, say $q'$, of $C_{i}^*$ corresponding to $S$ to either $p$ or one of $p$ and $q$ in the former case, preferably to the one that is not an endpoint of the edges of $E_2'$ so far. If both points are not such an endpoint, choose the point corresponding to $(u_{j^h},u_{j^h+1})$. See Figure \ref{fig:E_2'forHangingCycle} for an illustration. 
%one that has the lower degree in $H_1$. 
If $v_j=v_i$, then also connect $q'$ to $p$ or one of $p$ and $q$ in the same way as above. Removal of the edges of $E_1'$ corresponding to $q'$ and $\pi^*$ and the addition of this edge with $q'$ as an endpoint do not change the degree of the two endpoints. 

\begin{figure}[h]
    \centering
    \includegraphics[width=0.4\linewidth]{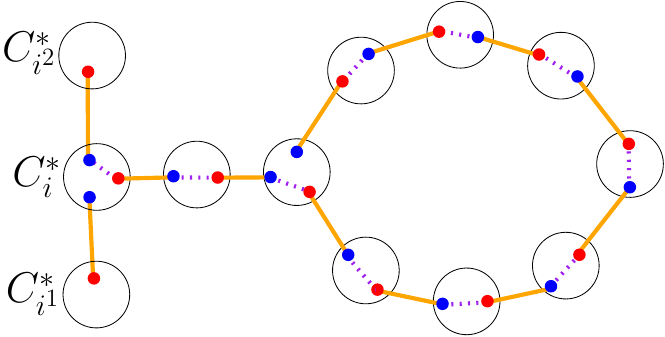}
    \caption{Figure illustrating the construction of $E_2'$ for a hanging cycle. The bold (orange) edges are in $E_1'$ and the dashed (purple) edges are in $E_2'$.}
    \label{fig:E_2'forHangingCycle}
\end{figure}

%\textbf{Describe what would you do 
Now, suppose there is a path $S'$ in $\mathcal{S}$ from a vertex $v_i$ to a vertex $v_j$ that is constructed by Lemma \ref{lem:b-anchor-path-exists-u_1} or \ref{lem:b-anchor-path-exists-u_l}. The edges of $S'$ correspond to 1 point of both $C_{i}^*$ and $C_j^*$, and two points of the cluster corresponding to any intermediate vertex on $S'$. Two points in each such cluster of the last type are of different colors and we add an edge between them in $E_2'$. Removal of the edges of $E_1'$ corresponding to $S'$ and the addition of these edges to $E_2'$ do not change the degree of these points. Now, due to the deletion of the edges on $\pi^*$ incident to $v_i$, the degree of one point, say $p$, of $C_{i}^*$ gets reduced by 1. In $E_2'$, we connect the point, say $q'$, of $C_{i}^*$ corresponding to $S'$ to $p$. Now, if the point in $C_j^*$ corresponding to $S'$ is an {anchor} point, i.e., its degree in $H_1$ is at least 2, we do not add any additional edge. Otherwise, if its degree is 1, then by Lemma \ref{lem:b-anchor-path-exists-u_1} and  \ref{lem:b-anchor-path-exists-u_l}, there is a point of opposite color whose degree in $H_1$ is at most $t-1$. We add an edge to $E_2'$ between these two points in $C_j^*$. See Figure \ref{fig:E_2'forPath} for an illustration. 

\begin{figure}
    \centering
    \includegraphics[width=0.4\linewidth]{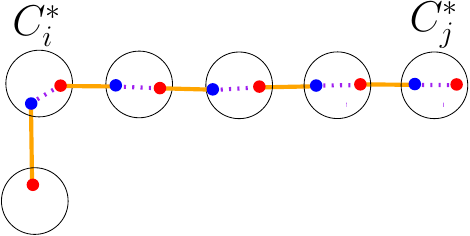}
    \caption{Figure illustrating the construction of $E_2'$ for a path. The bold (orange) edges are in $E_1'$ and the dashed (purple) edges are in $E_2'$. $C_j^*$ has a red point of degree in $H_1$ at most $t-1$.}
    \label{fig:E_2'forPath}
\end{figure}

%This is the case when $1-b^0\ne b^{\lambda+1}$. (To DO)}

If $1-b^0\ne b^{\lambda+1}$, then $\pi(0)$ and $\pi(\lambda+1)$ must be vertex-disjoint. In this case, we add the replacement edges by applying the above scheme to them separately. This ensures that the removal of the edges of $E_1'$ corresponding to $\pi(0)\cup \pi(\lambda+1)$ and the addition of the respective edges, do not change the degree of the points except for the following points. For an anchor point or a special point, the degree is decreased by 1. But, its degree remains at least 1, as it was at least two before. For a point in $C_j^*$ corresponding to the last vertex $v_j$ on such a path, its degree might be increased by 1. But, the degree remains at most $t$, as it was at most $t-1$ before.  

Next, we consider the case $1-b^0= b^{\lambda+1}$. As we argued before, in this case, $\pi^*$ has at least one switch. For the following exposition, we consider a general scenario with two vertices $v_i, v_{i'}$ such that either $v_i=v_{i'}=u_{j^h}$ for some $1\le h\le \lambda$ or $v_i=u_l,v_{i'}=u_1$. In the first case, let $\pi_1=\pi_1(h)$ and $\pi_2=\pi_2(h)$. In the second case, let $\pi_1=\pi(0)$ and $\pi_2=\pi(\lambda+1)$. Wlog, we are going to assume that the edge $(v_{i'},v_{i^2})$ on $\pi^*$ is a $1$-edge. The other situation when it is a $0$-edge is symmetric. By our assumption, in the first case, $v_i=v_{i'}$ is a switching vertex, and thus the edge $(v_{i^1},v_i)$ on $\pi^*$ is a $0$-edge. In the second case, $b^0=1$, and thus $b^{\lambda+1}=0$. Hence, in this case also, the edge $(v_{i^1},v_i)$ on $\pi^*$ is a $0$-edge. 

Now, if both $\pi_1$ and $\pi_2$ are $0$-$1$-edge-disjoint hanging cycles with distinct special vertices or same special vertices having degree in $H_1$ at least 3, then we add the replacement edges by applying the above scheme separately on them. This ensures that the removal of the edges of $E_1'$ corresponding to $\pi_1\cup \pi_2\cup \pi^*$ and the addition of the respective edges, do not change the degree of the points except the ones corresponding to the two special vertices. Also, the degree of these latter two points is at least 1. If the two special vertices are the same, its degree is at least 1 afterwards, as its degree was at least 3 before.   

Otherwise, assume that only one of $\pi_1$ or $\pi_2$ is a hanging cycle. Wlog, suppose $\pi_1$ is a hanging cycle for $v_{i'}$ and $\pi_2$ is a path from $v_i$ to a certain vertex $v_j$. We first add the replacement edges for $\pi_1$ using the above procedure. Next, we describe the process of adding the replacement edges for the path $\pi_2$. Consider any intermediate vertex (if any) $v_{\iota}$ on this path. Then, there are exactly two points in $C_{\iota}^*$ corresponding to the edges on $\pi_2$, which are of opposite colors. We add an edge between these two points in $E_2'$. Removal of the edges of $E_1'$ corresponding to $\pi_2$ and the addition of this edge do not change the degree of the two points. Next, we add edges corresponding to $v_j$. Let $e_\eta$ be the last edge on $\pi_2$ if it has an edge or the edge $(v_{i^1},v_i)$ on $\pi^*$ otherwise. By our earlier assumption, $(v_{i^1},v_i)$ is a $0$-edge. Now, if $v_i=u_l$, then $z=j^\lambda+1, z'=l$, and if $v_i=u_{j^h}$, then $z=2$ if $h=1$ and $z=j^{h-1}+1$ otherwise, and $z'=l-1$ if $h=\lambda$ and $z'=j^{h+1}-1$ otherwise.  Let $q_1$ be the blue point in $C_{j}^*$ corresponding to $\pi^*$ in case $v_{j}$ is one of $u_z,u_{z+1},\ldots,u_{z'}$. Let $v_{j'}$ be the join vertex of $\pi_1$. Also, let $q_2$ be the blue point in $C_{j'}^*$ corresponding to the incoming edge to $v_{j'}$ that lies on the cycle of $\pi_1$. Let $q_3$ be the other blue point (if any) in $C_{j'}^*$ corresponding to $\pi_1$. 
% Lastly, let $q_4$ be the blue point in $C_{j^h}^*$ corresponding to the edge $(u_{j^h-1},u_{j^h})$. 
Let $q_4$ and $q_5$ be the two blue points in $C_{i}^*$ and $C_{i'}^*$ corresponding to $(v_{i^1},v_i)$ and $(v_{i'},v_{i^2})$ on $\pi^*$. By our scheme of adding replacement edges for hanging cycles, $q_5$ is already connected by an edge in $E_2'$. Now, by Lemma \ref{lem:$1$-path-1-hanging-cycle-exist}, the following cases can occur. 
%(i) The degree of $b_{\eta}$ in $H_1$ is at least 2 and $b_{\eta}\ne q_2$: in this case, removal of the edges of $E_1'$ corresponding to $\pi_2(h)\cup \{(u_{j^h-1},u_{j^h}),(u_{j^h},u_{j^h+1})\}$ reduces the degree of $b_\eta$ by 1. As $\pi_2(h)$ is vertex-disjoint from any structure in $\mathcal{S}$ except $\pi_1(h)$ by Lemma \ref{lem:disjoint-anchor-paths}, and $b_{\eta}\ne q_2$, the degree of $b_\eta$ remains at least 1. 

\medskip
\noindent
(i) $b_{\eta}= q_2$ and the degree of $q_2$ in $H_1$ is at least 3: while adding the edges corresponding to $\pi_1$, the degree of $q_2$ is reduced by 1, but is still at least 2. Once the edge in $E_1'$ corresponding to $e_\eta$ is removed, the degree of $b_\eta$ is at least 1. In this case, we do not add any edge to $E_2'$ corresponding to $b_\eta$.  

\medskip
\noindent
(ii) $b_{\eta}= q_2$, the degree of $q_2$ in $H_1$ is 2, $v_{j}\ne v_i$, and $q_2 = q_3$: note that $j=j'$, as $b_{\eta}= q_2$. By the way of processing $\pi_1$, $q_3$ is already connected in $E_2'$. So, we do not add any more edges corresponding to it. 

\medskip
\noindent
(iii) $b_{\eta}= q_2$, the degree of $q_2$ in $H_1$ is 2, $v_{j}\ne v_i$, $q_2\ne q_3$, and the degree of $q_3$  in $H_1$ is at least 2: note that while processing $\pi_1$, $q_3$ is already connected in $E_2'$ with a red point $p_3$. We remove this edge from $E_2'$ and connect $b_\eta$ with $p_3$ in $E_2'$. The degree of $p_3$ remains the same. The degree of $q_3$ is reduced by 1, but is still at least 1 and not changed further due to Lemma \ref{lem:disjoint-anchor-paths}. Removal of the edges of $E_1'$ corresponding to $\pi_2\cup \{(v_{i^1},v_i),(v_{i'},v_{i^2})\}$ and the addition of the edge $\{b_\eta,p_3\}$, do not change the degree of $b_\eta$. 

\medskip
\noindent
(iv) $b_{\eta}= q_2$, the degree of $q_2$ in $H_1$ is 2, $v_{j}$ is one of $u_z,u_{z+1},\ldots,u_{z'}$, and $q_2 = q_1$: in this case, we do not add any additional edge. However, $q_1$ will be connected in $E_2'$ when the non-switching vertex $v_j$ will be processed. 

\medskip
\noindent
(v) $b_{\eta}= q_2$, the degree of $q_2$ in $H_1$ is 2, $v_{j}$ is one of $u_z,u_{z+1},\ldots,u_{z'}$, $q_1\ne q_2$, and the degree of $q_1$  in $H_1$ is at least 2: note that $v_{j}$ is a non-switching vertex on $\pi^*$. In this case, we connect $b_{\eta}=q_2$ to the red point in $C_{j}^*$ corresponding to the outgoing $0$-edge of $v_j$ on $\pi^*$. By Lemma \ref{lem:disjoint-anchor-paths}, at most 1 such edge is added to $E_2'$ for this red point. Additionally, our construction will ensure that no other edge is added to $E_2'$ for this red point. Thus, the removal of the edges of $E_1'$ and the addition of the edges of $E_2'$, do not change the degree of this red point. 

\medskip
\noindent
(viii) $b_{\eta}= q_2$, the degree of $q_2$ in $H_1$ is 2, $v_{j}=v_i\ne u_l$, and the degree of $q_5$  in $H_1$ is at least 2: note that while processing $\pi_1$, $q_5$ is already connected in $E_2'$ with the red point, say $q'$, of $C_{i}^*$ corresponding to $\pi_1$. We remove this edge from $E_2'$ and connect $b_\eta$ with $q'$ in $E_2'$. The degree of $q'$ remains the same. The degree of $q_5$ is reduced by 1, but is still at least 1 and not changed further due to Lemma \ref{lem:disjoint-anchor-paths}. Removal of the edges of $E_1'$ corresponding to $\pi_2\cup \{(v_{i^1},v_i),(v_{i'},v_{i^2})\}$ and the addition of the edge $\{b_\eta,q'\}$, do not change the degree of $b_\eta$. 

\medskip
\noindent
(x) $b_{\eta}\ne q_2$ and the degree of $b_{\eta}$ in $H_1$ is at least 2: in this case, removal of the edges of $E_1'$ corresponding to $\pi_2\cup \{(v_{i^1},v_i),(v_{i'},v_{i^2})\}$ reduces the degree of $b_\eta$ by 1. As $\pi_2$ is vertex-disjoint from any structure in $\mathcal{S}$ except $\pi_1$ by Lemma \ref{lem:disjoint-anchor-paths}, and $b_{\eta}\ne q_2$, the degree of $b_\eta$ remains at least 1.

\medskip
\noindent
(xi) The degree of $b_{\eta}$  in $H_1$ is 1, $v_{j}$ is one of $u_z,u_{z+1},\ldots,u_{z'}$, and the degree of $q_1$  in $H_1$ is at least 2: note that $v_{j}$ is a non-switching vertex on $\pi^*$. In this case, we connect $b_{\eta}$ to the red point in $C_{j}^*$ corresponding to the outgoing $0$-edge of $v_j$ on $\pi^*$. By Lemma \ref{lem:disjoint-anchor-paths}, at most 1 such edge is added to $E_2'$ for this red point. Additionally, our construction will ensure that no other edge is added to $E_2'$ for this red point. Thus, the removal of the edges of $E_1'$ and the addition of the edges of $E_2'$, do not change the degree of this red point. Also, the degree of $q_1$ will be at least 1.  

\medskip
\noindent
(xii) The degree of $b_{\eta}$ in $H_1$ is 1, $v_{j}$ is one of $u_z,u_{z+1},\ldots,u_{z'}$, and the degree of $q_3$ in $H_1$  is at least 2 and it is in $C_j^*$: note that while processing $\pi_1$, $q_3$ is already connected in $E_2'$ with a red point $p_3$. We remove this edge from $E_2'$ and connect $b_\eta$ with $p_3$ in $E_2'$. The degree of $p_3$ remains the same. The degree of $q_3$ is reduced by 1, but is still at least 1 and not changed further due to Lemma \ref{lem:disjoint-anchor-paths}. Removal of the edges of $E_1'$ corresponding to $\pi_2\cup \{(v_{i^1},v_i),(v_{i'},v_{i^2})\}$ and the addition of the edge $\{b_\eta,p_3\}$, do not change the degree of $b_\eta$.

\medskip
\noindent
(xiv) The degree of $b_{\eta}$  in $H_1$ is 1, $v_{j}$ is not one of $u_z,u_{z+1},\ldots,u_{z'+1}$, and the degree of $q_3$  in $H_1$ 
 is at least 2 and it is in $C_j^*$: same as (xii), because the degree of $q_3\ge 2$ and it is in $C_j^*$. 

\medskip
\noindent
(xvi) The degree of $b_{\eta}$ in $H_1$  is 1, $v_{j}=v_i\ne u_l$, and the degree of $q_5$  in $H_1$ is at least 2: note that while processing $\pi_1$, $q_5$ is already connected in $E_2'$ with the red point, say $q'$, of $C_{i}^*$ corresponding to $\pi_1$. We remove this edge from $E_2'$ and connect $b_\eta$ with $q'$ in $E_2'$. The degree of $q'$ remains the same. The degree of $q_5$ is reduced by 1, but is still at least 1 and not changed further due to Lemma \ref{lem:disjoint-anchor-paths}. Removal of the edges of $E_1'$ corresponding to $\pi_2\cup \{(v_{i^1},v_i),(v_{i'},v_{i^2})\}$ and the addition of the edge $\{b_\eta,q'\}$, do not change the degree of $b_\eta$. 

\medskip
In all the remaining cases, there is a red point in $C_{j}^*$ whose degree in $H_1$ is at most $t-1$. We connect $b_\eta$ to this red point in $E_2'$. By Lemma \ref{lem:disjoint-anchor-paths}, its degree remains at most $t$, as the addition and removal of the edges corresponding to $\pi_1$ doesn't change the degree of any red point. 
%at most 1 such edge is added to $E_2'$ for this red point. Thus, its degree remains at most $t$. 

\medskip
Next, we add edges corresponding to $v_i$. In all the cases, if $v_j\ne v_i$, we also connect the red point in $C_{i}^*$ corresponding to $\pi_2$ to $q_4$ in $E_2'$. Removal of the edges of $E_1'$ corresponding to $\pi_2\cup \{(v_{i^1},v_i)\}$ and the addition of this edge, do not change the degree of the two endpoints. 

%(i) The degree of $b_\eta$ in $H_1$ is at least 2: in this case, removal of the edges of $E_1'$ corresponding to $\pi_2(h)\cup \{(u_{j^h-1},u_{j^h}),(u_{j^h},u_{j^h+1})\}$ reduces the degree of $b_\eta$ by 1. As $\pi_2(h)$ is vertex-disjoint from any structure in $\mathcal{S}$ except $\pi_1(h)$ by Lemma \ref{lem:disjoint-anchor-paths}, \textbf{the degree of $b_\eta$ remains at least 1.} (ii) $v_j$ is one of $u_i,u_{i+1},\ldots,u_{j^h-1}$ where $i=2$ if $h=1$ and $i=j^{h-1}-1$ otherwise, and the blue point in $C_j^*$ corresponding to the incoming edge of $v_j$ on $\pi*$, say $p_3$, has degree at least 2: thus after the removal of the edge of $E_1'$ corresponding to this incoming edge, the degree of $p_3$ is at least 1. \textbf{(Why the blue vertex does not lose more edges for $\pi_1(h)$?)} Note that $v_j$ is a non-switching vertex on $\pi^*$. In this case, we connect $b_\eta$ to the red point in $C_j^*$ corresponding to the outgoing $0$-edge of $v_j$ on $\pi^*$. By Lemma \ref{lem:disjoint-anchor-paths}, at most 1 such edge is added to $E_2'$ for this red point. Additionally, our construction will ensure that no other edge is added to $E_2'$ for this red point. Thus, the removal of the edges of $E_1'$ and the addition of the edges of $E_2'$, do not change the degree of this red point. (iii) $C_j^*$ has a red point whose degree in $H_1$ is at most $t-1$: 

\medskip 
Otherwise, $\pi_1$ is a path from $v_i$ to a vertex $v_{j^1}$, and $\pi_2$ is a path from $v_{i'}$ to a vertex $v_{j^2}$. We describe the process of adding the replacement edges for these two paths. Consider any intermediate vertex (if any) $v_{j'}$ on such a path. Then, there are exactly two points in $C_{j'}^*$ corresponding to the edges on this path, which are of opposite colors. We add an edge between these two points in $E_2'$. Removal of the edges of $E_1'$ corresponding to this path and the addition of the edge does not change the degree of the two points. Next, we add edges corresponding to $v_{j^1}$ and $v_{j^2}$. Wlog, suppose the edge $(v_{i^1},v_i)$ on $\pi^*$ is a $0$-edge, and thus $(v_{i'},v_{i^2})$ on $\pi^*$ is a $1$-edge, as we assumed that their parity are different.  
Denote by $e_{\eta^1}$ the last edge on $\pi_1$ if it has an edge or $(v_{i^1},v_i)$ otherwise. Also, denote by $e_{\eta^2}$ the last edge on $\pi_2$ if it has an edge or $(v_{i^2},v_{i'})$ otherwise. Lastly, if $v_{i'}=u_1$, then $y=1, y'=j^1-1$, and if $v_{i'}=u_{j^h}$, then $y=2$ if $h=1$ and $y=j^{h-1}+1$ otherwise, and $y'=l-1$ if $h=\lambda$ and $y'=j^{h+1}-1$ otherwise.  Also, if $v_i=u_l$, then $z=j^\lambda+1, z'=l$, and if $v_i=u_{j^h}$, then $z=2$ if $h=1$ and $z=j^{h-1}+1$ otherwise, and $z'=l-1$ if $h=\lambda$ and $z'=j^{h+1}-1$ otherwise.  By Lemma \ref{lem:2-paths-no-hanging-cycle-exists}, there are several cases.

\medskip
\noindent
(i) The degree of both $b_{\eta^1}$ and $b_{\eta^2}$ in $H_1$ is at least 2 and $b_{\eta^1}\ne  b_{\eta^2}$: in this case, the removal of the edges of $E_1'$ corresponding to $\pi_1\cup \pi_2\cup \{(v_{i^1},v_i),(v_{i'},v_{i^2})\}$, reduces the degree of both $b_{\eta^1}$ and $b_{\eta^2}$ by 1. As $\pi_1$ and $\pi_2$ are vertex-disjoint from any other structure in $\mathcal{S}$ by Lemma \ref{lem:disjoint-anchor-paths}, the degree of both $b_{\eta^1}$ and $b_{\eta^2}$ remain at least 1. 

\medskip
\noindent
(ii) The degree of both $b_{\eta^1}$ and $b_{\eta^2}$ in $H_1$ is at least 2, $b_{\eta^1}= b_{\eta^2}$, $v_{j^1}$ is one of $u_z,u_{z+1},\ldots,u_{z'}$, and the degree of the blue point in $C_{j^1}^*$ corresponding to $\pi^*$, say $p_3$, is at least 2 in $H_1$: thus after the removal of the edge of $E_1'$ corresponding to $p_3$, the degree of $p_3$ is at least 1. Note that $v_{j^1}$ is a non-switching vertex on $\pi^*$. In this case, we connect $b_{\eta^1}=b_{\eta^2}$ to the red point in $C_{j^1}^*$ corresponding to the outgoing $0$-edge of $v_{j^1}$ on $\pi^*$. By Lemma \ref{lem:disjoint-anchor-paths}, at most 1 such edge is added to $E_2'$ for this red point. Additionally, our construction will ensure that no other edge is added to $E_2'$ for this red point. Thus, the removal of the edges of $E_1'$ and the addition of the edges of $E_2'$, do not change the degree of this red point. 
        
\medskip
\noindent
(iii) The degree of both $b_{\eta^1}$ and $b_{\eta^2}$ in $H_1$ are at least 2, $b_{\eta^1}= b_{\eta^2}$, $v_{j^1}$ is one of $u_z,u_{z+1},\ldots,u_{z'}$, and the degree of the blue point in $C_{j^1}^*$ corresponding to $\pi^*$ is 1 in $H_1$: in this case, we know that there is a red point in $C_{j^1}^*$ whose degree in $H_1$ is at most $t-1$. We connect $b_{\eta^1}$ to this red point in $E_2'$. By Lemma \ref{lem:disjoint-anchor-paths}, at most 1 such edge is added to $E_2'$ for this red point. Thus, its degree remains at most $t$.

\medskip
\noindent
(iv)  The degree of both $b_{\eta^1}$ and $b_{\eta^2}$ in $H_1$ are at least 2, $b_{\eta^1}= b_{\eta^2}$, $v_{j^1}$ is not one of $u_z,u_{z+1},\ldots,u_{z'}$: this case is same as (iii), as here also there is a red point in $C_{j^1}^*$ whose degree in $H_1$ is at most $t-1$.

\medskip
\noindent
(v) The degree of $b_{\eta^1}$ in $H_1$  is at least 2, $v_{j^2}$ is not one of $u_y,u_{y+1},\ldots,u_{y'-1}$, and the degree of $b_{\eta^2}$ in $H_1$ is 1: we know that there is a red point in $C_{j^2}^*$ whose degree in $H_1$ is at most $t-1$. We connect $b_{\eta^2}$ to this red point in $E_2'$. By Lemma \ref{lem:disjoint-anchor-paths}, at most 1 such edge is added to $E_2'$ for this red point. We do not add any additional edge for $b_{\eta^1}$, as its degree is at least 2.  

\medskip
\noindent
(vi) The degree of $b_{\eta^1}$ in $H_1$  is at least 2, $v_{j^2}$ is one of $u_y,u_{y+1},\ldots,u_{y'-1}$, the degree of $b_{\eta^2}$ in $H_1$  is 1, and the degree of the blue point in $C_{j^2}^*$ corresponding to $\pi^*$, say $p_4$, is at least 2 in $H_1$: thus after the removal of the edge of $E_1'$ corresponding to $p_4$, the degree of $p_4$ is at least 1. Note that $v_{j^2}$ is a non-switching vertex on $\pi^*$. In this case, we connect $b_{\eta^2}$ to the red point in $C_{j^2}^*$ corresponding to the outgoing $0$-edge of $v_{j^2}$ on $\pi^*$. By Lemma \ref{lem:disjoint-anchor-paths}, at most 1 such edge is added to $E_2'$ for this red point. Additionally, our construction will ensure that no other edge is added to $E_2'$ for this red point. Thus, the removal of the edges of $E_1'$ and the addition of the edges of $E_2'$, do not change the degree of this red point. We do not add any additional edge for $b_{\eta^1}$, as its degree is at least 2.  

\medskip
\noindent
(vii) The degree of $b_{\eta^1}$  in $H_1$ is at least 2, $v_{j^2}$ is one of $u_y,u_{y+1},\ldots,u_{y'-1}$, the degree of $b_{\eta^2}$  in $H_1$ is 1, and the degree of the blue point in $C_{j^2}^*$ corresponding to $\pi^*$ is 1: this case is the same as (v), as we know that there is a red point in $C_{j^2}^*$ whose degree in $H_1$ is at most $t-1$.

\medskip
\noindent
(viii) The degree of $b_{\eta^1}$ in $H_1$ is 1 and the degree of $b_{\eta^2}$ in $H_1$ is at least 2: we know that there is a red point in $C_{j^1}^*$ whose degree in $H_1$ is at most $t-1$. This case is also the same as (iii), as we do not add any additional edge for $b_{\eta^2}$. 

\medskip
\noindent
(ix) $j^1\ne j^2$, the degree of both $b_{\eta^1}$ and $b_{\eta^2}$ are 1 in $H_1$, and $v_{j^2}$ is not one of $u_y,u_{y+1},\ldots,u_{y'-1}$: we know that there is a red point $q_1$ in $C_{j^1}^*$ whose degree in $H_1$ is at most $t-1$ and there is a red point $q_2$ in $C_{j^2}^*$ whose degree in $H_1$ is at most $t-1$. We connect $b_{\eta^1}$ to $q_1$ and $b_{\eta^2}$ to $q_2$ in $E_2'$. By Lemma \ref{lem:disjoint-anchor-paths}, at most 1 such edge is added to $E_2'$ for both $q_1$ and $q_2$.  Thus, their degrees remain at most $t$. 

\medskip
\noindent
(x) $j^1\ne j^2$, the degree of both $b_{\eta^1}$ and $b_{\eta^2}$ are 1 in $H_1$, $v_{j^2}$ is one of $u_y,u_{y+1},\ldots,u_{y'-1}$, and the degree of the blue point in $C_{j^2}^*$ corresponding to $\pi^*$, say $p_4$, is at least 2: thus after the removal of the edge of $E_1'$ corresponding to $p_4$, the degree of $p_4$ is at least 1. Note that $v_{j^2}$ is a non-switching vertex on $\pi^*$. In this case, we connect $b_{\eta^2}$ to the red point in $C_{j^2}^*$ corresponding to the outgoing $0$-edge of $v_{j^2}$ on $\pi^*$. By Lemma \ref{lem:disjoint-anchor-paths}, at most 1 such edge is added to $E_2'$ for this red point. Additionally, our construction will ensure that no other edge is added to $E_2'$ for this red point. Thus, the removal of the edges of $E_1'$ and the addition of the edges of $E_2'$, do not change the degree of this red point. Now, we know that there is a red point in $C_{j^1}^*$ whose degree in $H_1$ is at most $t-1$. We connect $b_{\eta^1}$ to this red point. By Lemma \ref{lem:disjoint-anchor-paths}, at most 1 such edge is added to $E_2'$ for this point.  Thus, its degree remains at most $t$. 

\medskip
\noindent
(xi) $j^1\ne j^2$, the degree of both $b_{\eta^1}$ and $b_{\eta^2}$ are 1 in $H_1$, $v_{j^2}$ is one of $u_y,u_{y+1},\ldots,u_{y'-1}$, and the degree of the blue point in $C_{j^2}^*$ corresponding to $\pi^*$ is 1: we know that there is a red point $q_1$ in $C_{j^1}^*$ whose degree in $H_1$ is at most $t-1$ and there is a red point $q_2$ in $C_{j^2}^*$ whose degree in $H_1$ is at most $t-1$. So, this case is the same as (ix). 

\medskip
\noindent
(xii) $j^1= j^2$, the degree of both $b_{\eta^1}$ and $b_{\eta^2}$ are 1 in $H_1$: first of all, $b_{\eta^1}$ and $b_{\eta^2}$ are not the same point as their degree are 1 and $\pi_1$, $\pi_2$ are $0$-$1$-edge-disjoint. In this case, $C_{j^1}^*=C_{j^2}^*$ has a red point $q_3$ whose degree in $H_1$ is at most $t-2$ or two red points $q_1$ and $q_2$ whose degree in $H_1$ are at most $t-1$. If such a point $q_3$ exists, we connect both $b_{\eta^1}$ and $b_{\eta^2}$ to $q_3$ in $E_2'$. By Lemma \ref{lem:disjoint-anchor-paths}, at most 2 such edges are added to $E_2'$ for $q_3$.  Thus, its degree remains at most $t$. Otherwise, we connect $b_{\eta^1}$ to $q_1$ and $b_{\eta^2}$ to $q_2$ in $E_2'$. By Lemma \ref{lem:disjoint-anchor-paths}, at most 1 such edge is added to $E_2'$ for both $q_1$ and $q_2$. Thus, their degrees remain at most $t$. 

\medskip
\noindent
Next, we add edges corresponding to $v_i$ and $v_{i'}$. Let $p_1$ and $p_2$ be the two blue points in $C_{i}^*$ and $C_{i'}^*$ corresponding to $(v_{i^1},v_i)$ and $(v_{i'},v_{i^2})$ on $\pi^*$. If $v_{j^1}\ne v_i$, we connect the red point in $C_{i}^*$ corresponding to $\pi_1$ to $p_1$ in $E_2'$. If $v_{j^2}\ne v_{i'}$, we connect the red point in $C_{i'}^*$ corresponding to $\pi_2$ to $p_2$ in $E_2'$. The removal of the edges of $E_1'$ corresponding to $\pi_1\cup \pi_2\cup \{(v_{i^1},v_i),(v_{i'},v_{i^2})\}$ and the addition of these two edges, do not change the degree of the endpoints. 

In the above procedure, we have described how to add replacement edges for all the structures in $\mathcal{S}=(\cup_{i=1}^{\lambda} \pi_1(i)\cup \pi_2(i))\cup \{\pi(0),\pi(\lambda+1),\pi^*\}$, except $\pi^*$. In particular, we have added edges corresponding to the vertices $u_1,u_{j^1},\ldots,u_{j^\lambda},u_l$ on $\pi^*$. Next, we consider a non-switching vertex $v_s$ on $\pi^*$. If this vertex is already processed in the above procedure, we do not add any edges to $E_2'$ for $v_s$. In particular, two points in $C_s^*$ corresponding to the edges of $v_s$ on $\pi^*$ have degrees at least 1 after deletion of the edges of $E_1'$ and the addition of the edges of $E_2'$ constructed so far. Otherwise, we connect these two points with an edge in $E_2'$. As $v_s$ is a switching vertex, these two points have opposite colors. Also, the removal of the edges of $E_1'$ and the addition of the edges of $E_2'$, do not change the degree of the two endpoints of the added edge. 

This finishes the definition of the set $E_2'$. Then, Lemma \ref{lem:new-graph-DCS} follows by the above construction. 
% \begin{observation}\label{obs:at-most-3-edges-in-cluster}
%    The endpoints of each edge in $E_2'$ are in a cluster $C_i^*$ for some $i$. Moreover, $E_2'$ contains at most three edges whose endpoints lie in a single cluster $C_i^*$ for all $i$.  
% \end{observation}
Next, we analyze the total weight of the edges in $E_2'$. 

\begin{lemma}\label{lem:cost-of-E_2'}
    $w(E_2')\le 6\cdot \sum_{i=1}^\tau r(C_i^*)$. Moreover, if $\pi^*$ does not have a switch, $w(E_2')\le 4\cdot \sum_{i=1}^\tau r(C_i^*)$.
\end{lemma}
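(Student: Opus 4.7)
The plan is to observe two things. First, by inspection of the construction of $E_2'$ in Section \ref{sec:construction-E1'-E2'}, every edge added to $E_2'$ has both endpoints inside a single optimal cluster $C_j^*$, so its weight is at most the diameter $2\,r(C_j^*)$. Second, we bound the number of edges of $E_2'$ whose endpoints lie in a fixed $C_j^*$. Summing a charge of at most $2\,r(C_j^*)$ times this count over $j\in\{1,\ldots,\tau\}$ yields the claim.

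For the charging, I would partition the edges of $E_2'$ according to which structure in $\mathcal{S}=\{\pi^*,\pi(0),\pi(\lambda+1)\}\cup\bigcup_{h=1}^\lambda\{\pi_1(h),\pi_2(h)\}$ they were created for, and charge each edge to the cluster vertex of $G^*$ that ``owns'' the corresponding endpoint: for an intermediate vertex $v_{\iota}$ of a path or hanging cycle, one edge; for the join vertex of a hanging cycle, one edge; for the anchor or ``degree $\le t-1$'' endpoint of a path produced by Lemmas \ref{lem:b-anchor-path-exists-u_l}/\ref{lem:b-anchor-path-exists-u_1}/\ref{lem:$1$-path-1-hanging-cycle-exist}/\ref{lem:2-paths-no-hanging-cycle-exists}, one edge; and for the starting vertex $v_i$ of a replacement structure attached at a vertex $u_{j^h}$ (or at $u_1,u_l$), one edge that lies inside $C_i^*$. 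Also, the ``non-switching'' procedure adds at most one edge per non-switching vertex of $\pi^*$ that was not previously processed.

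The main step is to bound, for each cluster vertex $v_j$, how many charges fall into $C_j^*$. If $\pi^*$ has no switch, then $\mathcal{S}=\{\pi^*,\pi(l)\}$ and by construction each $v_j$ appears on at most two of these structures, contributing at most two charges and hence $w(E_2')\le 2\cdot 2\sum_{i=1}^\tau r(C_i^*)=4\sum_{i=1}^\tau r(C_i^*)$. If $\pi^*$ has switches, Lemma \ref{lem:disjoint-anchor-paths} is the key tool: it says that the structures $\pi_i(h_1)$ and $\pi_j(h_2)$ attached at distinct switching vertices are \emph{vertex-disjoint}, and so is each of them from $\pi(0)$ and $\pi(\lambda+1)$; moreover $\pi_1(h)$ and $\pi_2(h)$ (and similarly $\pi(0),\pi(\lambda+1)$ when $1-b^0=b^{\lambda+1}$) are $0$-$1$-edge-disjoint. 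Consequently, any cluster vertex $v_j$ can be traversed by $\pi^*$ itself, plus the one or two replacement structures glued at a single endpoint vertex that happens to equal $v_j$; the only way a cluster can collect a third charge is through the final non-switching ``two-edges-of-$\pi^*$'' step or the ``add an edge to a red point of degree $\le t-1$'' step from Lemmas \ref{lem:$1$-path-1-hanging-cycle-exist}/\ref{lem:2-paths-no-hanging-cycle-exists}. A case check across items (i)--(xvii) of Lemma \ref{lem:$1$-path-1-hanging-cycle-exist} and (i)--(xii) of Lemma \ref{lem:2-paths-no-hanging-cycle-exists} confirms that in our construction we either \emph{reuse} an already-added edge (as in items (ii), (iv), (iii)/(viii)/(xii)/(xiv)/(xvi) where we ``remove the edge from $E_2'$ and reconnect'') or we add an edge to a red point marked in the lemma as having degree $\le t-1$, and by Lemma \ref{lem:disjoint-anchor-paths} such a red point receives at most one replacement edge across all structures (with the sole exception of case (xii) of Lemma \ref{lem:2-paths-no-hanging-cycle-exists}, where a single red point may receive two edges but still remains valid). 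Thus each $C_j^*$ is the home of at most three edges of $E_2'$, giving $w(E_2')\le 2\cdot 3\sum_{i=1}^\tau r(C_i^*)=6\sum_{i=1}^\tau r(C_i^*)$.

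The hard part is the last bookkeeping step: making sure that the ``reuse/reconnect'' rules really do prevent a fourth charge from landing in any cluster, in every branch of Lemmas \ref{lem:$1$-path-1-hanging-cycle-exist} and \ref{lem:2-paths-no-hanging-cycle-exists}. I would organize this as a table indexed by (switch-endpoint structure type)$\times$(item number) and, for each row, identify exactly which cluster the edge is charged to and verify it was not already charged twice before. The vertex-disjointness guaranteed by Lemma \ref{lem:disjoint-anchor-paths}, which ultimately comes from the \emph{minimum-switch} property of $\pi^*$, is what makes the bookkeeping close with a constant of $3$ rather than something growing with $\lambda$.
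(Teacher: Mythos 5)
Your plan is essentially the same as the paper's proof: charge each edge of $E_2'$ to the cluster $C_j^*$ containing both its endpoints (weight at most $2\,r(C_j^*)$ each), then use Lemma~\ref{lem:disjoint-anchor-paths} to argue that any cluster vertex lies on at most three structures of $\mathcal{S}$ when $\pi^*$ has a switch (and at most two otherwise), with at most one new edge per structure landing in $C_j^*$. The paper's actual argument is more terse than yours---it does not redo the item-by-item case analysis of Lemmas~\ref{lem:$1$-path-1-hanging-cycle-exist} and~\ref{lem:2-paths-no-hanging-cycle-exists} here, since the "one edge per structure per vertex" bound was already baked into the construction of $E_2'$, but the underlying counting and the reliance on the minimum-switch property via Lemma~\ref{lem:disjoint-anchor-paths} are identical.
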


\begin{proof}
    The way we add the edges to $E_2'$, both endpoints of each edge lie in a cluster $C_j^*$ such that the vertex corresponding to the cluster lies on a structure in $\mathcal{S}$. Now, by Lemma \ref{lem:disjoint-anchor-paths} it follows that if a vertex is shared by two paths in $(\cup_{h=1}^{\lambda} \pi_1(h)\cup \pi_2(h))\cup \{\pi(0),\pi(\lambda+1)\}$, then it is either on $\pi_1(h)$ and $\pi_2(h)$, on $\pi(0)$ and $\pi_i(h)$, or on $\pi(\lambda+1)$ and $\pi_i(h)$, where $1\le h\le \lambda$ and $i\in \{1,2\}$. Thus, additionally considering $\pi^*$, each such vertex can be on at most three structures in $\mathcal{S}$. Fix such a vertex $v_j$ and its corresponding cluster $C_j^*$. Now, consider the processing of any such path that contains $v_j$. During this processing, we add at most one edge to $E_2'$ corresponding to $v_j$, whose endpoints are in $C_j^*$. Thus, $E_2'$ contains at most three edges such that their endpoints lie in $C$. The sum of the weights of these three edges is at most 3 times the diameter of $C_j^*$. Summing over all clusters, we obtain the lemma.  

    We note that if $\pi^*$ has no switch, $b^0=b^{\lambda+1}$. As we argued before, in this case, $\pi(0)$ and $\pi(\lambda+1)$ are vertex-disjoint. Thus, each vertex $v_j$ can appear in at most two paths in $\{\pi(0),\pi(\lambda+1),\pi^*\}$. Hence, the improved 4-factor follows in this case. 
\end{proof}

\subsection{Proof of Lemma \ref{lem:b-anchor-path-exists-u_l}}
\anchorPathUl*

\begin{proof}
    We prove the existence of the $0$-path with the desired properties. The proof for $1$-path is similar. 
    %If there is a $0$-path from $v_i$ to $v_i'$ in $G^*$, then this path serves as $\pi_1$, and (i) is true. 
    
    %So, we assume that such a path doesn't exist. 
    Consider the graph $G_1$ constructed in the following way from $G^*$. First, remove all the $1$-edges and the edges on $\pi^*$ from $G^*$. While there is a cycle in $G^*$, remove all the edges of this cycle from $G^*$ and repeat this step. When the above procedure ends, we are left with a graph without any cycle. Let us denote this graph by $G_1$. 

    Consider any maximal path $\pi_1$ in $G_1$ starting from $u_l$. Let $v_j$ be the last vertex on this path. As $\pi_1$ is maximal, there is no outgoing edge of $v_j$ in $G_1$. Note that $u_l$ can be same as $v_j$. 
    %Also, recall that $e_\eta$ is the last edge on $\pi_1$ if it has an edge or $(v_i',v_i)$ otherwise. 
    %Now, as there is no 0-path from $v_i$ to $v_i'$ in $G^*$, $(v_i',v_i)$ cannot be part of a cycle in $G^*$. Hence, $(v_i',v_i)$ is not removed from $G^*$ in the above procedure, i.e., it is contained in $G_1$. 
    %Consider the vertex $v_j$. 
    If $u_l=v_j$, set $e_\eta=(u_{l-1},u_l)$. Otherwise, $e_\eta$ is the last edge on $\pi_1$. We claim that $\pi_1$ is the desired 0-path that satisfies (i) or (ii). 
    %consider any blue point $q_1$ from $C_j^*$. Otherwise, consider the edge $e_1$ in $E_1$ corresponding to the last edge of $\pi_1$. In this case, take the blue endpoint of $e_1$ as $b_1$. 

    Let $\pi_1^*$ be the $u_l$ to $u_1$ path consists of the edges that are reverse of the edges on $\pi^*$, i.e., a $b$-edge on $\pi^*$ gives rise to a $(1-b)$-edge of $\pi_1^*$, where $b\in \{0,1\}$. Now, suppose $\pi_1$ and $\pi^*$ has a common vertex $v_c$. Then by Observation \ref{obs:range-of-b^h-path}, $v_c$ is one of $u_{z},\ldots,u_{l}$ where $z=j^\lambda+1$ if $\pi^*$ has at least one switch, and $z=1$ otherwise. Now, if $\pi^*$ has at least one switch, $G_1$ does not contain any edge on $\pi_1^*$ between $u_l$ and $u_{j^\lambda}$, as each such edge is a 1-edge ($(u_{l-1},u_l)$ is a 0-edge). Also, $G_1$ does not contain the edges of $\pi^*$. Thus, in this case, $\pi_1$ is 0-1-edge-disjoint from $\pi^*$. Otherwise, $\pi^*$ does not have a switch, and thus $\pi_1^*$ is a 1-path. So, its edges are not in $G_1$, and in this case also, $\pi_1$ is 0-1-edge-disjoint from $\pi^*$. 
    
    Now, if the degree of $b_\eta$ in $H_1$ is at least 2, then $\pi_1$ satisfies (i). Suppose its degree is 1. We know that there is no outgoing 0-edge in $G_1$ from $v_j$. However, there can be outgoing 0-edges from $v_j$ in $G^*$. Each such edge was deleted from $G^*$, and thus either it is on $\pi^*$ or it must have been part of a cycle in $G^*$. Let $n_1$ be the number of outgoing 0-edges from $v_j$ in $G^*$. Suppose such an edge is on $\pi^*$, i.e., $v_j$ is on $\pi^*$ and $v_j\ne u_l$. As argued before, $v_j$ is one of $u_{z},\ldots,u_{l-1}$ where $z=j^\lambda+1$ if $\pi^*$ has at least one switch, and $z=1$ otherwise. If $v_j=u_1$, then $\pi_1$ is a 0-path from $u_l$ to $u_1$ in $G_1$ that is 0-1-edge-disjoint from $\pi^*$. But, this is a contradiction, and hence $v_j\ne u_1$. Thus, in both the cases, if there is an outgoing 0-edge of $v_j$ on $\pi^*$, there is also an incoming 0-edge of $v_j$ on $\pi^*$. For any other outgoing 0-edge of $v_j$, it is part of a cycle in $G^*$ that was removed. Hence, there are also $n_1$ incoming 0-edges to $v_j$ in $G^*$. Now, consider the $0$-edge $e_\eta$. If $e_\eta$ is the last edge on $\pi_1$, it is in $G_1$ and thus is not on $\pi^*$ or on a cycle that was removed. Otherwise, $u_l=v_j$, and $e_\eta=(u_{l-1},u_l)$, which was on $\pi^*$ and thus was not on a cycle that was removed. Also, $u_l$ does not have an outgoing 0-edge on $\pi^*$. Thus, additionally considering $e_\eta$, there are at least $n_1+1$ incoming 0-edges to $v_j$ in $G^*$. 
    
    Let $n_r$ and $n_b$ be the number of red and blue points in $C_j^*$, respectively. Also, let $n_2$ be the number of edges in $E_1$ across red and blue points in $C_j^*$. Now, we know that the degree of $b_\eta$ is 1 in $H_1$. If $t=1$, the edges of $H_1$ form a perfect matching, i.e., the degree of each vertex in $H_1$ is 1. Then, $n_2 =n_r-n_1$, and the number of edges incident on the blue vertices in $C_j^*$ is at least $n_2+n_1+1=n_r-n_1+n_1+1=n_r+1$. But, this implies that $n_b\ge n_r+1$, which contradicts $1$-balance in $C_j^*$. So, assume that $t\ge 2$. We claim that there is a red point in $C_j^*$ whose degree in $H_1$ is at most $t-1$, which satisfies (ii). 

    For the sake of contradiction, assume that all the red points in $C_j^*$ have degree $t$ in $H_1$. Then, $n_2 =t\cdot n_r-n_1$, and the number of edges incident on the blue vertices in $C_j^*$ is at least $n_2+n_1+1=t\cdot n_r-n_1+n_1+1=t\cdot n_r+1$. 
    
    Note that there are at least $n_1-1$ outgoing 0-edges from $v_j$, as $\pi^*$ contains at most one outgoing 0-edge of $v_j$. Each such edge is part of a different cycle, and thus there is also an incoming edge to $v_j$ that is part of the same cycle. However, each such incoming edge must correspond to a degree 1 blue vertex in $C_j^*$. Otherwise, the union of $\pi_1$ and such a cycle form a special 0-hanging cycle for $u_l$. Moreover, similar to $\pi_1$, the vertices of $\pi^*$ that this hanging cycle can contain are $u_{z},\ldots,u_{l}$ where $z=j^\lambda+1$ if $\pi^*$ has at least one switch, and $z=2$ otherwise. Hence, this hanging cycle is 0-1-edge-disjoint from $\pi^*$, which is a contradiction. Thus, there must be $n_1$ such blue points with degree 1. Now, the red points in $C_j^*$ are connected to the blue points in $C_j^*$ using $n_2 =t\cdot n_r-n_1$ edges. As $H_1$ is a collection of stars, and each red point in $C_j^*$ has degree $t\ge 2$, the degree of these blue points must be 1. Now, we know that the incoming edge $e_\eta$ to $v_j$ is not part of a cycle that was removed. Also, the degree of $b_\eta$ is 1, and so it is not connected to a red point of $C_j^*$ in $H_1$. Thus, the number of blue points in $C_j^*$ with degree 1 in $H_1$ is $(n_1-1)+n_2+1=t\cdot n_r$. As $C_j^*$ can contain at most $t\cdot n_r$ blue points due to its $t$-balance, the degree of all blue points of $C_j^*$ in $H_1$ is 1. Thus, the sum of their degree is $t\cdot n_r$. But, this contradicts the fact that they are incident to at least $t\cdot n_r+1$ edges. Thus our claim must be true, which completes the proof of the lemma.     
\end{proof}

\subsection{Proof of Lemma \ref{lem:b-anchor-path-exists-u_1}}
\anchorPathUone*

\begin{proof}
    We prove the existence of the $0$-path with the desired properties. The proof for $1$-path is similar. 
    %If there is a $0$-path from $v_i$ to $v_i'$ in $G^*$, then this path serves as $\pi_1$, and (i) is true. 
    
    %So, we assume that such a path doesn't exist. 
    Let $\pi_1^*$ be the $u_l$ to $u_1$ path consisting of the edges that are reverse of the edges on $\pi^*$, i.e., a $b$-edge on $\pi^*$ gives rise to a $(1-b)$-edge of $\pi_1^*$, where $b\in \{0,1\}$. Consider the graph $G_1$ constructed in the following way from $G^*$. First, remove all the $1$-edges and the edges on $\pi_1^*$ from $G^*$. While there is a cycle in $G^*$, remove all the edges of this cycle from $G^*$ and repeat this step. When the above procedure ends, we are left with a graph without any cycle. Let us denote this graph by $G_1$. 

    Consider any maximal path $\pi_1$ in $G_1$ starting from $u_1$. Let $v_j$ be the last vertex on this path. As $\pi_1$ is maximal, there is no outgoing edge of $v_j$ in $G_1$. Note that $u_1$ can be same as $v_j$. 
    %Also, recall that $e_\eta$ is the last edge on $\pi_1$ if it has an edge or $(v_i',v_i)$ otherwise. 
    %Now, as there is no 0-path from $v_i$ to $v_i'$ in $G^*$, $(v_i',v_i)$ cannot be part of a cycle in $G^*$. Hence, $(v_i',v_i)$ is not removed from $G^*$ in the above procedure, i.e., it is contained in $G_1$. 
    %Consider the vertex $v_j$. 
    If $u_1=v_j$, set $e_\eta=(u_{2},u_1)$. Otherwise, $e_\eta$ is the last edge on $\pi_1$. We claim that $\pi_1$ is the desired 0-path that satisfies (i) or (ii). 
    %consider any blue point $q_1$ from $C_j^*$. Otherwise, consider the edge $e_1$ in $E_1$ corresponding to the last edge of $\pi_1$. In this case, take the blue endpoint of $e_1$ as $b_1$. 

    Now, suppose $\pi_1$ and $\pi^*$ has a common vertex $v_c$. Then by Observation \ref{obs:range-of-b^h-path}, $v_c$ is one of $u_{1},\ldots,u_{z}$ where $z=j^1-1$ if $\pi^*$ has at least one switch, and $z=l$ otherwise. Now, if $\pi^*$ has at least one switch, $G_1$ does not contain any edge on $\pi^*$ between $u_1$ and $u_{j^1}$, as each such edge is a 1-edge ($(u_1,u_2)$ is a 1-edge). Also, $G_1$ does not contain the edges of $\pi_1^*$. Thus, in this case, $\pi_1$ is 0-1-edge-disjoint from $\pi^*$. Otherwise, $\pi^*$ does not have a switch, and it is a 1-path. So, its edges are not in $G_1$. So, in this case also, $\pi_1$ is 0-1-edge-disjoint from $\pi^*$. 
    
    Now, if the degree of $b_\eta$ in $H_1$ is at least 2, then $\pi_1$ satisfies (i). So, suppose its degree is 1. We know that there is no outgoing 0-edge in $G_1$ from $v_j$. However, there can be outgoing 0-edges from $v_j$ in $G^*$. Each such edge was deleted from $G^*$, and thus either it is on $\pi_1^*$ or it must have been part of a cycle in $G^*$. Let $n_1$ be the number of outgoing 0-edges of $v_j$ in $G^*$. Suppose such an edge is on $\pi_1^*$, i.e., $v_j$ is on $\pi_1^*$ and $v_j\ne u_1$. As argued before, $v_j$ is one of $u_{2},\ldots,u_{z}$ where $z=j^1-1$ if $\pi^*$ has at least one switch, and $z=l$ otherwise. If $v_j=u_l$, then $\pi_1$ is a 0-path from $u_1$ to $u_l$ in $G_1$ that is 0-1-edge-disjoint from $\pi^*$. By taking the reverse of the edges on $\pi_1$, there is a 1-path from $u_l$ to $u_1$ in $G^*$ that is 0-1-edge-disjoint from $\pi^*$. But, this is a contradiction, and hence $v_j\ne u_l$. Thus, in both the cases, if there is an outgoing 0-edge of $v_j$ on $\pi_1^*$, there is also an incoming 0-edge of $v_j$ on $\pi_1^*$. For any other outgoing 0-edge of $v_j$, it is part of a cycle in $G^*$ that was removed. Hence, there are also $n_1$ incoming 0-edges to $v_j$ in $G^*$. 
    
    Now, consider the $0$-edge $e_\eta$. If $e_\eta$ is the last edge on $\pi_1$, it is in $G_1$ and thus is not on $\pi_1^*$ or on a cycle that was removed. Otherwise, $v_j=u_1$, and $e_\eta=(u_{2},u_1)$, which was on $\pi_1^*$ and thus was not on a cycle that was removed. Also, $u_1$ does not have an outgoing 0-edge on $\pi_1^*$. Thus, additionally considering $e_\eta$, there are at least $n_1+1$ incoming 0-edges to $v_j$ in $G^*$. 
    
    Let $n_r$ and $n_b$ be the number of red and blue points in $C_j^*$, respectively. Also, let $n_2$ be the number of edges in $E_1$ across red and blue points in $C_j^*$. 
    Now, we know that the degree of $b_\eta$ is 1 in $H_1$. 
    If $t=1$, the edges of $H_1$ form a perfect matching, i.e., the degree of each vertex in $H_1$ is 1. Then, $n_2 =n_r-n_1$, and the number of edges incident on the blue vertices in $C_j^*$ is at least $n_2+n_1+1=n_r-n_1+n_1+1=n_r+1$. But, this implies that $n_b\ge n_r+1$, which contradicts $1$-balance in $C_j^*$. So, assume that $t\ge 2$. We claim that there is a red point in $C_j^*$ whose degree in $H_1$ is at most $t-1$, which satisfies (ii). 

    For the sake of contradiction, assume that all the red points in $C_j^*$ have degree $t$ in $H_1$. Then, $n_2 =t\cdot n_r-n_1$, and the number of edges incident on the blue vertices in $C_j^*$ is at least $n_2+n_1+1=t\cdot n_r-n_1+n_1+1=t\cdot n_r+1$. 
    
    Note that there are $n_1-1$ outgoing 0-edges of $v_j$ each of which is part of a different cycle, and thus there is also an incoming edge to $v_j$ that is part of the same cycle. However, each such incoming edge must correspond to a degree 1 blue vertex in $C_j^*$. Otherwise, the union of $\pi_1$ and such a cycle forms a special 0-hanging cycle for $u_1$. Moreover, similar to $\pi_1$, the vertices of $\pi^*$ that this hanging cycle can contain are $u_{1},\ldots,u_{z}$ where $z=j^1-1$ if $\pi^*$ has at least one switch, and $z=l-1$ otherwise. Hence, this hanging cycle is 0-1-edge-disjoint from $\pi^*$, which is a contradiction. Thus, there must be $n_1$ such blue points with degree 1. Now, the red points in $C_j^*$ are connected to the blue points in $C_j^*$ using $n_2 =t\cdot n_r-n_1$ edges. As $H_1$ is a collection of stars, and each red point in $C_j^*$ has degree $t\ge 2$, the degree of these blue points must be 1. Now, we know that the incoming edge $e_\eta$ to $v_j$ is not part of a cycle that was removed. Also, the degree of $b_\eta$ is 1, and so it is not connected to a red point of $C_j^*$ in $H_1$. Thus, the number of blue points in $C_j^*$ with degree 1 in $H_1$ is $(n_1-1)+n_2+1=t\cdot n_r$. As $C_j^*$ can contain at most $t\cdot n_r$ blue points due to its $t$-balance, the degree of all blue points of $C_j^*$ in $H_1$ is 1. Thus, the sum of their degree is $t\cdot n_r$. But, this contradicts the fact that they are incident to at least $t\cdot n_r+1$ edges. Thus our claim must be true, which completes the proof of the lemma.        
\end{proof}
\subsection{Proof of Lemma \ref{lem:$1$-path-1-hanging-cycle-exist}}

\onePathOneHangingCycle*

\begin{proof}
    We prove the existence of the $0$-path with the desired properties. The proof for $1$-path is similar. 
    
    First, we remove the edges of $O$ from $G^*$ to obtain a new graph $G_1'$. By our assumption, there is no special 0-hanging cycle in $G_1'$ for $v_i$ that is 0-1-edge-disjoint from $\pi^*$ and its special vertex is either distinct from $q_2$ or the same as $q_2$ and has degree in $H_1$ at least 3.  
    %As $O$ is a 0-hanging cycle for $v_i$, it can contain at most one incoming 0-edge of $v_i$. Thus, one edge among $(v_{i^1},v_i)$ and $(v_{i^2},v_{i'})$ is in $G_1'$. Wlog, suppose $(v_{i^2},v_{i'})$ is in $G_1'$. If there is a 0-path from $v_i$ to $v_{i^2}$ in $G_1'$, then this path serves as $\pi_1$, and (i) is true. 
    
    %So, we assume that such a path doesn't exist. 
    Let $\pi_1^*$ be the $u_l$ to $u_1$ path consists of the edges that are reverse of the edges on $\pi^*$, i.e., a $b$-edge on $\pi^*$ gives rise to a $(1-b)$-edge of $\pi_1^*$, where $b\in \{0,1\}$. Consider the graph $G_1$ constructed in the following way from $G_1'$. First, remove all the $1$-edges and the edges on $\pi^*$ and $\pi_1^*$ from $G_1'$. While there is a cycle in $G_1'$, remove all the edges of this cycle from $G_1'$ and repeat this step. When the above procedure ends, we are left with a graph without any cycle. Let us denote this graph by $G_1$. 

    Consider any maximal path $\pi_1$ in $G_1$ starting from $v_i$. Let $v_j$ be the last vertex on this path. As $\pi_1$ is maximal, there is no outgoing 0-edge of $v_j$ in $G_1$. Note that $v_i$ can be same as $v_j$. As $G_1$ does not contain any edge of $\pi^*$ or $\pi_1^*$, $\pi_1$ is 0-1-edge-disjoint from $\pi^*$. Now, as $\pi_1$ is a 0-path, it is also 0-1-edge-disjoint from $O$ whose edges were removed. We claim that $\pi_1$ is the desired 0-path. 
    
    %However, $v_j\ne u_1$, as by our assumption, there is no 0-path from $u_l$ to $u_1$ in $G^*$ that is 0-1-edge-disjoint from $\pi^*$. 
    %Now, as there is no 0-path from $v_i$ to $v_{i^2}$ in $G_1'$, $(v_{i^2},v_{i'})$ cannot be part of a cycle in $G_1'$. Hence, $(v_{i^2},v_{i'})$ is not removed from $G_1'$ in the above procedure, i.e., it is contained in $G_1$.     
    Consider the cluster $C_j^*$ corresponding to $v_j$. 
    If $v_i=v_j$, set $e_\eta=(v_{i^1},v_i)$. Otherwise, $e_\eta$ is the last edge on $\pi_1$. The edge in $E_1$ corresponding to $e_{\eta}$ is $\{r_{\eta},b_{\eta}\}$, where $r_{\eta}$ is the red point and $b_{\eta}$ is the blue point. As $e_\eta$ is a 0-edge, $b_{\eta}$ belongs to $C_j^*$.      

    %Let $\pi_1^*$ be the $u_l$ to $u_1$ path in $G^*$ consists of the edges that are reverse of the edges on $\pi^*$, i.e., a $b$-edge on $\pi^*$ gives rise to a $(1-b)$-edge of $\pi_1^*$, where $b\in \{0,1\}$. 
    %Now, suppose $\pi_1$ and $\pi^*$ has a common vertex $v_c$. Then by Observation \ref{obs:range-of-b^h-path}, $v_c$ is one of $u_z,u_{z+1},\ldots,u_{z'}$ where $z=j^\lambda+1, z'=l-1$ if $v_i=u_l$ ($\pi^*$ has at least one switch), $z=2, z'=j^1-1$ if $v_i=u_{j^1}$ and $z=j^{h-1}-1, z'=j^h$ otherwise ($v_i=u_{j^h}$ for $h\ge 2$). \textbf{Figure needed} Then, $G_1$ does not contain any edge on $\pi_1^*$ between $u_z$ and $u_{z'}$, as each such edge is a 1-edge ($(v_{i^1},v_i)$ is a 0-edge). Also, $G_1$ does not contain the edges of $\pi^*$. Thus, $\pi_1$ is 0-1-edge-disjoint from $\pi^*$.   
    
    %consider any blue point $q_1$ from $C_j^*$. Otherwise, consider the edge $e_1$ in $E_1$ corresponding to the last edge of $\pi_1$. In this case, take the blue endpoint of $e_1$ as $b_1$. 
    %Now, if the degree of $b_\eta$ in $H_1$ is at least 2, then $\pi_1$ satisfies (i). So, suppose its degree is 1. 
    Let $n_1$ be the number of outgoing 0-edges of $v_j$ in $G^*$, and $n_r$ and $n_b$ be the number of red and blue points in $C_j^*$, respectively. Also, let $n_2$ be the number of edges in $E_1$ across red and blue points in $C_j^*$. 
    %Now, we know that the degree of $b_\eta$ is 1 in $H_1$. 

    \begin{claim}\label{cl:incoming-cluster-j2-1}
        The number of incoming 0-edges of $v_{j}$ in $G^*$ is at least $n_1+1$. 
        %Additionally, if $v_{j}\ne v_i$ or the degree of $b_{\eta_2}$ is 1 in $H_1$, the number of such edges is at least $n_1'+2$.  
    \end{claim}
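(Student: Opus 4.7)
The plan is to follow the strategy of Lemma \ref{lem:b-anchor-path-exists-u_l}, extended to accommodate the additional structure $O$ and the extra deletion of $\pi_1^*$. Since $\pi_1$ is maximal in $G_1$, the vertex $v_j$ has no outgoing 0-edge in $G_1$, so each of its $n_1$ outgoing 0-edges in $G^*$ was deleted for exactly one of four reasons: lying on $O$, on $\pi^*$, on $\pi_1^*$, or on a cycle removed during the pruning phase. Because $O$ is $0$-$1$-edge-disjoint from $\pi^*$ (hence from $\pi_1^*$), and cycle pruning occurs only after these three structures are removed, the four categories are pairwise edge-disjoint. I will exhibit, for each outgoing 0-edge, a distinct incoming 0-edge in $G^*$, and then append one more coming from $e_\eta$.

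For each outgoing 0-edge on a pruned cycle the pairing is immediate: such cycles consist only of 0-edges (since 1-edges were removed first) and are in-out balanced at every vertex, so each outgoing 0-edge pairs with an incoming 0-edge on the same cycle. For outgoing 0-edges on $\pi^*$ or $\pi_1^*$, Observation \ref{obs:range-of-b^h-path} confines $v_j$ to the allowed segment $u_z, \ldots, u_{z'}$, within which the 0-edges of $\pi^*$ form a contiguous block determined by the switch at $u_{j^h}$ (or the entire tail when $v_i = u_l$) and the 0-edges of $\pi_1^*$ are exactly the reverses of the complementary 1-edge block; a short parity check then produces the matching incoming 0-edge on the same structure. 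For outgoing 0-edges on $O$, the cycle portion of $O$ is in-out balanced at $v_j$, while the path portion is balanced except possibly at the endpoints $v_{i'}$ and $v_{j'}$, where any deficit at $v_{i'}$ is compensated by an extra incoming edge at the join vertex $v_{j'}$. In the delicate residual situation where $v_j = v_{i'} \ne v_{j'}$ leaves a local deficit on $O$, the fact that $v_j$ is then a switch vertex (with $b^h = 0$) supplies two extra incoming 0-edges on $\pi^* \cup \pi_1^*$ (both incoming parities are $b^h = 0$ while both outgoing parities are $1 - b^h = 1$), more than absorbing the deficit.

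Finally, the $+1$ is supplied by $e_\eta$. If $\pi_1$ has at least one edge, then $e_\eta$ lies in $G_1$ and hence belongs to none of the four categories above, giving a fresh incoming 0-edge. Otherwise $v_j = v_i$ and $e_\eta = (v_{i^1}, v_i)$ is the incoming edge of $v_j$ on $\pi^*$; the pairing does not consume this edge since the outgoing edge of $v_i \in \{u_{j^h}, u_l\}$ on $\pi^*$ has parity $1 - b^h = 1$ or does not exist, so no outgoing 0-edge on $\pi^*$ was matched against it. The main obstacle I anticipate is the careful book-keeping in the hanging-cycle case $v_j = v_{i'} \ne v_{j'}$: I must verify that the compensating incoming 0-edges on $\pi^* \cup \pi_1^*$ are genuinely distinct from those already consumed by the pairing in other categories, and are jointly sufficient to absorb both the $O$-deficit and the $+1$.
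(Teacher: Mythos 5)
Your proposal follows the paper's own strategy: classify each outgoing $0$-edge of $v_j$ according to the removed structure it lies on ($O$, $\pi^*$, $\pi_1^*$, or a pruned cycle), pair each one with a distinct incoming $0$-edge on the same structure, and secure the ``$+1$'' via $e_\eta$, which either lies in $G_1$ (hence is a fresh incoming edge) or equals $(v_{i^1},v_i)$ and is then observed not to be consumed by any pairing. The parity analysis confining $v_j$ to $u_z,\ldots,u_{z'}$ and matching outgoing $0$-edges on $\pi^*$/$\pi_1^*$ with incoming ones is also the paper's argument. So the approach is the same, not a different route.

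The genuine gap is in the $O$-deficit sub-case $v_j = v_{i'} \ne v_{j'}$. You assert as fact that ``$v_j$ is then a switch vertex (with $b^h=0$)''. This does not follow from what you have established: the lemma's hypothesis allows $v_i = u_l,\ v_{i'}=u_1$, and in that situation $v_j = v_{i'} = u_1$ is \emph{not} a switching vertex (switches occur at $u_{j^1},\ldots,u_{j^\lambda}$ with $j^1 > 1$), so your two $\pi^* \cup \pi_1^*$-incoming $0$-edges at $u_{j^h}$ simply do not exist. What must be shown is that this sub-case cannot arise when $\pi_1$ ends at $v_{i'}$: if $v_i = u_l$ and $\pi_1$ reached $v_j = v_{i'} = u_1$, then $\pi_1$ would be a $0$-path from $u_l$ to $u_1$, and its reverse would be a zero-switch $u_1$-to-$u_l$ path, contradicting the lemma's standing assumption that $\pi^*$ has at least one switch. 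That argument is the missing ingredient; without it, your case analysis is incomplete, since you neither handle nor exclude $v_i = u_l$. The remaining verification you flag — that the compensating edges $(v_{i^1},v_i)$ and $(v_{i^2},v_i)$ are distinct from those consumed in the pruned-cycle and $O$ pairings, and together supply both the deficit and the $+1$ (noting that $e_\eta = (v_{i^1},v_i)$ overlaps with one of them when $v_j = v_i$) — does go through once the switch-vertex fact is established, exactly because $u_{j^h}$ has zero outgoing $0$-edges on $\pi^* \cup \pi_1^*$, so neither compensating edge is otherwise committed.
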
    
    
    \begin{proof} 
    We know that there is no outgoing 0-edge in $G_1$ from $v_j$; otherwise, $\pi_1$ would not be maximal. However, there might exist outgoing 0-edges of $v_j$ in $G^*$. Each such edge $e'$ was removed from $G^*$. Thus, it is part of either $O$, $\pi^*$, $\pi_1^*$, or a cycle in $G_1'$ that was removed. Suppose $e'$ is part of $O$. Then, if it is on the path part of $O$, {there is an incoming 0-edge of $v_j$ that is on $O$}, unless $v_j$ is the first vertex on this path. In the latter case, $v_j=v_{i'}$. Then, $v_{i}\ne u_l$, as otherwise there is a zero switch path from $u_l$ to $v_{i'}=u_1$, which is a contradiction. Thus, $v_i=u_{j^h}$, and there is an incoming 0-edge $(v_{i^2},v_i)$, which is on $\pi_1^*$ and not on a cycle in $G_1'$ that was removed. Otherwise, $e'$ is on the cycle in the hanging cycle $O$, and there is at least one (and at most 2) incoming 0-edge of $v_j$ that is on $O$. 
    Suppose $v_{j}$ is on $\pi^*$ or equivalently on $\pi_1^*$.         
    %If $e'$ is on $\pi^*$, i.e., $v_j$ is on $\pi^*$, then $v_j\ne v_i$, as $e'$ is an outgoing 0-edge on $\pi^*$. 
    Then by Observation \ref{obs:range-of-b^h-path}, $v_j$ is one of $u_z,u_{z+1},\ldots,u_{z'}$, where if $v_i=u_l$, then $z=j^\lambda+1, z'=l$, and if $v_i=u_{j^h}$, then $z=2$ if $h=1$ and $z=j^{h-1}+1$ otherwise, and $z'=l-1$ if $h=\lambda$ and $z'=j^{h+1}-1$ otherwise. 
    %if $v_i=u_{j^1}$ and $\pi^*$ has exactly one switch, $z=2, z'=l-1$, if $v_i=u_{j^1}$ and $\pi^*$ has exactly one switch, $z=2, z'=j^2-1$, if $v_i=u_{j^\lambda}$, $z=2, z'=l-1$, and if $v_i=u_{j^h}$ for $h\ge 2$, $z=j^{h-1}-1, z'=j^h$. 
    First, suppose $v_i=u_l$. As $(v_{i^1},v_i)$ is a 0-edge, all the edges on $\pi^*$ between $u_{z-1}$ and $u_{z'+1}$ are 0-edges, and hence the edges on $\pi_1^*$ between them are 1-edges. Thus, if $v_{j}$ is on $\pi^*$ and it has an outgoing 0-edge $e'$, it is on $\pi^*$, but not on $\pi_1^*$. Also, $v_{j}\ne v_i$, as $v_i$ does not have an outgoing 0-edge on $\pi^*$. It follows that $v_{j}\in \{u_z,u_{z+1},\ldots,u_{z'-1}\}$, and hence there is also an incoming 0-edge of $v_{j}$ on $\pi^*$. 
    Now, suppose $v_i=u_{j^h}$. As $(v_{i^1},v_i)$ is a 0-edge, all the edges on $\pi^*$ between $u_{z-1}$ and $u_{j^h}$ are 0-edges, and hence the edges on $\pi_1^*$ between them are 1-edges. Also, the edges on $\pi^*$ between $u_{j^h}$ and $u_{z'+1}$ are 1-edges, and hence the edges on $\pi_1^*$ between them are 0-edges. Thus, if $v_{j}$ is on $\pi^*$ and it has an outgoing 0-edge $e'$, it is either on $\pi^*$ or on $\pi_1^*$. Also, $v_{j}\ne v_i$, as $v_i$ does not have an outgoing 0-edge on $\pi^*$ or $\pi_1^*$. It follows that $v_{j}\in \{u_z,u_{z+1},\ldots,u_{z'-1}\}\setminus \{v_i\}$, and hence there is also a corresponding incoming 0-edge of $v_{j}$ on $\pi^*$ or $\pi_1^*$.  
    %Thus, there is also an incoming 0-edge of $v_j$ on $\pi^*$. 
    %This is true, as by our assumption, $\pi^*$ is a minimum switch path with the special property and a switch occurs at $v_i$. 
    Now, suppose $e'$ is part of a cycle in $G_1'$, which was removed. Then, there is also an incoming 0-edge of $v_j$ in that cycle. Moreover, each such edge is part of a different cycle in $G_1'$. By the above argument, there are also at least $n_1$ incoming 0-edges of $v_j$ in $G^*$. 

    Now, consider the $0$-edge $e_\eta$. If $e_\eta$ is the last edge on $\pi_1$, it is in $G_1$ and thus is not on $O$, $\pi^*$, $\pi_1^*$ or on a cycle that was removed. Otherwise, $v_j=v_i$, and $e_\eta$ is the 0-edge $(v_{i^1},v_i)$, which was on $\pi^*$ and thus was not on $O$ or on a cycle that was removed. Also, $v_{i}$ does not have an outgoing 0-edge on $\pi^*$ or $\pi_1^*$. Thus, additionally considering $e_\eta$, there are at least $n_1+1$ incoming 0-edges of $v_j$ in $G^*$. 
    \end{proof}
    
    If $t=1$, the edges of $H_1$ form a perfect matching, i.e., the degree of each vertex in $H_1$ is 1. Then, $n_2 =n_r-n_1$, and the number of edges incident on the blue vertices in $C_j^*$ is at least $n_2+n_1+1=n_r-n_1+n_1+1=n_r+1$. But, this implies that $n_b\ge n_r+1$, which contradicts $1$-balance in $C_j^*$. Henceforth, we assume that $t\ge 2$.

    \begin{claim}\label{cl:edges-incident-blue-j2-1}
       Suppose all the red points in $C_{j}^*$ have degree $t$ in $H_1$ and one of the following holds: (a) $v_{j}$ is not one of $u_z,u_{z+1},\ldots,u_{z'+1}$, the degree of $q_3$ is 1 or it is not in $C_j^*$ or it does not exist, and either $b_\eta=q_2$ and the degree of $q_2$ is 2, or the degree of $b_\eta$ is 1, (b) $v_{j}$ is one of $u_z,u_{z+1},\ldots,u_{z'}$, the degree of $q_1$ is 1, the degree of $q_3$ is 1 or it is not in $C_j^*$ or it does not exist, and either $b_\eta=q_2$ and the degree of $q_2$ is 2, or the degree of $b_\eta$ is 1, and (c) $v_{j}=v_i\ne u_l$ and the degree of $q_5$ is 1, or $v_j=v_i=u_l\ne v_{j'}$, and either $b_\eta=q_2$ and the degree of $q_2$ is 2, or the degree of $b_\eta$ is 1. Then the blue points of $C_{j}^*$ can incident to at most $t\cdot n_r$ edges in $H_1$. 
    \end{claim}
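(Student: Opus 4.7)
The plan is to mirror the degree-1 counting argument used in the proof of Lemma \ref{lem:b-anchor-path-exists-u_l}: under the hypothesis that every red point of $C_j^*$ has degree exactly $t$ in $H_1$, I will exhibit enough pairwise distinct degree-1 blue points in $C_j^*$ to saturate the $t$-balance inequality $n_b \le t\cdot n_r$, so that $n_b = t\cdot n_r$ and every blue point of $C_j^*$ must have degree 1. The sum of blue degrees then equals $n_b \le t\cdot n_r$, which is the bound asserted.

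First I would rederive $n_2 = t\cdot n_r - n_1$ from the fact that each of the $n_r$ red points is a star center with $t$ blue leaves, $n_2$ of which lie inside $C_j^*$ and $n_1$ outside; and observe that the family $B_1$ of the $n_2$ inside leaves consists entirely of degree-1 points (since red centers of degree $t\ge 2$ cannot be leaves). Next, for each cycle of $G_1'$ that was removed and passes through $v_j$, I would examine the blue endpoint $b$ in $C_j^*$ of the incoming 0-edge on that cycle. If $\deg_{H_1}(b)\ge 2$, then $\pi_1$ glued with that cycle at $v_j$ produces a 0-hanging cycle for $v_i$ that is $0$-$1$-edge-disjoint from $\pi^*$ and from $O$, and whose special point is $b$; by the hypothesis that forbids any such extra special hanging cycle, $b$ would have to equal $q_2$ with $\deg(q_2)=2$. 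In particular, after discarding this single possible exception, the blue endpoints of the removed-cycle incoming edges form a set $B_{\rm cyc}$ of degree-1 blue points of $C_j^*$ of size $n_1 - n_O^{\rm out}$, where $n_O^{\rm out}\in\{0,1\}$ counts the outgoing 0-edges of $v_j$ on $O$ (there are no outgoing 0-edges of $v_j$ on $\pi^*$, $\pi_1^*$, or in $G_1$: on $\pi^*$/$\pi_1^*$ by the case-(a)/(b)/(c) hypothesis restricting $v_j$'s position, and in $G_1$ by maximality of $\pi_1$).

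Third, I would account for the remaining degree-1 blue points supplied by $b_\eta$ (if $\deg(b_\eta)=1$), by $q_1$ when $v_j\in\{u_z,\ldots,u_{z'}\}$, by $q_3$ when $v_j=v_{j'}$, and by $q_5$ in the $v_j=v_i$ situation of case (c); the hypothesis either places each of these in $C_j^*$ with degree 1 or removes it from $C_j^*$ altogether. Pairwise disjointness of $B_1$, $B_{\rm cyc}$, and these distinguished points is routine: overlap would force a degree-1 point to be incident to two distinct $H_1$-edges coming from two different structures among $B_1$-stars, $O$, $\pi^*$, removed cycles, and $\pi_1$. Summing up, the number of distinct degree-1 blue points in $C_j^*$ identified is at least $n_2 + n_1 = t\cdot n_r$, forcing $n_b = t\cdot n_r$ and hence every blue point of $C_j^*$ to have degree 1; the sum of blue degrees is then at most $t\cdot n_r$. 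In the variant sub-case $b_\eta = q_2$ with $\deg(q_2)=2$, the same counting produces strictly more than $t\cdot n_r$ distinct blue points in $C_j^*$, which contradicts $t$-balance directly and excludes this sub-case from the start, so the bound holds vacuously there.

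The main obstacle will be the precise bookkeeping of overlaps in the degenerate configurations where $v_j$ plays multiple roles simultaneously, namely when $v_j = v_{j'}$, when $v_j$ sits on the cycle or on the path of $O$ without being $v_{j'}$, and when $v_j = v_i$. Each of these needs a separate but parallel accounting of which distinguished blue points ($q_1,q_2,q_3,q_5,b_\eta$) land inside $C_j^*$ and how incoming edges on $O$ contribute to $n_O^{\rm out}$. The sub-case where a removed cycle has incoming endpoint $q_2$ with $\deg(q_2)=2$ is the trickiest, because there $q_2\in B_{\rm cyc}$ is not degree-1; I would handle it by showing that the specific hypotheses of case (a), (b), or (c) combine with the second-hanging-cycle exclusion to leave just enough slack for the cardinality argument to still reach $t\cdot n_r$ distinct degree-1 blue points and conclude.
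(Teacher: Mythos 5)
Your proof follows the same degree-$1$-blue-point counting argument as the paper: under the hypothesis that every red vertex of $C_j^*$ has degree $t$, show that the various distinguished structures (star leaves, removed-cycle incoming edges, $q_1,q_3,q_5,b_\eta$) supply $t\cdot n_r$ pairwise distinct degree-1 blue points, so that $t$-balance forces $n_b = t\cdot n_r$ and every blue point to be a leaf, whence the sum of blue degrees is at most $t\cdot n_r$. The central insight — that a removed-cycle incoming edge at $v_j$ must land on a degree-1 blue vertex, else $\pi_1$ spliced with that cycle would be a forbidden special $0$-hanging cycle (with the lone allowed exception $q_2$, $\deg q_2 = 2$) — is exactly the paper's, and your noting that the $b_\eta = q_2,\ \deg q_2 = 2$ sub-cases are in fact vacuous (they force $n_b \ge t\cdot n_r + 1$, contradicting balance outright) is a correct implicit sharpening.

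One slip to fix: your parenthetical asserts that $v_j$ has no outgoing $0$-edges on $\pi^*$ or $\pi_1^*$ ``by the case-(a)/(b)/(c) hypothesis.'' That holds in case (a), but in case (b), $v_j \in \{u_z,\ldots,u_{z'}\}$ \emph{is} on $\pi^*$ and does have one such outgoing $0$-edge, whose companion incoming edge gives $q_1$. Consequently your intermediate formula $|B_{\mathrm{cyc}}| = n_1 - n_O^{\mathrm{out}}$ is one too large in case (b); it should be $n_1 - n_O^{\mathrm{out}} - 1$, with $q_1$ then supplying the missing degree-1 point, as you do already list it among the distinguished points. The same care is needed for the $q_2$-exception subtraction in the $b_\eta\ne q_2$ sub-cases. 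With that bookkeeping corrected, the argument closes as in the paper.
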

    
    \begin{proof}
    First, note that if $b_\eta=q_2$ and the degree of $q_2$ is 2, there cannot be a special 0-hanging cycle for $v_i$ in $G^*$, with $q_2$ being the special vertex, that is 0-1-edge-disjoint from $\pi^*$ and $O$. This is true, as $q_2$ is already an endpoint of two edges corresponding to two edges on $O$ and $\pi_1$, which are 0-1-edge-disjoint. 
    
    %If $b_\eta\ne q_2$, then also there cannot be a special 0-hanging cycle for $v$ 0-1-edge-disjoint from $\pi^*$ and $O$, with $b_\eta$ being the special vertex. This is true, by our assumption that there is no such hanging cycle in $G^*$. 
    For the sake of contradiction, suppose the blue points of $C_{j}^*$ are incident to at least $t\cdot n_r+1$ edges. Note that for each outgoing 0-edge of $v_j$ that is part of a cycle in $G_1'$, there is also an incoming 0-edge of $v_j$ that is part of the same cycle. Also, each such incoming 0-edge must correspond to a blue vertex in $C_j^*$. If this blue point is $q_2$, the union of $\pi_1$ and such a cycle form a special 0-hanging cycle for $v_i$ in $G^*$, with $q_2$ being the special vertex. Now, as this is a 0-hanging cycle, it is 0-1-edge-disjoint from $O$ whose edges were removed. Moreover, as $G_1$ does not contain any edge of $\pi^*$ or $\pi_1^*$, this hanging cycle is 0-1-edge-disjoint from $\pi^*$. 
    %similar to $\pi_1$, the vertices of $\pi^*$ that this 0-hanging cycle can contain are $u_z,u_{z+1},\ldots,u_{z'}$. Hence, this 0-hanging cycle is 0-1-edge-disjoint from also $\pi^*$. 
    Then by our assumption, the degree of $q_2$ is 2. Thus, at most one such incoming 0-edge can correspond to $q_2$. Now, suppose the blue point is not $q_2$. Then, its degree must be 1, as otherwise the union of $\pi_1$ and such a cycle form a special 0-hanging cycle for $v_i$ in $G^*$, with the special vertex being the blue point. Now, as this is a 0-hanging cycle, it is 0-1-edge-disjoint from $O$ whose edges were removed. 
    %Moreover, similar to $\pi_1$, the vertices of $\pi^*$ that this 0-hanging cycle can contain are $u_z,u_{z+1},\ldots,u_{z'}$. Hence, this 0-hanging cycle is 0-1-edge-disjoint from also $\pi^*$. 
    Moreover, as $G_1$ does not contain any edge of $\pi^*$ or $\pi_1^*$, this hanging cycle is 0-1-edge-disjoint from $\pi^*$. But, this leads to a contradiction. So, the degree of such a blue point must be 1. Now, by excluding the possible outgoing 0-edge corresponding to $q_2$ and at most two outgoing 0-edges on $O$ and $\pi^*$,  there are at least $n_1-3$ other outgoing 0-edges of $v_j$ that is part of a cycle in $G_1'$. We refer to the latter outgoing 0-edges of $v_j$ as \textit{key} edges. By the above discussion, there is a unique blue point of degree 1 in $C_j^*$ corresponding to each key edge. Thus, there are at least $n_1-3$ such blue points.   
    %In case $b_\eta=q_2$ and the degree of $q_2$ is 2, there is no special  0-hanging cycle for $v_i$ in $G^*$ as argued before. Thus, the excluded edge corresponding to $q_2$ doesn't exist. So, in this case, there are $n_1-2$ such blue points with degree 1.  
    %If $v_j$ is not one of $u_z,u_{z+1},\ldots,u_{z'}$, then we gain one more blue point with degree 1, as $O$ contains at most one outgoing 0-edge of $v_j$. Additionally, if $O$ is a path, and $v_j=v_j'$, there is no outgoing 0-edge of $v_j$ on $O$. Thus, we gain one more blue point with degree 1. 
    Now, the red points in $C_j^*$ are connected to the blue points in $C_j^*$ using $n_2 =t\cdot n_r-n_1$ edges. As $H_1$ is a collection of stars, and each red point in $C_j^*$ has degree $t\ge 2$, the degree of these blue points must be 1. 
    %Now, we know that the incoming 0-edge $e_\eta$ to $v_j$ is not part of a cycle in $G_1'$. Also, the degree of $b_\eta$ is 1, and so it is not connected to a red point of $C_j^*$ in $H_1$. 

    Now, suppose (a) is true. Here, $v_{j}$ is not one of $u_z,u_{z+1},\ldots,u_{z'+1}$, and so the number of key edges is at least $n_1-2$, as no outgoing 0-edge of $v_j$ can be on $\pi^*$ in this case. In case $b_\eta=q_2$  and the degree of $q_2$ is 2, there is no special 0-hanging cycle for $v_i$ in $G^*$ as argued before. Thus, the excluded edge corresponding to $q_2$ doesn't exist. So, in this case, there are at least $n_1-1$ key edges. As $b_\eta=q_2$, $v_j=v_{j'}$. Then, $v_{j'}$ is not on $\pi^*$, as $v_j$ is not on $\pi^*$ by our assumption in this case. Thus, $v_{j'}\ne v_{i'}$, and hence $q_3$ exists and it is in $C_j^*$. Then $q_3$ has degree 1, and so the total number of blue points in $C_j^*$ of degree 1, including $q_3$ and the $n_2$ blue points connected to the red points in $C_j^*$, is at least 
    %$n_1$. 
    %Otherwise, if $q_3$ exists and is not in $C_j^*$, or if $q_3$ doesn't exist, then there is no outgoing 0-edge of $v_j$ on $O$. Thus, the number of blue points of degree 1 corresponding to outgoing 0-edges of $v_j$ is $n_1$. 
    %The total number of blue points with degree 1 in $C_j^*$ is then 
    $(n_1-1)+1+n_2=t\cdot n_r$.  
    %If $O$ is a path, $q_3$ doesn't exist, and as $b_\eta=q_2$, $v_j=v_j'$. Then, there is no outgoing 0-edge of $v_j$ on $O$. Thus, in this case also the number of blue points with degree 1 is $(n_1-1)+1+n_2=n_1+n_2=t\cdot n_r$. 
    Next, suppose the degree of $b_\eta$ is 1. Thus, $b_\eta\ne q_2$. If $q_3$ exists and is in $C_j^*$, then the number of blue points in $C_j^*$ with degree 1 in $H_1$ including $b_\eta$ and $q_3$ is again at least $(n_1-2)+2+n_2=t\cdot n_r$. Otherwise, if $q_3$ exists and is not in $C_j^*$, or if $q_3$ doesn't exist, then all the outgoing 0-edges of $v_j$ except 1 are on different removed cycles. The exceptional edge is on $O$. Thus, the excluded edge corresponding to $q_2$ doesn't exist, and the number of key edges  is at least $n_1-1$. Hence, the total number of blue points with degree 1 including $b_\eta$ is again at least $(n_1-1)+1+n_2=t\cdot n_r$. As $C_j^*$ can contain at most $t\cdot n_r$ blue points due to its $t$-balance, the degree of all blue points of $C_j^*$ in $H_1$ is 1. Thus, the sum of their degree is $t\cdot n_r$. But, this contradicts the fact that they are incident to at least $t\cdot n_r+1$ edges. 
    
    Next, suppose (b) is true. Thus, $v_{j}$ is one of $u_z,u_{z+1},\ldots,u_{z'}$. As we argued before, the number of key edges is at least $n_1-3$. 
    In case $b_\eta=q_2$ { and the degree of $q_2$ is 2}, there is no special 0-hanging cycle for $v_i$ in $G^*$ as argued before. Thus, the excluded edge corresponding to $q_2$ doesn't exist. So, in this case, there are at least $n_1-2$ key edges. 
    Now, the degree of $q_1$ is 1. 
    %, and thus the number of  blue points with degree 1 including $q_1$ is $n_1-1$. 
    Note that as $b_\eta=q_2$, $v_j=v_{j'}$. Thus, if $q_3$ exists, it is in $C_j^*$, and by our assumption, it has degree 1. Also, $q_1$ is distinct from $q_3$, as their degrees are 1, and the two corresponding edges are distinct. Thus the number of blue points with degree 1 including $q_1$ and $q_3$ is at least $(n_1-2)+2+n_2=t\cdot n_r$. Otherwise, if $q_3$ doesn't exist, then $v_j=v_{j'}=v_{i'}$, and so all the outgoing 0-edges of $v_j$ except 1 are on different removed cycles. The exceptional edge is on $O$. So, the number of key edges is $n_1-1$. 
    % then no outgoing 0-edge of $v_j$ is on $O$. 
    Thus, the total number of blue points of degree 1 including $q_1$ is at least 
    % $n_1$
    % . The total number of blue points with degree 1 in $C_j^*$ is then at least 
    $(n_1-1)+1+n_2=t\cdot n_r$.  
     %If $O$ is a path, $q_3$ doesn't exist, and as $b_\eta=q_2$, $v_j=v_j'$. Then, there is no outgoing 0-edge of $v_j$ on $O$. Thus, in this case, also the number of blue points with degree 1 including $q_1$ is $(n_1-1)+1+n_2=n_1+n_2=t\cdot n_r$. 
     Next, suppose the degree of $b_\eta$ is 1. Thus, $b_\eta\ne q_2$. If $q_3$ exists and is in $C_j^*$, then the number of blue points in $C_j^*$ with degree 1 in $H_1$ including $b_\eta$, $q_1$ and $q_3$ is again at least $(n_1-3)+3+n_2=t\cdot n_r$. Otherwise, if $q_3$ exists and is not in $C_j^*$, or if $q_3$ doesn't exist, then the outgoing 0-edges of $v_j$ except 1 are on different removed cycles or on $\pi^*$. The exceptional edge is on $O$. 
     %then each outgoing 0-edge of $v_j$ is either on a removed cycle, , or \textbf{on the path of $O$}. 
     Thus, the excluded edge corresponding to $q_2$ doesn't exist, and the number of key edges is at least $n_1-2$. Hence, the total number of blue points with degree 1 including $b_\eta$ and $q_1$ is again at least $(n_1-2)+2+n_2=t\cdot n_r$. As $C_j^*$ can contain at most $t\cdot n_r$ blue points due to its $t$-balance, the degree of all blue points of $C_j^*$ in $H_1$ is 1. Thus, the sum of their degree is $t\cdot n_r$. But, this contradicts the fact that they are incident to at least $t\cdot n_r+1$ edges.

     Lastly, suppose (c) is true. Now, as $v_j=v_i$ is either a switching vertex or $u_l$, it does not have an outgoing 0-edge on $\pi^*$. If $b_\eta=q_2$ and the degree of $q_2$ is 2, then the excluded edge corresponding to $q_2$ doesn't exist as we argued before. In this case, $v_i=v_j=v_{j'}$, and thus $v_j\ne u_l$. Then an outgoing 0-edge of $v_j$ cannot be on $\pi^*$, and so the number of key edges is at least $n_1-1$. Thus, the number of blue points in $C_j^*$ with degree 1 in $H_1$ including $q_5$ is at least $(n_1-1)+1+n_2=t\cdot n_r$. 
     Now, suppose the degree of $b_\eta$ is 1. Thus, $b_\eta\ne q_2$, as $q_2$ is the special point of $O$. Now, if $v_j=v_i\ne u_l$, then again an outgoing 0-edge of $v_j$ cannot be on $\pi^*$. So, the number of key edges is at least $n_1-2$. It follows that the number of blue points in $C_j^*$ with degree 1 in $H_1$ including $q_5$ and $b_\eta$ is again at least $(n_1-2)+2+n_2=t\cdot n_r$. Lastly, if $v_j=v_i=u_l$, then an outgoing 0-edge of $v_j$ cannot be on $\pi^*$. In {this case}, $v_j\ne v_{j'}$, as $v_j=u_l$, and thus the excluded edge corresponding to $q_2$ doesn't exist. So, the number of key edges is at least $n_1-1$. Hence, the number of blue points in $C_j^*$ with degree 1 in $H_1$ including $b_\eta$ is at least $(n_1-1)+1+n_2=t\cdot n_r$. This completes the proof of the claim. 
    \end{proof}

    Next, we do an exhaustive case analysis and show that one of the cases is true. 
    %If the degree of $b_{\eta}$ in $H_1$ is at least 2 and $b_{\eta}\ne q_2$, then (i) is true. 
    First, we consider the case when $b_{\eta}= q_2$. If \textbf{$b_{\eta}= q_2$ and the degree of $q_2$ in $H_1$ is at least 3}, then (i) is true. Otherwise, if \textbf{$b_{\eta}= q_2$, the degree of $q_2$ in $H_1$ is 2, $v_{j}\ne v_i$, and $q_2 = q_3$}, then (ii) is true. Otherwise, if \textbf{$b_{\eta}= q_2$, the degree of $q_2$ in $H_1$ is 2, $v_{j}\ne v_i$, $q_2\ne q_3$, and the degree of $q_3$ is at least 2}, then (iii) is true. 
    
    Next, we consider the subcases when $v_{j}\ne v_i$ is one of $u_z,u_{z+1},\ldots,u_{z'}$. If \textbf{$b_{\eta}= q_2$, the degree of $q_2$ in $H_1$ is 2, $v_{j}$ is one of $u_z,u_{z+1},\ldots,u_{z'}$, and $q_2 = q_1$}, then (iv) is true. Otherwise, if \textbf{$b_{\eta}= q_2$, the degree of $q_2$ in $H_1$ is 2, $v_{j}$ is one of $u_z,u_{z+1},\ldots,u_{z'}$, $q_1\ne q_2$, and the degree of $q_1$ is at least 2}, then (v) is true.

    \medskip
    \noindent
    {\bf (vi) $b_{\eta}= q_2$, the degree of $q_2$ in $H_1$ is 2, $v_{j}$ is one of $u_z,u_{z+1},\ldots,u_{z'}$, $q_2\ne q_1$, $q_2\ne q_3$, and the degree of $q_1$ and $q_3$ are 1.} We claim that there is a red point in $C_{j}^*$ whose degree in $H_1$ is at most $t-1$. For the sake of contradiction to our claim, assume that all the red points in $C_{j}^*$ have degree $t$ in $H_1$. Then, $n_2 = t\cdot n_r-n_1$, and by Claim \ref{cl:incoming-cluster-j2-1}, the number of edges incident on the blue vertices in $C_{j}^*$ is at least $n_2+(n_1+1)=t\cdot n_r-n_1+n_1+1=t\cdot n_r+1$. But, by the Case (b) of Claim \ref{cl:edges-incident-blue-j2-1}, we obtain a contradiction. Hence, our assumption must be false.   

Next, we consider the subcase when $v_{j}\ne v_i=u_{z'+1}$ is not one of $u_z,u_{z+1},\ldots,u_{z'}$.  

    \medskip
    \noindent
    {\bf (vii) $b_{\eta}= q_2$, the degree of $q_2$ in $H_1$ is 2, $v_{j}$ is not one of $u_z,u_{z+1},\ldots,u_{z'+1}$, $q_2\ne q_3$, and the degree of $q_3$ is 1.} We claim that there is a red point in $C_{j}^*$ whose degree in $H_1$ is at most $t-1$. For the sake of contradiction to our claim, assume that all the red points in $C_{j}^*$ have degree $t$ in $H_1$. Then, $n_2 = t\cdot n_r-n_1$, and by Claim \ref{cl:incoming-cluster-j2-1}, the number of edges incident on the blue vertices in $C_{j}^*$ is at least $n_2+(n_1+1)=t\cdot n_r-n_1+n_1+1=t\cdot n_r+1$. But, by the Case (a) of Claim \ref{cl:edges-incident-blue-j2-1}, we obtain a contradiction. Hence, our assumption must be false.    

    Next, we consider the subcase when $v_{j}=v_i$. As $b_{\eta}= q_2$, $v_{j'}=v_j$. Thus, $v_{j'}=v_i$. Now, if $v_i=u_l$, this implies that there is a 0-path from $u_1$ to $u_l$. But, this contradicts the fact that $\pi^*$ has a switch. Thus, in this subcase, $v_i\ne u_l$. Now, if \textbf{$b_{\eta}= q_2$, the degree of $q_2$ in $H_1$ is 2, $v_{j}=v_i\ne u_l$, and the degree of $q_5$ is at least 2}, then (viii) holds. 

    \medskip
    \noindent
    {\bf (ix) $b_{\eta}= q_2$, the degree of $q_2$ in $H_1$ is 2, $v_{j}=v_i\ne u_l$, and the degree of $q_5$ is 1.} We claim that there is a red point in $C_{j}^*$ whose degree in $H_1$ is at most $t-1$. For the sake of contradiction to our claim, assume that all the red points in $C_{j}^*$ have degree $t$ in $H_1$. Then, $n_2 = t\cdot n_r-n_1$, and by Claim \ref{cl:incoming-cluster-j2-1}, the number of edges incident on the blue vertices in $C_{j}^*$ is at least $n_2+(n_1+1)=t\cdot n_r-n_1+n_1+1=t\cdot n_r+1$. But, by the Case (c) of Claim \ref{cl:edges-incident-blue-j2-1}, we obtain a contradiction. Hence, our assumption must be false.    

    Next, we consider the case when $b_{\eta}\ne q_2$. If \textbf{the degree of $b_{\eta}$ in $H_1$ is at least 2}, then (x) is true. Now, we consider the subcases when $v_{j}$ is one of $u_z,u_{z+1},\ldots,u_{z'}$. If \textbf{the degree of $b_{\eta}$ is 1, $v_{j}$ is one of $u_z,u_{z+1},\ldots,u_{z'}$, and the degree of $q_1$ is at least 2}, then (xi) is true. Otherwise, if \textbf{the degree of $b_{\eta}$ is 1, $v_{j}$ is one of $u_z,u_{z+1},\ldots,u_{z'}$, and the degree of $q_3$ is at least 2 and it is in $C_j^*$}, then (xii) is true.

    % (a) $v_{j}$ is not one of $u_z,u_{z+1},\ldots,u_{z'+1}$, the degree of $q_3$ is 1 or it is not in $C_j^*$ (if $q_3$ exists), and either $b_\eta=q_2$ and the degree of $q_2$ is 2, or the degree of $b_\eta$ is 1, (b) $v_{j}$ is one of $u_z,u_{z+1},\ldots,u_{z'}$, the degree of $q_1$ is 1, the degree of $q_3$ is 1 or it is not in $C_j^*$ (if $q_3$ exists), and either $b_\eta=q_2$ and the degree of $q_2$ is 2, or the degree of $b_\eta$ is 1, and (c) $v_{j}=v_i\ne u_l$, the degree of $q_4$ is 1, and either $b_\eta=q_2$ and the degree of $q_2$ is 2, or the degree of $b_\eta$ is 1.
    
    \medskip
    \noindent
    {\bf (xiii) The degree of $b_{\eta}$ in $H_1$ is 1, $v_{j}$ is one of $u_z,u_{z+1},\ldots,u_{z'}$, the degree of $q_1$ is 1, and the degree of $q_3$ is 1 or it is not in $C_j^*$ or it doesn't exist.} We claim that there is a red point in $C_{j}^*$ whose degree in $H_1$ is at most $t-1$. For the sake of contradiction to our claim, assume that all the red points in $C_{j}^*$ have degree $t$ in $H_1$. Then, $n_2 = t\cdot n_r-n_1$, and by Claim \ref{cl:incoming-cluster-j2-1}, the number of edges incident on the blue vertices in $C_{j}^*$ is at least $n_2+(n_1+1)=t\cdot n_r-n_1+n_1+1=t\cdot n_r+1$. But, by the Case (b) of Claim \ref{cl:edges-incident-blue-j2-1}, we obtain a contradiction. Hence, our assumption must be false.   

    Next, we consider the subcase when $v_{j}\ne v_i=u_{z'+1}$ is not one of $u_z,u_{z+1},\ldots,u_{z'}$. If \textbf{the degree of $b_{\eta}$ is 1, $v_{j}$ is not one of $u_z,u_{z+1},\ldots,u_{z'+1}$, and the degree of $q_3$ is at least 2 and it is in $C_j^*$}, then (xiv) is true. 

    \medskip
    \noindent
    {\bf (xv) The degree of $b_{\eta}$ is 1, $v_{j}$ is not one of $u_z,u_{z+1},\ldots,u_{z'+1}$, and the degree of $q_3$ is 1 or it is not in $C_j^*$ or it doesn't exist.} We claim that there is a red point in $C_{j}^*$ whose degree in $H_1$ is at most $t-1$. For the sake of contradiction to our claim, assume that all the red points in $C_{j}^*$ have degree $t$ in $H_1$. Then, $n_2 = t\cdot n_r-n_1$, and by Claim \ref{cl:incoming-cluster-j2-1}, the number of edges incident on the blue vertices in $C_{j}^*$ is at least $n_2+(n_1+1)=t\cdot n_r-n_1+n_1+1=t\cdot n_r+1$. But, by the Case (a) of Claim \ref{cl:edges-incident-blue-j2-1}, we obtain a contradiction. Hence, our assumption must be false.    

    Next, we consider the subcase when $v_{j}=v_i$. If \textbf{the degree of $b_{\eta}$ is 1, $v_{j}=v_i\ne u_l$, and the degree of $q_5$ is at least 2}, then (xvi) is true. 
    %Otherwise, if $b_{\eta}= q_2$, $v_{j}$ is one of $u_z,u_{z+1},\ldots,u_{z'}$, and the degree of $q_3$ is at least 2, then (iii) is true. 
    
    \medskip
    \noindent
    {\bf (xvii) The degree of $b_{\eta}$ is 1, $v_{j}=v_i$, and the degree of $q_5$ is 1 if $v_i\ne u_l$.} We claim that there is a red point in $C_{j}^*$ whose degree in $H_1$ is at most $t-1$. For the sake of contradiction to our claim, assume that all the red points in $C_{j}^*$ have degree $t$ in $H_1$. Then, $n_2 = t\cdot n_r-n_1$, and by Claim \ref{cl:incoming-cluster-j2-1}, the number of edges incident on the blue vertices in $C_{j}^*$ is at least $n_2+(n_1+1)=t\cdot n_r-n_1+n_1+1=t\cdot n_r+1$. Now, if $v_{j'}=v_j$, then $v_i\ne u_l$. Thus, either $v_{j}=v_i\ne u_l$ or $v_j=v_i=u_l\ne v_{j'}$. But, then by the Case (c) of Claim \ref{cl:edges-incident-blue-j2-1}, we obtain a contradiction. Hence, our assumption must be false.

    This completes the proof of the lemma.      
\end{proof}

\subsection{Proof of Lemma \ref{lem:2-paths-no-hanging-cycle-exists}} 
\TwoPathsNoHangingCycle*
\begin{proof}
    We prove the existence of the $0$-paths with the desired properties. The proof for $1$-paths is similar. 
    %If there are two 0-1-edge-disjoint 0-paths from $v_i$ to $v_{i^1}$ and $v_i$ to $v_{i^2}$ in $G_1$, then these paths serve as $\pi_1$ and $\pi_2$, and (i) and (iv) are true. Now, if there is only one such path, say from $v_i$ to $v_{i^1}$, then (i) is true and by a similar argument to the proof of Lemma \ref{lem:1-path-1-hanging-cycle-exist}, one of (iv), (v), or (vi) is true. So, assume that there is no 0-path from $v_i$ to $v_{i^1}$ or $v_i$ to $v_{i^2}$ either in $G_1$. We prove that one of (ii) or (iii), and one of (v) or (vi) are true.   
    
    Let $\pi_1^*$ be the $u_l$ to $u_1$ path consists of the edges that are reverse of the edges on $\pi^*$, i.e., a $b$-edge on $\pi^*$ gives rise to a $(1-b)$-edge of $\pi_1^*$, where $b\in \{0,1\}$. Consider the graph $G_1$ constructed in the following way from $G^*$. First, remove all the $1$-edges and the edges on $\pi^*$ and $\pi_1^*$ from $G^*$. While there is a cycle in $G^*$, remove all the edges of this cycle from $G^*$ and repeat this step. When the above procedure ends, we are left with a graph without any cycle. Let us denote this graph by $G_1$. 

    Consider any maximal path $\pi_1$ in $G_1$ starting from $v_i$. Let $\pi_2$ in $G_1$ be a maximal path starting from $v_{i'}$ that is 0-1-edge-disjoint from $\pi_1$. Note that both of these paths exist, as a path may consist of a single vertex if there is no outgoing 0-edge of  $v_i$ or $v_{i'}$. Let $v_{j^1}$ be the last vertex on $\pi_1$ and $v_{j^2}$ be the last vertex on $\pi_2$. Note that $v_i$ may be same as $v_{j^1}$ and $v_{i'}$ may be same as $v_{j^2}$. As $G_1$ does not contain any edge of $\pi^*$ or $\pi_1^*$, $\pi_1$ and $\pi_2$ are 0-1-edge-disjoint from $\pi^*$. We claim that $\pi_1$ and $\pi_2$ are the desired 0-paths. 
    %Now, as there is no 0-path from $v_i$ to $v_{i^1}$ in $G^*$, $(v_{i^1},v_i)$ cannot be part of a cycle in $G^*$. Same holds for $(v_{i^2},v_{i'})$ as well.  Hence, both  $(v_{i^1},v_i)$ and $(v_{i^2},v_{i'})$ are not removed from $G^*$ in the above procedure, i.e., they are contained in $G_1$. 
    
    Consider the cluster $C_{j^1}^*$ corresponding to $v_{j^1}$. If $v_i=v_{j^1}$, $e_{\eta^1}=(v_{i^1},v_i)$. Otherwise, $e_{\eta^1}$ is the last edge on $\pi_1$. The edge in $E_1$ corresponding to $e_{\eta^1}$ is $\{r_{\eta^1},b_{\eta^1}\}$, where $r_{\eta^1}$ is the red point and $b_{\eta^1}$ is the blue point. As $e_{\eta^1}$ is a 0-edge, $b_{\eta^1}$ belongs to $C_{j^1}^*$. Also, consider the cluster $C_{j^2}^*$ corresponding to $v_{j^2}$. If $v_{i'}=v_{j^2}$, $e_{\eta^2}=(v_{i^2},v_{i'})$. Otherwise, $e_{\eta^2}$ is the last edge on $\pi_2$. The edge in $E_1$ corresponding to $e_{\eta^2}$ is $\{r_{\eta^2},b_{\eta^2}\}$, where $r_{\eta^2}$ is the red point and $b_{\eta^2}$ is the blue point. As $e_{\eta^2}$ is a 0-edge, $b_{\eta^2}$ belongs to $C_{j^2}^*$. 
    %consider any blue point $q_1$ from $C_j^*$. Otherwise, consider the edge $e_1$ in $E_1$ corresponding to the last edge of $\pi_1$. In this case, take the blue endpoint of $e_1$ as $b_1$. 
    
    %Now, first we assume that $j^1\ne j^2$. 
    %consider the case when the degree of one of $b_{\eta^1}$ or $b_{\eta^2}$ in $H_1$ is at least 2. Wlog, assume that the degree of $b_{\eta^1}$ is at least 2, i.e., $\pi_1$ satisfies (i). We show that one of (iv), (v) or (vi) holds. The case when the degree of $b_{\eta^2}$ is at least 2 is symmetric.  
    %So, suppose there is no such edge in $E_1$. 

    %Now, suppose $\pi_1$ and $\pi^*$ has a common vertex $v_c$. Then by Observation \ref{obs:range-of-b^h-path}, $v_c$ is one of $u_z,u_{z+1},\ldots,u_{z'}$ where $z=j^\lambda+1, z'=l$ if $v_i=u_l$ ($\pi^*$ has at least one switch), $z=2, z'=j^1$ if $v_i=u_{j^1}$, and $z=j^{h-1}+1, z'=j^h$ otherwise ($v_i=u_{j^h}$ for $h\ge 2$). Now, $G_1$ does not contain any edge on $\pi_1^*$ between $u_z$ and $u_{z'}$, as each such edge is a 1-edge ($(v_{i^1},v_i)$ is a 0-edge). Also, $G_1$ does not contain the edges of $\pi^*$. Thus, $\pi_1$ is 0-1-edge-disjoint from $\pi^*$. Similarly, suppose $\pi_2$ and $\pi^*$ has a common vertex $v_a$. Then by Observation \ref{obs:range-of-b^h-path}, $v_a$ is one of $u_z,u_{z+1},\ldots,u_{z'}$ where $z=1, z'=j^1-1$ if $v_{i'}=u_1$, $z=2, z'=j^h+1$ if $h=1$ and $z=j^{h-1}-1, z'=j^h$ otherwise. Now, $G_1$ does not contain any edge on $\pi_1^*$ between $u_z$ and $u_{z'}$, as each such edge is a 1-edge ($(v_{i^1},v_i)$ is a 0-edge).  

    Let $n_1$ and $n_1'$ be the respective number of outgoing 0-edges in $G^*$ from $v_{j^1}$ and $v_{j^2}$. Also, let $n_r$ and $n_b$ be the number of red and blue points in $C_{j^1}^*$, respectively. Let $n_2$ be the number of edges in $E_1$ across red and blue points in $C_{j^1}^*$. Similarly, let $n_r'$ and $n_b'$ be the number of red and blue points in $C_{j^2}^*$, respectively. Also, let $n_2'$ be the number of edges in $E_1$ across red and blue points in $C_{j^2}^*$. 
    %Now, if the degree of $b_{\eta^2}$ is at least 2 in $H_1$, (iii) holds. So, let this degree be 1. We will show that (ii) holds. 
    %If $t=1$, the edges of $H_1$ form a perfect matching, i.e., the degree of each vertex in $H_1$ is 1. Then, $n_2' =n_r'-n_1'$, and the number of edges incident on the blue vertices in $C_{j^2}^*$ is at least $n_2'+n_1'+1=n_r'-n_1'+n_1'+1=n_r'+1$. But, this implies that $n_b'\ge n_r'+1$, which contradicts $1$-balance in $C_{j^2}^*$. So, we can assume that $t\ge 2$. 
    
% \medskip
% \noindent
% {\bf Case $j^1\ne j^2$ and (iii) doesn't hold.} 

    \begin{claim}\label{cl:incoming-cluster-j1}
        The number of incoming 0-edges of $v_{j^1}$ in $G^*$ is at least $n_1+1$. Additionally, if $v_{j^1}=v_{j^2}
        %\ne v_{i'}
        $,
        % or $v_{j^1}=v_{j^2}=v_i$ and the degree of $b_{\eta_2}$ is 1 in $H_1$, 
        the number of such edges is at least $n_1+2$.  
    \end{claim}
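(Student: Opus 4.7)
The plan is to mirror the strategy from the analogous claim (Claim~\ref{cl:incoming-cluster-j2-1}) in the proof of Lemma~\ref{lem:$1$-path-1-hanging-cycle-exist}, adapting it to the present setting where $G_1$ is obtained from $G^*$ by deleting all $1$-edges, all edges on $\pi^*$ and on $\pi_1^*$, and then iteratively removing all cycles.

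Since $\pi_1$ is a maximal $0$-path in $G_1$ ending at $v_{j^1}$, $v_{j^1}$ has no outgoing $0$-edge in $G_1$. Therefore every outgoing $0$-edge of $v_{j^1}$ in $G^*$ must have been removed, so it lies either on $\pi^*$, on $\pi_1^*$, or on a removed cycle. For each such outgoing $0$-edge $e'$ I would exhibit a distinct incoming $0$-edge of $v_{j^1}$ in $G^*$: if $e'$ is on a removed cycle, I pair it with the other edge of that cycle incident to $v_{j^1}$, which is an incoming $0$-edge; if $e'$ lies on $\pi^*$ or $\pi_1^*$, then $v_{j^1}$ lies on $\pi^*$, and by Observation~\ref{obs:range-of-b^h-path} together with the parity of $(v_{i^1},v_i)$ being $0$, I split on whether $v_i = u_l$ or $v_i = u_{j^h}$ and use the parities of $\pi^*$ and $\pi_1^*$ on the allowed range to conclude that $v_{j^1}$ cannot be $v_i$ (since $v_i$ has no outgoing $0$-edge on $\pi^* \cup \pi_1^*$) and hence also admits a matched incoming $0$-edge on $\pi^* \cup \pi_1^*$. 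This yields at least $n_1$ distinct incoming $0$-edges.

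Next I would add the edge $e_{\eta^1}$ as an extra incoming $0$-edge of $v_{j^1}$. If $\pi_1$ has an edge, then $e_{\eta^1}$ is the last edge of $\pi_1$, which lies in $G_1$ and is therefore not on any removed cycle or on $\pi^* \cup \pi_1^*$, hence is distinct from every edge counted in the previous step. If $\pi_1$ has no edge (so $v_{j^1} = v_i$), then $e_{\eta^1} = (v_{i^1},v_i)$ lies on $\pi^*$, but since $v_i$ has no outgoing $0$-edge on $\pi^* \cup \pi_1^*$, this edge was never paired in the previous count. In both sub-cases we obtain the bound $n_1 + 1$. For the ``additionally'' part, when $v_{j^1} = v_{j^2}$, the edge $e_{\eta^2}$ is a $0$-edge incoming to $v_{j^2} = v_{j^1}$; distinctness from $e_{\eta^1}$ follows from the $0$-$1$-edge-disjointness of $\pi_1$ and $\pi_2$, and distinctness from the cycle-paired edges follows because $e_{\eta^2}$ either lies in $G_1$ or equals $(v_{i^2},v_{i'})$ on $\pi_1^*$, using that $v_{i'}$ has no outgoing $0$-edge on $\pi^* \cup \pi_1^*$.

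The main obstacle will be the careful bookkeeping in the middle step: splitting into the sub-cases $v_i = u_l$ versus $v_i = u_{j^h}$, tracking which portions of $\pi^*$ and $\pi_1^*$ carry $0$-edges versus $1$-edges, and verifying in each sub-case that an outgoing $0$-edge of $v_{j^1}$ on $\pi^* \cup \pi_1^*$ yields a genuinely new incoming $0$-edge and is never double-counted with the cycle-based pairings or with $e_{\eta^1}$, $e_{\eta^2}$.
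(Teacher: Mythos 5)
Your proposal mirrors the paper's proof of this claim essentially step for step: you observe that every outgoing $0$-edge of $v_{j^1}$ in $G^*$ was deleted in the construction of $G_1$ and hence lies on $\pi^*$, on $\pi_1^*$, or on a removed cycle; you pair each such edge with a distinct incoming $0$-edge (via the matching cycle edge, or via Observation~\ref{obs:range-of-b^h-path} and the parity analysis of the segment of $\pi^*$ between $u_z$ and $u_{z'}$ in the two subcases $v_i=u_l$ and $v_i=u_{j^h}$, noting $v_{j^1}\ne v_i$ there); and you then adjoin $e_{\eta^1}$, and when $v_{j^1}=v_{j^2}$ also $e_{\eta^2}$, as additional incoming $0$-edges that were never used in the pairing because they are either present in $G_1$ or sit on $\pi^*\cup\pi_1^*$ at a vertex with no outgoing $0$-edge there. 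This is exactly the paper's argument; your explicit appeal to the $0$-$1$-edge-disjointness of $\pi_1$ and $\pi_2$ to rule out $e_{\eta^1}=e_{\eta^2}$ is a small clarification that the paper leaves implicit, but it is the right justification.
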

    
    \begin{proof}
        We know that there is no outgoing 0-edge in $G_1$ from $v_{j^1}$; otherwise, $\pi_1$ would not be maximal. However, there may be outgoing 0-edges of $v_{j^1}$ in $G^*$. Each such edge $e'$ was removed from $G_1$. Thus, it either must have been part of a cycle in $G^*$ or on $\pi^*$ or $\pi_1^*$. Suppose $v_{j^1}$ is on $\pi^*$ or equivalently on $\pi_1^*$.         
        %Now, suppose $v_i=u_{j^h}$. As $(v_{i^1},v_i)$ is a 0-edge, all the edges on $\pi^*$ between $u_{z-1}$ and $u_{j^h}$ are 0-edges, and hence the edges on $\pi_1^*$ between them are 1-edges. Also, the edges on $\pi^*$ between $u_{j^h}$ and $u_{z'+1}$ are 1-edges, and hence the edges on $\pi_1^*$ between them are 0-edges. Thus, if $v_{j}$ is on $\pi^*$ and it has an outgoing 0-edge $e'$, it is either on $\pi^*$ or on $\pi_1^*$. Also, $v_{j}\ne v_i$, as $v_i$ does not have an outgoing 0-edge on $\pi^*$ or $\pi_1^*$. It follows that $v_{j}\in \{u_z,u_{z+1},\ldots,u_{z'-1}\}\setminus \{v_i\}$, and hence there is also a corresponding incoming 0-edge of $v_{j}$ on $\pi^*$ or $\pi_1^*$. 
        Then by Observation \ref{obs:range-of-b^h-path}, $v_{j^1}$ is 
         one of $u_z,u_{z+1},\ldots,u_{z'}$, where if $v_i=u_l$, then $z=j^\lambda+1, z'=l$, and if $v_i=u_{j^h}$, then $z=2$ if $h=1$ and $z=j^{h-1}+1$ otherwise, and $z'=l-1$ if $h=\lambda$ and $z'=j^{h+1}-1$ otherwise. 
         %where $z=j^\lambda+1, z'=l-1$ if $v_i=u_l$ ($\pi^*$ has at least one switch), $z=2, z'=j^1-1$ if $v_i=u_{j^1}$ and $z=j^{h-1}-1, z'=j^h$ otherwise ($v_i=u_{j^h}$ for $h\ge 2$). 
          First, suppose $v_i=u_l$. As $(v_{i^1},v_i)$ is a 0-edge, all the edges on $\pi^*$ between $u_{z-1}$ and $u_{z'+1}$ are 0-edges, and hence the edges on $\pi_1^*$ between them are 1-edges. Thus, if $v_{j^1}$ is on $\pi^*$ and it has an outgoing 0-edge $e'$, it is on $\pi^*$, but not on $\pi_1^*$. Also, $v_{j^1}\ne v_i$, as $v_i$ does not have an outgoing 0-edge on $\pi^*$. It follows that $v_{j^1}\in \{u_z,u_{z+1},\ldots,u_{z'-1}\}$, and hence there is also an incoming 0-edge of $v_{j^1}$ on $\pi^*$. Now, suppose $v_i=u_{j^h}$. As $(v_{i^1},v_i)$ is a 0-edge, all the edges on $\pi^*$ between $u_{z-1}$ and $u_{j^h}$ are 0-edges, and hence the edges on $\pi_1^*$ between them are 1-edges. Also, the edges on $\pi^*$ between $u_{j^h}$ and $u_{z'+1}$ are 1-edges, and hence the edges on $\pi_1^*$ between them are 0-edges. Thus, if $v_{j^1}$ is on $\pi^*$ and it has an outgoing 0-edge $e'$, it is either on $\pi^*$ or on $\pi_1^*$. Also, $v_{j^1}\ne v_i$, as $v_i$ does not have an outgoing 0-edge on $\pi^*$ or $\pi_1^*$. It follows that $v_{j^1}\in \{u_z,u_{z+1},\ldots,u_{z'-1}\}\setminus \{v_i\}$, and hence there is also a corresponding incoming 0-edge of $v_{j^1}$ on $\pi^*$ or $\pi_1^*$.     
          
          Now, suppose $e'$ is part of a cycle in $G^*$ that was removed. Then, there is also an incoming 0-edge of $v_{j^1}$ in that cycle. Moreover, each such edge is part of a different cycle in $G^*$. By the above argument, there are also at least $n_1$ incoming 0-edges of $v_{j^1}$ in $G^*$. Now, consider the $0$-edge $e_{\eta^1}$. If $e_{\eta^1}$ is the last edge on $\pi_1$, it is in $G_1$ and thus is not on $\pi^*$, $\pi_1^*$ or on a cycle that was removed. Otherwise, $v_{j^1}=v_i$, and $e_{\eta^1}=(v_{j^1},v_i)$, which was on $\pi^*$. Thus, $e_{\eta^1}$ was not on a cycle that was removed. Also, $v_{i}$ does not have an outgoing 0-edge on $\pi^*$ or $\pi_1^*$. Thus, additionally considering $e_{\eta^1}$, there are at least $n_1+1$ incoming 0-edges of $v_{j^1}$ in $G^*$.

         Lastly, suppose $v_{j^1}=v_{j^2}$ and consider the edge $e_{\eta^2}$. 
       %\ne v_{i'}
       If $e_{\eta^2}$ is the last edge on $\pi_1$, it is in $G_1$ and thus is not on $\pi^*$, $\pi_1^*$ or on a cycle that was removed. Otherwise, $v_{j^2}=v_{i'}$, and $e_{\eta^2}=(v_{j^2},v_{i'})$, which is on $\pi_1^*$. Thus, $e_{\eta^2}$ was not on a cycle that was removed. Also, $v_{i'}$ does not have an outgoing 0-edge on $\pi^*$ or $\pi_1^*$.  Thus, additionally considering $e_{\eta^2}$, there are at least $n_1+2$ incoming 0-edges of $v_{j^1}$ in $G^*$.
    \end{proof}

    \begin{claim}\label{cl:incoming-cluster-j2}
        The number of incoming 0-edges of $v_{j^2}$ in $G^*$ is at least $n_1'+1$. 
        % Additionally, if $v_{j^1}=v_{j^2}
        % %\ne v_{i'}
        % $,
        % % or $v_{j^1}=v_{j^2}=v_i$ and the degree of $b_{\eta_2}$ is 1 in $H_1$, 
        % the number of such edges is at least $n_1'+2$.  
    \end{claim}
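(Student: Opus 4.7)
The plan is to follow the proof of Claim \ref{cl:incoming-cluster-j1} almost verbatim, with $\pi_2$ and $v_{j^2}$ playing the roles of $\pi_1$ and $v_{j^1}$. Since $\pi_2$ is a maximal $0$-path in $G_1$, the vertex $v_{j^2}$ has no outgoing $0$-edge in $G_1$, so every outgoing $0$-edge $e'$ of $v_{j^2}$ in $G^*$ was removed during the construction of $G_1$ and must lie on $\pi^*$, on $\pi_1^*$, or on a cycle that was stripped.

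For the contributions from $\pi^*$ and $\pi_1^*$, I would invoke Observation \ref{obs:range-of-b^h-path} on the $0$-path $\pi_2$ starting at $v_{i'}$ to confine $v_{j^2}$ (when it lies on $\pi^*$) to the index range $\{u_y,\ldots,u_{y'}\}$ from the statement of Lemma \ref{lem:2-paths-no-hanging-cycle-exists}. This matches the observation in both subcases: if $v_{i'}=u_1$, then $\pi_2$ is a $(1-b^0)$-path because $(v_{i'},v_{i^2})$ is a $1$-edge; if $v_{i'}=u_{j^h}$, then $\pi_2$ is a $b^h$-path because the incoming edge $(v_{i^1},v_i)$ is a $0$-edge. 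Then I would redo the parity table from Claim \ref{cl:incoming-cluster-j1}: a local parity check on $\pi^*$ and $\pi_1^*$ pairs each outgoing $0$-edge of $v_{j^2}$ on $\pi^*\cup\pi_1^*$ with a distinct incoming $0$-edge of $v_{j^2}$ on $\pi^*\cup\pi_1^*$, and rules out $v_{j^2}=v_{i'}$ in this pairing because $v_{i'}$ itself has no outgoing $0$-edge on $\pi^*\cup\pi_1^*$ (its outgoing edge on $\pi^*$ is a $1$-edge, and its outgoing edge on $\pi_1^*$, if any, is the reverse of a $0$-edge and hence a $1$-edge). For each outgoing $0$-edge of $v_{j^2}$ that lies on a stripped cycle, that cycle furnishes a distinct incoming $0$-edge of $v_{j^2}$. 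Together these produce at least $n_1'$ incoming $0$-edges.

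Finally, I would add the edge $e_{\eta^2}$. If $\pi_2$ contains at least one edge, then $e_{\eta^2}$ is its last edge, belongs to $G_1$, and is therefore neither on $\pi^*\cup\pi_1^*$ nor on any stripped cycle, so it is distinct from the $n_1'$ edges already counted. If $\pi_2$ has no edge, then $v_{j^2}=v_{i'}$ and $e_{\eta^2}=(v_{i^2},v_{i'})$ is a $0$-edge on $\pi_1^*$; since $v_{i'}$ has no outgoing $0$-edge on $\pi^*\cup\pi_1^*$, this edge was not used as one of the "pair partners" in the previous step, and it is also not on a stripped cycle. In either case the total count is at least $n_1'+1$, as claimed.

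The only real obstacle is keeping the parity bookkeeping tidy in the two subcases $v_{i'}=u_1$ and $v_{i'}=u_{j^h}$, but this is strictly simpler than the analogous step for Claim \ref{cl:incoming-cluster-j1}: there is no additional "$+2$" clause to establish (the statement only asks for $n_1'+1$), and there is no need to account for a second reverse edge landing at $v_{j^2}$ via the other path. I do not expect any new mathematical difficulty beyond translating the casework already written for Claim \ref{cl:incoming-cluster-j1}.
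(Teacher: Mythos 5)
Your proposal has a genuine gap, and the source of it is the claim that you can follow Claim~\ref{cl:incoming-cluster-j1} \emph{verbatim} with $\pi_2, v_{j^2}$ in the roles of $\pi_1, v_{j^1}$. The two claims are not symmetric: $\pi_1$ is defined to be maximal in $G_1$ with no extra constraint, but $\pi_2$ is defined to be maximal \emph{subject to being $0$-$1$-edge-disjoint from $\pi_1$}. Your key accounting sentence, ``every outgoing $0$-edge $e'$ of $v_{j^2}$ in $G^*$ was removed during the construction of $G_1$ and must lie on $\pi^*$, on $\pi_1^*$, or on a cycle that was stripped,'' is therefore false: $v_{j^2}$ may well have an outgoing $0$-edge in $G_1$ that lies on $\pi_1$ (whose edges were never removed from $G_1$), and $\pi_2$ simply cannot extend along it because of the disjointness requirement. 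Your closing remark that ``there is no need to account for a second reverse edge landing at $v_{j^2}$ via the other path'' is exactly backwards: precisely \emph{because} $\pi_2$ must dodge $\pi_1$, an outgoing $0$-edge of $v_{j^2}$ that is an edge of $\pi_1$ must be paired with an incoming $0$-edge, and this is a new case not present in Claim~\ref{cl:incoming-cluster-j1}.

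Concretely, if $e'$ sits on $\pi_1$ you need a partner incoming $0$-edge at $v_{j^2}$. If $v_{j^2}$ is an interior vertex of $\pi_1$, the incoming $0$-edge of $\pi_1$ at $v_{j^2}$ does the job. But if $v_{j^2}$ is the start vertex of $\pi_1$, namely $v_{j^2} = v_i$, there is no incoming $\pi_1$-edge; the paper resolves this by first ruling out $v_i = u_l$ (otherwise a zero-switch path from $v_{i'}=u_1$ to $u_l$ would exist, contradicting that $\pi^*$ has a switch), concluding that $v_i = u_{j^h}$, and then using the incoming $0$-edge $(v_{i^1}, v_i)$ on $\pi^*$ as the partner, after checking that this edge was not on a stripped cycle and is not already spoken for. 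Your sketch does not reach this subcase and your parity bookkeeping as described would simply undercount the incoming $0$-edges of $v_{j^2}$. The rest of your sketch (confining $v_{j^2}$ to the range $u_y,\ldots,u_{y'}$ via Observation~\ref{obs:range-of-b^h-path}, the $\pi^*/\pi_1^*$ parity pairing, the stripped-cycle pairing, and the final ``$+1$'' from $e_{\eta^2}$) matches the paper, but the missing $\pi_1$ case is not cosmetic and must be added.
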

    \begin{proof}
       We know that there is no outgoing 0-edge in $G_1$ from $v_{j^2}$; otherwise, $\pi_2$ would not be maximal. However, there may be outgoing 0-edges of $v_{j^2}$ in $G^*$. Each such edge $e'$ was removed from $G_1$. Thus, it either must have been part of a cycle in $G^*$, or on $\pi_1$, $\pi^*$, or $\pi_1^*$. Suppose $e'$ is part of $\pi_1$. Then, {there is an incoming 0-edge of $v_{j^2}$ that is on $\pi_1$}, unless $v_{j^2}$ is the first vertex on this path. In the latter case, $v_{j^2}=v_{i}$. Then, $v_{i}\ne u_l$, as otherwise there is a zero switch path from $v_{i'}=u_1$ to $u_l$ in $G^*$, which is a contradiction. Thus, $v_i=u_{j^h}$, and there is an incoming 0-edge $(v_{i^1},v_i)$, which is on $\pi^*$ and not on a cycle in $G_1$ that was removed.       

       Suppose $v_{j^2}$ is on $\pi^*$ or equivalently on $\pi_1^*$, and it has an outgoing 0-edge $e'$.  Then by Observation \ref{obs:range-of-b^h-path}, $v_{j^2}$ is one of $u_y,u_{y+1},\ldots,u_{y'}$ where if $v_{i'}=u_1$, then $y=1, y'=j^1-1$, and if $v_{i'}=u_{j^h}$, then $y=2$ if $h=1$ and $y=j^{h-1}+1$ otherwise, and $y'=l-1$ if $h=\lambda$ and $y'=j^{h+1}-1$ otherwise. 
       %$y=2, y'=j^1$ if $v_{i'}=u_{j^1}$ and $y=j^{h-1}-1, y'=j^h$ otherwise ($v_{i'}=u_{j^h}$ for $h\ge 2$). 
       Now, if $v_{i'}=u_1$, all the edges on $\pi^*$ between $u_{1}$ and $u_{j^1}$ are 1-edges by our assumption, and hence the edges on $\pi_1^*$ between them are 0-edges. Thus, if $v_{j^2}$ is on $\pi^*$ and it has an outgoing 0-edge $e'$, $e'$ must be on $\pi_1^*$, but not on $\pi^*$. It follows that $v_{j^2}\in \{u_2,\ldots,u_{j^1-1}\}$, and hence there is also an incoming 0-edge of $v_{j^2}$ on $\pi_1^*$. 
       Next suppose $v_{i'}=u_{j^h}$ for $h\ge 1$. As $(v_{i^1},v_i)$ is a 0-edge, all the edges on $\pi^*$ between $u_{y-1}$ and $u_{j^h}$ are 0-edges, and hence the edges on $\pi_1^*$ between them are 1-edges. Also, the edges on $\pi^*$ between $u_{j^h}$ and $u_{y'+1}$ are 1-edges, and hence the edges on $\pi_1^*$ between them are 0-edges. Thus, if $v_{j^2}$ is on $\pi^*$ and it has an outgoing 0-edge $e'$, it is either on $\pi^*$ or on $\pi_1^*$. Also, $v_{j^2}\ne v_i$, as $v_i$ does not have an outgoing 0-edge on $\pi^*$ or $\pi_1^*$. It follows that $v_{j^2}\in \{u_y,u_{y+1},\ldots,u_{y'-1}\}\setminus \{v_i\}$, and hence there is also a corresponding incoming 0-edge of $v_{j^2}$ on $\pi^*$ or $\pi_1^*$.
%       , all the edges on $\pi^*$ between $u_{y-1}$ and $u_{y'}$ are 0-edges, and hence the edges on $\pi^*$ between them are 1-edges. Thus, if $v_{j^2}$ is on $\pi^*$ and it has an outgoing 0-edge $e'$, $e'$ must be on $\pi^*$, but not on $\pi_1^*$. It follows that $v_{j^2}\in \{u_y,u_{y+1},\ldots,u_{y'-1}\}$, and hence there is also an incoming 0-edge of $v_{j^2}$ on $\pi^*$.

       %If $e'$ is on $\pi^*$, i.e., $v_{j^2}$ is on $\pi^*$, then $v_{j^2}\ne v_i$ and $v_{j^2}\ne u_1,\ldots,u_{j^1-1}$, as $e'$ is an outgoing 0-edge on $\pi^*$. It follows that $v_i\ne u_l$, as there is no 0-path from $u_1$ to $v_{j^2}$ that is on $\pi^*$ and not one of $u_1,\ldots,u_{j^1-1}$. Then by Observation \ref{obs:range-of-b^h-path}, $v_{j^2}$ is one of $u_z,u_{z+1},\ldots,u_{j^h-1}$ where $z=2$ if $h=1$, and $z=j^{h-1}-1$ if $h > 1$. Thus, there is also an incoming 0-edge of $v_{j^2}$ on $\pi^*$. 
       Next, suppose the outgoing 0-edge $e'$ of $v_{j^1}$ is part of a cycle in $G^*$ that was removed. Then, there is also an incoming 0-edge of $v_{j^2}$ in that cycle. Moreover, each such edge is part of a different cycle in $G^*$. By the above argument, there are also at least $n_1'$ incoming 0-edges of $v_{j^2}$ in $G^*$. 

       Now, consider the $0$-edge $e_{\eta^2}$. If $e_{\eta^2}$ is the last edge on $\pi_2$, it is in $G_1$ and thus is not on $\pi^*$, $\pi_1^*$ or on a cycle that was removed. Otherwise, $v_{j^2}=v_{i'}$, and $e_{\eta^2}=(v_{j^2},v_{i'})$, which is on $\pi_1^*$. Thus, $e_{\eta^2}$ was not on a cycle that was removed. Also, $v_{i'}$ does not have an outgoing 0-edge on $\pi^*$ or $\pi_1^*$.  Thus, additionally considering $e_{\eta^2}$, there are at least $n_1'+1$ incoming 0-edges of $v_{j^2}$ in $G^*$.       
       % Now, if $v_{j^1}=v_{j^2}
       % %\ne v_{i'}
       % $, either $e_{\eta^1}$ is in $G_1$ or on $\pi^*$. In the latter case, $v_{j^1}=v_i$, and $e_{\eta^1}=(v_{j^1},v_i)$. Thus, $e_{\eta^1}$ was not on a cycle that was removed. Also, $v_{i}$ does not have an outgoing 0-edge on $\pi^*$ or $\pi_1^*$. Thus, considering this additional edge there are at least $n_1'+2$ incoming 0-edges of $v_{j^2}$ in $G^*$.        
       % % If $v_{j^1}=v_{j^2}=v_{i'}$, then $v_{i'}$ cannot be $u_1$, as there is no 0-path from $u_1$ to $u_l=v_i$ in this case. So, $v_{i'}=u_{j^h}$. Then, again either $e_{\eta^1}$ is in $G_1$ or on $\pi^*$. 
       % % %the incoming 0-edge $e_{\eta^1}=(v_{j^1},v_i)$, which was on $\pi^*$, was not on a cycle that was removed. 
       % % Thus, additionally considering $e_{\eta^1}$, there are at least $n_1'+2$ incoming 0-edges of $v_{j^2}$ in $G^*$ in this case. %Now, if the degree of $b_{\eta_2}$ is 1 in $H_1$, $e_{\eta^2}=(v_{i^2},v_{i'})$ cannot be part of a cycle that was removed. Thus, additionally considering $e_{\eta^2}$, there are at least $n_1'+2$ incoming 0-edges of $v_{j^2}$ in $G^*$.
       % %Now, consider the $0$-edge $e_{\eta^2}$. If $e_{\eta^2}$ is the last edge on $\pi_2$, it is in $G_1$ and thus is not on $\pi^*$ or on a cycle that was removed. Otherwise, $v_{j^2}=v_i$, and the incoming 0-edge $e_{\eta^1}$ to $v_i$, which was on $\pi^*$, was not on a cycle that was removed. Thus, additionally considering $e_{\eta^1}$ and $e_{\eta^2}$, there are at least $n_1'+1$ incoming 0-edges of $v_{j^2}$ in $G^*$. 
    \end{proof}

    If $t=1$, the edges of $H_1$ form a perfect matching, i.e., the degree of each vertex in $H_1$ is 1. Then, $n_2' =n_r'-n_1'$, and the number of edges incident on the blue vertices in $C_{j^2}^*$ is at least $n_2'+n_1'+1=n_r'-n_1'+n_1'+1=n_r'+1$. But, this implies that $n_b'\ge n_r'+1$, which contradicts $1$-balance in $C_{j^2}^*$. Henceforth, we assume that $t\ge 2$. 

    Recall that if $v_i=u_l$, then $z=j^\lambda+1, z'=l$, and if $v_i=u_{j^h}$, then $z=2$ if $h=1$ and $z=j^{h-1}+1$ otherwise, and $z'=l-1$ if $h=\lambda$ and $z'=j^{h+1}-1$ otherwise. 

    \begin{claim}\label{cl:edges-incident-blue-j1}
        Suppose all the red points in $C_{j^1}^*$ have degree $t$ in $H_1$ and one of the following holds: (a) $v_{j^1}$ is not one of $u_z,u_{z+1},\ldots,u_{z'-1}$, and (b) $v_{j^1}$ is one of $u_z,u_{z+1},\ldots,u_{z'-1}$, and the degree of $b_{\eta^1}$ is 1 or the degree of the blue point in $C_{j^1}^*$ corresponding to $\pi^*$ is 1. Then the blue points of $C_{j^1}^*$ can incident to at most $t\cdot n_r$ edges. 
    \end{claim}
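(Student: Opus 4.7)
The target inequality follows once I show that every blue point of $C_{j^1}^*$ must have degree exactly $1$ in $H_1$, since then the $t$-balance bound $n_b \le t\cdot n_r$ immediately yields that the blue points of $C_{j^1}^*$ are incident to at most $t\cdot n_r$ edges. Under the standing assumption that every red point of $C_{j^1}^*$ has degree $t$, the identity $n_2 = t\cdot n_r - n_1$ holds (each of the $t\cdot n_r$ edges incident to a red point of $C_{j^1}^*$ is either internal or one of the $n_1$ outgoing $0$-edges of $v_{j^1}$), so my strategy is to exhibit $t\cdot n_r$ distinct degree-$1$ blue points in $C_{j^1}^*$; by $t$-balance, every blue point of $C_{j^1}^*$ is then forced to be among them.

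\textbf{The two sources of degree-$1$ blue points.} First, I will use that $H_1$ is a disjoint union of stars and each red point of $C_{j^1}^*$ is the center of a star of degree $t$, so the $n_2$ blue endpoints of internal edges are leaves of degree $1$. Next, I will show that each cycle removed during the construction of $G_1$ contributes another degree-$1$ blue point via a hanging-cycle argument. Since $\pi_1$ is maximal in $G_1$, every outgoing $0$-edge of $v_{j^1}$ in $G^*$ lies on $\pi^*$, on $\pi_1^*$, or on some removed cycle; for any outgoing edge on a removed cycle $\Gamma$, the paired incoming $0$-edge of $\Gamma$ has a blue endpoint $b \in C_{j^1}^*$ that must satisfy $\deg_{H_1}(b)=1$. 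Indeed, if $\deg_{H_1}(b)\ge 2$, concatenating $\pi_1$ with the arc of $\Gamma$ from $v_{j^1}$ through $b$ back to $v_{j^1}$ would produce a special $0$-hanging cycle for $v_i$, all of whose edges lie in $G_1$ (and hence are $0$-$1$-edge-disjoint from $\pi^*$), contradicting the standing hypothesis of Lemma~\ref{lem:2-paths-no-hanging-cycle-exists}. I will argue further that the blue endpoints obtained from distinct removed cycles are pairwise distinct (a shared endpoint would have degree $\ge 2$) and disjoint from the $n_2$ internal leaves (their sole incident edge crosses cluster boundaries).

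\textbf{Case analysis and the main obstacle.} The remaining step is to count outgoing $0$-edges of $v_{j^1}$ that lie on $\pi^* \cup \pi_1^*$. Using the parity pattern of $\pi^*$ together with Observation~\ref{obs:range-of-b^h-path}, which confines any such edge to an interval determined by $v_i$, I will show that this count is $0$ under hypothesis (a) (via a sub-subcase split according to $v_{j^1}$ off $\pi^*$, $v_{j^1}=v_i=u_l$, or $v_{j^1}=v_i=u_{j^h}$), giving $n_2 + n_1 = t\cdot n_r$ distinct degree-$1$ blue points at once; and that this count is exactly $1$ under hypothesis (b), leaving $n_2 + (n_1-1) = t\cdot n_r - 1$ such points from the two sources above. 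In case (b), the additional hypothesis will then supply the missing degree-$1$ blue point---either $b_{\eta^1}$ or the blue point $q$ of $C_{j^1}^*$ corresponding to $\pi^*$ at $v_{j^1}$---and the sole incident edge of either one (namely $e_{\eta^1}$ on $\pi_1$ or the relevant $\pi^*$-edge) is neither internal nor on any removed cycle, so that point is genuinely distinct from the $t\cdot n_r - 1$ already counted. In every subcase I thereby obtain at least $t\cdot n_r$ distinct degree-$1$ blue points of $C_{j^1}^*$, completing the proof. The principal obstacle will be the parity bookkeeping in this case analysis: the division into $v_i = u_l$ versus $v_i = u_{j^h}$ flips the parities of the $\pi^*$-edges at $v_{j^1}$ and changes the identification of the extra blue point ($\pi^*$-edge endpoint versus $\pi_1^*$-edge endpoint), and at each boundary position of $v_{j^1}$ I will need to verify carefully that no outgoing $0$-edge on $\pi^* \cup \pi_1^*$ has been overlooked and that the designated extra blue point is genuinely distinct from the internal leaves and cycle endpoints.
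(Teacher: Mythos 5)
Your proposal is substantively identical to the paper's proof: you identify the same two sources of degree-$1$ blue points (the $n_2$ star leaves at degree-$t$ red centers, and the incoming endpoints of removed cycles via the special hanging-cycle argument), split into the same two cases (a)/(b), and invoke $t$-balance to force $n_b = t\cdot n_r$. The only difference is cosmetic — the paper wraps this counting in a proof by contradiction (``suppose the blue points are incident to $\ge t\cdot n_r + 1$ edges\dots'') while you present it directly by exhibiting $t\cdot n_r$ distinct degree-$1$ blue points — but since none of the chain of deductions actually uses the contradiction hypothesis, the two formulations are the same argument.
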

    %We prove that (iv) holds in this case. 
    \begin{proof}
     For the sake of contradiction, suppose the blue points of $C_{j^1}^*$ are incident to at least $t\cdot n_r+1$ edges. Note that for each outgoing 0-edge of  $v_{j^1}$ that is part of a cycle in $G^*$ and was removed, there is also an incoming 0-edge of $v_{j^1}$ that is part of the same cycle. However, each such incoming 0-edge must correspond to a degree 1 blue vertex in $C_{j^1}^*$. Otherwise, the union of $\pi_1$ and such a cycle form a special 0-hanging cycle for $v_{i}$ that is 0-1-edge-disjoint from $\pi^*$, which is a contradiction. Note that there are at least $n_1-1$ such outgoing 0-edges of $v_{j^1}$, as one of $\pi^*$ and $\pi_1^*$ contains at most one outgoing 0-edge of $v_{j^1}$. Thus, there are $n_1-1$ such blue points with degree 1. Additionally, if $v_{j^1}$ is not one of $u_z,u_{z+1},\ldots,u_{z'-1}$, all outgoing 0-edges of $v_{j^1}$ are on removed cycles, and thus there are $n_1$ such blue points with degree 1. Now, the red points in $C_{j^1}^*$ are connected to the blue points in $C_{j^1}^*$ using $n_2 =t\cdot n_r-n_1$ edges. As $H_1$ is a collection of stars, and each red point in $C_{j^1}^*$ has degree $t\ge 2$, the degree of these blue points must be 1. 
    
    If $v_{j^1}$ is not one of $u_z,u_{z+1},\ldots,u_{z'-1}$, the number of blue points in $C_{j^1}^*$ with degree 1 in $H_1$ is at least $n_1+n_2=t\cdot n_r$. As $C_{j^1}^*$ can contain at most $t\cdot n_r$ blue points due to its $t$-balance, the degree of all blue points of $C_{j^1}^*$ in $H_1$ is 1. Thus, the sum of their degree is $t\cdot n_r$. But, this contradicts the fact that they are incident to at least $t\cdot n_r+1$ edges. Thus our claim must be true. 

    Otherwise, $v_{j^1}$ is one of $u_z,u_{z+1},\ldots,u_{z'-1}$. In this case, $e_{\eta^1}$ is the last edge of $\pi_1$ or on $\pi^*$, and thus is not part of a cycle in $G^*$ that was deleted. If the degree of $b_{\eta^1}$ is 1, it is not connected to a red point of $C_{j^1}^*$ in $H_1$. Thus, the number of blue points in $C_{j^1}^*$ with degree 1 in $H_1$ is at least $(n_1-1)+n_2+1=t\cdot n_r$. Otherwise, the degree of the blue point in $C_{j^1}^*$ corresponding to $\pi^*$ is 1. Thus, the number of blue points in $C_{j^1}^*$ with degree 1 in $H_1$ is again at least $(n_1-1)+n_2+1=t\cdot n_r$. Thus our claim must be true in this case as well.     
    \end{proof}

    Recall that if $v_{i'}=u_1$, then $y=1, y'=j^1-1$, and if $v_{i'}=u_{j^h}$, then $y=2$ if $h=1$ and $y=j^{h-1}+1$ otherwise, and $y'=l-1$ if $h=\lambda$ and $y'=j^{h+1}-1$ otherwise. 

    \begin{claim}\label{cl:edges-incident-blue-j2}
       Suppose all the red points in $C_{j^2}^*$ have degree $t$ in $H_1$ and one of the following holds: (a) $v_{j^2}$ is not one of $u_y,u_{y+1},\ldots,u_{y'-1}$ and the degree of $b_{\eta^2}$ is 1, and (b) $v_{j^2}$ is one of $u_y,u_{y+1},\ldots,u_{y'-1}$, the degree of $b_{\eta^2}$ is 1 and the degree of the blue point in $C_{j^2}^*$ corresponding to $\pi^*$ is 1. Then the blue points of $C_{j^2}^*$ can incident to at most $t\cdot n_r'$ edges. 
    \end{claim}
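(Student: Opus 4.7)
The plan is to mirror the proof of Claim \ref{cl:edges-incident-blue-j1}, with $(v_{i'}, \pi_2, v_{j^2}, C_{j^2}^*, n_1', n_r', n_2', b_{\eta^2})$ playing the roles of $(v_i, \pi_1, v_{j^1}, C_{j^1}^*, n_1, n_r, n_2, b_{\eta^1})$, and with the forbidden special 0-hanging cycle being one rooted at $v_{i'}$ (also excluded by the hypothesis of Lemma \ref{lem:2-paths-no-hanging-cycle-exists}).

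I would argue by contradiction: suppose the blue vertices of $C_{j^2}^*$ are incident to at least $t \cdot n_r' + 1$ edges in $H_1$. Since all $n_r'$ red vertices of $C_{j^2}^*$ have degree $t$, the number of red-blue edges inside $C_{j^2}^*$ is exactly $n_2' = t \cdot n_r' - n_1'$, and each such blue endpoint has degree $1$ (because $H_1$ is a disjoint union of stars and every red center has degree $t \ge 2$). For each outgoing 0-edge of $v_{j^2}$ that was removed from $G^*$ while forming $G_1$ as part of a cycle, the paired incoming 0-edge on the same cycle has its blue endpoint in $C_{j^2}^*$, and this endpoint must have degree $1$ in $H_1$: otherwise the union of $\pi_2$ with that cycle forms a special 0-hanging cycle for $v_{i'}$ that is 0-1-edge-disjoint from $\pi^*$, contradicting the lemma hypothesis. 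As in the proof of Claim \ref{cl:incoming-cluster-j2}, at most one outgoing 0-edge of $v_{j^2}$ can lie on $\pi^*$ or $\pi_1^*$, and this possibility only arises in case (b).

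In case (a), where $v_{j^2} \notin \{u_y, \ldots, u_{y'-1}\}$, all $n_1'$ outgoing 0-edges of $v_{j^2}$ yield $n_1'$ distinct degree-$1$ blue vertices in $C_{j^2}^*$ via the hanging-cycle argument; these are disjoint from the $n_2'$ red-adjacent degree-$1$ blues and from $b_{\eta^2}$ itself (whose red partner $r_{\eta^2}$ lies outside $C_{j^2}^*$ and whose incident edge is not on a removed cycle). Using the hypothesis that $b_{\eta^2}$ has degree $1$, we collect at least $n_1' + 1 + n_2' = t \cdot n_r' + 1$ blue vertices in $C_{j^2}^*$, violating the $t$-balance bound $n_b' \le t \cdot n_r'$. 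In case (b), where $v_{j^2} \in \{u_y, \ldots, u_{y'-1}\}$, at most one outgoing 0-edge is on $\pi^*$ or $\pi_1^*$; the corresponding blue point of $C_{j^2}^*$ on $\pi^*$ has degree $1$ by hypothesis, replacing the lost cycle contribution, and together with $b_{\eta^2}$ (degree $1$ by hypothesis) and the $n_2'$ red-adjacent blues, we again reach at least $t \cdot n_r' + 1$ distinct blue vertices, again contradicting $t$-balance.

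The main obstacle will be verifying distinctness of the counted degree-$1$ blue vertices across all configurations, especially when $v_{j^2}$ shares vertices with $\pi_1$ (so some outgoing 0-edges of $v_{j^2}$ lie on $\pi_1$ and were excluded from $\pi_2$ by the 0-1-edge-disjointness constraint rather than by cycle removal). For those edges a parallel argument producing a degree-$1$ witness should still go through, relying crucially on the 0-1-edge-disjointness of $\pi_1$ and $\pi_2$ and on the absence of a special 0-hanging cycle for either $v_i$ or $v_{i'}$; the case $v_{j^1} = v_{j^2}$ in particular needs care to ensure that contributions counted for Claim \ref{cl:edges-incident-blue-j1} and the present claim are not being double-used.
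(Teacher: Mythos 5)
Your plan mirrors the paper's at a high level, but the count you reach is off by one in exactly the hard configuration, and the fix you propose for that configuration does not work.

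The obstacle you flag at the end — an outgoing $0$-edge of $v_{j^2}$ that lies on $\pi_1$ rather than on a removed cycle — is genuine, but there is no ``parallel argument producing a degree-1 witness'' for it. For such an edge, the paired incoming $0$-edge on $\pi_1$ does have a blue endpoint in $C_{j^2}^*$, but you cannot force it to have degree $1$: the obstruction you invoke (absence of a special $0$-hanging cycle for $v_i$ or $v_{i'}$) only bites when the paired incoming edge sits on a cycle you can glue onto $\pi_1$ or $\pi_2$; here it sits on the tree-like path $\pi_1$ itself, and no hanging cycle is produced by the blue endpoint having larger degree. Consequently, in both cases (a) and (b) your tally must concede one more outgoing $0$-edge than you allow: in (a) you can only guarantee $n_1'-1$ cycle-witnesses (one outgoing $0$-edge of $v_{j^2}$ may be on $\pi_1$), and in (b) only $n_1'-2$ (one on $\pi^*/\pi_1^*$ and one on $\pi_1$). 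This yields $t\cdot n_r'$ distinct degree-$1$ blues, not $t\cdot n_r'+1$, so you cannot reach a contradiction by directly violating $t$-balance as you propose.

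The correct closing move, which is what the paper uses, is to reach the contradiction through the incidence count rather than through $t$-balance alone: having exhibited $t\cdot n_r'$ distinct degree-$1$ blue vertices in $C_{j^2}^*$ and knowing $n_b'\le t\cdot n_r'$, conclude that every blue vertex of $C_{j^2}^*$ has degree exactly $1$, hence the total number of edges incident to blue points of $C_{j^2}^*$ is exactly $t\cdot n_r'$, contradicting the standing assumption that it is at least $t\cdot n_r'+1$. With that adjustment your argument is sound; the distinctness checks you wanted to perform (separating $b_{\eta^2}$, the $\pi^*$-blue in case (b), the $n_2'$ red-adjacent blues, and the cycle witnesses) are indeed all needed, and the case $v_{j^1}=v_{j^2}$ you worry about for double-counting is already isolated in Lemma~\ref{lem:2-paths-no-hanging-cycle-exists}'s case~(xii), which uses Claim~\ref{cl:edges-incident-blue-j1} with a stronger incoming-edge bound from Claim~\ref{cl:incoming-cluster-j1} rather than Claim~\ref{cl:edges-incident-blue-j2}, so no overlap arises here.
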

    
    \begin{proof}
    For the sake of contradiction, suppose the blue points of $C_{j^2}^*$ are incident to at least $t\cdot n_r'+1$ edges. Note that for each outgoing 0-edge of  $v_{j^2}$ that is part of a cycle in $G^*$ that was removed, there is also an incoming 0-edge of $v_{j^2}$ that is part of the same cycle. However, each such incoming 0-edge must correspond to a degree 1 blue vertex in $C_{j^2}^*$. Otherwise, the union of $\pi_2$ and such a cycle form a special 0-hanging cycle for $v_{i'}$ that is 0-1-edge-disjoint from $\pi^*$, which is a contradiction. Note that there are at least $n_1'-2$ such outgoing 0-edges of $v_{j^2}$, as one of $\pi^*$ and $\pi_1^*$,  and $\pi_1$ may contain at most two outgoing 0-edge of $v_{j^2}$. Thus, there are at least $n_1'-2$ such blue points with degree 1. 
    Additionally, if $v_{j^2}$ is not one of $u_y,u_{y+1},\ldots,u_{y'-1}$, no outgoing 0-edges of $v_{j^2}$ are on $\pi^*$ and $\pi_1^*$, and thus there are $n_1'-1$ such blue points with degree 1. Now, the red points in $C_{j^2}^*$ are connected to the blue points in $C_{j^2}^*$ using $n_2' =t\cdot n_r'-n_1'$ edges. As $H_1$ is a collection of stars, and each red point in $C_{j^2}^*$ has degree $t\ge 2$, the degree of these blue points must be 1. 
    
    If $v_{j^2}$ is not one of $u_y,u_{y+1},\ldots,u_{y'-1}$, the number of blue points in $C_{j^2}^*$ with degree 1 in $H_1$ including $b_{\eta^2}$ is at least $(n_1'-1)+1+n_2'=t\cdot n_r'$. As $C_{j^2}^*$ can contain at most $t\cdot n_r'$ blue points due to its $t$-balance, the degree of all blue points of $C_{j^2}^*$ in $H_1$ is 1. Thus, the sum of their degrees is $t\cdot n_r'$. But, this contradicts the fact that they are incident to at least $t\cdot n_r'+1$ edges. Thus our claim must be true. 

    Otherwise, $v_{j^2}$ is one of $u_y,u_{y+1},\ldots,u_{y'-1}$. In this case, $e_{\eta^2}$ is the last edge of $\pi_2$ or on $\pi_1^*$, and thus is not part of a cycle in $G^*$ that was deleted. If the degree of $b_{\eta^2}$ is 1, it is not connected to a red point of $C_{j^2}^*$ in $H_1$. Thus, the number of blue points in $C_{j^2}^*$ with degree 1 in $H_1$ including $b_{\eta^2}$ and the blue point in $C_{j^2}^*$ corresponding to $\pi^*$ is at least $(n_1'-2)+2+n_2'=t\cdot n_r'$. 
    %Otherwise, the degree of the blue point in $C_{j^2}^*$ corresponding to $\pi^*$ is 1. Thus, the number of blue points in $C_{j^2}^*$ with degree 1 in $H_1$ is again at least $(n_1'-1)+n_2'+1=t\cdot n_r'$. Thus our claim must be true in this case as well.  
    \end{proof}

    With the above claims in hand, we do a case-by-case analysis to complete the proof. 
    Now, if \textbf{the degree of both $b_{\eta^1}$ and $b_{\eta^2}$ in $H_1$ is at least 2 and $b_{\eta^1}\ne b_{\eta^2}$},
    %the degree of both $b_{\eta^1}$ and $b_{\eta^2}$ in $H_1$ is at least 2, and $b_{\eta^1}\ne b_{\eta^2}$
    then (i) holds.  
    %both (i) and (iii) hold, we are done. 
    %So, in the following, we assume one of their degree is 1. 
    %does not hold.  
    Also, if \textbf{the degree of both $b_{\eta^1}$ and $b_{\eta^2}$ in $H_1$ is at least 2, $b_{\eta^1}= b_{\eta^2}$, $v_{j^1}$ is one of $u_z,u_{z+1},\ldots,u_{z'-1}$, and the degree of the blue point in $C_{j^1}^*$ corresponding to $\pi^*$ is at least 2}, then (ii) holds. So, assume that (i) and (ii) are not true.

    \medskip
    \noindent
    {\bf (iii) The degree of both $b_{\eta^1}$ and $b_{\eta^2}$ in $H_1$ are at least 2, $b_{\eta^1}= b_{\eta^2}$, $v_{j^1}$ is one of $u_z,u_{z+1},\ldots,u_{z'-1}$, and the degree of the blue point in $C_{j^1}^*$ corresponding to $\pi^*$ is 1.} We claim that there is a red point in $C_{j^1}^*$ whose degree in $H_1$ is at most $t-1$.  For the sake of contradiction to our claim, assume that all the red points in $C_{j^1}^*$ have degree $t$ in $H_1$. Then, $n_2 = t\cdot n_r-n_1$, and by Claim \ref{cl:incoming-cluster-j1}, the number of edges incident on the blue vertices in $C_{j^1}^*$ is at least $n_2+(n_1+1)=t\cdot n_r-n_1+n_1+1=t\cdot n_r+1$. But, as the degree of the blue point in $C_{j^1}^*$ corresponding to $\pi^*$ is 1, due to Item (b), this contradicts Claim \ref{cl:edges-incident-blue-j1}.

    \medskip
    \noindent
    {\bf (iv) The degree of both $b_{\eta^1}$ and $b_{\eta^2}$ in $H_1$ are at least 2, $b_{\eta^1}= b_{\eta^2}$, $v_{j^1}$ is not one of $u_z,u_{z+1},\ldots,u_{z'-1}$.} We claim that there is a red point in $C_{j^1}^*$ whose degree in $H_1$ is at most $t-1$.  For the sake of contradiction to our claim, assume that all the red points in $C_{j^1}^*$ have degree $t$ in $H_1$. Then, $n_2 = t\cdot n_r-n_1$, and by Claim \ref{cl:incoming-cluster-j1}, the number of edges incident on the blue vertices in $C_{j^1}^*$ is at least $n_2+(n_1+1)=t\cdot n_r-n_1+n_1+1=t\cdot n_r+1$. But, as $v_{j^1}$ is not one of $u_z,u_{z+1},\ldots,u_{z'-1}$, due to Item (a), this contradicts Claim \ref{cl:edges-incident-blue-j1}.

    \medskip
    \noindent
    {\bf (v) The degree of $b_{\eta^1}$ is at least 2, $v_{j^2}$ is not one of $u_y,u_{y+1},\ldots,u_{y'-1}$, and the degree of $b_{\eta^2}$ in $H_1$ is 1.} We claim that there is a red point in $C_{j^2}^*$ whose degree in $H_1$ is at most $t-1$.  For the sake of contradiction to our claim, assume that all the red points in $C_{j^2}^*$ have degree $t$ in $H_1$. Then, $n_2' = t\cdot n_r'-n_1'$, and by Claim \ref{cl:incoming-cluster-j2}, the number of edges incident on the blue vertices in $C_{j^2}^*$ is at least $n_2'+(n_1'+1)=t\cdot n_r'-n_1'+n_1'+1=t\cdot n_r'+1$. But, as the degree of $b_{\eta^2}$ is 1, due to Item (a), this contradicts Claim \ref{cl:edges-incident-blue-j2}.    

    Now, if \textbf{the degree of $b_{\eta^1}$ is at least 2, $v_{j^2}$ is one of $u_y,u_{y+1},\ldots,u_{y'-1}$, the degree of $b_{\eta^2}$ is 1, and the degree of the blue point in $C_{j^2}^*$ corresponding to $\pi^*$ is at least 2}, then (vi) holds. 

    \medskip
    \noindent
    {\bf (vii) The degree of $b_{\eta^1}$ is at least 2, $v_{j^2}$ is one of $u_y,u_{y+1},\ldots,u_{y'-1}$, the degree of $b_{\eta^2}$ is 1, and the degree of the blue point in $C_{j^2}^*$ corresponding to $\pi^*$ is
 1.} We claim that there is a red point in $C_{j^2}^*$ whose degree in $H_1$ is at most $t-1$.  For the sake of contradiction to our claim, assume that all the red points in $C_{j^2}^*$ have degree $t$ in $H_1$. Then, $n_2' = t\cdot n_r'-n_1'$, and by Claim \ref{cl:incoming-cluster-j2}, the number of edges incident on the blue vertices in $C_{j^2}^*$ is at least $n_2'+(n_1'+1)=t\cdot n_r'-n_1'+n_1'+1=t\cdot n_r'+1$. But, as the degree of $b_{\eta^2}$ is 1, due to Item (b), this contradicts Claim \ref{cl:edges-incident-blue-j2}.

    \medskip
    \noindent
    {\bf (viii) The degree of $b_{\eta^1}$ in $H_1$ is 1 and the degree of $b_{\eta^2}$ in $H_1$ is at least 2.} We claim that there is a red point in $C_{j^1}^*$ whose degree in $H_1$ is at most $t-1$. For the sake of contradiction to our claim, assume that all the red points in $C_{j^1}^*$ have degree $t$ in $H_1$. Then, $n_2 = t\cdot n_r-n_1$, and by Claim \ref{cl:incoming-cluster-j1}, the number of edges incident on the blue vertices in $C_{j^1}^*$ is at least $n_2+(n_1+1)=t\cdot n_r-n_1+n_1+1=t\cdot n_r+1$. But, as the degree of $b_{\eta^1}$ is 1, this contradicts Claim \ref{cl:edges-incident-blue-j1}.

    \medskip
    \noindent
    {\bf (ix) $j^1\ne j^2$, the degree of both $b_{\eta^1}$ and $b_{\eta^2}$ are 1 in $H_1$, and $v_{j^2}$ is not one of $u_y,u_{y+1},\ldots,u_{y'-1}$.} First, we claim that there is a red point in $C_{j^1}^*$ whose degree in $H_1$ is at most $t-1$. For the sake of contradiction to our claim, assume that all the red points in $C_{j^1}^*$ have degree $t$ in $H_1$. Then, $n_2 = t\cdot n_r-n_1$, and by Claim \ref{cl:incoming-cluster-j1}, the number of edges incident on the blue vertices in $C_{j^1}^*$ is at least $n_2+(n_1+1)=t\cdot n_r-n_1+n_1+1=t\cdot n_r+1$. But, as the degree of $b_{\eta^1}$ is 1, this contradicts Claim \ref{cl:edges-incident-blue-j1}.

    Next, we claim that there is a red point in $C_{j^2}^*$ whose degree in $H_1$ is at most $t-1$.  For the sake of contradiction to our claim, assume that all the red points in $C_{j^2}^*$ have degree $t$ in $H_1$. Then, $n_2' = t\cdot n_r'-n_1'$, and by Claim \ref{cl:incoming-cluster-j2}, the number of edges incident on the blue vertices in $C_{j^2}^*$ is at least $n_2'+(n_1'+1)=t\cdot n_r'-n_1'+n_1'+1=t\cdot n_r'+1$. But, as the degree of $b_{\eta^2}$ is 1, due to Item (a), this contradicts Claim \ref{cl:edges-incident-blue-j2}.  

    Now, if \textbf{$j^1\ne j^2$, the degree of both $b_{\eta^1}$ and $b_{\eta^2}$ are 1 in $H_1$, $v_{j^2}$ is one of $u_y,u_{y+1},\ldots,u_{y'-1}$, and the degree of the blue point in $C_{j^2}^*$ corresponding to $\pi^*$ is at least 2}, then (x) holds. 
    
    \medskip
    \noindent
    {\bf (xi) $j^1\ne j^2$, the degree of both $b_{\eta^1}$ and $b_{\eta^2}$ are 1 in $H_1$, $v_{j^2}$ is one of $u_y,u_{y+1},\ldots,u_{y'-1}$, and the degree of the blue point in $C_{j^2}^*$ corresponding to $\pi^*$ is 1.} First, we claim that there is a red point in $C_{j^1}^*$ whose degree in $H_1$ is at most $t-1$. For the sake of contradiction to our claim, assume that all the red points in $C_{j^1}^*$ have degree $t$ in $H_1$. Then, $n_2 = t\cdot n_r-n_1$, and by Claim \ref{cl:incoming-cluster-j1}, the number of edges incident on the blue vertices in $C_{j^1}^*$ is at least $n_2+(n_1+1)=t\cdot n_r-n_1+n_1+1=t\cdot n_r+1$. But, as the degree of $b_{\eta^1}$ is 1, this contradicts Claim \ref{cl:edges-incident-blue-j1}.

    Next, we claim that there is a red point in $C_{j^2}^*$ whose degree in $H_1$ is at most $t-1$.  For the sake of contradiction to our claim, assume that all the red points in $C_{j^2}^*$ have degree $t$ in $H_1$. Then, $n_2' = t\cdot n_r'-n_1'$, and by Claim \ref{cl:incoming-cluster-j2}, the number of edges incident on the blue vertices in $C_{j^2}^*$ is at least $n_2'+(n_1'+1)=t\cdot n_r'-n_1'+n_1'+1=t\cdot n_r'+1$. But, as the degree of $b_{\eta^2}$ is 1, due to Item (b), this contradicts Claim \ref{cl:edges-incident-blue-j2}. 

% (a) $v_{j^2}$ is not one of $u_z,u_{z+1},\ldots,u_{j^h-1}$, and (b) $v_{j^2}$ is one of $u_z,u_{z+1},\ldots,u_{j^h-1}$, and the degree of $b_{\eta^2}$ is 1 or the degree of the blue point in $C_{j^2}^*$ corresponding to $\pi^*$ is 1.    

% if $v_{j^1}=v_{j^2}\ne v_{i'}$ or $v_{j^2}=v_i$ and the degree of $b_{\eta_2}$ is 1 in $H_1$, the number of such edges is at least $n_1'+2$

    \medskip
    \noindent    
    {\bf (xii) $j^1= j^2$, the degree of both $b_{\eta^1}$ and $b_{\eta^2}$ are 1 in $H_1$.} We claim that the sum of the degrees of the red points of $C_{j^2}^*=C_{j^1}^*$ in $H_1$ is at most $t\cdot n_r-2$. For the sake of contradiction, assume that this sum is at least $t\cdot n_r-1$. Then, $n_2 = (t\cdot n_r-1)-n_1$, and by Claim \ref{cl:incoming-cluster-j1}, the number of edges incident on the blue vertices in $C_{j^1}^*$ is at least $n_2+(n_1+2)=t\cdot n_r-1-n_1+n_1+2=t\cdot n_r+1$. But, as the degree of $b_{\eta^1}$ is 1, this contradicts Claim \ref{cl:edges-incident-blue-j1}. 

    This completes the proof of the lemma.  

\end{proof}

\section{The Algorithm for Balanced Sum-of-Radii Clustering}
\label{sec:balanced}

In this section, we prove Theorem \ref{thm:ell-groups}. Recall that we are given $\ell$ disjoint groups $P_1,\ldots, P_{\ell}$ having $n$ points in total in a metric space $(\Omega=\cup_{i=1}^\ell P_i,d)$, such that $|P_1|=|P_2|=\ldots =|P_\ell|$. 

Our algorithm is as follows.  
\medskip
\noindent
\begin{tcolorbox}
\noindent
{\bf The Algorithm.}\\
\textbf{1.} For each $2\le i\le \ell$, construct a graph $G_i=(V_i,E_i)$ where $V_i=P_1\cup P_i$ and $E_i=\{\{p,q\}\mid p\in P_1, q\in P_i\}$. Define the weight function $w_i$ such that for each edge $e=\{p,q\}$, $w_i(e)=d(p,q)$. Compute a minimum-weight (w.r.t. $w_i$) perfect matching $M_i$ of $G_i$. For each $p\in P_1$, let $S_p$ be the union of $\{p\}$ and the points from $P_2,\ldots,P_\ell$ that are matched to $p$ in $M=\cup_{i=2}^\ell M_i$.  

\medskip
\noindent
\textbf{2.} 
Construct an edge-weighted graph $G'$ in the following way: For each $p\in \Omega$, add a vertex to $G'$; For each $p\in P_1$, add a vertex corresponding to $S_p$ to $G'$, which we also call by $S_p$; For each $p,q \in \Omega$, add the edge $\{p,q\}$ to $G'$ with weight $d(p,q)$; For all $p'\in \Omega$ and $p\in P_1$, add the edge $\{p',S_p\}$ to $G'$ with weight $\max_{q\in S_p} d(p',q)$. Let $d'$ be the shortest path metric in $G'$. Construct the metric space $(\Omega',d')$ where $\Omega'$ is the subset of vertices $\{S_p\mid p\in P_1\}$ in $G'$. 

% Consider the space $(\Omega',d')$ induced by the following graph $G'$ with vertex set $\Omega'$ defined in the following way. Initially $\Omega'=\Omega$. For each $p,q \in \Omega$, add the edge $\{p,q\}$ to $G'$ and set $d'(p,q)=d(p,q)$. For each $p\in P_1$, add a vertex corresponding to $S_p$ to $\Omega'$, which we also call by $S_p$. For all $p'\in \Omega$, add the edge $\{p',S_p\}$ to $G'$ and set $d'(p',S_p)=\max_{q\in S_p} d(p',q)$. For each $p,q\in P_1$, $d'(S_p,S_q)$ is the weight (w.r.t. $d'$) of a shortest path between $S_p$ and $S_q$ in $G'$. (We prove later that $d'$ is a metric.)

\medskip
\noindent
\textbf{3.} Compute a sum of radii clustering $X=\{X_1,\ldots,X_k\}$ of the points in $\Omega'$ using the Algorithm of Buchem et al.~\cite{buchem20243+} (with $\Omega'$ also being the candidate set of centers). 
%with $\Omega$ being the candidate set of centers. 

\medskip
\noindent
\textbf{4.} Compute a clustering $X'$ of the points in $\cup_{i=1}^\ell P_i$ using $X$ in the following way. For each cluster $X_i$, add the cluster  $\cup_{q\in S_p\mid S_p\in X_i} \{q\}$ to $X'$. Return $X'$.  
\end{tcolorbox}

Next, we analyze the algorithm. First, we have the following observation. 

\begin{observation}
    $X'$ is a balanced clustering of $\cup_{i=1}^\ell P_i$. 
    %with $\Omega$ being the candidate set of centers.  
\end{observation}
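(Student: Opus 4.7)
The plan is to show that each cluster produced by the algorithm is $1$-balanced, i.e., contains the same number of points from each group $P_j$. The argument will proceed by analyzing the combinatorial structure of the sets $S_p$ obtained from the matchings $M_2, \ldots, M_\ell$, and then using the fact that $X'$ is built as a disjoint union of such sets.

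First, I would observe that because $|P_1| = |P_2| = \cdots = |P_\ell|$, each graph $G_i$ for $2 \le i \le \ell$ is a balanced complete bipartite graph between $P_1$ and $P_i$, and hence admits a perfect matching; the minimum-weight perfect matching $M_i$ is therefore a true perfect matching in which every point of $P_1$ is paired with exactly one point of $P_i$ and vice versa. In particular, for every $p \in P_1$ and every $i \in \{2,\ldots,\ell\}$, there is exactly one point of $P_i$ matched to $p$ in $M_i$.

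Next, I would use this to establish that the family $\{S_p\}_{p \in P_1}$ forms a partition of $\Omega = \bigcup_{i=1}^\ell P_i$ with the key property that $|S_p \cap P_j| = 1$ for every $p \in P_1$ and every $j \in \{1,\ldots,\ell\}$. Disjointness follows because the matchings are perfect within each color class: if two distinct $p, p' \in P_1$ both contained the same $q \in P_i$ in their respective $S$-sets, then $q$ would be matched to two vertices in $M_i$, contradicting that $M_i$ is a matching. The union covers $\Omega$ since every $q \in P_i$ (for $i \ge 2$) is matched to some $p \in P_1$ via $M_i$, hence lies in $S_p$, and every $p \in P_1$ lies in $S_p$ itself.

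Finally, each cluster of $X'$ is, by construction in Step~4, the disjoint union of those $S_p$ for which the vertex $S_p \in \Omega'$ lies in the corresponding cluster $X_i$ of $X$. If this cluster of $X'$ is built from $m$ such sets $S_{p_1}, \ldots, S_{p_m}$, then it contains exactly $m$ points from each $P_j$, since each $S_{p_h}$ contributes exactly one point to each color class. Hence the cluster is $1$-balanced, and since this holds for every cluster, $X'$ is a balanced clustering of $\bigcup_{i=1}^\ell P_i$. The proof is essentially structural; the only point requiring mild care is verifying the partition property from the perfect-matching condition, which is the mildest ``obstacle'' here.
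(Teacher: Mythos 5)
Your proof is correct and takes essentially the same approach as the paper: each $S_p$ contains exactly one point from each group (by perfect matching), so any cluster of $X'$, being a disjoint union of sets $S_p$, is $1$-balanced. You just spell out the partition/disjointness details that the paper leaves implicit.
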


\begin{proof}
    Note that any cluster of $X'$ is a union of sets $S_p$ such that $p\in P_1$. As each such $S_p$ contains exactly one point from $P_i$ for $1\le i\le \ell$, this cluster is 1-balanced. Hence, $X'$ is a balanced clustering of $\cup_{i=1}^\ell P_i$. 
\end{proof}

Next, we analyze the approximation factor. 
Let $\mathcal{C}^*=\{C_1^*, C_2^*, \ldots, C_k^*\}$ be a fixed optimal balanced clustering. 
%Consider any metric space $(\Omega_1,d_1)$ and a subset $S_1$. For any cluster $Q$ and a point $p$, $d_1(p,Q)=\max_{q\in Q} d_1(p,q)$. The center of $Q$ in $S_1$ is the point, $\arg \min_{p\in S_1} d(p,Q)$. The radius of $Q$ w.r.t. $S_1$ and $d_1$, denoted by $r_{(S_1,d_1)}(Q)$, is the distance between $Q$ and its center in $S_1$, i.e., $r_{(S_1,d_1)}(Q)=\min_{p\in S_1} d_1(p,Q)$. We refer to the sum of radii, w.r.t. $S_1$ and $d_1$, of the clusters in any clustering $\mathcal{C}$ as the cost of $\mathcal{C}$ w.r.t. $S_1$ and $d_1$ and denote it by cost$_{(S_1,d_1)}(\mathcal{C})$. 
We have the following lemma whose proof is given later. 

\begin{lemma}\label{lem:approx-cluster-of-star-points-32-factor-1}
    Consider the clustering $X$ of $\Omega'$ constructed in Step 3 of the algorithm. Then cost$_{(\Omega',d')}(X)\le \mathbf{(60+\epsilon)}\cdot \sum_{i=1}^k r_{(\Omega,d)}(C_i^*)$. 
\end{lemma}

\begin{corollary}
     Consider the clustering $X'$ of $\cup_{i=1}^\ell P_i$ constructed in Step 3 of the algorithm. Then cost$_{(\Omega,d)}(X')\le \mathbf{(180+\epsilon)}\cdot \sum_{i=1}^k r_{(\Omega,d)}(C_i^*)$. Thus, our algorithm is a $\mathbf{(180+\epsilon)}$-approximation algorithm. 
\end{corollary}

\begin{proof}
    We claim that cost$_{(\Omega,d)}(X')\le 3\cdot $cost$_{(\Omega',d')}(X)$. Then the corollary follows by Lemma \ref{lem:approx-cluster-of-star-points-32-factor-1}. Consider any cluster $X_i$ of $X$ and the cluster $X_i'$ in $X'$ constructed from it. Let $S_p$ in $\Omega'$ be the center of $X_i$. Now, for any $S_q\in X_i$, $d'(S_p,S_q)$ is the weight of a shortest path in $G'$ between $S_p$ and $S_q$. Let $p'\in \Omega$ be the successor of $S_p$ on such a shortest path. So, $d'(S_p,S_q)\ge d'(S_p,p')=\max_{y\in S_p} d(y,p')\ge d(p,p')$. The equality follows by the definition of $d'$ and the fact that $d$ is a metric. Then, $d'(p,S_q)\le d'(p,p')+d'(p',S_p)+d'(S_p,S_q)\le 3\cdot d'(S_p,S_q)$. The first inequality is due to triangle inequality. Now, $d'(p,S_q)=\max_{q'\in S_q} d(p,q')$. It follows that the ball in $(\Omega,d)$ centered at $p\in \Omega$ and having radius $3\cdot r_{(\Omega',d')}(X_i)$ contains all the points in $X_i'$. Hence, $r_{(\Omega,d)}(X_i')\le 3\cdot r_{(\Omega',d')}(X_i)$ and the claim follows. 
\end{proof}

\subsection{Proof of Lemma \ref{lem:approx-cluster-of-star-points-32-factor-1}}
In the following, we are going to prove Lemma \ref{lem:approx-cluster-of-star-points-32-factor-1}. Consider the union of matchings $M=\cup_{i=2}^\ell M_i$ computed in Step 1. Also, consider the optimal clusters in $\mathcal{C}^*$. We construct a new clustering $\hat{\mathcal{C}}=\{\hat{C}_1, \hat{C}_2, \ldots, \hat{C}_\kappa\}$ by merging some clusters in $\mathcal{C}^*$, where $1\le \kappa\le k$. Initially, we set $\hat{\mathcal{C}}$ to $\mathcal{C}^*$. For each edge $\{p,q\}$ of $M$ such that $p\in \hat{C}_i, q\in \hat{C}_j$ and $i\ne j$, replace $\hat{C}_i, \hat{C}_j$ in $\hat{\mathcal{C}}$ by their union and denote it by $\hat{C}_i$ as well. 

When the above merging procedure ends, by renaming the indexes, let $\hat{\mathcal{C}}=\{\hat{C}_1, \hat{C}_2, \ldots, \hat{C}_\kappa\}$ be the new clustering. Then, we have the following lemma. 

\begin{observation}\label{obs:S-is-in-some-C-i-1}
    For any $p\in P_1$, all the points of $S_p$ are contained in a set $\hat{C}_i$ for some $1\le i\le \kappa$.  
\end{observation}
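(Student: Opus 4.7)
The plan is to argue this directly from the definition of $S_p$ and the merging procedure, essentially mimicking the proof of Observation \ref{obs:S-is-in-some-C-i} but using matching edges in place of star edges. Recall that by construction $S_p = \{p\} \cup \{q_2, q_3, \ldots, q_\ell\}$, where $q_i \in P_i$ is the unique point matched to $p$ in $M_i$ for each $2 \le i \le \ell$. In particular, for every $q \in S_p \setminus \{p\}$, the edge $\{p,q\}$ belongs to $M = \cup_{i=2}^\ell M_i$.

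The key observation is an invariant of the merging procedure: after any intermediate step, if two points $x, y$ lie at the endpoints of some edge of $M$ that has already been processed, then $x$ and $y$ belong to the same current cluster in $\hat{\mathcal{C}}$. I would prove this by induction on the step. Indeed, when an edge $\{x,y\} \in M$ is processed, either $x,y$ are already in the same cluster (in which case nothing happens), or their two clusters are merged, which places them in the same cluster. Once placed together, no subsequent merge can separate them, since merges only take unions.

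Applying this invariant once the merging procedure has terminated, for every $q \in S_p \setminus \{p\}$ the edge $\{p,q\} \in M$ has been processed, so $p$ and $q$ lie in the same final cluster $\hat{C}_{i(q)}$. Since $p$ can belong to only one cluster of the partition $\hat{\mathcal{C}}$, the indices $i(q)$ all coincide with a single index $i$, giving $S_p \subseteq \hat{C}_i$ as required. There is no real obstacle here; the statement is essentially a bookkeeping consequence of the definition of the merge operation, and the short argument above suffices.
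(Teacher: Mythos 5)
Your proof is correct and is exactly the argument the paper leaves implicit — the observation is stated without proof there, since the merging rule in Step 1 of the construction processes every edge $\{p,q\}\in M$ and forces its two endpoints into a common cluster, and merges are monotone (they only take unions, never split). Your invariant-plus-induction write-up simply makes that bookkeeping explicit and then uses that $\hat{\mathcal{C}}$ remains a partition throughout, so the single cluster containing $p$ must contain every $q_i\in S_p$.
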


\begin{observation}\label{obs:star-points-in-C_i-balanced}
    Consider any set $S_p$ and the cluster $\hat{C}_i \supseteq S_p$ with center $c\in \Omega$. Then, for any $q \in S_p$, $d(q,c)\le r_{(\Omega,d)}(\hat{C}_i)$. 
\end{observation}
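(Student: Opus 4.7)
The plan is to reduce the statement to a one-line consequence of the definition of the cluster radius, combined with Observation \ref{obs:S-is-in-some-C-i-1}. By Observation \ref{obs:S-is-in-some-C-i-1}, we have $S_p \subseteq \hat{C}_i$ for some $1 \le i \le \kappa$, and the statement is about precisely this index $i$. Hence every $q \in S_p$ also lies in $\hat{C}_i$.

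Next, recall from the Preliminaries that $r_{(\Omega,d)}(\hat{C}_i) = \min_{p' \in \Omega}\max_{q' \in \hat{C}_i} d(p',q')$, and that the center $c \in \Omega$ of $\hat{C}_i$ is defined as the point of $\Omega$ attaining this minimum. In particular, $r_{(\Omega,d)}(\hat{C}_i) = \max_{q' \in \hat{C}_i} d(q',c)$. Therefore, for any $q \in S_p \subseteq \hat{C}_i$, we immediately get
\[
d(q,c) \;\le\; \max_{q' \in \hat{C}_i} d(q',c) \;=\; r_{(\Omega,d)}(\hat{C}_i),
\]
which is exactly the claim. This proof is essentially identical in structure to that of Observation \ref{obs:star-points-in-cluster}, with the role of the star $S$ replaced by the set $S_p$, so there is no nontrivial obstacle — the only ingredient beyond the definition is invoking Observation \ref{obs:S-is-in-some-C-i-1} to ensure $S_p$ lies in a single $\hat{C}_i$.
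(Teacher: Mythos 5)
Your proof is correct and takes essentially the same approach as the paper's: both rely on the containment $S_p \subseteq \hat{C}_i$ and the definition of $r_{(\Omega,d)}(\hat{C}_i)$ as the maximum distance from the center $c$ to a point of $\hat{C}_i$. The only cosmetic difference is that you invoke Observation \ref{obs:S-is-in-some-C-i-1} to justify $S_p \subseteq \hat{C}_i$, which is actually already part of the hypothesis of the Observation (it assumes $\hat{C}_i \supseteq S_p$), so that step is harmless but unnecessary.
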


\begin{proof}
Because the points of $S_p$ are in $\hat{C}_i$, the farthest any point in $S_p$ can be from $c$ is not more than the farthest any point in $\hat{C}_i$ is from $c$. So, for any point $q$ in $S_p$, the distance between $q$ and $c$ is less than or equal to the maximum distance between any point in $\hat{C}_i$ and $c$, which we denote as $r_{(\Omega,d)}(\hat{C}_i)$.   
\end{proof}
    
\begin{observation}\label{obs:d'-p-to-c-is-at-most-radius-1}
    Consider the point $S_p$ in $\Omega'$ corresponding to a set $S_p$ and the cluster $\hat{C}_i \supseteq S_p$ with center $c\in \Omega$. Then, $d'(S_p,c)\le r_{(\Omega,d)}(\hat{C}_i)$. 
\end{observation}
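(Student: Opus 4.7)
The plan is to mirror the argument already used in Observation \ref{obs:d'-p-to-c-is-at-most-radius}, which is the two-color analogue of the current statement. The key observation is that in Step 2 of the balanced algorithm, for every $p'\in\Omega$ and every $p\in P_1$, we explicitly add the edge $\{p',S_p\}$ to $G'$ with weight $\max_{q\in S_p} d(p',q)$. In particular, since the center $c\in\Omega$, the graph $G'$ contains the edge $\{c,S_p\}$ with weight $\max_{q\in S_p} d(c,q)$, and therefore the shortest-path metric $d'$ satisfies
\[
d'(S_p,c) \;\le\; \max_{q\in S_p} d(c,q).
\]

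Next I would invoke Observation \ref{obs:star-points-in-C_i-balanced} in the form stated: since $S_p \subseteq \hat{C}_i$ and $c$ is the center of $\hat{C}_i$, every $q\in S_p$ satisfies $d(q,c)\le r_{(\Omega,d)}(\hat{C}_i)$. Taking the maximum over $q\in S_p$ then yields $\max_{q\in S_p} d(c,q)\le r_{(\Omega,d)}(\hat{C}_i)$, and combining this with the above display gives the desired bound $d'(S_p,c)\le r_{(\Omega,d)}(\hat{C}_i)$.

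There is no real obstacle here: the statement reduces to (i) the existence of a direct edge $\{c,S_p\}$ in $G'$ with the stated weight, which is by construction, and (ii) the bound from Observation \ref{obs:star-points-in-C_i-balanced}, which has just been proved. Thus the proof is a short two-line deduction, and I would present it exactly in this order to match the style of Observation \ref{obs:d'-p-to-c-is-at-most-radius}, so that Lemma \ref{lem:approx-cluster-of-star-points-32-factor-1} can later be derived in analogy with Lemma \ref{lem:approx-cluster-of-star-points-32-factor}.
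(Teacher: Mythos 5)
Your proof is correct and follows essentially the same approach as the paper's: both reduce the claim to the weight of the direct edge $\{c,S_p\}$ in $G'$ and then apply Observation \ref{obs:star-points-in-C_i-balanced}. The only (harmless) difference is that you use the inequality $d'(S_p,c)\le\max_{q\in S_p}d(c,q)$ from the shortest-path definition of $d'$, whereas the paper asserts equality; the inequality is all that is needed.
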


\begin{proof}
    By definition, $d'(S_p,c) = \max_{q \in S_p} d(q,c)$. By Observation \ref{obs:star-points-in-C_i-balanced}, $d(q,c)\le r_{(\Omega,d)}(\hat{C}_i)$. It follows that $d'(S_p,c) \le r_{(\Omega,d)}(\hat{C}_i)$.       
\end{proof}

Consider the clustering $\mathcal{C}'=\{C_1',\ldots,C'_\kappa\}$ of $\Omega'$ defined in the following way. For each set $S_p$, identify the cluster $\hat{C}_i$ in $\hat{\mathcal{C}}$ that contains all the points of $S_p$. By Observation \ref{obs:S-is-in-some-C-i-1}, such an index $i$ exists. Assign the point $S_p$ in $\Omega'$ corresponding to the set $S_p$ to $C_i'$.  

\begin{lemma}\label{lem:costofC'}
    cost$_{(\Omega',d')}(\mathcal{C}')\le 2\cdot $ cost$_{(\Omega,d)}(\hat{\mathcal{C}})$. 
\end{lemma}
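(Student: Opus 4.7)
The plan is to mirror the argument used in Lemma \ref{lem:costofC'-2-color} for the two-color case, since the structural setup here is identical: in both proofs we build $\mathcal{C}'$ by routing each set $S_p$ to the supercluster $\hat{C}_i$ that contains it, and the metric $d'$ on $\Omega'$ satisfies exactly the same kind of inequality (Observation \ref{obs:d'-p-to-c-is-at-most-radius-1}) as in the 2-color construction (Observation \ref{obs:d'-p-to-c-is-at-most-radius}).

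First, I will show the auxiliary claim that for every $1\le i\le \kappa$, the radius of $C_i'$ measured with centers drawn from $\Omega$ in the metric $d'$ is bounded by the radius of $\hat{C}_i$, i.e., $r_{(\Omega,d')}(C_i')\le r_{(\Omega,d)}(\hat{C}_i)$. The proof is immediate: let $c\in\Omega$ be the center of $\hat{C}_i$ in $(\Omega,d)$; for any $S_p\in C_i'$, the construction guarantees $S_p\subseteq \hat{C}_i$, and Observation \ref{obs:d'-p-to-c-is-at-most-radius-1} gives $d'(S_p,c)\le r_{(\Omega,d)}(\hat{C}_i)$. Since $c\in\Omega$, taking the max over $S_p\in C_i'$ yields the claim.

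Next, I will upgrade this to a bound on $r_{(\Omega',d')}(C_i')$, where centers must come from $\Omega'$. Because $d'$ is a metric (it is a shortest-path metric in $G'$), the triangle inequality gives $d'(S_p,S_q)\le d'(S_p,c)+d'(c,S_q)\le 2\cdot r_{(\Omega,d)}(\hat{C}_i)$ for any two $S_p,S_q\in C_i'$. Hence picking any fixed $S_p\in C_i'\subseteq \Omega'$ as the center certifies $r_{(\Omega',d')}(C_i')\le 2\cdot r_{(\Omega,d)}(\hat{C}_i)$.

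Finally, summing this inequality over $i=1,\dots,\kappa$ gives $\mathrm{cost}_{(\Omega',d')}(\mathcal{C}')\le 2\cdot \mathrm{cost}_{(\Omega,d)}(\hat{\mathcal{C}})$, which is exactly the lemma. There is no real obstacle here; the entire argument is a direct transcription of the two-color proof, with the matching $M$ playing the role that the min-cost DCS $H$ played there, and Observations \ref{obs:S-is-in-some-C-i-1}--\ref{obs:d'-p-to-c-is-at-most-radius-1} supplying the exact analogues of the facts used in Lemma \ref{lem:costofC'-2-color}.
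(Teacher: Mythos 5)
Your proof is correct and follows exactly the same approach as the paper: first bounding $r_{(\Omega,d')}(C_i')$ via Observation \ref{obs:d'-p-to-c-is-at-most-radius-1}, then using the triangle inequality to move the center into $\Omega'$ at the cost of a factor 2, and summing over all clusters. The paper does precisely this, likewise mirroring the two-color Lemma \ref{lem:costofC'-2-color}.
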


\begin{proof}
    First, we claim that $r_{(\Omega,d')}(C_i')\le r_{(\Omega,d)}(\hat{C}_i)$ for all $1\le i\le \kappa$. Let $c$ in $\Omega$ be the center of $\hat{C}_i$. Consider any set $S_p$ such that its corresponding point in $\Omega'$ is in $C_i'$. Then, by Observation \ref{obs:d'-p-to-c-is-at-most-radius-1}, $d'(S_p,c)\le r_{(\Omega,d)}(\hat{C}_i)$. As $c$ is in $\Omega$, it follows that, $r_{(\Omega,d')}(C_i')$ is at most $r_{(\Omega,d)}(\hat{C}_i)$. 

    Now, consider any two $S_p,S_q \in C_i'$ for $p,q \in P_1$. By the above claim, $d'(S_p,S_q)\le 2\cdot r_{(\Omega,d)}(\hat{C}_i)$. Thus, for each such cluster $C_i'$, we can set a point $S_p\in C_i'$ as the center. As $S_p\in \Omega'$, $r_{(\Omega',d')}\le 2\cdot r_{(\Omega,d)}(\hat{C}_i)$. Summing over all clusters $C_i'$, we obtain the lemma.   
\end{proof}

We will prove the following {lemma}. 

\begin{lemma}\label{lem:optimal-cluster-of-stars-10-factor}
    cost$_{(\Omega,d)}(\hat{\mathcal{C}})\le 10\cdot \sum_{i=1}^k r_{(\Omega,d)}(C_i^*)$. 
\end{lemma}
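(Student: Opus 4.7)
The plan is to follow the supercluster framework of Lemma~\ref{lem:optimal-cluster-of-stars-8-factor}, but to handle $\ell \ge 2$ colors by additionally tracking ``color switches'' on paths. Fix a supercluster $\hat{C}$ formed by merging optimal clusters $C_1^*,\ldots,C_\tau^*$; it suffices to prove $r(\hat{C}) \le 10 \sum_{j=1}^\tau r(C_j^*)$. I would build a directed edge-labeled multigraph $G_1^*$ with one vertex $v_j$ per merged cluster $C_j^*$: every matched pair $\{p,q\} \in M_i$ with $p \in P_1 \cap C_a^*$, $q \in P_i \cap C_b^*$ contributes two reverse-oriented edges between $v_a$ and $v_b$, each labeled with color $i$ and weighted by $d(p,q)$. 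The restriction $G_1^*|_i$ to color-$i$ edges is precisely the two-color graph of Section~\ref{sec:switch} applied to the perfect matching $M_i$, and $G_1^*$ itself is strongly connected because of how the merging procedure operated.

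First I would establish that, thanks to $M_i$ being a perfect matching (so $t=1$), between any two vertices of $G_1^*|_i$ there exists a \emph{switch-free} directed path. This is the $t=1$ specialization of the replacement-edge framework of Lemmas~\ref{lem:b-anchor-path-exists-u_l}--\ref{lem:2-paths-no-hanging-cycle-exists}: since each vertex of $G_1^*|_i$ has exactly one outgoing $0$-edge and one outgoing $1$-edge, the hanging-cycle structures degenerate and the anchor paths constructed during the case analysis can always be taken to preserve parity. The ``moreover'' part of Lemma~\ref{lem:cost-of-pi} then bounds any such switch-free path $\pi$ by $\sum_{e \in \pi} w_e \le 4 \sum_{v_j \in \pi} r(C_j^*)$.

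Next, for any two vertices $v_g, v_h$ of $G_1^*$ I would pick a directed path $\pi^*$ from $v_g$ to $v_h$ that first minimizes the number of \emph{color switches} (transitions between edges with different color labels), and subject to this minimum, each maximal monochromatic segment is chosen to be a switch-free path inside its own color subgraph as guaranteed by the previous step. Decomposing $\pi^*$ into its maximal monochromatic segments $\sigma_1,\ldots,\sigma_s$, each $\sigma_h$ satisfies $\sum_{e \in \sigma_h} w_e \le 4 \sum_{v_j \in \sigma_h} r(C_j^*)$. The minimum-color-switch property is used exactly as the minimum-switch property is used in Lemma~\ref{lem:disjoint-anchor-paths}: a shared vertex between two non-adjacent segments would let one splice $\pi^*$ and strictly reduce its color-switch count, a contradiction. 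Hence every cluster $C_j^*$ lies on at most two segments, and summing yields
\[
\sum_{e \in \pi^*} w_e \;\le\; 4 \cdot 2 \sum_{j=1}^\tau r(C_j^*) \;=\; 8 \sum_{j=1}^\tau r(C_j^*).
\]
Choosing $v_g,v_h$ to contain two points of $\hat{C}$ realizing its diameter and adding the within-cluster hops $2\sum_j r(C_j^*)$ gives $r(\hat{C}) \le 10 \sum_j r(C_j^*)$, and summing over all superclusters completes the proof.

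The main obstacle is the first step: verifying that under $t=1$, every anchor/replacement structure used in Lemmas~\ref{lem:b-anchor-path-exists-u_l}--\ref{lem:2-paths-no-hanging-cycle-exists} can be arranged to avoid introducing $0$-$1$ parity switches, so that switch-free paths in each 2-color subgraph always exist. Once this is in hand, the color-switch counting argument carries over cleanly from the two-color proof and yields the desired factor $10$.
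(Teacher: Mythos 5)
Your high-level plan mirrors the paper's (decompose a minimum-color-switch path into maximal monochromatic segments, analyze each segment via the two-color machinery, and combine with a disjointness argument), and your final arithmetic lands on the same factors. But there are two concrete gaps. First, the ``moreover'' part of Lemma~\ref{lem:cost-of-pi} bounds $\sum_{e\in\pi}\omega_e$ by $4\sum_{i=1}^\tau r(C_i^*)$ where the sum runs over \emph{all} optimal clusters merged into the relevant two-color supercluster, not merely those whose cluster-vertices lie on $\pi$: the replacement edges in $E_2'$ are placed in clusters that sit on the auxiliary structures ($\pi(l)$, $\pi(0)$, $\pi(\lambda+1)$, etc.), and these can touch off-path vertices. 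So the correct per-segment bound (Observation~\ref{obs:cost-of-pi(h)}) is $4\sum_{v_j\in\text{comp}(u_{j^h},z^h)} r(C_j^*)$, summing over the entire connected component of $u_{j^h}$ in $G_{z^h}^*$. The disjointness claim you then need is vertex-disjointness of those \emph{components} for $|x-y|\ge 2$ (the paper's Lemma~\ref{lem:disjoint-comp}), not merely of the segments; the splicing argument is the same in spirit (a common vertex yields a two-piece path with only one color-switch), but your segment-level phrasing does not establish the component version.

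Second, your justification for the existence of a switch-free monochromatic path is based on a false premise. A cluster-vertex $v_j$ in $G_1^*|_i$ has $0$-out-degree equal to the number of points of $P_1\cap C_j^*$ that are matched in $M_i$ to points outside $C_j^*$, which can be arbitrarily large; it is not $1$. Moreover, routing this through Lemmas~\ref{lem:b-anchor-path-exists-u_l}--\ref{lem:2-paths-no-hanging-cycle-exists} is a dead end for $t=1$: in a perfect matching every vertex of $H_1$ has degree exactly $1$, so both disjuncts those lemmas output (``degree at least $2$'' or ``degree at most $t-1$'') are vacuous, and indeed the proofs of those lemmas dispose of $t=1$ immediately by a contradiction. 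What is actually needed is the paper's self-contained Lemma~\ref{lem:0-switch}: from $1$-balance of each $C_j^*$, every vertex of $G_{z^h}^*$ has equal $0$-in- and $0$-out-degree, and a degree-tracking (Eulerian-style) walk then yields a $0$-path from any putative switching vertex back to $u_{j^h}$, eliminating the switch. That is the ingredient you identify as ``the main obstacle,'' and your proposed route to it does not go through.
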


Then, Lemma \ref{lem:approx-cluster-of-star-points-32-factor-1} follows by Lemma \ref{lem:optimal-cluster-of-stars-10-factor} and \ref{lem:costofC'} noting that the Algorithm of Buchem et al.~\cite{buchem20243+} yields a 3-factor approximation of the optimal clustering. In the rest of this section, we prove Lemma \ref{lem:optimal-cluster-of-stars-10-factor}. 

\subsection{Proof of Lemma \ref{lem:optimal-cluster-of-stars-10-factor}}
For simplicity of notation, we drop $(\Omega,d)$ from $r_{(\Omega,d)}(.)$, as henceforth centers are always assumed to be in $\Omega$. Let us consider any fixed $\hat{C}_i$, and suppose it is constructed by merging the clusters $C_{i_1}^*,C_{i_2}^*,\ldots,C_{i_\tau}^*$. It is sufficient to {prove that} $r(\hat{C}_i)\le 10\cdot \sum_{j=1}^\tau r(C_{i_j}^*)$. For simplicity of notation, we rename $\hat{C}_i$ to $\hat{C}$, and $C_{i_1}^*,C_{i_2}^*,\ldots,C_{i_\tau}^*$ to $C_{1}^*,C_{2}^*,\ldots,C_{\tau}^*$.

%Let $H_1=(V_1,E_1)$ be the induced subgraph of $H$ such that the vertices of $V_1$ are in $\hat{C}$. 
In the following, we construct an edge-weighted, directed multi-graph $G_1^*=(V_1^*,E_1^*)$ in the following manner. $G_1^*$ has a vertex $v_j$ corresponding to each cluster $C_j^*$, where $1\le j\le \tau$. There is an edge $e=(v_i,v_j)$ from $v_i$ to $v_j$, $i\ne j$, for each $p\in P_1\cap C_i^*$ and $q\in P_z\cap C_j^*$ such that $\{p,q\}$ is in $M$ and $2\le z\le \ell$. We refer to such an edge as a $0$-edge of color $z$. The weight $\omega_e$ of the edge $e$ is $d(p,q)$. Similarly, there is a $1$-edge of color $z$, $e=(v_i,v_j)$, from $v_i$ to $v_j$ for each $p\in P_z\cap C_i^*$ and $q\in P_1\cap C_j^*$ such that $\{p,q\}$ is in $M$ and $2\le z\le \ell$. The weight $\omega_e$ of the edge $e$ is $d(p,q)$. 
%For each edge $e_i\in E^*$, we denote the corresponding edge in $E_1$ by $\{r_i,b_i\}$, where $r_i$ is the red point and $b_i$ is the blue point. For simplicity of exposition, we are going to heavily use this correspondence.   
For each $2\le z\le \ell$, let $G_z^*=(V_z^*,E_z^*)$ be the subgraph of $G_1^*$ induced by the color $z$ edges. The 0-in-degree of a vertex $v\in V_z^*$ is the number of incoming $0$-edges to $v$ in $E_z^*$. The 0-out-degree of a vertex $v\in V_z^*$ is the number of outgoing $0$-edges from $v$ in $E_z^*$. 
 
A directed path (or simply a path) $\pi =\{u_1,\ldots,u_l\}$ from $u_1$ to $u_l$ in $G_1^*$ is a sequence of distinct vertices such that $(u_i,u_{i+1})$ is in $G_1^*$ for all $1\le i\le l-1$. Two consecutive edges $e_1=(u_i,u_{i+1}),e_2=(u_{i+1},u_{i+2})$ on $\pi$ are said to form a \textit{color-switch} if they have different colors. 
We say that the color-switch happens at $u_{i+1}$ and it is the corresponding color-switching vertex. 
%The switch is called a $b$-switch if the parity of $e_1$ is $b$. 
A directed cycle is formed from $\pi$ by adding the edge $(u_l,u_1)$ (if any) with it.  

\begin{observation}
    For any two vertices $v_i,v_j\in V^*$, there is a directed path from $v_i$ to $v_j$ in $G_1^*$. 
\end{observation}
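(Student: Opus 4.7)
The plan is to mirror the argument given earlier for the analogous observation about $G^*$ in the two-group case. Recall that $\hat{C}$ was obtained by iteratively merging $C_1^*, C_2^*, \ldots, C_\tau^*$ whenever there was an edge of $M$ with endpoints in two different (current) merged clusters. Hence, for any two indices $1 \le i, j \le \tau$, one can extract a sequence of distinct clusters
\[
C_{k^1}^* = C_i^*,\ C_{k^2}^*,\ \ldots,\ C_{k^\psi}^* = C_j^*,
\]
such that for each $1 \le \iota \le \psi-1$, there exists some edge $\{p_\iota, q_\iota\} \in M$ with one endpoint in $C_{k^\iota}^*$ and the other in $C_{k^{\iota+1}}^*$.

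The key remaining point is to check that each such edge yields a directed edge in $G_1^*$ from $v_{k^\iota}$ to $v_{k^{\iota+1}}$. Because $M = \bigcup_{z=2}^\ell M_z$ and each $M_z$ is a matching between $P_1$ and $P_z$, every edge $\{p_\iota, q_\iota\} \in M$ has exactly one endpoint in $P_1$ and the other in some $P_z$ with $z \ge 2$. Say $p_\iota \in P_1$ and $q_\iota \in P_z$. Then by the definition of $G_1^*$, this edge contributes both a $0$-edge of color $z$ directed from the cluster containing $p_\iota$ to the cluster containing $q_\iota$, and a $1$-edge of color $z$ directed from the cluster containing $q_\iota$ to the cluster containing $p_\iota$. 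In particular, whichever of $C_{k^\iota}^*, C_{k^{\iota+1}}^*$ contains $p_\iota$ and whichever contains $q_\iota$, there is a directed edge from $v_{k^\iota}$ to $v_{k^{\iota+1}}$ in $G_1^*$.

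Concatenating these edges along the sequence gives the desired directed path $v_{k^1}, v_{k^2}, \ldots, v_{k^\psi}$ from $v_i$ to $v_j$ in $G_1^*$.

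I do not anticipate any real obstacle: the only thing to verify is the symmetry of the edge construction (namely that every edge of $M$ produces directed edges in both directions in $G_1^*$), and this falls out immediately from the two cases in the definition of $0$-edges and $1$-edges of color $z$, together with the fact that $M_z$ is a bipartite matching between $P_1$ and $P_z$. This is the exact analog of the earlier two-color observation, only with the single ``blue'' side replaced by the union $\bigcup_{z=2}^\ell P_z$ and edges carrying an additional color label $z$ that is irrelevant to the connectivity argument.
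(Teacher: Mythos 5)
Your proof is correct and takes essentially the same approach as the paper's explicit proof of the analogous observation for $G^*$ in the two-group case (the paper leaves the $G_1^*$ version unproved). You correctly identify the one extra thing to check over the two-color version, namely that every edge of $M$ contributes directed edges in both directions in $G_1^*$ via the $0$-edge/$1$-edge pair of some color $z$.
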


The above observation shows that $G_1^*$ is a connected graph. However, $G_z^*$ is not-necessarily connected. For a vertex $v\in V_z^*$, denote the connected component in $G_z^*$ that it is in by comp$(v,z)$. Note that there is no edge in $E_z^*$ across any two components.     

Consider any two vertices $v_\alpha$ and $v_\beta$ of $G_1^*$. Let $\pi_1^*=\{v_\alpha=u_1,\ldots,u_l=v_\beta\}$ be a directed path from $v_\alpha$ to $v_\beta$ having the minimum number of color-switches. 
%Moreover, among all such paths having the minimum number of color-switches, we pick one as $\pi_1^*$ such that for all switching vertex $u_i$ with a $b$-switch, there is no $b$-edge $(u_i,u_j)$ in $G_1^*$ for $j > i$. Note that there exists such a path, as if there is such an edge $(u_i,u_j)$, we can use that to directly go from $u_i$ to $u_j$ without increasing the number of switches. 
We prove the following lemma. 

\begin{lemma}\label{lem:cost-of-pi-1}
    $\sum_{e\in \pi_1^*} \omega_{e}\le 8\cdot \sum_{i=1}^\tau r(C_i^*)$. 
\end{lemma}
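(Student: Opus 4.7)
\medskip
\noindent\textbf{Proof Proposal.} The plan is to mimic the analysis of Lemma~\ref{lem:cost-of-pi}, but now to invoke the optimality of each min-weight perfect matching $M_z$ separately, color by color. First I would refine the choice of $\pi_1^*$: among all min-color-switch paths from $v_\alpha$ to $v_\beta$, pick one that additionally minimizes the number of parity ($0$-$1$) switches inside each maximal monochromatic segment. The key observation is that for $t=1$, condition (ii) of Lemmas~\ref{lem:b-anchor-path-exists-u_l}--\ref{lem:2-paths-no-hanging-cycle-exists} (``a red/blue point of degree at most $t-1=0$'') is vacuous, while condition (i) (``degree at least $2$'') is impossible since every vertex of $H_1$ has degree exactly $1$; hence the premises of those lemmas must fail for every color $z$, which forces the simple case in the proof of Lemma~\ref{lem:cost-of-pi} to apply. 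Consequently, for every color $z$, any two vertices of $G_z^*$ in the same connected component admit a no-parity-switch path. In particular, each monochromatic sub-path $S_{z,k}$ of $\pi_1^*$ is itself a no-parity-switch path in $G_z^*$, and the min-color-switch property of $\pi_1^*$ forces distinct $z$-sub-paths to lie in \emph{distinct} components of $G_z^*$: else splicing two such sub-paths through a single-color path would save two color-switches, contradicting minimality.

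Next, for each color $z$ I would construct a replacement set $R_z\subseteq P_1\times P_z$ such that $(M_z\setminus A_z)\cup R_z$ is again a perfect matching between $P_1$ and $P_z$, where $A_z$ is the set of color-$z$ edges on $\pi_1^*$. For each sub-path $S_{z,k}$, pick a no-parity-switch close-the-loop path $\pi'_{z,k}$ in $G_z^*$ from the end of $S_{z,k}$ back to its start and augment along the directed cycle $S_{z,k}\cup\pi'_{z,k}$: at each vertex $v_j$ of the cycle, add a single within-cluster edge of $C_j^*$ that re-pairs the two points of $C_j^*$ left unmatched by the removal of the cycle's edges. Since $M_z$ is a min-weight perfect matching and $(M_z\setminus A_z)\cup R_z$ is a valid perfect matching, one obtains $w(A_z)\le w(R_z)$. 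Summing over $z$,
\[
 w(\pi_1^*)=\sum_z w(A_z)\le \sum_z w(R_z)\le 2\sum_{v_j} n(v_j)\,r(C_j^*),
\]
where $n(v_j)$ counts the replacement edges across all colors whose two endpoints both lie in $C_j^*$.

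The main obstacle is to establish the uniform bound $n(v_j)\le 4$, which then yields the target inequality $w(\pi_1^*)\le 8\sum_i r(C_i^*)$. Two easy partial bounds are available: (a) $v_j$ lies on at most two sub-paths $S_{z,k}$ of $\pi_1^*$, namely one per incident edge of $v_j$ on $\pi_1^*$, contributing at most $2$ to $n(v_j)$; and (b) for each fixed color $z$, since distinct $z$-sub-paths lie in distinct components of $G_z^*$, $v_j$ lies on at most one close-the-loop path $\pi'_{z,k}$. The weakness is that (b) summed naively over colors gives only $O(\ell)$. The delicate step is to choose the close-the-loop paths $\pi'_{z,k}$ so that $v_j$ participates as a \emph{genuinely intermediate} vertex of a close-the-loop for at most two colors, using the min-color-switch property to exclude the bad configurations: if $v_j$ were forced to lie on close-the-loops for many colors, the resulting multi-color connectivity at $v_j$ would allow a rerouting of $\pi_1^*$ through $v_j$ that strictly reduces the number of color-switches, contradicting minimality of $\pi_1^*$. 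Making this rerouting argument precise, together with the small bookkeeping needed to show that sub-path endpoints and close-the-loop endpoints share their single replacement edge at color-switch vertices of $\pi_1^*$, is the main technical hurdle of the proof.
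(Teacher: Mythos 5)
Your overall plan is the same as the paper's: decompose $\pi_1^*$ into maximal monochromatic segments, show each segment reduces to the no-parity-switch case of the two-color analysis, bound each segment via a matching-replacement argument at cost $4\cdot\sum r(C_j^*)$ over the relevant component, and control the overlap of these components across segments. Your step~2 (for $t=1$, conditions~(i) and~(ii) of Lemmas~\ref{lem:b-anchor-path-exists-u_l}--\ref{lem:2-paths-no-hanging-cycle-exists} are vacuous/impossible, so their premises must fail, forcing a no-parity-switch path) is a valid contrapositive argument; the paper instead proves this directly in Lemma~\ref{lem:0-switch} via an Eulerian-type in-degree/out-degree balancing argument, which is cleaner but equivalent in effect. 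Similarly, your matching-replacement construction of $R_z$ essentially re-derives the no-switch case of Lemma~\ref{lem:cost-of-pi}, which the paper applies as a black box through Observation~\ref{obs:cost-of-pi(h)}.

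The genuine gap is exactly where you flag it: the uniform bound on $n(v_j)$. Your observation that distinct $z$-sub-paths for the \emph{same} color $z$ live in distinct components of $G_z^*$ is correct, but it does not by itself prevent a single vertex from appearing in components of $G_{z}^*$ for many different colors $z$, which is why your naive count is $O(\ell)$. The paper's Lemma~\ref{lem:disjoint-comp} closes this: if $|x-y|\ge 2$, then $\text{comp}(u_{j^x},z^x)$ and $\text{comp}(u_{j^y},z^y)$ (which may well have different colors) are vertex-disjoint, because a shared vertex $v_j$ would yield a walk $u_{j^x}\to v_j\to u_{j^{y+1}}$ with at most one color-switch, which could be used to shortcut $\pi_1^*$ past at least one color-switching vertex. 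Consequently each cluster lies in at most two of the per-segment components, and together with the factor~$4$ per segment this gives~$8$. Your ``rerouting'' heuristic is pointing at this same idea, but it needs to be stated and proved at the level of \emph{components} (not just close-the-loop paths through $v_j$), and the shortcut must be constructed explicitly inside $G_1^*$; without that, the bound $n(v_j)\le 4$ remains unestablished.
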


Then, similar to the $t$-balanced case, Lemma \ref{lem:optimal-cluster-of-stars-10-factor} follows. 

Let $j^1 < j^2 <\ldots < j^\lambda$ be the indexes of the vertices on $\pi_1^*=\{u_1,\ldots,u_l\}$ where the color-switches occur. Note that $j^1 > 1, j^\lambda < l$. For our convenience, we denote $u_1$ by $u_{j^0}$ and $u_l$ by $u_{j^{\lambda+1}}$. As the color of the edges do not change between two consecutive color-switching vertices, we have the following observation. 

\begin{observation}
    For $0 \le h\le \lambda$, suppose $u_{j^h}$ is in comp$(u_{j^h},z)$ of $G_z^*$ for some $2\le z\le \ell$. Then, $u_{j^{h+1}}$ is also in the same component comp$(u_{j^h},z)$ of $G_z^*$. 
\end{observation}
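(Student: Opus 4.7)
The plan is to leverage the very definition of the color-switching indices $j^1 < j^2 < \ldots < j^\lambda$: by construction these are precisely the positions on $\pi_1^*$ at which consecutive edges differ in color, so no color-switch happens at any vertex of $\pi_1^*$ strictly between two consecutive entries of this list (extended by $j^0 = 1$ and $j^{\lambda+1} = l$). The key one-line consequence is that the subpath of $\pi_1^*$ from $u_{j^h}$ to $u_{j^{h+1}}$ is \emph{monochromatic}, i.e., every edge on it carries the same color $z$ appearing in the hypothesis. Once that is in hand, the observation is immediate: this monochromatic subpath lies entirely in the subgraph $G_z^*$ and certifies that $u_{j^h}$ and $u_{j^{h+1}}$ are in the same connected component of $G_z^*$.

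Concretely, I would proceed as follows. First, fix $h$ with $0 \le h \le \lambda$ and consider the edges $e_s = (u_s, u_{s+1})$ for $j^h \le s \le j^{h+1}-1$ on $\pi_1^*$. For any index $s$ with $j^h < s < j^{h+1}$, by the choice of the switching indices, $e_{s-1}$ and $e_s$ have the same color; therefore, by induction on $s$, all the edges $e_{j^h}, e_{j^h+1}, \ldots, e_{j^{h+1}-1}$ share a common color, which is exactly the color $z$ of the outgoing edge of $u_{j^h}$ along $\pi_1^*$. This $z$ coincides with the one in the hypothesis, since $u_{j^h}\in\text{comp}(u_{j^h},z)$ already assumes that $u_{j^h}$ has an incident color-$z$ edge and this outgoing edge of $\pi_1^*$ is such an edge (if $u_{j^h}$ sits on more than one colored component, the observation is applied with this $z$).

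Second, since each edge $e_s$ of the subpath is a color-$z$ edge of $G_1^*$, it belongs to $E_z^*$, so the whole subpath $u_{j^h}, u_{j^h+1}, \ldots, u_{j^{h+1}}$ is a walk in $G_z^*$. This walk certifies that $u_{j^h}$ and $u_{j^{h+1}}$ lie in the same connected component of $G_z^*$, yielding $u_{j^{h+1}} \in \text{comp}(u_{j^h}, z)$, which is the desired conclusion.

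There is essentially no obstacle here: the observation is a direct bookkeeping consequence of how the indices $j^h$ are defined. The only subtlety worth stating carefully is the identification of the color $z$ in the hypothesis with the color of the outgoing edge of $u_{j^h}$ on $\pi_1^*$, which is natural because that is precisely how $z$ will be used later (when invoking this observation for the color-specific subgraph to which a subpath belongs).
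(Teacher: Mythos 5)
Your proof is correct and takes the same route the paper implicitly takes: the paper offers no separate proof for this observation, only the prefacing sentence that the color does not change between consecutive color-switching vertices, so the monochromatic subpath of $\pi_1^*$ already witnesses the connectivity in $G_z^*$; you merely spell out that one-liner (and, as you note, the $z$ in the hypothesis must be read as the color $z^h$ of that subpath, which matches how the observation is used immediately afterward).
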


Fix any $0\le h\le \lambda$. Let $z^h$ be the color of the edges on $\pi_1^*$ between $u_{j^h}$ and $u_{j^{h+1}}$. Also, let $\pi(h)$ be a path in $G_{z^h}^*$ from $u_{j^h}$ to $u_{j^{h+1}}$ having the minimum number of switches as defined in Section \ref{sec:t-balanced}. 
%Moreover, among all such paths having the minimum number of switches, we pick one as $\pi(h)$ such that for all switching vertex $u_i$ with a $b$-switch, there is no $b$-edge $(u_i,u_j)$ in $G_{z^h}^*$ for $j > i$. 

\begin{lemma}\label{lem:0-switch}
    For any $0\le h\le \lambda$, $\pi(h)$ is either a $0$-path or a $1$-path, i.e., it does not have any switch.  
\end{lemma}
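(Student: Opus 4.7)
The plan is to prove a stronger statement: for any $0 \le h \le \lambda$, there exists a directed $0$-path (and symmetrically a $1$-path) from $u_{j^h}$ to $u_{j^{h+1}}$ in $G_{z^h}^*$. Such a path has zero parity switches, so by the minimality in the definition of $\pi(h)$, $\pi(h)$ itself must be switch-free, i.e., either a $0$-path or a $1$-path.

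The first key step will be a degree-balance claim: in the spanning subgraph $G_{z^h}^{*,0}$ of $G_{z^h}^*$ consisting only of $0$-edges, every vertex has in-degree equal to out-degree. To see this, let $a_i$ denote the number of matched pairs of $M_{z^h}$ with both endpoints inside $C_i^*$. The out-degree of $v_i$ in $G_{z^h}^{*,0}$ counts the points in $P_1 \cap C_i^*$ whose $M_{z^h}$-partner lies in $P_{z^h} \cap C_j^*$ for some $j \ne i$, giving $|P_1 \cap C_i^*| - a_i$. Symmetrically, the in-degree of $v_i$ equals $|P_{z^h} \cap C_i^*| - a_i$. Since $C_i^*$ is $1$-balanced, $|P_1 \cap C_i^*| = |P_{z^h} \cap C_i^*|$, and the claim follows.

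From this I would invoke the standard fact that a directed graph in which every vertex has equal in-degree and out-degree is strongly connected on each weakly connected component: starting from any vertex, the reachable set can have no outgoing edge to its complement (or the degree-balance on the set would fail), so by weak connectivity it must coincide with the entire component. Moreover, the weak components of $G_{z^h}^{*,0}$ coincide with those of $G_{z^h}^*$, because each cross-cluster matched pair of $M_{z^h}$ contributes an antiparallel $0$-/$1$-edge pair to $G_{z^h}^*$ but a single $0$-edge to $G_{z^h}^{*,0}$, yielding identical underlying undirected connectivity. Since $u_{j^h}$ and $u_{j^{h+1}}$ are connected in $G_{z^h}^*$ via $\pi(h)$, they lie in the same strongly connected component of $G_{z^h}^{*,0}$, so there is a directed $0$-walk between them which shortens to a simple $0$-path by cycle removal. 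The analogous argument applied to $G_{z^h}^{*,1}$ produces a $1$-path.

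I do not anticipate any serious obstacle; the whole argument is a short degree count combined with a classical consequence of Eulerian balance. The only point requiring care is the bookkeeping that each cross-cluster matched pair in $M_{z^h}$ contributes exactly one antiparallel $0$-/$1$-edge pair in $G_{z^h}^*$ (and a single edge in each of $G_{z^h}^{*,0}$ and $G_{z^h}^{*,1}$), which is what simultaneously underlies the degree-balance computation and the identification of weak components across the three graphs.
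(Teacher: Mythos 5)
Your proof is correct. The degree-balance claim is the same as the paper's: the $0$-edge in- and out-degrees of each vertex of $G_{z^h}^*$ agree because $C_j^*$ is $1$-balanced and $M_{z^h}$ is a perfect matching whose cross-cluster pairs within $\hat{C}$ give exactly the directed edges. The route from there differs. The paper argues by contradiction: it supposes $\pi(h)$ has a first switching vertex $v_s$, builds an auxiliary acyclic graph by deleting $1$-edges and the $\pi(h)$-prefix and then repeatedly stripping cycles, and traces a $0$-path from $v_s$ back to $u_{j^h}$ whose reverse is a $1$-path from $u_{j^h}$ to $v_s$ that, spliced with the suffix of $\pi(h)$, has fewer switches. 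You instead prove unconditionally that a simple $0$-path from $u_{j^h}$ all the way to $u_{j^{h+1}}$ exists, by invoking the classical fact that an Eulerian-balanced digraph is strongly connected on each weak component, together with the observation that the weak components of the $0$-edge subgraph agree with those of $G_{z^h}^*$ (since every $0$-edge has an antiparallel $1$-twin). Because a $0$-path has zero switches, minimality of $\pi(h)$ finishes the argument. Your packaging is cleaner: the cycle-stripping is delegated to a standard lemma, you never splice two paths, and you avoid the bookkeeping the paper needs to check that the replacement path indeed has strictly fewer switches. The underlying combinatorics is the same, but your version is tighter and more modular.
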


\begin{proof}
    We prove that there is a path in $G_{z^h}^*$ from $u_{j^h}$ to $u_{j^{h+1}}$ that does not have any switch. First, we claim that the 0-in-degree and 0-out-degree of any vertex $v_j$ are equal in $G_{z^h}^*$. Let $n_1$ be the number of points in $P_1\cap C_j^*$ and $n_2$ be the number of points among these that are matched in $M_{z^h}$ with the points in $P_{z^h}\cap C_j^*$. Thus, there are $n_1-n_2$ outgoing $0$-edges of $v_j$ in $G_{z^h}^*$. As $C_j^*$ is 1-balanced, it also has $n_1-n_2$ points from $P_{z^h}$ that are matched with points of $P_1$ outside of $C_j^*$. Thus, $v_j$ must also have exactly $n_1-n_2$ incoming $0$-edges. 
    
    Consider any directed path $\pi$ in $G_{z^h}^*$ from $u_{j^h}$ to $u_{j^{h+1}}$. For the sake of contradiction, suppose $\pi$ has at least one switch and let $v_s$ be the first switching vertex. Wlog, let the first edge on $\pi$ be a $0$-edge. We prove that there is a $1$-path in $G_{z^h}^*$ from $u_{j^h}$ to $v_s$. This path along with the portion of $\pi$ from $v_s$ to $u_{j^{h+1}}$ shows the existence of a path with a strictly lesser number of switches than that of $\pi$. But, this is a contradiction, and so $\pi$ cannot have a switch. 

    Consider the graph $G_1$ constructed in the following way from $G_{z^h}^*$. First, remove all the $1$-edges and the edges of $\pi$ between $u_{j^h}$ and $v_s$ from $G_{z^h}^*$. This decreases the 0-in-degree of $v_s$ by 1 and the 0-out-degree of $u_{j^h}$ by 1. But, the difference between the 0-in-degree and 0-out-degree of all other vertices remain the same. Now, while there is a cycle in $G_{z^h}^*$, remove all the edges of this cycle from $G_{z^h}^*$ and repeat this step. When the above procedure ends, we are left with a graph without any cycle. Let us denote this graph by $G_1$. 
    
    Note that after the removal of a cycle from $G_{z^h}^*$, the difference between the 0-in-degree and 0-out-degree of any vertex does not change. In particular, the 0-in-degree of $v_s$ is one less than its 0-out-degree in $G_1$. The 0-out-degree of $u_{j^h}$ is one lesser than its 0-in-degree in $G_1$. Moreover, the 0-in-degree and 0-out-degree of any other vertex are equal in $G_1$. It follows that $v_s$ has at least one outgoing $0$-edge $(v_s,v_j)$ in $G_1$. If $v_j$ is $u_{j^h}$, we have found a $0$-path from $v_s$ to $u_{j^h}$. Otherwise, $v_j$ has the same 0-in-degree and 0-out-degree. We consider an outgoing $0$-edge of $v_j$, and repeat this process of visiting a new vertex. As all the vertices except $v_s$ and $u_{j^h}$ have same 0-in-degree and 0-out-degree, and $G_1$ does not have a cycle, this process stops when we reach to $u_{j^h}$. Also, each vertex is visited at most once, and thus the process must stop after a finite number of iterations. Now, once the process stops, we obtain a $0$-path from $v_s$ to $u_{j^h}$. Taking the reverse $1$-edges of this path in $G_{z^h}^*$, we obtain the desired $1$-path from $u_{j^h}$ to $v_s$. This completes the proof of the lemma.     
    % As $G_{z^h}^*$
\end{proof}

Now, note that $\{C_1^*\cap (P_1\cup P_{z^h}), C_2^*\cap (P_1\cup P_{z^h}), \ldots, C_k^*\cap (P_1\cup P_{z^h})\}$ is a clustering of the points of $P_1\cup P_{z^h}$ such that for each $1\le i\le k$, $r(C_i^*\cap (P_1\cup P_{z^h}))\le r(C_i^*)$. Hence, by Lemma \ref{lem:cost-of-pi} and \ref{lem:0-switch}, we have the following observation. 

\begin{observation}\label{obs:cost-of-pi(h)}
    $\sum_{e\in \pi(h)} \omega_{e}\le 4\cdot \sum_{v_j\in \text{comp}(u_{j^h},z^h)} r(C_j^*)$. 
\end{observation}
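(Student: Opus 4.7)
The plan is to derive Observation~\ref{obs:cost-of-pi(h)} by reducing the $\ell$-color setting to the two-color ($t=1$) instance of Section~\ref{sec:t-balanced}, restricted to the color classes $P_1$ and $P_{z^h}$. Specifically, I would instantiate the framework of Section~\ref{sec:switch} with the metric space $(P_1 \cup P_{z^h}, d)$, treating $P_1$ as red and $P_{z^h}$ as blue, and using as the candidate subgraph $H := M_{z^h}$ of the complete bipartite graph between these two color classes. Since $M_{z^h}$ is, by construction (Step~1 of the algorithm of Section~\ref{sec:balanced}), a minimum-weight perfect matching, and since a perfect matching is precisely a DCS with $l(v)=u(v)=1$ for every vertex, it qualifies as a min-cost DCS in the sense required by Section~\ref{sec:t-balanced} (with $t=1$).

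Next, I would restrict the fixed optimal balanced clustering $\{C_1^*, \ldots, C_k^*\}$ to this two-color instance: each $C_j^* \cap (P_1 \cup P_{z^h})$ is $1$-balanced (because $|C_j^* \cap P_1| = |C_j^* \cap P_{z^h}|$ follows from the balance of $C_j^*$), and trivially $r(C_j^* \cap (P_1 \cup P_{z^h})) \le r(C_j^*)$. The supercluster construction of Section~\ref{sec:t-balanced} (iteratively merging any two optimal clusters sharing an endpoint of an edge of $M_{z^h}$) is, by its very definition, the connected-components procedure on $G_{z^h}^*$. Hence the supercluster $\hat{C}$ corresponding to the component $\mathrm{comp}(u_{j^h}, z^h)$ is exactly the union of $\{C_j^* \cap (P_1 \cup P_{z^h}) : v_j \in \mathrm{comp}(u_{j^h}, z^h)\}$, and the auxiliary directed multi-graph that Section~\ref{sec:switch} attaches to $\hat{C}$ coincides with the restriction of $G_{z^h}^*$ to this component.

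With this identification in place I would invoke Lemma~\ref{lem:cost-of-pi} on $\pi(h)$ viewed as a path in this auxiliary graph. Since $\pi(h)$ was chosen to minimize switches in $G_{z^h}^*$, and Lemma~\ref{lem:0-switch} shows this minimum is $0$, $\pi(h)$ is a minimum-switch path with no switch between $u_{j^h}$ and $u_{j^{h+1}}$. The ``moreover'' clause of Lemma~\ref{lem:cost-of-pi} therefore applies and yields
\[
\sum_{e \in \pi(h)} \omega_e \;\le\; 4 \sum_{v_j \in \mathrm{comp}(u_{j^h}, z^h)} r\bigl(C_j^* \cap (P_1 \cup P_{z^h})\bigr) \;\le\; 4 \sum_{v_j \in \mathrm{comp}(u_{j^h}, z^h)} r(C_j^*),
\]
which is exactly the desired bound.

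The only conceptual point that needs checking is that the optimality used in the proof of Lemma~\ref{lem:cost-of-pi} (namely that $H$ is a min-cost DCS, so that the exchange argument via $E_1'$ and $E_2'$ produces the weight inequality $w(E_1') \le w(E_2')$) carries over unchanged when $H$ is specialized to a minimum-weight perfect matching; this is immediate because the exchange argument only uses that $H$ has smallest total weight among valid DCSs of the prescribed degree bounds, and a min-weight perfect matching satisfies this for the bounds $l(v)=u(v)=1$. Thus no new ingredients are required, and the observation is a direct specialization of Lemma~\ref{lem:cost-of-pi} together with Lemma~\ref{lem:0-switch}.
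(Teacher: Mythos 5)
Your proof is correct and follows essentially the same route as the paper: restrict the optimal balanced clustering to $P_1\cup P_{z^h}$, note that $M_{z^h}$ is a min-cost DCS for the bounds $l(v)=u(v)=1$, identify the two-color supercluster/auxiliary graph with $\mathrm{comp}(u_{j^h},z^h)$ of $G_{z^h}^*$, and invoke the ``moreover'' clause of Lemma~\ref{lem:cost-of-pi} together with Lemma~\ref{lem:0-switch}. The paper states this reduction in one terse sentence; you merely spell out the identifications (DCS, supercluster, auxiliary graph) that justify the invocation, all of which check out.
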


\begin{lemma}\label{lem:disjoint-comp}
    For any $0\le x,y\le \lambda$ such that $|x-y|\ge 2$, there is no common vertex between \text{comp}$(u_{j^x},z^x)$ and \text{comp}$(u_{j^y},z^y)$. 
\end{lemma}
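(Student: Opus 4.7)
The plan is to argue by contradiction. Suppose $v_c$ lies in both $\text{comp}(u_{j^x},z^x)$ and $\text{comp}(u_{j^y},z^y)$ for some $x<y$ with $y\ge x+2$. I would produce a directed path from $v_\alpha$ to $v_\beta$ in $G_1^*$ with fewer than $\lambda$ color-switches, contradicting the minimum-color-switch choice of $\pi_1^*$. First, by the observation preceding Lemma \ref{lem:0-switch}, $u_{j^{y+1}}$ lies in $\text{comp}(u_{j^y},z^y)$, and Observation \ref{obs:$0$-$1$-edges-existence} applied within a single color class shows that each connected component of $G_z^*$ is strongly connected. Hence there is a directed path $\sigma_x$ in $G_{z^x}^*$ from $u_{j^x}$ to $v_c$, and a directed path $\sigma_y$ in $G_{z^y}^*$ from $v_c$ to $u_{j^{y+1}}$. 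Concatenating the prefix of $\pi_1^*$ from $v_\alpha$ to $u_{j^x}$, then $\sigma_x$, then $\sigma_y$, then the suffix of $\pi_1^*$ from $u_{j^{y+1}}$ to $v_\beta$ gives a walk $W$ from $v_\alpha$ to $v_\beta$ in $G_1^*$.

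Next I would bound the color-switches in $W$. Reading the edge-colors of $W$ in order produces, in consecutive blocks, a subsequence of the list $z^0,z^1,\dots,z^{x-1},z^x,z^y,z^{y+1},\dots,z^\lambda$. Thus $W$ has at most $x+(\lambda-y)+2$ maximal monochromatic segments, and hence at most $x+(\lambda-y)+1$ color-switches. Since $y\ge x+2$, this bound is at most $\lambda-1$.

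Finally I would simplify $W$ into a directed path without inflating the switch count. If the current walk repeats a vertex, say $w_i=w_j$ with $i<j$, I would delete $w_{i+1},\dots,w_j$ and iterate. Let $A,B,C,D$ denote the colors of $(w_{i-1},w_i),(w_i,w_{i+1}),(w_{j-1},w_j),(w_j,w_{j+1})$ when defined. The shortcut introduces at most one new switch at the merged vertex, namely when $A\ne D$. If $A\ne D$, at least one previously present switch vanishes: if $A\ne B$ the switch at $w_i$ disappears; else $A=B$ and hence $B\ne D$, so if $C\ne D$ the switch at $w_j$ disappears; else $C=D$ forces $B\ne C$, and since the edge-colors along $w_{i+1},\dots,w_{j-1}$ must change from $B$ to $C$, at least one internal switch vanishes. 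Boundary repetitions (with $i=0$ or $j$ equal to the last index) introduce no new switch. Iterating terminates in a simple directed path from $v_\alpha$ to $v_\beta$ with at most $\lambda-1$ color-switches, contradicting the choice of $\pi_1^*$. The crux, and the main obstacle, is this shortcutting step: naively, gluing two monochromatic segments of different colors at a repeated vertex could introduce a switch with nothing to cancel it, but the case analysis shows that in precisely this situation a compensating color-change must live inside the discarded loop.
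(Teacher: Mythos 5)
Your proof is correct and rests on the same exchange argument as the paper: splice a low-color-switch detour through the common vertex $v_c$ into $\pi_1^*$ and contradict its minimality. The paper's own proof is more local and terse---it notes that $\pi_1\circ\pi_2$ yields a walk from $u_{j^x}$ to $u_{j^{y+1}}$ with one color-switch and asserts that any such path must have at least two, by minimality of $\pi_1^*$. In doing so it leaves implicit both the step of turning the spliced walk back into a \emph{simple} directed path (the paper's definition of path requires distinct vertices, and nothing prevents $\pi_1$, $\pi_2$, or the two legs of $\pi_1^*$ from sharing internal vertices) and the proof that this re-simplification can be done without creating extra switches. Your shortcutting step---the case analysis on the colors $A,B,C,D$ at a repeated vertex showing that deleting a loop never increases the color-switch count---supplies exactly this missing piece; the same gap is glossed over in several analogous exchange arguments in the paper, e.g.\ the proof of Observation~\ref{obs:range-of-b^h-path}. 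You also made explicit the strong connectivity of each component of $G_z^*$ (from the reverse-edge observation) and that $u_{j^{y+1}}\in\text{comp}(u_{j^y},z^y)$, both of which the paper uses silently. The trade-off is length: the paper buys brevity at the cost of rigor on the walk-versus-path distinction, while your route is longer but airtight.
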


\begin{proof}
    Wlog, assume $x < y$. As $|x-y|\ge 2$, there is a color-switching vertex between $u_{j^x}$ and $u_{j^y}$ in $\pi_1^*$. Then, by the definition of $\pi_1^*$, for all paths in $G_1^*$ from $u_{j^x}$ to $u_{j^{y+1}}$, there are at least 2 color-switches. Suppose there is a common vertex $v_j$ between \text{comp}$(u_{j^x},z^x)$ and \text{comp}$(u_{j^y},z^y)$. Then, there is a directed path $\pi_1$ in $G_{z^x}^*$ from $u_{j^x}$ to $v_j$ and a directed path $\pi_2$ in $G_{z^y}^*$ from $v_j$ to $u_{j^{y+1}}$. It follows that the path obtained by the concatenation of $\pi_1$ and $\pi_2$ is a path in $G_1^*$ from $u_{j^x}$ to $u_{j^{y+1}}$ with exactly one switch. But, this is a contradiction, and hence the lemma follows. 
\end{proof}

By the above lemma, \text{comp}$(u_{j^x},z^x)$ and \text{comp}$(u_{j^y},z^y)$ can have a common vertex only if $|x-y|\le 1$. Hence, 

\begin{align*}
    \sum_{e\in \pi_1^*} \omega_{e} &\le \sum_{h=0}^{\lambda} \sum_{e\in \pi(h)} \omega_{e}\\
    &\le \sum_{h=0}^{\lambda} 4\cdot \sum_{v_j\in \text{comp}(u_{j^h},z^h)} r(C_j^*) \tag{By Observation \ref{obs:cost-of-pi(h)}}\\
    &\le 8 \cdot \sum_{j=1}^\tau r(C_j^*). \tag{By Lemma \ref{lem:disjoint-comp}}
\end{align*}

This completes the proof of Lemma \ref{lem:cost-of-pi-1}. 

\section{Conclusion and Open Questions}
\label{sec:conclusion}
In this work, we designed poly-time constant-approximations for both $(t,k)$-fair sum-of-radii with two groups and $(1,k)$-fair sum-of-radii with $\ell \ge 2$ groups. One of our main contributions is a novel cluster-merging-based analysis technique that might be of independent interest. We have not paid any particular attention to optimizing the approximation factors. Indeed, there are large gaps between the achieved factors and the best-known approximation bound for vanilla sum-of-radii in polynomial time. Achieving small constant factors is an interesting question. Moreover, obtaining a poly-time constant-approximation for $(t,k)$-fair median/means remains open. One promising direction is to investigate whether our cluster-merging-based analysis technique helps in this case. 

Note that it is natural to study generalizations of $(t,k)$-fair clustering with an arbitrary $\ell \ge 2$ number of groups. One such generalization is fair representational clustering as defined in the introduction. Obtaining any poly-time approximation for such a generalization remains an open question. It might be helpful to consider this problem in restricted settings, such as unweighted graph metrics. 

One might also be interested in $(t,k)$-fair clustering with non-integer $t$. Our understanding in that case is limited. Some pathological examples arise in this case due to the balance parameter not being an integer. 
%Now, note that setting the parameter $t$ not to be an integer heavily restricts the number of possible fair clusterings.
For example, if $t=1+(1/I)$ for an integer $I\ge 2$, and we have $I$ red points and $I+1$ blue points, then there is only one possible fair clustering which has a single cluster containing the whole point set. Indeed, the tuple $(I,I+1)$ cannot be divided into $(1+(1/I))$-balanced non-zero integer tuples $(I_1,I_1'),\ldots,(I_k,I_k')$ for $k\ge 2$ such that $\sum_{i=1}^k I_i=I$ and $\sum_{i=1}^k I_i'=I+1$. This is true, as if $I_i < I$, to maintain $t$-balance $I_i'\le (1+(1/I))\cdot I_i=I_i+(I_i/I)$, so $I_i'\le I_i$ as it is an integer. Similarly $I_i'\ge (I/(I+1))\cdot I_i=(1-(1/(I+1))\cdot I_i = I_i-(I_i/(I+1))$, so $I_i'\ge I_i$ as it is an integer. But, then $I_i'=I_i$ and so $\sum_{i=1}^k I_i= \sum_{i=1}^k I_i'$, which is a contradiction. In general, setting the parameter $t$ not to be an integer heavily restricts the number of possible fair clusterings. For this reason, it is not clear whether such a model has any practical advantage over the one when $t$ is an integer. We note that the above example is also a pathological case for the \textit{representation-preserving} model where one wants to preserve the red-blue ratio of the dataset exactly in every cluster \cite{bercea2019cost}.

% \section{Algorithm Design in the $(1,1)$-balanced case}
% For Chen
% \section{Discussion}
% \subsection{Interpretation of Results}
% Discuss the implications of the theoretical findings and how they contribute to the field of fair-clustering.
% \subsection{Comparison with Existing Approaches}
% Highlight the advantages and uniqueness of the algorithm in comparison to previous methods.
% \subsection{Limitations and Future Work}
% Address any limitations in the algorithm and propose directions for future research.

% \section{Conclusion}
% \subsection{Summary of Contributions}
% Summarize the main contributions of the work.
% \subsection{Practical Implications}
% Discuss potential practical applications of the 6-approximation algorithm for fair-clustering.

%%
%% Bibliography
%%

%% Please use bibtex, 
\bibliographystyle{plainurl}% the mandatory bibstyle
\bibliography{lipics-v2021-sample-article}

\end{document}